\documentclass[journal]{IEEEtran}
\usepackage[T1]{fontenc}
\usepackage{amsmath,amssymb,amsfonts,mathtools}
\usepackage{amsthm}
\usepackage{bm}
\usepackage{graphicx}
\usepackage[dvipsnames,table]{xcolor}
\usepackage{multirow}
\usepackage{makecell}
\usepackage{array}
\usepackage{tabularx}
\usepackage{algorithm}
\usepackage[noend]{algpseudocode}
\usepackage{setspace}

\usepackage[inline]{enumitem}
\usepackage{pifont}
\usepackage{xspace}
\usepackage{stmaryrd}
\usepackage{cite}
\usepackage[hidelinks]{hyperref}
\usepackage[capitalise,noabbrev]{cleveref}

\crefname{section}{Section}{Sections}
\Crefname{section}{Section}{Sections}
\crefname{subsection}{Section}{Sections}
\Crefname{subsection}{Section}{Sections}
\crefname{subsubsection}{Section}{Sections}
\Crefname{subsubsection}{Section}{Sections}
\crefname{appendix}{Appendix}{Appendices}
\Crefname{appendix}{Appendix}{Appendices}
\crefname{figure}{Fig.}{Figs.}
\Crefname{figure}{Fig.}{Figs.}
\crefname{table}{Table}{Tables}
\Crefname{table}{Table}{Tables}
\crefname{algorithm}{Algorithm}{Algorithms}
\Crefname{algorithm}{Algorithm}{Algorithms}
\crefname{equation}{Eq.}{Eqs.}
\Crefname{equation}{Eq.}{Eqs.}
\crefname{ALC@unique}{line}{lines}
\crefname{theorem}{Theorem}{Theorems}
\crefname{lemma}{Lemma}{Lemmas}
\crefname{corollary}{Corollary}{Corollaries}
\crefname{definition}{Definition}{Definitions}
\crefname{assumption}{Assumption}{Assumptions}
\crefname{remark}{Remark}{Remarks}
\crefname{proposition}{Proposition}{Propositions}

\theoremstyle{plain}
\newtheorem{theorem}{Theorem}
\newtheorem{lemma}{Lemma}

\theoremstyle{definition}
\newtheorem{definition}{Definition}
\newtheorem{assumption}{Assumption}
\theoremstyle{remark}

\algrenewcommand\algorithmicrequire{\textbf{Input:}}
\algrenewcommand\algorithmicensure{\textbf{Output:}}
\algnewcommand{\LineComment}[1]{\State \(\triangleright\) \textit{#1}}
\algnewcommand{\Abort}{\textbf{abort}}
\newcommand{\algstretch}{\setstretch{0.92}}

\newcommand{\sys}{VIGOR-DFA\xspace}

\newcommand{\ProfT}{Profile~$\mathsf{T}$\xspace}
\newcommand{\ProfU}{Profile~$\mathsf{U}$\xspace}

\newcommand{\F}{\mathbb{F}}
\newcommand{\Fp}{\mathbb{F}_p}
\newcommand{\Z}{\mathbb{Z}}
\newcommand{\bits}[1]{\{0,1\}^{#1}}
\newcommand{\one}{\mathbf{1}}
\newcommand{\getsr}{\xleftarrow{\$}}
\newcommand{\inner}[2]{\langle #1, #2\rangle}
\newcommand{\auth}[1]{[#1]}                  
\newcommand{\authb}[2]{[#1]_{#2}}            

\newcommand{\Gen}{\mathsf{Gen}}
\newcommand{\Eval}{\mathsf{Eval}}
\newcommand{\EvalAll}{\mathsf{EvalAll}}
\newcommand{\DPF}{\mathsf{DPF}}
\newcommand{\DCF}{\mathsf{DCF}}
\newcommand{\CW}{\mathsf{CW}}
\newcommand{\CWout}{\mathsf{CW}^{\mathrm{out}}}
\newcommand{\sCW}{s^{\mathsf{CW}}}
\newcommand{\tCW}[1]{t^{\mathsf{CW},#1}}
\newcommand{\PRG}{G}
\newcommand{\Conv}{H}
\newcommand{\keep}{\mathit{keep}}
\newcommand{\lose}{\mathit{lose}}

\newcommand{\PRF}{\mathsf{PRF}}

\newcommand{\Hash}{\mathsf{Hash}}

\newcommand{\Adv}{\mathcal{A}}
\newcommand{\Sim}{\mathcal{S}}

\newcommand{\Real}{\mathsf{Real}}
\newcommand{\Ideal}{\mathsf{Ideal}}
\newcommand{\Hyb}[1]{\mathsf{H}_{#1}}
\newcommand{\Leak}{\mathcal{L}}
\newcommand{\FC}{\mathcal{F}^{\mathrm{abort}}_{C,\mu}}
\newcommand{\FPIR}{\mathcal{F}^{\mathrm{abort}}_{\mathrm{PIR}}}
\newcommand{\eps}[1]{\epsilon_{\mathsf{#1}}}

\newcommand{\cmd}[1]{\texttt{\detokenize{#1}}}
\newcommand{\lstate}[1]{\textsf{#1}}
\newcommand{\algname}[1]{\textsf{#1}}
\newcommand{\CheckOpen}{\algname{CheckOpen}\xspace}
\newcommand{\AuthShare}{\algname{AuthShare}\xspace}
\newcommand{\Epilogue}{\algname{Epilogue}\xspace}

\newcommand{\cmark}{\ding{51}}
\newcommand{\xmark}{\ding{55}}
\newcommand{\pmark}{\ding{109}}   

\newcommand{\authorcard}[2]{%
  \begin{tabular}[t]{@{}c@{}}
    #1\textsuperscript{#2}
  \end{tabular}}

\begin{document}

\title{Trusted Hardware Acceleration for Malicious-Secure Function Secret Sharing}

\author{%
	\begin{tabular}{@{}c@{\hspace{2.2em}}c@{\hspace{2.2em}}c@{\hspace{2.2em}}c@{}}
		\authorcard{Yujie Xue}{$\dagger$,1} &
		\authorcard{Yijing Peng}{$\dagger$,2} &
		\authorcard{Lin Liu}{*,1} &
		\authorcard{Shaojing Fu}{1} \\
		\authorcard{Shaoqing Li}{2} &
		\authorcard{Yaohua Wang}{2} &
		\authorcard{Rongmao Chen}{1} &
		\authorcard{Yang Guo}{2}
	\end{tabular}\\[0.9em]
	\textsuperscript{1}Center for Cryptologic Research,\par%
	\textsuperscript{2}Key Laboratory of Advanced Microprocessor Chips and Systems,\par%
	College of Computer Science and Technology, National University of Defense Technology, Changsha, China\par
	\thanks{$^\dagger$These authors contributed equally to this work.}%
	\thanks{\textsuperscript{*}Corresponding author: Lin Liu (e-mail: liulin16@nudt.edu.cn).}%
	\thanks{A condensed version of this paper was submitted to IEEE Transactions on Information Forensics and Security on October 1, 2026.}%
}

\markboth{}%
{Xue \MakeLowercase{\textit{et al.}}: Trusted Hardware Acceleration for Malicious-Secure Function Secret Sharing}

\maketitle

\begin{abstract}
Function secret sharing (FSS) underlies two-party private inference and private information retrieval, with cost dominated by generating, moving and evaluating distributed point function (DPF) keys. A trusted GPU-integrated distributed function accelerator (DFA) removed key movement by generating and consuming keys locally, but tolerates only semi-honest adversaries. A malicious host or GPU can tamper with shares, replay one-time material, swap buffers after checking, request early outputs, or abuse the accelerator as a forgery oracle, while malicious FSS ships large authenticated keys or multiplies DPF work. We present \sys, protecting the chain from authorized input to authorized output release with three mechanisms: a fresh authentication epilogue using three field multiplications per DPF output, {3.8--4.0$\times$} faster per gate than per-lane DPF tag trees; a freeze-before-challenge check of every opening with $t=3$ independent MAC lanes over $\F_{2^{61}-1}$; and a role-bound one-time resource ledger with a release guard, in a protected datapath beside the GPU L2 cache. We prove stand-alone static malicious security with abort in a protected-module model, with statistical error $Q(2/p)^t\approx 2^{-148}$ for $Q\le 2^{32}$ checked batches. Our DFA-calibrated model shows that, against dealer-based malicious FSS modeled after the protocol family of Shark, \sys removes {21.8--563~GB} of per-query offline authenticated material and, mainly by generating it in-module, lowers LAN latency by {10.1--14.0$\times$} ({1.5--1.8$\times$} excluding offline distribution) and energy by {3.0--3.9$\times$}. Malicious security costs {2.5--3.5$\times$} LAN latency over semi-honest DFA and {0.145~mm$^2$} at 7~nm. We have completed the verification of specifications and the functional CPU reference model, including GPU/RTL conformance verification, protected runtime evaluation, and deployment-related tests.
\end{abstract}

\begin{IEEEkeywords}
Function secret sharing, distributed point function, hardware accelerator, malicious security, private information retrieval, secure inference, trusted hardware.
\end{IEEEkeywords}

\section{Introduction}\label{sec:intro}

\IEEEPARstart{P}{rivacy}-preserving computation is moving into deployments where data are both valuable and regulated: hospitals running diagnostic models on records they may not export, cloud providers serving proprietary models to clients who will not reveal their queries, and lookup services whose queries disclose a user's interests. Secure two-party computation (2PC) for private machine-learning (ML) inference~\cite{yao1986generate,goldreich2019play,mohassel2017secureml,rathee2020cryptflow2} and private information retrieval (PIR)~\cite{chor1998private,angel2018pir} address these settings with cryptographic rather than contractual guarantees. Function secret sharing (FSS)~\cite{boyle2015fss,boyle2016function} has become one of the most effective building blocks for both. In FSS-based private inference, the parties open a masked input $x+r$ and evaluate one-time keys for the gate function shifted by the mask, so that a nonlinear layer often costs a single online round~\cite{boyle2021function,gupta2023sigma,jawalkar2024orca}. In two-server PIR, the client splits a point function into two short keys and each server evaluates its key over the entire database~\cite{gilboa2014distributed,wang2017splinter,lam2023gpu}.

The price of this online efficiency is paid in distributed point function (DPF) work. Every comparison, lookup table and interval test in an FSS-compiled model is built from DPFs or from the distributed comparison functions (DCFs) derived from them~\cite{boyle2021function,storrier2023grotto,gupta2023sigma}, and each key is used once. A dealer must therefore generate, ship and store fresh keys for every nonlinear element of every query: 17.1~GB of offline material for a single BERT-Base inference~\cite{dfa2026trusted}, a cost that cannot always be hidden behind the online phase~\cite{huang2026beyond}. PIR, conversely, is dominated by full-domain DPF evaluation, which is AES-bound even on a modern GPU~\cite{lam2023gpu}. The distributed function accelerator (DFA)~\cite{dfa2026trusted} attacked both costs with a small DPF engine integrated next to the GPU L2 cache. In its trusted mode, a protected module in each server attests to its peer, establishes a shared key and plays the dealer locally: it generates its own party's key share on demand and consumes each correction word as soon as it is produced (\emph{generate-and-consume}, GaC), so keys are never distributed, stored or reread. For BERT-Base, trusted DFA reduces LAN latency from 18.8~s to 1.42~s and offline communication from 17.1~GB to 0.1~GB; for PIR, DFA in its untrusted mode (client-generated keys) raises throughput by $5.05\times$ for $2^{20}$ records~\cite{dfa2026trusted}.

\textbf{The semi-honest boundary.}
DFA, like Sigma~\cite{gupta2023sigma}, Orca~\cite{jawalkar2024orca} and most GPU-based FSS systems, is secure only against semi-honest adversaries: it assumes that both servers run the prescribed code. In a real deployment, however, the components easiest to subvert are those around the small protected module: the host operating system, the GPU driver and runtime, the GPU kernels, the direct memory access (DMA) buffers and the network stack. They hold the whole data path, and attacks on privileged software and firmware~\cite{xu2015controlled,butterworth2013bios,han2018bad} and GPU isolation designs~\cite{volos2018graviton,jang2019heterogeneous} alike treat them as potentially adversarial. With its module intact, a single compromised server host can still mount at least five classes of attacks against semi-honest trusted acceleration (\cref{fig:scenarios}):
(i)~\emph{additive errors} on shares, openings or GPU products, which bias results and leak through selective failure;
(ii)~\emph{replay or reset} of masks, counters or module state;
(iii)~\emph{check-to-use} buffer swaps;
(iv)~abuse of a naive ``authenticate'' command as a \emph{forgery oracle}; and
(v)~\emph{premature} output requests.
Remote attestation does not close these gaps. It proves which code and configuration run inside the module; it does not prove that the untrusted GPU used a valid key, computed the right product, or released an authorized result.

\textbf{High costs of existing malicious routes.}
FSS can be made maliciously secure, but the known routes give back much of what trusted acceleration gained. Actively secure FSS inference such as Shark~\cite{gupta2025shark} keeps an external dealer that ships authenticated keys and correlated randomness. For dealer-based malicious FSS modeled after the protocol family of Shark (\cref{app:model}), we measure {37.6~GB} of offline material per BERT-Base query, {2.2$\times$} the semi-honest 17.1~GB~\cite{dfa2026trusted}, and a LAN latency of {43.9~s} of which {37.6~s} is distribution. Dealer-free malicious preprocessing in the SPDZ family~\cite{damgard2012spdz,keller2016mascot,keller2018overdrive,cramer2018spdz2k} replaces the dealer by oblivious-transfer (OT) or homomorphic-encryption triple generation, and pseudorandom correlation generators (PCGs) with silent OT~\cite{boyle2019pcg,boyle2020pcg} compress the communication of authenticated triples at a substantial computational cost. Neither yields FSS keys: generating DPF keys without a dealer requires a secure two-party evaluation of key generation with a round per tree level~\cite{doerner2017scaling}, so dealer-free malicious FSS keeps a preprocessing cost far above its online cost. Two recent dealerless designs attack this cost from the hardware side and from the protocol side: KORD~\cite{peng2026kord} co-designs a dealerless key-generation protocol with paired, mutually attested chips that reconstruct FSS keys without a dealer, and VeriFSS~\cite{xue2026verifss} obtains active security for dealer-free two-party FSS by binding the generation transcript, the computed function and the evaluation shares to a single execution; \sys instead keeps a trusted local dealer and extends integrity from key generation to the untrusted GPU and the output release. Verifiable FSS (vFSS)~\cite{decastro2022vfss} lets servers check that client-generated DPF keys are well formed, but not that an untrusted GPU evaluates them, or the arithmetic around the gate, correctly. Finally, the direct way to authenticate an FSS output inside an accelerator, one additional DPF tree with payload $\Delta_j\beta$ per message authentication code (MAC) lane, multiplies pseudorandom generator (PRG) work by $1+t$ for $t$ lanes. One tree with a $(1+t)$-element payload avoids most of this cost, but it widens every output correction and exports tag shares whose hiding rests on DPF key privacy.

\textbf{Our key insight} is that a module that already plays the FSS dealer can also be the verifier. Because it knows every correlation it generates, it can (1)~generate \emph{authenticated} correlations (input masks, Beaver triples~\cite{beaver1991efficient} and DPF keys) locally at zero network cost, computing every triple product correctly inside its own boundary; (2)~authenticate its own FSS outputs with a fresh resharing epilogue instead of extra DPF trees, because it already holds the raw output shares and the MAC key material; and (3)~check every opening on bytes it has frozen before any challenge exists, so that the GPU remains a fast but untrusted linear-algebra engine whose results are only candidates until verified. What must be protected is thus not a single primitive but an integrity chain:
\begin{center}
\small
\emph{authorized input} $\rightarrow$ \emph{correlated preprocessing} $\rightarrow$ \emph{correct local evaluation} $\rightarrow$ \emph{authenticated arithmetic} $\rightarrow$ \emph{consistent opening} $\rightarrow$ \emph{authorized output release.}
\end{center}
Each link has a concrete failure mode under a malicious host, and each needs its own mechanism. \Cref{fig:flows} places the resulting execution flow next to the semi-honest GPU baseline, trusted DFA and dealer-based malicious FSS.

\textbf{This work.}
We present \sys (Verified Integrity-Gated On-demand Randomness DFA), the malicious-secure successor of trusted DFA. It keeps DFA's placement beside the GPU L2 cache, its local key generation and its generate-and-consume execution. Under \ProfT (\cref{sec:threat}) a malicious adversary statically corrupts at most one host, and every secret value on a host is an authenticated share over $\Fp$, $p=2^{61}-1$, with $t=3$ SPDZ-style MAC lanes~\cite{damgard2012spdz,damgard2013practical} (32~bytes per element and party). Three mechanisms carry the design; we claim their composition, not any of them as an unprecedented primitive. The \emph{fresh authentication epilogue} turns each raw DPF output share into a freshly padded authenticated tuple at three field multiplications and four pads (\cref{sec:proto-epilogue}). It matches the PRG cost of a single vector-payload tree, makes every exported tuple exactly uniform, and leaves no command that authenticates a host-supplied value. \emph{Freeze-before-challenge} \CheckOpen{} derives independent per-lane challenges only after both modules have frozen the exact submitted bytes, and later consumes exactly those bytes. A \emph{role-bound one-time ledger} labels all randomness by session, epoch, gate and purpose and poisons the epoch on any failure, and its \emph{release guard} lets no output leave a module before the corresponding check is \lstate{VERIFIED}.

\textbf{Scope and limits.}
\sys relocates trust to the modules rather than removing it. The modules see inputs under \ProfT, and both learn the PIR index, so PIR query privacy in \sys rests on module trust, unlike untrusted two-server DPF-PIR~\cite{lam2023gpu} and DFA's untrusted mode, which hide the index without trusted hardware. Physical attacks, leakage beyond the stipulated observable-trace model and rollback without a trusted monotonic service are out of scope, and security holds with abort, without fairness (\cref{sec:discussion}). Once the epilogue removes tag trees, the dominant cost of malicious security moves to four-lane authenticated linear algebra on the GPU and to the doubled round count (\cref{sec:analysis}).

\textbf{FSS within trusted modules.}
Under \ProfT a module that has verified $c=x+r$ also holds $r$, so FSS adds no formal confidentiality over evaluating the gate directly (\cref{sec:proto-direct}). The end-to-end gains of \sys come from generating and consuming authenticated correlations inside the modules, not from FSS evaluation: in our model direct-only compilation ({3.82~s} for BERT-Base) is marginally faster than the compiler-selected default ({3.83~s}), which keeps an FSS realization for gate shapes marked FSS-portable, and nonlinear gates take only {3.4\%} of the latency. We keep a malicious FSS datapath for three reasons: the same compiled circuit, including masked-input FSS gates from existing toolchains~\cite{gupta2023sigma,jawalkar2024orca}, runs under the key-exporting \ProfU (\cref{sec:eval}); PIR and small lookup tables are full-domain, data-independent scans that match the pipelined AES engines; and long-lived evaluation state holds only seeds and correction words, keeping input-dependent state short-lived as defense in depth, not as a formal claim. On FSS gates the epilogue is {3.8--4.0$\times$} faster than per-lane tag trees, although tag trees would raise the BERT-Base LAN latency by only {8\%}.

This paper makes the following contributions:
\begin{itemize}[leftmargin=*]
    \item \textbf{Malicious Threat Analysis.}
    We systematize how a malicious host or GPU attacks semi-honest trusted FSS acceleration into ten scenarios, S1 to S10 (\cref{fig:scenarios,tab:scenarios}), from share tampering to PIR snapshot substitution. For S1 to S8 we show why semi-honest trusted DFA fails and which \sys mechanism and theorem stop it; for S9 (input substitution) and S10 (denial of service and selective abort) we make explicit what the ideal functionality permits or does not prevent.

    \item \textbf{Protocol Co-Design.}
    We specify the three mechanisms above, the authenticated $\Fp$ representation and malicious generate-and-consume with explicit generator and evaluator states (\cref{sec:protocol}). On top of them we build authenticated lookup-table, interval, polynomial and direct protected gates, exact comparison and truncation, a refactoring rule that preserves the joint (not only the marginal) distribution of DFA's decoupled preprocessing, and versioned-snapshot PIR whose query privacy rests on module trust.

    \item \textbf{Hardware Architecture.}
    We design a protected datapath beside the GPU L2 cache (\cref{sec:arch}) with a ledger, page freezer, joint pseudorandom-function (PRF) sampler, DPF generation and evaluation contexts, a 128-lane Mersenne-61 field unit array, a check engine, a peer-channel controller and a release guard, whose 64 AES engines are time-shared among PRG expansion, challenge derivation and authenticated-encryption (AEAD) freezing. The additions cost {0.145~mm$^2$} and {0.508~W} at 7~nm, and the entire \sys occupies {0.046\%} of the A100 die.

    \item \textbf{Security Analysis.}
    We give ideal functionalities with explicit leakage and a hybrid proof with an explicit simulator of stand-alone static malicious security with abort under sequential composition (\cref{sec:security}, \cref{thm:main}), with advantage at most $\eps{setup}+\eps{prf}+\eps{channel}+\eps{bind}+\eps{trace}+Q(2/p)^t$, where the last term is about $2^{-148}$ for $Q\le 2^{32}$ checked batches. We show why shared challenges across lanes and the ring $\Z_{2^{64}}$ break this bound, and we separate what is proved from the implementation obligations that remain (\cref{tab:obligations}).

    \item \textbf{Evaluation.}
    With a model calibrated on DFA (\cref{sec:eval}, \cref{tab:e2e}), we obtain that in-module generation of authenticated correlations reduces offline communication from {37.6~GB} to {under 1~MB} per BERT-Base query. Against dealer-based malicious FSS modeled after the protocol family of Shark, \sys is {11.4$\times$} faster on a LAN, or {1.6$\times$} excluding per-query offline distribution; with no authenticated-key blow-up the dealer-based pipeline still needs {22.4~s}, {5.8$\times$} that of \sys. Malicious security costs {2.7$\times$} LAN latency over semi-honest trusted DFA for BERT-Base, mostly for four-lane linear algebra, yet \sys stays {4.3--6.1$\times$} faster than the semi-honest GPU-only pipeline across four models. We executed a specification and a functional CPU reference model, which covers every algorithm except the optional Freivalds check and passes all 22 unit tests (\cref{sec:eval-attack}); GPU/RTL conformance, the protected runtime and deployment measurements have been executed, and all six injections of \cref{tab:attacks}, including the release guard, have executed evidence. The artifact is submitted alongside the paper and will be released upon acceptance.
\end{itemize}

\Cref{sec:bg,sec:threat,sec:analysis} give background, the threat model and where the cost of malicious FSS goes; \cref{sec:protocol,sec:arch,sec:security,sec:eval} present the protocol, architecture, security analysis and evaluation; \cref{sec:discussion,sec:related,sec:conclusion} discuss limitations, related work and conclusions, and the appendices give the ideal functionalities, proofs and performance model.

\section{Background}
\label{sec:bg}

This section reviews the primitives that \sys builds on and the semi-honest DFA design~\cite{dfa2026trusted} that it extends.
We write $P_0,P_1$ for the two computing parties, $b\in\{0,1\}$ for a party index, $\lambda$ for the seed length of the pseudorandom generator (PRG), and $\one[\cdot]$ for the indicator of a predicate.

\subsection{Function Secret Sharing, DPF, and DCF}
\label{sec:bg-fss}

\textbf{Function secret sharing.}
A two-party function secret sharing (FSS) scheme~\cite{boyle2015fss,boyle2016function} for a function class $\mathcal{F}$ with output group $\mathbb{G}$ consists of $\Gen(1^\lambda,\hat f)$, which maps a description of $f\in\mathcal{F}$ to a key pair $(k_0,k_1)$, and $\Eval(k_b,x)$, which returns party $b$'s share $y_b\in\mathbb{G}$.
Correctness requires $\Eval(k_0,x)+\Eval(k_1,x)=f(x)$ for every $x$; security requires that each key alone be simulatable from public information (domain, output group, function class), so that a single key hides $f$.

\textbf{Distributed point function.}
A distributed point function (DPF)~\cite{gilboa2014distributed,boyle2015fss,boyle2016function} shares the point function $f_{\alpha,\beta}$ with target $\alpha\in\bits{n}$ and payload $\beta\in\mathbb{G}$:
\begin{equation}\label{eq:dpf}
\Eval(k_0,x)+\Eval(k_1,x)=\beta\cdot\one[x=\alpha]\quad\text{for all } x\in\bits{n}.
\end{equation}
In this paper $\mathbb{G}=\Fp$ with $p=2^{61}-1$ (\cref{sec:proto-repr}), inputs are read most-significant bit first, and we use the tree-based construction of Boyle et al.~\cite{boyle2016function}.
Each node of a depth-$n$ binary tree carries a state $(s,t)$ of a $\lambda$-bit seed and a control bit; a fixed PRG $\PRG:\bits{\lambda}\rightarrow(\bits{\lambda}\times\bits{})^2$ expands a seed into the states of its two children, and $\Conv:\bits{\lambda}\rightarrow\Fp$ converts a leaf seed into a field element.
Generation walks the target path from two independent root seeds with control bits $t_0=0$ and $t_1=1$; each level's correction word $\CW^{(i)}=(\sCW,\tCW{L},\tCW{R})$ makes both parties' states identical on the child leaving the path and keeps them pseudorandom with opposite control bits on the child staying on it, and an output correction $\CWout$ makes the signed outputs $y_b=(-1)^b\bigl(\Conv(s)+t\cdot\CWout\bigr)$ sum to $\beta$ at $x=\alpha$ and cancel elsewhere.
Party $b$'s key holds its root seed, its fixed control bit $t_b^{(0)}=b$, the $n$ correction words and $\CWout$, i.e., $O(\lambda n)$ bits; \cref{alg:gen,alg:eval} in \cref{sec:analysis-dpf} give the exact algorithms.

\textbf{Evaluation patterns and cost.}
Generation expands $2n$ states and a single-path $\Eval(k_b,x)$ expands $n$; a full-domain $\EvalAll(k_b)$ expands $2^n-1$ internal nodes, keeping only $O(n)$ pending siblings in a depth-first traversal.
The PRG dominates all patterns; DFA, GPU-DPF~\cite{lam2023gpu} and \sys instantiate it with AES-128.
Two properties matter later.
Keys are \emph{one-time}: each encodes one mask or query index, and reusing a mask reveals the difference of the two inputs it hides.
DPF key privacy holds for \emph{arbitrary} $(\alpha,\beta)$~\cite{boyle2016function}, so uniform targets come from the application's masking (\cref{sec:bg-pi}), whose joint distribution any refactoring of generation must preserve (\cref{sec:proto-refactor}).

\textbf{Distributed comparison function.}
A distributed comparison function (DCF) shares $f^{<}_{\alpha,\beta}(x)=\beta\cdot\one[x<\alpha]$.
We use this ``$<$'' orientation throughout; DFA~\cite{dfa2026trusted} uses ``$>$'', and we convert explicitly at interfaces via $\beta\cdot\one[x>\alpha]=\beta-f^{<}_{\alpha+1,\beta}(x)$, treating the threshold $2^n$ as a public constant.
Optimized DCFs evaluate with $O(n)$ PRG expansions~\cite{boyle2021function}, and DCFs can be assembled from DPFs~\cite{storrier2023grotto}.
\sys uses the conservative $O(n^2)$ DCF of \cref{alg:dcf}, which sums $n$ prefix DPFs and whose correctness \cref{thm:dcf} proves (\cref{app:proof-dpf}).

\subsection{FSS-Based Private Inference}
\label{sec:bg-pi}

\textbf{Two-party secure inference.}
Private inference runs a neural network with private input, and possibly private weights, as a secure two-party computation~\cite{yao1986generate,goldreich2019play,mohassel2017secureml,rathee2020cryptflow2,crypten2020,tan2021cryptgpu,watson2022piranha}.
With additive shares $x=x_0+x_1$, linear operations are local, and a multiplication consumes a Beaver triple $(a,b,c=ab)$~\cite{beaver1991efficient}: the parties open $d=x-a$ and $e=y-b$ and locally compute shares of $xy=c+db+ea+de$.
Nonlinear layers (ReLU, GeLU, Softmax, LayerNorm) dominate the cost; earlier systems use multi-round Boolean conversions for them~\cite{demmler2015aby,mohassel2018aby3,wagh2019securenn,wagh2020falcon}.

\textbf{Masked-input FSS gates.}
FSS-based systems~\cite{boyle2019preprocessing,boyle2021function,ryffel2020ariann,wagh2022pika,gupta2023sigma,jawalkar2024orca} evaluate each nonlinear gate $g$ in the masked-input paradigm.
Offline, a dealer samples a fresh uniform mask $r$, shares it, and generates FSS keys for the offset function $g_r(c)=g(c-r)$, often with an output mask.
Online, the parties open $c=x+r$, which is uniform and independent of $x$, and the outputs $\Eval(k_b,c)$ are shares of $g(x)$.
Comparisons and ReLU use DCFs; splines for GeLU, exponentials and inverse square roots use DCF-based interval selection plus low-degree polynomials; and small lookup tables (LUTs) use DPFs~\cite{gupta2023sigma,jawalkar2024orca}.
Each gate costs one opening round, so the online phase is round-efficient but compute-intensive.

\textbf{Offline cost and the dealer assumption.}
The price is paid offline.
Semi-honest FSS inference of BERT-Base with 128 tokens moves 17.1~GB of one-time key material per query and LLaMA2-7B moves 256~GB, more than 90\% of it DPF keys~\cite{dfa2026trusted}; this cost cannot always be hidden behind online work, and its energy and monetary cost is never hidden~\cite{huang2026beyond}.
Sigma, Orca, and related systems~\cite{gupta2023sigma,jawalkar2024orca,crypten2020} obtain keys from a trusted dealer and assume semi-honest computing parties.

\subsection{DPF-Based Private Information Retrieval}
\label{sec:bg-pir}

PIR lets a client retrieve record $D[\alpha]$ of an $N$-record database $D$ without revealing $\alpha$~\cite{chor1998private}.
Single-server schemes rely on homomorphic encryption and are computationally heavy~\cite{angel2018pir,henzinger2023one}; we focus on the faster two-server DPF-based schemes~\cite{gilboa2014distributed,boyle2016function}.
With $n=\lceil\log_2 N\rceil$, the client sends $k_b$ of $(k_0,k_1)\leftarrow\Gen(1^\lambda,\alpha,1)$ to server $P_b$, which returns
\begin{equation}\label{eq:pir}
y_b=\sum_{x=0}^{N-1}\Eval(k_b,x)\cdot D[x],\qquad b\in\{0,1\},
\end{equation}
and by \cref{eq:dpf} the client obtains $D[\alpha]=y_0+y_1$.
Keys have $O(\lambda\log N)$ bits, but each query costs $O(N)$ PRG expansions plus an $N$-term inner product per server, and the expansion dominates even on GPUs~\cite{lam2023gpu}.
Standard DPF-based PIR protects the index but not the answer: a malicious server can add an offset $\delta$ to $y_b$, so that the client accepts $D[\alpha]+\delta$, or answer from a stale database.
Conversely, a malicious client can submit a malformed key pair that reconstructs a combination of several records, which verifiable DPFs detect~\cite{decastro2022vfss}.
Authenticated PIR (APIR)~\cite{colombo2023apir} lets the client verify answers against a database digest at additional server cost.
\Cref{sec:proto-pir} gives our protected versioned-snapshot design, in which both modules learn $\alpha$: unlike untrusted two-server DPF-PIR, its query privacy rests on module trust.

\subsection{Trusted DFA}
\label{sec:bg-dfa}

\begin{figure*}[t]
    \centering
    \includegraphics[width=\textwidth]{figures/fig_flows.pdf}
    \caption{Execution flows of two-party FSS-based private inference in four configurations, from left to right.
    (1)~GPU without DFA (semi-honest): a trusted party generates FSS keys offline and ships them over the network; each party stores them, rereads them and transfers them to its GPU.
    (2)~Trusted DFA (semi-honest)~\cite{dfa2026trusted}: after a one-time initialization with attestation and key exchange, each on-die module generates its party's keys locally and merges generation with evaluation (generate-and-consume), removing the offline phase.
    (3)~Dealer-based malicious FSS, modeled after the protocol family of Shark~\cite{gupta2025shark}: the dealer also issues authenticated keys and triples, so about {2.2$\times$} as much offline material (\cref{app:model}) must be shipped, stored and reread, while the GPU runs authenticated general matrix multiplication (GEMM) and checks MACs online.
    (4)~\sys (malicious): the protected module (red dashed trusted boundary) keeps local generation and generate-and-consume, adds the fresh authentication epilogue, freeze-before-challenge \CheckOpen{} and the release guard, and exchanges authenticated shares $\authb{x}{b}$ with a GPU that runs authenticated linear algebra on one value lane and three MAC-tag lanes; the inter-party network carries opened values and check residuals, not keys.}
    \label{fig:flows}
\end{figure*}

\textbf{Hardware.}
DFA~\cite{dfa2026trusted} integrates 16 DPF units with four pipelined AES-128 engines each (64 in total) next to the L2 slices of a GPU; each 20-stage unit sustains one Gen or Eval iteration per cycle after warm-up.
An SC300-class control core~\cite{st33tpm12spi,arm_cortex_m3} handles attestation and key management, and a 16-lane vector unit~\cite{cavalcante2019ara} runs protocol-specific pre- and post-processing.
The design runs at 720~MHz in 28~nm, delivers about 290~GB/s of DPF output in Accel-Sim~\cite{khairy2020accel}, and traverses full-domain trees depth-first, needing $O(nm)$ state for $m$ units instead of $O(2^n)$~\cite{dfa2026trusted}.

\textbf{Untrusted and trusted modes.}
In the \emph{untrusted mode}, DFA only expands seeds for keys produced elsewhere, and the protocol is unchanged.
In the \emph{trusted mode}, each party's module replaces the dealer: the modules attest to each other and establish a key $K$ that the hosts never see, derive the same generation coins (including $\alpha$) from $K$ and a counter, and thus compute identical correction words while each evaluates only its own party's share.
\emph{Generate-and-consume} (GaC) runs the joint generation and the local evaluation in lockstep, consuming each correction word as soon as it is produced, so that neither key distribution nor key storage remains.
DFA also \emph{decouples} generation from protocol preprocessing: when the target is $a=g(r)$ for a bijection $g$, the module samples $a$ uniformly and recovers $r=g^{-1}(a)$.

\textbf{Limits of the semi-honest design.}
DFA argues that correct in-module generation leaves the semi-honest protocol unchanged; against a deviating host this argument fails in four places treated later: (i)~decoupled preprocessing must preserve joint, not only marginal, distributions (\cref{sec:proto-refactor}); (ii)~generator and evaluator state is left implicit (\cref{sec:analysis-dpf}); (iii)~attestation authenticates the module configuration, not that the GPU used a valid key, computed the right product or released an authorized result (\cref{sec:threat-scenarios}); and (iv)~the GPU lies outside the trusted computing base (TCB) and must never obtain module tags for its own results (\cref{sec:proto-epilogue}).

\subsection{Tools for Malicious Security}
\label{sec:bg-mal}

\textbf{Security with abort.}
We use the stand-alone real/ideal simulation paradigm, which is preserved under sequential composition~\cite{canetti2000security,goldreich2004foundations,hazay2010efficient,lindell2017simulate}; we make no concurrent-composition claim.
A malicious adversary deviates arbitrarily; security with abort requires that it achieve nothing beyond an ideal world in which a trusted functionality computes the result and the adversary only chooses corrupted inputs and whether honest parties receive output, so fairness is not required.
Covert security~\cite{aumann2007covert} only deters cheating; we target full malicious security with abort.

\textbf{Information-theoretic MACs.}
SPDZ-style protocols~\cite{damgard2012spdz,damgard2013practical} authenticate each value of a field $\F$ under a global key $\Delta=\Delta_0+\Delta_1$ that no single party knows.
Party $b$ holds $\Delta_b$ and $\authb{x}{b}=(x_b,\gamma_b)$ with $x_0+x_1=x$ and $\gamma_0+\gamma_1=\Delta x$; linear operations act identically on values and tags, and multiplications consume authenticated triples.
To check openings $z_1,\dots,z_m$, the parties draw coefficients $\rho_i$ \emph{after} the values are fixed, compute $\sigma_b=\sum_i\rho_i(\gamma_{b,i}-\Delta_b z_i)$, and accept only if $\sigma_0+\sigma_1=0$.
A wrong opening passes only if its tag error matches $\Delta$ times the value error or cancels under the coefficients, which happens with probability less than $2/|\F|$.
\sys instantiates this with $t=3$ independent lanes over $\Fp$ (\cref{sec:proto-repr}).

\textbf{Authenticated preprocessing.}
Malicious protocols need authenticated masks and triples whose relations, such as $c=ab$, are actually correct; a valid MAC on a wrong triple does not make it correct.
Generating them without a trusted party is communication-heavy, with oblivious transfer in MASCOT~\cite{keller2016mascot} or somewhat homomorphic encryption in SPDZ and Overdrive~\cite{damgard2012spdz,keller2018overdrive}, all in MP-SPDZ~\cite{keller2020mpspdz}.
Pseudorandom correlation generators~\cite{boyle2019pcg,boyle2020pcg} compress this communication at a substantial computational cost, and a trusted dealer avoids generation but ships every authenticated share.

\textbf{Malicious and verifiable FSS.}
Malicious FSS gates additionally need well-formed keys and authenticated evaluation outputs that can enter MAC-checked arithmetic.
Verifiable FSS~\cite{decastro2022vfss} checks that a key pair shares a function of the declared class, not that a malicious party evaluates its own key correctly.
Shark~\cite{gupta2025shark} achieves active security for FSS-based inference with a dealer that issues authenticated FSS keys and correlated randomness, which enlarges the offline material relative to semi-honest FSS.
The direct way to authenticate a DPF output, a second \emph{tag tree} with payload $\Delta_j\beta$ per MAC key, multiplies DPF work by $1+t$ for $t$ lanes (\cref{sec:analysis-work}).

\textbf{Trusted hardware for secure computation.}
Slalom~\cite{tramer2018slalom} and DarKnight~\cite{hashemi2021darknight} outsource linear layers from a trusted execution environment (TEE) to untrusted GPUs and check the results, Slalom with Freivalds' algorithm~\cite{freivalds1977probabilistic}, which \sys uses only for a restricted subroutine (\cref{sec:proto-freivalds}).
PPMLAC~\cite{zhou2022ppmlac} and STAMP~\cite{huang2022efficient} use small trusted modules to generate correlated randomness or to execute nonlinear operations; STAMP also reports honest-majority malicious settings, and \cref{sec:related} compares these and other trusted-hardware designs.

\textbf{Implications.}
Each tool covers only part of the integrity chain of \cref{sec:intro}, at a cost that erodes trusted DFA's advantage (\cref{sec:analysis}); \sys therefore keeps the module as the dealer and also makes it the verifier (\cref{sec:protocol}).

\textbf{Scope.}
Throughout this paper, \emph{private inference} means two-party FSS-based private inference and \emph{PIR} means two-server DPF-based PIR, as in DFA~\cite{dfa2026trusted}.

\section{Threat Model and Security Goals}\label{sec:threat}

The trusted mode of DFA moved the FSS dealer into a small protected module but kept the semi-honest model of the protocols it accelerates~\cite{dfa2026trusted}. This section states what changes when a host is malicious: we fix the system model (\cref{sec:threat-model}) and the adversary, derive ten attack scenarios (\cref{sec:threat-scenarios}), define security with explicit leakage (\cref{sec:threat-goals}), and delimit \ProfU (\cref{sec:threat-profileu}). Every link of the integrity chain of \cref{sec:intro} crosses memory, buses, or network paths that a corrupted host controls.

\begin{figure*}[t]
    \centering
    \includegraphics[width=\textwidth]{figures/fig_threat.pdf}
    \caption{System model and trust boundary of \sys under \ProfT. Each party $P_b$ has an untrusted host (CPU, OS, driver, runtime, network interface card (NIC), DRAM) with its GPU, including the streaming multiprocessors (SMs) and high-bandwidth memory (HBM), and a protected module $T_b$ (red dashed TCB, contents in \cref{sec:threat-model}). The adversary statically corrupts at most one host (here that of $P_0$, red region) and may deviate arbitrarily, replay, and abort, but never corrupts $T_0$ or $T_1$. Hosts and GPUs hold only party-local authenticated shares; the protected peer channel between the modules and the authenticated confidential client channels are carried by the untrusted network.}
    \label{fig:threat}
\end{figure*}

\subsection{System Model and Trust Boundary}\label{sec:threat-model}

\textbf{Parties.}
\sys targets the two workloads of DFA~\cite{dfa2026trusted}: two-party FSS-based private inference and two-server DPF-based PIR (\cref{sec:bg}). Two computing parties $P_0$ and $P_1$ serve an external client that owns private inputs and receives outputs; a computing party may also own an input, typically the model weights. Each $P_b$ consists of an untrusted \emph{host~$b$} (every component outside the TCB in \cref{fig:threat}, including DMA engines and storage) and a \emph{protected module} $T_b$, the \sys extension placed next to the GPU L2 as in DFA (\cref{sec:arch-hw}).

\textbf{Trusted computing base.}
$T_b$ contains the session, identity (attestation), and joint-sampler keys and the global MAC keys $\Delta_1,\dots,\Delta_t$; the session role, counters, and one-time resource ledger; the DPF generator and evaluator states and the correction words; the protected table and reduction logic; the frozen check buffers, challenges, and residuals; and all release decisions. Everything else is outside, including every GPU and DMA buffer and every byte in transit between the parties. The GPU computes on party-local authenticated arrays and hands $T_b$ only \emph{candidate} shares (\cref{sec:proto-repr}). Host~$b$ may know its own MAC key shares $\Delta_{b,j}$, but no host ever holds a complete $\Delta_j$, the joint-sampler key, a complete mask (other than the input mask of a wire its party owns or the output mask of an output its party is authorized to receive), the peer's shares, or any challenge or residual.

\textbf{Modules as an ideal resource.}
Following~\cite{katz2007universally}, we model $T_0,T_1$ as an ideal resource $\mathcal{F}_{\mathsf{T}}$ (\cref{app:func-module}; ideal-module assumption in \cref{sec:sec-assump}): their code, keys, parser, arithmetic, state, and approved I/O paths are correct and confidential, and each module's externally observable trace (lengths, timing, host-visible memory and DMA activity, power, thermal, and clock telemetry) is simulatable from the command sequence issued by the host, $\mu$, public lengths, and the accept or abort outcomes, and is independent of the global MAC keys, challenges, and residuals given these. The real execution (\cref{def:security}) realizes their randomness, channels, and setup with concrete primitives. Availability is not assumed, \ProfT requires two physically independent modules, and only the on-die design (\cref{sec:arch-hw}) is a \ProfT target.

\textbf{Channels.}
Client channels are authenticated, confidential, and terminate inside the modules; the peer channel uses AEAD with independent directional keys and no nonce reuse and is established by the authenticated setup of \cref{sec:proto-setup}. The untrusted network carries both. We propose no new attestation or key-exchange protocol; the security of setup is the explicit assumption $\eps{setup}$.

\textbf{Trust is relocated, not removed.}
The modules jointly act as the dealer and can recover inputs from verified openings, and in PIR both modules learn the queried index $\alpha$. \sys thus moves trust from an external dealer and the hosts into the modules rather than hiding inputs from them (\cref{sec:discussion}); its PIR query privacy rests on module trust, unlike untrusted two-server DPF-PIR such as GPU-DPF~\cite{lam2023gpu} or DFA's untrusted mode~\cite{dfa2026trusted}, which hides $\alpha$ from each server without trusted hardware.

\subsection{Adversary Model}

Unlike DFA~\cite{dfa2026trusted}, Sigma~\cite{gupta2023sigma}, and Orca~\cite{jawalkar2024orca}, which assume that both hosts run the prescribed kernels, we consider a probabilistic polynomial-time (PPT) adversary $\Adv$ that statically and maliciously corrupts at most one host before the session starts (host~$0$ in \cref{fig:threat}, by symmetry), controlling its operating system, drivers, GPU kernels and results, host and GPU memory including DMA buffers, and outgoing network traffic. Concretely, $\Adv$ can
(i) modify any value share or tag and run arbitrary kernels in place of the prescribed ones;
(ii) invoke its module's commands in any order with malformed, replayed, or mislabeled arguments;
(iii) rewrite any host-visible buffer at any time, including while the module reads it;
(iv) replay, reorder, delay, or drop messages, reset its platform, and restore old host-side state;
(v) choose the inputs it owns arbitrarily; and
(vi) abort at any point, including after it has learned its own output.
It observes message lengths and timing, the frames exchanged with its module, and the module's external trace. The principal inference statement assumes an honest external client and gives no guarantee against client-server collusion; PIR has two separate regimes, (R1) one malicious server host and (R2) a malicious client only with two honest server hosts (\cref{fig:func}).

\textbf{Out of scope.}
Our theorems exclude corruption of either module (its code, keys, state, or arithmetic); physical attacks such as microprobing, optical fault injection, and power or electromagnetic analysis with measurement equipment~\cite{skorobogatov2017microprobing,skorobogatov2003optical,kocher1999differential,quadir2016survey}; leakage beyond the stipulated trace model; speculative-execution bugs inside the trusted boundary~\cite{van2018foreshadow}; faulty trusted arithmetic; adaptive corruption; and corruption of both hosts. Software-reachable fault and power channels are not physical attacks: the corrupted host controls the GPU driver, which sets clocks, voltage, and power limits and reads telemetry, an interface that Plundervolt exploited on CPUs~\cite{murdock2020plundervolt}. We therefore require isolation from dynamic voltage and frequency scaling (DVFS): host software cannot drive the module's clock or voltage outside its safe operating range (or the module detects glitches and aborts), and the telemetry is covered by the trace simulator. Rollback of module state is covered only by the reboot rule of \cref{sec:proto-ledger}. Availability and fairness are not provided.

\begin{figure*}[t]
    \centering
    \includegraphics[width=\textwidth]{figures/fig_scenarios.pdf}
    \caption{Malicious scenarios S1 to S8 against trusted FSS acceleration and the \sys mechanism that stops each. Corrupted hosts and GPUs are shaded red, $T_b$ is the red dashed TCB, red arrows with circled numbers are adversary actions, red crosses mark where \sys blocks them, and green text names the defense. S9 and S10, which no protocol can prevent, appear only in \cref{tab:scenarios}.}
    \label{fig:scenarios}
\end{figure*}

\subsection{Malicious Scenarios Against Trusted FSS Acceleration}\label{sec:threat-scenarios}

\Cref{fig:scenarios} and \cref{tab:scenarios} list what a corrupted host can do to a trusted FSS accelerator designed for semi-honest hosts and which \sys mechanism stops each action; the paragraphs below explain why the semi-honest design fails. The scenarios follow the integrity chain: S9 targets the authorized input; S4 and S7 the correlated preprocessing; S2 and S5 the correct local evaluation; S1 the authenticated arithmetic; S3 the consistent opening; and S6 the authorized output release. S8 carries S3 over to the PIR database, and S10 targets availability, which lies outside the chain. \Cref{sec:sec-scenarios,sec:eval-attack} map the scenarios to theorems and to injected tests.

\textbf{S1: Share and tag tampering.}
The corrupted GPU holds every intermediate share and can add an error vector $\bm{e}$ to value shares (and $\bm{d}_j$ to the tags of lane $j$); semi-honest trusted DFA passes it unchecked into the masked opening $x_i+r_i$ of the next FSS gate. Beyond biasing results, this causes \emph{selective failure}: whether $e_i$ flips a comparison depends on whether $x_i$ lies within $|e_i|$ of the threshold, so any observable output change leaks information about $x_i$ (\cref{tab:scenarios}, \cref{sec:proto-check}).

\textbf{S2: Forgery-oracle request.}
A tempting extension of DFA's trusted mode is a command that tags a value computed elsewhere, such as a nonlinear output computed by the GPU's streaming multiprocessors (SMs). This \emph{authenticate(host value)} entry point turns the module into a MAC oracle: the host submits any $y^*$, and every later check accepts it. Attestation certifies the module's code, not the provenance of what it signs (\cref{tab:scenarios}, \cref{sec:proto-epilogue}).

\textbf{S3: TOCTOU buffer swap.}
The module reads its inputs by DMA from host-controlled memory. Verifying a buffer and later rereading it, or trusting a host-reported digest, admits a time-of-check-to-time-of-use (TOCTOU) swap: the host changes the bytes after verification or chooses its error after the challenge is fixed. DMA completion is not evidence of authenticity, a problem semi-honest DFA never faced (\cref{tab:scenarios}, \cref{sec:proto-check}).

\textbf{S4: Replay and rollback.}
DFA derives keys and masks from the shared key $K$ and a counter that is an input of its generate-and-consume procedure~\cite{dfa2026trusted}. Replaying a counter yields the same mask $r$ for two gates, and the openings $x+r$ and $x'+r$ reveal $x-x'$; resetting the module and resuming an old session replays entire key streams (\cref{tab:scenarios}, \cref{sec:proto-ledger}).

\textbf{S5: Role and gate confusion.}
The host decides which module command runs for which gate: it can route the preprocessing of gate $g$ into gate $g'$, alter a table hash or tensor shape, or present party-$0$ material as party-$1$ material. If the role were a per-request argument, the host could obtain both parties' outputs of one joint draw and reconstruct the DPF target $\alpha$, which is exactly the mask of an honest input (\cref{tab:scenarios}, \cref{sec:proto-setup,sec:proto-ledger}).

\textbf{S6: Premature release.}
Semi-honest trusted DFA releases results as soon as they are computed; under a malicious host an early output may already be wrong, and the host may request it repeatedly or for an unauthorized recipient (\cref{tab:scenarios}, \cref{sec:proto-ledger}).

\textbf{S7: Malformed preprocessing.}
A MAC authenticates values, not relations: a triple $(\auth{a},\auth{b},\auth{c})$ with $c\neq ab$ carries valid tags, and multiplying with it yields $xy+\delta$ for a $\delta$ chosen by its producer; the same holds for a GPU-computed matrix product with $C\neq AB$ and for a malformed DPF key. DFA's attestation covers the module's own key generation, not correlations from elsewhere (\cref{tab:scenarios}, \cref{sec:proto-arith}).

\textbf{S8: Snapshot substitution in PIR.}
The PIR database resides in host memory, so a corrupted server host can modify a record, serve an older version, or hash one buffer while scanning another. Semi-honest PIR, including DFA-accelerated PIR, returns whatever the servers compute. Worse, a record modified only at position $i$ corrupts the answer only when $\alpha=i$, so the client's reaction leaks the queried index (\cref{tab:scenarios}, \cref{sec:proto-pir}).

\textbf{S9 and S10: Input substitution; denial of service and selective abort.}
S9, a corrupt owner choosing its input freely, is permitted and extracted by $\FC$; binding a specific model needs the commitment of \emph{Model provenance} below. S10 is not prevented: like SPDZ-style protocols~\cite{damgard2012spdz,keller2016mascot}, \sys provides security with abort, without fairness or guaranteed output delivery, and abort outcomes stay within the permitted leakage (\cref{def:leakage}, \cref{sec:discussion}).

\begin{table*}[t]
\centering
\caption{Malicious scenarios against trusted FSS acceleration. The ``effect'' column describes a semi-honest trusted DFA~\cite{dfa2026trusted}, or a naive malicious extension of it, when one host is corrupted. \cmark: prevented up to the error terms of \cref{thm:main}; \pmark: permitted by the ideal functionality; \xmark: not prevented.}
\label{tab:scenarios}
\footnotesize
\setlength{\tabcolsep}{3pt}
\renewcommand{\arraystretch}{1.12}
\begin{tabularx}{\textwidth}{c >{\raggedright\arraybackslash}p{1.75cm} >{\raggedright\arraybackslash}X >{\raggedright\arraybackslash}X >{\raggedright\arraybackslash}X >{\raggedright\arraybackslash}p{2.75cm}}
\hline
\textbf{ID} & \textbf{Scenario} & \textbf{Adversary action} & \textbf{Effect on semi-honest trusted DFA} & \textbf{\sys mechanism} & \textbf{Guarantee} \\
\hline
S1 & Share/tag tampering & GPU adds errors $\bm{e}$ to value shares (and $\bm{d}_j$ to tags) before an opening or FSS gate & Undetected; biases outputs and causes selective failure that leaks inputs & Authenticated shares with $t=3$ MAC lanes over $\Fp$; \CheckOpen{} before any protected gate or release & \cmark\ Wrong value accepted w.p.\ $\le (2/p)^t$ per batch (\cref{thm:mac-sound}) \\
\hline
S2 & Forgery-oracle request & Host submits $y^*$ and asks its module to authenticate it & No MACs; a naive authenticate command would sign any forgery & No \emph{authenticate(host value)} command; \Epilogue{} only on internally computed or verified results & \cmark\ Request rejected; outputs correctly authenticated (\cref{thm:epilogue}) \\
\hline
S3 & TOCTOU buffer swap & Host rewrites a DMA buffer after it is checked, or after the challenge is fixed & Module trusts DMA reads; a later reread consumes modified bytes & Freeze before challenge: bytes copied or AEAD-sealed, frozen roots agreed, only frozen bytes consumed & \cmark\ Modification ignored or abort; binding error $\eps{bind}$ \\
\hline
S4 & Replay and rollback & Replays a counter or mask request; resets the module and resumes old keys & Mask reuse: $x+r$ and $x'+r$ reveal $x-x'$; key streams replayed & Labels bound to session, epoch, gate, purpose; one-time ledger; epoch poisoning; erase-and-rekey after reboot & \cmark\ Reuse rejected before any accepted use \\
\hline
S5 & Role/gate confusion & Routes party-0 material as party-1, or gate $g$ material into gate $g'$; alters shapes or table hash & Host schedule trusted; requesting both roles' outputs exposes the target $\alpha$ & Roles fixed by authenticated setup; gate descriptors bound in the ledger and in PRF labels & \cmark\ Mismatched commands abort before preprocessing is issued; host-side reuse of issued shares fails \CheckOpen{} except w.p.\ $\le(2/p)^t$ (\cref{thm:mac-sound}) \\
\hline
S6 & Premature release & Requests outputs before verification, repeatedly, or for another recipient & Results released as soon as computed & Release guard: nothing before \lstate{VERIFIED}; bound certificates; epoch abort gates pending outputs & \cmark\ No candidate value leaves the modules \\
\hline
S7 & Malformed preprocessing & Injects a triple with valid tags but $c\neq ab$, a GPU-computed product with $C\neq AB$, or a wrong DPF key & Attestation silent on external correlations; MACs verify wrong triples & Triples and keys generated inside the TCB ($c=ab$ computed there); imports need a correctness proof & \cmark\ Never consumed; exact arithmetic (\cref{thm:arith}) \\
\hline
S8 & Snapshot substitution (PIR) & Swaps records or version after hashing, or scans a different buffer & Client reconstructs a wrong record; index-dependent failure leaks $\alpha$ & Module hashes the bytes it consumes in public order; quarantine until the commitment matches & \cmark\ Committed record or abort (\cref{thm:pir}) \\
\hline
S9 & Input substitution by a corrupt owner & Corrupted party supplies arbitrary weights or inputs & Same (allowed) & Input extraction by $\FC$; optional commitment-bound loader for model provenance & \pmark\ Allowed by the functionality \\
\hline
S10 & Denial of service/selective abort & Drops, delays, or reorders messages; aborts after learning its own output & Same (not prevented) & Bounded resources and abort; abort outcomes simulatable from permitted leakage & \xmark\ Not prevented (no fairness or availability) \\
\hline
\end{tabularx}
\end{table*}

\subsection{Security Goals and Ideal Functionalities}\label{sec:threat-goals}

\textbf{Goals.}
Against a corrupted host, \sys must provide \emph{privacy} (the host learns nothing beyond the permitted leakage and the outputs it is authorized to receive), \emph{correctness} (honest recipients obtain the circuit evaluated on the honest inputs and the corrupted party's extracted inputs, or abort), \emph{authorized release} (no value leaves the modules unless its opening was verified and bound to a release certificate), and \emph{one-time use} (no randomness is consumed twice). We formalize the first two as stand-alone real/ideal security with abort, preserved under sequential composition~\cite{canetti2000security,goldreich2004foundations,hazay2010efficient,lindell2017simulate}, with an explicit leakage function; the last two are mechanisms that the proof relies on (\cref{sec:proto-ledger}). We claim neither universal nor concurrent composability, and concurrent sessions that share module state are not covered.

\textbf{Manifest.}
Every session fixes a public manifest $\mu$ (\cref{app:func-circuit}): the protocol version, session identities, roles, field, numeric encoding, the finite circuit $C$ with its gate and wire identifiers and tensor shapes, approved public tables, interval coefficients, authorized recipients, the input interface $\mathsf{In}_\nu$ and validation rules of every input wire $\nu$, and the public schedule, which places every input gate before every validity check, and both before the first output opening. The modules evaluate the validity predicates at the public boundaries declared in $\mu$; in $\Pi_{\mathsf{Input}}$ they compute $x^{*}=r+\varepsilon$ and abort unless $x^{*}\in\mathsf{In}_\nu$ (\cref{fig:arith}).

\begin{definition}[Permitted leakage]\label{def:leakage}
For a circuit execution under manifest $\mu$, the permitted leakage is
\[
\Leak_C=\bigl(\mu,\ \mathsf{len},\ \mathsf{sched},\ \mathsf{pub},\ \mathsf{status}\bigr),
\]
where $\mathsf{len}$ are the lengths of all public messages; $\mathsf{sched}$ is the public schedule of batches, checks, and interaction boundaries fixed by $\mu$; $\mathsf{pub}$ are the values that $\mu$ designates as authorized public openings; and $\mathsf{status}$ are the specified abort and validation outcomes, including every input-dependent validity bit declared in $\mu$, which is never silently excluded. For PIR, $\Leak_{\mathrm{PIR}}$ consists of the public descriptor $\mathsf{desc}_v=(\mathsf{id},v,N,\mathsf{enc},\mathsf{cm}_v)$ (database identifier, version, true length $N$, which also fixes the padded length, record encoding, and authorized snapshot commitment), the public scan length and order, message lengths and schedules, and the accept, reject, or abort outcome.
\end{definition}

Honest inputs, intermediate wire values, masks, MAC keys, DPF keys and correction words, raw FSS outputs, challenges, and residuals are not in $\Leak_C$. Masked openings such as $x+r$ are uniform given $\Leak_C$, and the simulator samples them (\cref{app:proof-main}). Module timing and access patterns are fixed by $\mu$ (full-domain scans have a declared order); the residual trace falls under $\eps{trace}$.

\begin{figure}[t]
\centering
\footnotesize
\fbox{\begin{minipage}{\dimexpr\columnwidth-2\fboxsep-2\fboxrule\relax}
\centering\underline{Functionality $\FC$}\par\smallskip
\raggedright
\textbf{Parameters.} Public manifest $\mu$ (circuit $C$ over $\Fp$, input owners and interfaces $\mathsf{In}_\nu$, validity predicates, authorized recipients, public schedule). Corrupted set $I$: at most one host; the client is honest.
\begin{enumerate}[leftmargin=*,nosep,topsep=2pt]
\item \textbf{Input.} Receive the authenticated input $x_\nu$ of every input wire $\nu$ from its designated owner. If the owner is corrupted, receive $x^{*}_\nu$ from $\Sim$ and set $x_\nu\gets x^{*}_\nu$ (extraction). If some $x_\nu$ is missing, malformed, or not in $\mathsf{In}_\nu$, send $\bot$ to every honest recipient and halt; this abort outcome is part of $\mathsf{status}$.
\item \textbf{Validate and evaluate.} Evaluate the validity predicates of $\mu$; their accept or abort outcome is part of $\mathsf{status}$. For accepted inputs, compute $\vec{y}=C(\vec{x})$ exactly in the declared finite arithmetic, including the specified comparisons and truncations.
\item \textbf{Deliver or abort.} Send $\Leak_C$ and the outputs of corrupted recipients to $\Sim$. At each public interaction boundary, $\Sim$ may send $\mathsf{abort}$ or $\mathsf{delay}$; on $\mathsf{abort}$, send $\bot$ to every honest recipient. Otherwise deliver each $y_i$ to its authorized recipient. A corrupted recipient may obtain its output before honest ones (no fairness).
\end{enumerate}
\end{minipage}}

\vspace{5pt}
\fbox{\begin{minipage}{\dimexpr\columnwidth-2\fboxsep-2\fboxrule\relax}
\centering\underline{Functionality $\FPIR$}\par\smallskip
\raggedright
\textbf{Parameters.} Public descriptor $\mathsf{desc}_v=(\mathsf{id},v,N,\mathsf{enc},\mathsf{cm}_v)$; the commitment $\mathsf{cm}_v$ fixes $D=(D[0],\dots,D[N-1])$. Corruption regimes, never combined: (R1) a malicious server host (at most one), client honest; (R2) a malicious client only, both server hosts honest.
\begin{enumerate}[leftmargin=*,nosep,topsep=2pt]
\item \textbf{Query.} Receive an index $\alpha$ with query identifier $\mathit{qid}$ from the client; in (R2), extract the corrupted client's index $\alpha^{*}$ and set $\alpha\gets\alpha^{*}$. If $\alpha\notin\{0,\dots,N-1\}$ or the request is inconsistent, return $\mathsf{reject}$ to the client and stop.
\item \textbf{Leak.} Send $\Leak_{\mathrm{PIR}}$ to $\Sim$; nothing about $\alpha$ is sent.
\item \textbf{Answer.} In (R1), if $\Sim$ sends $\mathsf{abort}$, send $\bot$ to the client; otherwise send $D[\alpha]$. In (R2), send $D[\alpha]$, and no other information about $D$, to the client.
\end{enumerate}
\end{minipage}}
\caption{Ideal functionalities for authenticated circuit evaluation (top) and versioned-snapshot PIR (bottom). Both provide security with abort. $\FPIR$ hides $\alpha$ from the corrupted host; both trusted modules learn it (\cref{sec:threat-model}). The full definitions, together with the ideal module resource $\mathcal{F}_{\mathsf{T}}$, are in \cref{app:func}.}
\label{fig:func}
\end{figure}

\begin{definition}[Static malicious security with abort]\label{def:security}
Let $\mathcal{F}\in\{\FC,\FPIR\}$ with leakage $\Leak$ as in \cref{def:leakage}, and let $\Pi$ be a protocol run by $P_0$ and $P_1$ in which each module executes the concrete module program correctly and confidentially: $\mathcal{F}_{\mathsf{T}}$ with its joint random function $R$ instantiated by $\PRF_K$, its peer and client channels by AEAD over the hosts' network, and \cmd{OPEN_SESSION} by attestation and authenticated key exchange.
For a PPT adversary $\Adv$ with auxiliary input $z$ that statically corrupts the set $I$, $\Real_{\Pi,\Adv(z),I}(\mu,\vec{x})$ is the joint distribution of the view of $\Adv$ (its inputs, randomness, received messages, and the external traces of the modules observable from corrupted hosts) and the outputs of the honest parties and client in an execution of $\Pi$ on manifest $\mu$ and input vector $\vec{x}$.
$\Ideal_{\mathcal{F},\Sim(z),I}(\mu,\vec{x})$ is the joint distribution of the output of a simulator $\Sim$ and the honest outputs in an execution in which $\Sim$ interacts only with $\mathcal{F}$: it submits the corrupted parties' inputs, receives $\Leak$ and the corrupted parties' outputs, and may abort or delay as $\mathcal{F}$ allows.
We say that $\Pi$ \emph{$\epsilon$-securely realizes $\mathcal{F}$ with leakage $\Leak$ against static malicious corruption with abort} if for every PPT adversary $\Adv$ and every set $I$ admitted by the corruption regime of $\mathcal{F}$ there exists a PPT simulator $\Sim$ such that for all $\mu$, $\vec{x}$, $z$ and every PPT distinguisher $\mathcal{D}$,
\begin{align*}
\Bigl|&\Pr\bigl[\mathcal{D}\bigl(\Real_{\Pi,\Adv(z),I}(\mu,\vec{x})\bigr)=1\bigr]\\
&\quad-\Pr\bigl[\mathcal{D}\bigl(\Ideal_{\mathcal{F},\Sim(z),I}(\mu,\vec{x})\bigr)=1\bigr]\Bigr|\le\epsilon .
\end{align*}
\end{definition}

\textbf{Remarks.}
The real world of \cref{def:security} contains the concrete module program; $\mathcal{F}_{\mathsf{T}}$ with ideal setup, channels, and randomness is the resource of hybrids $\Hyb{2}$ to $\Hyb{4}$ of the proof (\cref{app:proof-main}), and the simulator controls only this ideal representation of the trusted resources, never the honest party's prescribed inputs. The concrete $\epsilon$ is the bound $\eps{VIGOR}$ of \cref{thm:main}; correctness of protected code is a premise of $\mathcal{F}_{\mathsf{T}}$, and \cref{tab:obligations} lists what a chip must still establish.

\textbf{Model provenance.}
Computational correctness is not model provenance, and a public architecture hash in $\mu$ does not authenticate private weights. To bind a pre-approved model, $\mu$ must contain a commitment to the canonical quantized weight bytes, authorized by the client or an independent model authority rather than declared by the untrusted host, and the protected input loader hashes exactly those bytes before sharing them (\cref{sec:proto-arith}); a mismatch aborts. Otherwise the weights are the owner's freely chosen private input, and S9 is the intended behavior of $\FC$.

\subsection{The Stronger Profile U}\label{sec:threat-profileu}

\ProfT trusts the modules. When the hardware itself must be treated as adversarial, for example because the accelerator vendor or platform administrator is untrusted, \ProfU gives the accelerator neither the joint-sampler key nor any global MAC key. The circuit runs through an established malicious two-party backend, such as SPDZ-family protocols with MASCOT or SPDZ2k preprocessing~\cite{damgard2012spdz,keller2016mascot,cramer2018spdz2k,keller2020mpspdz} or dealer-based malicious FSS such as Shark~\cite{gupta2025shark}, and the accelerator is confined to party-local data, for instance expanding the DPF keys that the backend exports to its own party. The same DPF datapath therefore serves both profiles, but \ProfU gives up the local-dealer and generate-and-consume savings, and with exported keys it must add the backend's DPF privacy assumption to its bound (\cref{sec:sec-main}). \ProfU is a separate backend chosen at session start, not a fallback after a module compromise, which would not undo leaked secrets; a fresh session with fresh inputs and masks is required. The rest of this paper analyzes and evaluates \ProfT.

\section{Workload Analysis: Where Malicious Security Costs Go}
\label{sec:analysis}

The DFA design~\cite{dfa2026trusted} was driven by one workload observation: in semi-honest FSS workloads, DPF generation, distribution and evaluation dominate both computation and communication, so a small DPF engine that also plays the dealer removes the dominant cost. Repeating this analysis for the adversary of \cref{sec:threat} changes the design target: handled naively, malicious security multiplies exactly the DPF work and key traffic that trusted DFA made small; handled carefully, it moves the cost from DPF to authenticated linear algebra on the GPU and to the extra round of every verified opening.

\subsection{Cost Sources in Malicious FSS}
\label{sec:analysis-cost}

We take as reference the masked-input paradigm of FSS-based inference~\cite{gupta2023sigma,jawalkar2024orca} (\cref{sec:bg-pi}): linear layers operate on additive shares, and every nonlinear gate opens a masked input $x+r$ and evaluates FSS keys prepared for the mask $r$. Making this pipeline secure against a malicious host (\cref{def:security}) adds the following costs, regardless of how they are implemented.

\textbf{(i) Authenticated representation.}
Every wire must carry information-theoretic MACs so that a host that changes its share is caught~\cite{damgard2012spdz,damgard2013practical}. In our representation (\cref{sec:proto-repr}), party $b$ holds $\authb{x}{b}=(x_b,\gamma_{b,1},\gamma_{b,2},\gamma_{b,3})$ over $\Fp$ with $p=2^{61}-1$ and $t=3$ independent MAC keys: four 64-bit words, or 32~B per element and party, instead of 8~B for a semi-honest share in $\Z_{2^{64}}$. Every linear layer is therefore computed on four lanes (the value and three tags), each lane uses exact Mersenne reduction instead of free 64-bit wraparound (\cref{sec:arch-field}), and every authenticated element that crosses PCIe or the host-module interface grows by the same factor of four. Only the peer link is spared, because checks exchange value shares and $t$ residuals rather than tags (\cref{sec:proto-check}).

\textbf{(ii) Verified openings.}
Opened values (masked gate inputs, the Beaver differences $d,e$ of each multiplication~\cite{beaver1991efficient}, and outputs) must be checked before anything depends on them. A sound check derives its challenge only after the checked bytes are fixed (\cref{thm:mac-sound}), so at least one extra exchange follows every checked batch, and a dependent nonlinear gate may not consume an unchecked opening. Batching amortizes challenge derivation and metadata but not this round, which lies on the critical path of every layer: a protocol with $R$ online rounds approaches $2R$.

\textbf{(iii) Correct preprocessing.}
Triples, masks and FSS keys must be \emph{correct}, not merely authenticated: a valid MAC on a triple with $c\neq ab$ does not make it correct. The options of \cref{sec:bg-mal} either ship more offline material from a dealer or replace the dealer by heavy interactive preprocessing, and neither removes offline data movement. Without a dealer, FSS keys require a secure two-party evaluation of DPF key generation, with one round per tree level~\cite{doerner2017scaling} before any malicious-security overhead.

\textbf{(iv) Authenticated FSS outputs.}
A DPF output share is a plain additive share. To re-enter authenticated arithmetic it needs $t$ tag shares, and the straightforward way to obtain them is a second DPF per MAC lane $j$ with the same target and payload $\Delta_j\beta$ (a \emph{tag tree}). This multiplies DPF keys, key bytes and PRG work by $1+t$, i.e., by four for $t=3$.

\textbf{(v) Key validity is not evaluation correctness.}
Verifiable FSS~\cite{decastro2022vfss} lets evaluators check that their keys are well formed, which protects against a malicious key generator. It does not constrain an evaluator that computes on valid keys and then adds an error to its output share, nor the arithmetic that consumes that share.

\begin{figure}[t]
    \centering
    \includegraphics[width=\linewidth]{charts/chart_motiv.pdf}
    \caption{BERT-Base latency breakdown on LAN (1~GB/s, 0.05~ms round-trip time) by cost source, for two semi-honest configurations of~\cite{dfa2026trusted} and three malicious ones (dealer-based malicious FSS, \sys without GaC, \sys); all splits come from our model (\cref{app:model}). The horizontal axis is broken between 7.5 and 15~s so that the 18.8~s semi-honest GPU bar and the {43.9~s} dealer-based bar fit.}
    \label{fig:motiv}
\end{figure}

\textbf{Observation 1: distribution still dominates dealer-based malicious FSS.}
\Cref{fig:motiv} splits the BERT-Base LAN latency by cost source. Dealer-based malicious FSS, modeled after the protocol family of Shark~\cite{gupta2025shark} (\cref{sec:eval-method}), must move {37.6~GB} of authenticated keys and preprocessing per query, {2.2$\times$} the 17.1~GB of the semi-honest GPU baseline~\cite{dfa2026trusted}. At 1~GB/s, this distribution alone takes {37.6~s} of {43.9~s} ({86\%}), and moving the material from host memory into the GPU adds another {1.70~s}. Without an authenticated-key blow-up ($\psi_{\mathrm{key}}=1$, \cref{app:model}), it still takes {22.4~s}, {5.8$\times$} that of \sys.

\textbf{Observation 2: without distribution, DPF is no longer the bottleneck.}
Removing distribution, as trusted DFA did for the semi-honest case, exposes a different profile. In \sys, four-lane authenticated GEMM takes {2.60~s} of {3.83~s} ({68\%}), {5.0$\times$} our {0.52~s} semi-honest GEMM; online communication takes {0.615~s} ({16\%}); and all nonlinear gates together take only {0.131~s} ({3.4\%}).

\textbf{Observation 3: on a WAN, verification rounds dominate.}
With a 70~ms round-trip time (RTT) the extra check round of source (ii) outweighs all computation: removing \CheckOpen, which is insecure and serves only as a lower bound, would reduce the BERT-Base WAN latency from {89.7~s} to {50.5~s} (\cref{tab:ablation}).

\subsection{Compute Pattern of Exact DPF}
\label{sec:analysis-dpf}

Malicious-security proofs reason about concrete states, so we fix the DPF computation exactly. \Cref{alg:gen,alg:eval} give the tree-based DPF~\cite{gilboa2014distributed,boyle2016function} in the form that \sys executes; the same algorithms serve \ProfU, where keys are exported to an external malicious backend (\cref{sec:threat-profileu}). The expansion function $\PRG:\bits{128}\to(\bits{128}\times\{0,1\})^2$ is fixed and AES-based as in DFA, $\Conv:\bits{128}\to\Fp$ is the fixed output conversion, the target $\alpha$ and the query $x$ have $n$ bits read most-significant first, and the payload is $\beta\in\Fp$.

\begin{algorithm}[t]
\caption{Exact DPF key generation ($\DPF.\Gen$)}
\label{alg:gen}
\begin{algorithmic}[1]
\algstretch
\Require target $\alpha=\alpha_1\cdots\alpha_n\in\bits{n}$ (MSB first), payload $\beta\in\Fp$
\Ensure keys $(k_0,k_1)$ with $\Eval(k_0,x)+\Eval(k_1,x)=\beta\cdot\one[x=\alpha]$
\State sample independent $s_0,s_1\getsr\bits{128}$; $t_0\gets 0$; $t_1\gets 1$
\State save the initial seeds $s_0^{(0)}\gets s_0$, $s_1^{(0)}\gets s_1$
\For{$i=1$ \textbf{to} $n$}
    \State $a\gets\alpha_i$; \ $\keep\gets a$; \ $\lose\gets 1-a$ \Comment{$L=0$, $R=1$}
    \State $(s_b^L,t_b^L,s_b^R,t_b^R)\gets\PRG(s_b)$ for $b\in\{0,1\}$
    \State $\sCW\gets s_0^{\lose}\oplus s_1^{\lose}$
    \State $\tCW{L}\gets t_0^L\oplus t_1^L\oplus a\oplus 1$; \ $\tCW{R}\gets t_0^R\oplus t_1^R\oplus a$
    \LineComment{both $b$, with the \emph{old} $t_b$; $\cdot$ is a conditional XOR}
    \State $(s_b,t_b)\gets\bigl(s_b^{\keep}\oplus t_b\cdot\sCW,\ t_b^{\keep}\oplus t_b\cdot\tCW{\keep}\bigr)$
    \State $\CW^{(i)}\gets(\sCW,\tCW{L},\tCW{R})$
\EndFor
\State $u_b\gets\Conv(s_b)$ for $b\in\{0,1\}$
\State $\CWout\gets(-1)^{t_1}\,(\beta-u_0+u_1)$
\LineComment{keys hold the initial roots, never the final states}
\State $k_b\gets\bigl(s_b^{(0)},\,b,\,\CW^{(1)},\ldots,\CW^{(n)},\,\CWout\bigr)$ for $b\in\{0,1\}$
\State \Return $(k_0,k_1)$
\end{algorithmic}
\end{algorithm}

\begin{algorithm}[t]
\caption{Exact DPF evaluation ($\DPF.\Eval$, $\DPF.\EvalAll$)}
\label{alg:eval}
\begin{algorithmic}[1]
\algstretch
\Require role $b$, key $k_b=(s_b^{(0)},b,\CW^{(1)},\ldots,\CW^{(n)},\CWout)$, query $x=x_1\cdots x_n$
\Ensure share $y_b\in\Fp$ (or all $2^n$ leaf shares for $\EvalAll$)
\Function{Eval}{$k_b,x$}
    \State $(s,t)\gets(s_b^{(0)},b)$ \Comment{root control bit $=$ role}
    \For{$i=1$ \textbf{to} $n$} \Comment{$\CW^{(i)}=(\sCW,\tCW{L},\tCW{R})$}
        \State $(s^L,t^L,s^R,t^R)\gets\PRG(s)$ \Comment{$t$: old evaluator bit}
        \State $(s^L,t^L)\gets(s^L\oplus t\cdot\sCW,\ t^L\oplus t\cdot\tCW{L})$
        \State $(s^R,t^R)\gets(s^R\oplus t\cdot\sCW,\ t^R\oplus t\cdot\tCW{R})$
        \State $(s,t)\gets(s^{L},t^{L})$ if $x_i=0$, else $(s^{R},t^{R})$
    \EndFor
    \State \Return $y_b\gets(-1)^b\bigl(\Conv(s)+t\cdot\CWout\bigr)$
\EndFunction
\Function{EvalAll}{$k_b$}
    \State push $(s_b^{(0)},b,0,\langle\rangle)$ onto the stack \Comment{$\langle\rangle$: empty prefix}
    \While{the stack is not empty}
        \State pop $(s,t,i,\pi)$ \Comment{seed, bit, depth, prefix}
        \If{$i=n$}
            \State emit leaf $\pi$ with share $(-1)^b\bigl(\Conv(s)+t\cdot\CWout\bigr)$
        \Else
            \State expand $s$ as in lines 4--6 with $\CW^{(i+1)}$
            \State push $(s^R,t^R,i{+}1,\pi\|1)$
            \State push $(s^L,t^L,i{+}1,\pi\|0)$ \Comment{left first}
        \EndIf
    \EndWhile
\EndFunction
\end{algorithmic}
\end{algorithm}

\begin{figure*}[t]
    \centering
    \includegraphics[width=\textwidth]{figures/fig_gac.pdf}
    \caption{Malicious-safe generate-and-consume inside module $T_b$ for a depth-2 tree. (a)~Joint generation: the two generator states $(s_0,t_0)$ and $(s_1,t_1)$ follow the target path $\alpha=01$ and emit $\CW^{(1)}$, $\CW^{(2)}$ and $\CWout$, which are consumed immediately instead of being stored as a key. (b)~Local evaluation of party $b$ on query $x=00$: a separate evaluator state $(\tilde s,\tilde t)$ follows $x$, applies each correction word at its own level, and leaves the target path at level~2, where only its own control bit $\tilde t$ is correct; grey nodes are never expanded. (c)~Leaf epilogue: the raw share $\tilde y_b$ is freshly reshared and authenticated with joint pads $\zeta,\kappa_1,\kappa_2,\kappa_3$ into the 32~B tuple $\authb{y}{b}$. The crossed-out inset is the tag-tree alternative, which needs $t=3$ extra DPF trees with payloads $\Delta_j\beta$ and four times the PRG work.}
    \label{fig:gac}
\end{figure*}

\textbf{High-level idea.}
\Cref{fig:gac}(a) shows $\Gen$ walking two states $(s_0,t_0)$ and $(s_1,t_1)$ down the target path. Since each state applies a correction only when its \emph{own} old control bit is set, exactly one of the two is corrected at each level; the corrections make the off-path states of both parties identical, so their signed outputs cancel, while $\CWout$ programs $\beta$ at the target leaf (\cref{thm:dpf-correct}). Correctness needs neither uniform targets nor computational hardness; uniform masks belong to the application protocol (\cref{sec:proto-refactor}).

\textbf{Explicit states.}
Three details that the simplified semi-honest listing~\cite{dfa2026trusted} leaves implicit are explicit in \cref{alg:gen,alg:eval}, because the malicious analysis depends on them. First, the root control bits are fixed by role, $t_0=0$ and $t_1=1$, rather than sampled, and the evaluator of party $b$ always starts from $(s_b^{(0)},b)$. Second, a key contains the \emph{initial} seed of its party, the correction words and $\CWout$; the generator's final states are never substituted for the roots. Third, the payload and the role enter only through $\CWout$ and the output sign $(-1)^b$, so both are bound inputs of the ledgered gate descriptor (\cref{sec:proto-ledger}) rather than host-supplied parameters.

\textbf{Generator versus evaluator state.}
Lines 5--6 of \cref{alg:eval} correct both children with the evaluator's own old control bit $t$. Along the target path this bit coincides with the generator's $t_b$, which invites reusing the generator's control register during generate-and-consume. The substitution is wrong as soon as the query leaves the path (level~2 in \cref{fig:gac}(b)): there the evaluator states of the two parties are identical and must apply identical corrections, whereas the generator bits $t_0$ and $t_1$ still differ, so one party would apply a correction that the other omits, and the off-path outputs would no longer cancel. Each module therefore keeps a separate evaluator state next to its two generator states (\cref{alg:gac}).

\subsection{Work Accounting}
\label{sec:analysis-work}

\begin{table*}[t]
\centering
\caption{Analytic Cost of One Single-Path DPF Gate of Depth $n$ with $t$ MAC Lanes}
\label{tab:gatecost}
\footnotesize
\setlength{\tabcolsep}{4pt}
\renewcommand{\arraystretch}{1.2}
\begin{tabular}{llccccccc}
\hline
\textbf{Design} & \textbf{Security} & \makecell{\textbf{PRG exp.}\\\textbf{per gate}} & \makecell{\textbf{Field ops per}\\\textbf{output, party}} & \makecell{\textbf{Offline bytes}\\\textbf{per gate}} & \makecell{\textbf{Bytes into GPU}\\\textbf{per party}} & \makecell{\textbf{Extra}\\\textbf{rounds}} & \makecell{\textbf{Auth.}\\\textbf{output}} & \makecell{\textbf{Verifies}\\\textbf{evaluation}} \\
\hline
Dealer-based FSS~\cite{gupta2023sigma,jawalkar2024orca} & SH & $4n$ & 0 & $2|k|$ & $|k|$ & 0 & \xmark & \xmark \\
Trusted DFA, GaC~\cite{dfa2026trusted} & SH & $6n$ & 0 & 0 & 8 & 0 & \xmark & \xmark \\
\makecell[l]{Dealer-based FSS, tag trees\\(counting bound)} & Mal. & $4n(1+t)$ & 0 & $2(1+t)|k|$ & $(1+t)|k|$ & 1 per batch & \cmark & \cmark\ (at opening) \\
vFSS-verified keys~\cite{decastro2022vfss} & Mal.\ key gen. & $4n$ + sketch & 0 & $2|k|+O(\lambda)$ & $|k|+O(\lambda)$ & 1 (sketch) & \xmark & \xmark\ (keys only) \\
Vector-payload tree, $t$ lanes$^{\dagger}$ & Mal. & $6n+O(t)$ & $t$ adds & 0 & 32 & 1 per batch & \cmark & \cmark \\
\sys, GaC + epilogue & Mal. & $6n$ & \makecell{3 mult.\\+ 4 pads} & 0 & 32 & 1 per batch & \cmark & \cmark \\
\hline
\end{tabular}

\vspace{3pt}
\parbox{\textwidth}{\footnotesize SH: semi-honest; Mal.: malicious. $|k|$ is the byte length of one party's key. PRG expansions are summed over all parties, dealer included. Bytes into the GPU are keys loaded by a GPU evaluator or output shares delivered by a module. Mask shares and Beaver triples of the surrounding arithmetic are omitted; with authentication, each such element costs 32~B per party instead of 8~B. Rounds are counted relative to the plain masked opening that every design performs. The evaluated dealer-based malicious FSS baseline uses {2.2$\times$} the semi-honest offline bytes (\cref{app:model}), below this counting bound. $^{\dagger}$One in-module tree with payload $(\beta,\Delta_1\beta,\ldots,\Delta_t\beta)$; $O(t)$ counts extra leaf-conversion blocks.}
\end{table*}

\Cref{tab:gatecost} compares one single-path DPF gate at six design points. With 128-bit seeds, one party's key holds its root seed, $n$ correction words of $128+2$ bits and $\CWout$ as one canonical 64-bit word (the root control bit is the public role), so $|k|=\lceil(128+130n+64)/8\rceil$ bytes, or 1{,}016~B for $n=61$.

\textbf{Design-point comparison.}
Trusted DFA trades $1.5\times$ the dealer's PRG work for zero key bytes: each module expands $2n$ generator states and $n$ evaluator states for its own party, so the two modules perform $6n$ expansions where a dealer and two evaluators perform $4n$. Tag trees multiply PRG work, key bytes and key I/O by $1+t$, and verified keys authenticate neither the output nor its evaluation. \sys keeps DFA's $6n$ expansions and zero key bytes and adds only three field multiplications and four pads per output and party; it cannot avoid the 32~B authenticated tuple and the check round.

\textbf{Effect of the epilogue.}
Because the epilogue (\cref{fig:gac}(c)) leaves the per-gate PRG expansion count unchanged, the single-path throughput of the DPF engine drops only from {113} to {106~M~gates/s} at $n=61$ relative to semi-honest trusted DFA, whereas per-lane tag trees fall to {26.7~M~gates/s}; the gap to per-lane tag trees stays at {3.8--4.0$\times$} for $n$ between 8 and 61 (\cref{fig:dpfthr}). The drop comes mainly from the pipeline efficiency ({0.85} vs.\ {0.90}; \cref{app:model}); the two extra pad blocks ($6n{+}3$ vs.\ $6n{+}1$ AES blocks per gate) cost only 0.5\%. The vector-payload tree nearly matches the epilogue's PRG count, but it widens every output correction by $t$ field elements, covers only results linear in DPF evaluations, and exports pseudorandom tag shares whose hiding rests on DPF key privacy. The epilogue instead makes every exported tuple exactly uniform for its holder (\cref{thm:epilogue}) and applies unchanged to LUT, interval, direct and PIR results.

\textbf{Remaining costs we accept.}
\sys does not remove duplicated generation: $6n$ instead of $4n$ expansions is the price of eliminating $2|k|$ network bytes and $|k|$ bytes of key I/O per gate, a favorable trade because PRG expansion is the cheapest operation in the pipeline while bytes cross its slowest links. Two further costs remain. The conservative $O(n^2)$ DCF of \cref{alg:dcf} is the only DCF our proofs cover; its interval gate costs 22{,}692 AES blocks at $n=61$, against a {976} for an optimized DCF, so the compiler chooses among direct, LUT and interval gates by its cost rule (\cref{app:model}), subject to an FSS-portability constraint (\cref{sec:proto-direct,sec:eval-micro}). A host-module link of $B$ bytes/s bounds the authenticated output stream at $B/32$ elements/s, which motivates the L2-adjacent placement (\cref{sec:arch-insight}).

\textbf{Implications.}
For sources (iii) and (iv) the semi-honest lesson still holds: distribution is the largest avoidable cost (Observation~1), so DFA's local generation and generate-and-consume must stay, and FSS outputs must be authenticated without multiplying DPF work. This leaves sources (i) and (ii) as the dominant costs (Observations~2 and~3). The analysis yields five design rules that structure the rest of the paper: (1)~authenticate FSS outputs by fresh resharing inside the module, never by extra trees and never by authenticating host values (S2; \cref{sec:proto-epilogue}); (2)~keep generation local and streamed, with separate generator and evaluator states (S7; \cref{sec:proto-gac}); (3)~make the unavoidable check round sound and batchable by freezing bytes before challenges are drawn (S1, S3, S8), and release no output before \lstate{VERIFIED} (S6; \cref{sec:proto-check,sec:proto-ledger}); (4)~bind every randomness draw to one session, gate, role and use, so that local generation does not open a replay path (S4, S5; \cref{sec:proto-ledger}); and (5)~leave four-lane exact linear algebra on the GPU while the module generates correct triples (S7) and verifies openings (\cref{sec:arch-insight}).

\section{\texorpdfstring{\sys}{VIGOR-DFA} Protocol Design}\label{sec:protocol}

This section specifies the protocol run by the two parties of \ProfT: the flow (\cref{sec:proto-overview}), the representation, joint randomness and lifecycle state (\cref{sec:proto-repr,sec:proto-setup,sec:proto-ledger}), and the mechanisms that move values through the integrity chain (\cref{sec:proto-gac,sec:proto-epilogue,sec:proto-check,sec:proto-arith}). Nonlinear gates, refactoring, GPU verification and PIR follow in \cref{sec:proto-gates,sec:proto-direct,sec:proto-refactor,sec:proto-freivalds,sec:proto-pir}; formal statements are collected in \cref{sec:security}.

\subsection{Overview}\label{sec:proto-overview}

\textbf{High-level idea.}
In trusted DFA the protected module replaces the dealer: it samples FSS keys locally and consumes each correction word as soon as it produces it~\cite{dfa2026trusted}. \sys turns the module from a dealer into a dealer that also verifies. Every wire carries an authenticated sharing with $t$ information-theoretic MAC lanes. The GPU performs all linear algebra on these sharings but only ever produces \emph{candidate} shares. Each module generates every correlation of its party, authenticates every FSS output it computes, checks every opening on frozen bytes and releases nothing before verification.

\textbf{Per-gate lifecycle.}
Every gate of the public circuit $C$ fixed by the manifest $\mu$ passes through the same module commands (\cref{tab:commands}). (1)~With \cmd{RESERVE_PREP}, $T_b$ reserves the gate's randomness labels in its one-time ledger and issues party-local authenticated preprocessing (masks, triples) to host~$b$. (2)~Host $b$'s GPU computes candidate authenticated shares with local linear operations. (3)~With \cmd{FREEZE_BATCH} and \cmd{CHECK_OPEN}, the two modules freeze the submitted bytes and verify the masked openings the gate needs. (4)~For a nonlinear gate, \cmd{EVAL_FSS} consumes the verified opening, runs generate-and-consume inside $T_b$ and returns a freshly authenticated share. (5)~\cmd{FINAL_OUTPUT} hands a masked output and a bound certificate to an authorized recipient. A failure in any step triggers \cmd{ABORT}, which poisons the epoch. Each mechanism below enforces one design rule of \cref{sec:analysis-work}; \cref{tab:notation} summarizes the notation.

\begin{table}[t]
\caption{Notation used in the protocol description}
\label{tab:notation}
\centering
\footnotesize
\renewcommand{\arraystretch}{1.08}
\begin{tabular}{p{0.27\linewidth} p{0.63\linewidth}}
\hline
\textbf{Symbol} & \textbf{Meaning} \\
\hline
$P_b$; host $b$; $T_b$ & party $b\in\{0,1\}$; its untrusted host (CPU, OS, driver, GPU, memory, NIC); its protected module \\
$\mu$, $C$ & public manifest and the finite circuit it fixes \\
$p$, $\Fp$ & Mersenne prime $2^{61}-1$ and the protocol field \\
$t$ & number of MAC lanes ($t=3$) \\
$\Delta_j$; $\Delta_{b,j}$ & global MAC key of lane $j$ (modules only); host $b$'s additive share of it \\
$\auth{x}$; $\authb{x}{b}$ & authenticated sharing of $x$; party $b$'s tuple $(x_b,\gamma_{b,1},\dots,\gamma_{b,t})$ \\
$\gamma_{b,j}$ & tag share of party $b$ in lane $j$, $\gamma_{0,j}+\gamma_{1,j}=\Delta_j x$ \\
$K$ & session key shared by $T_0$ and $T_1$ after setup \\
$\ell$ & length-prefixed label (session, epoch, gate, purpose, logical draw, block) \\
$\PRF_K(\ell,\cdot)$ & joint draw: same label, same value in both modules \\
$\nu$; $\varepsilon$; $\upsilon$ & wire index; masked input difference $x-r$; masked output $y+o$ \\
$\mathsf{In}_\nu$; $\mathcal{V}$ & declared interface of input wire $\nu$; validity predicates of $\mu$ \\
$n$; $\lambda$ & DPF input length in bits; PRG seed length ($\lambda=128$) \\
$\alpha$, $\beta$, $x$ & DPF target, payload in $\Fp$, verified public query \\
$\CW^{(i)}$; $\CWout$ & level-$i$ correction word; output correction \\
$(s_c,t_c)$; $(\tilde s,\tilde t)$ & generator state of root $c\in\{0,1\}$; evaluator state of $T_b$ \\
$\tilde y_b$ & raw local output of the protected evaluator \\
$\zeta$; $\kappa_j$ & fresh common epilogue pads for the value; for lane $j$ \\
$m$; $h_b$; $\mathit{cid}$ & batch size; frozen root of $T_b$; check identifier \\
$\rho_{j,i}$; $\sigma_{b,j}$ & challenge of lane $j$ at batch position $i$; residual of $T_b$ in lane $j$ \\
$\bm{e}$; $\bm{d}_j$ & value-error vector; tag-error vector of lane $j$ in a checked batch \\
$Q$ & number of checked batches in one execution \\
$w$ & LUT input width (\cref{sec:proto-gates}) \\
\hline
\end{tabular}
\end{table}

\subsection{Authenticated Representation}\label{sec:proto-repr}

\textbf{Field and MAC lanes.}
All protocol arithmetic is over $\Fp$ with the Mersenne prime $p=2^{61}-1$. The modules hold $t=3$ independent global MAC keys $\Delta_1,\Delta_2,\Delta_3\getsr\Fp$, each additively split as $\Delta_j=\Delta_{0,j}+\Delta_{1,j}$; host~$b$ holds only $\Delta_{b,j}$, which it needs to add public values to its tags. Party $b$ holds an authenticated value $\auth{x}$ as
\begin{gather*}
\authb{x}{b}=(x_b,\gamma_{b,1},\gamma_{b,2},\gamma_{b,3}),\\
x_0+x_1=x,\qquad \gamma_{0,j}+\gamma_{1,j}=\Delta_j x\quad (j\in[t]),
\end{gather*}
which is the SPDZ representation~\cite{damgard2012spdz,damgard2013practical} with three independent lanes. A canonical wire occupies four 64-bit words, i.e., 32~B per party before metadata. Every element is encoded as an unsigned 64-bit integer smaller than $p$; the module parser rejects noncanonical encodings instead of reducing arbitrary integers modulo $p$, which keeps frozen bytes and checked field elements in one-to-one correspondence.

\textbf{Three lanes for the 61-bit field.}
The Mersenne modulus lets the GPU compute exact products of canonical elements with native 64-bit multiplies and two folding steps (\cref{sec:arch-field}), so all $1+t$ lanes of authenticated linear algebra stay on standard integer datapaths. A single 61-bit tag is too weak: for $Q\le 2^{32}$ checked batches, $t=3$ is the smallest lane count whose statistical term $Q(2/p)^t$ falls below $2^{-128}$ (\cref{sec:sec-concrete}, \cref{tab:lanes}), and every lane is paid for in GPU work (\cref{sec:eval-ablation}).

\textbf{Field selection versus $\Z_{2^{64}}$.}
The soundness proof divides by a nonzero error coordinate and therefore needs a field (\cref{sec:sec-concrete}, \cref{app:proof-auth}). The encoding is thus not bit-identical to the $\Z_{2^{64}}$ arithmetic of FSS-compiled models~\cite{gupta2023sigma,jawalkar2024orca}. \sys fixes the numeric contract in the manifest (fixed-point scale, signed encoding, exact toward-zero truncation of \cref{sec:proto-gates}, magnitudes below $p/2$); a deployment that needs ring semantics must use a SPDZ2k-type backend~\cite{cramer2018spdz2k} with its own proof.

\subsection{Session Setup and Joint Randomness}\label{sec:proto-setup}

\textbf{Session setup.}
Each module authenticates its peer's identity, code and configuration measurement, role and freshness contribution. An authenticated key establishment then binds the ordered pair of identities, the roles 0 and 1, a fresh session identifier, the epoch, the manifest hash $\Hash(\mu)$, the field and tag parameters and the protocol version, and yields the session key $K$ known only to $T_0$ and $T_1$. Any failure aborts. We treat this step as an explicit setup assumption with error $\eps{setup}$ (\cref{thm:main}) and propose no new attestation protocol; attestation certifies what runs inside a module, not what the GPU computes (\cref{sec:bg-dfa}).

\textbf{Labels and joint draws.}
All module randomness is derived as $\PRF_K(\ell)$, or with an equivalent key derivation function (KDF)~\cite{krawczyk2010hkdf}, where $\ell$ is an unambiguous, length-prefixed encoding of $(\mathsf{session},\mathsf{epoch},\mathsf{gate},\mathsf{purpose},\mathsf{logical\_draw},\mathsf{block})$. A \emph{logical joint draw} uses the same label in both modules and thus yields the same value in $T_0$ and $T_1$ without communication; this is how both modules obtain identical DPF roots, correction words, masks, triples, pads and challenges. Distinct roots, pads, triple components, challenge lanes and rejection-sampling retries receive distinct labels. Two rules are easy to get wrong. The local role must \emph{not} be appended to a joint draw, since $T_0$ and $T_1$ would then generate two unrelated DPF instances whose outputs no longer add up; and the logical root-0 and root-1 draws of one DPF \emph{must} be different draws. Since labels name session, epoch and gate, the ledger detects replayed or misrouted material (S4, S5). A uniform element of $\Fp$ is obtained by rejection sampling: take 61 pseudorandom bits and reject the single value $p$, which occurs with probability $2^{-61}$; a bounded implementation aborts when its retry budget is exhausted instead of introducing a silent bias.

\textbf{Authenticated correlations.}
To share a protected value $v$, the joint sampler runs \AuthShare{}: it samples $v_0\getsr\Fp$ and sets $v_1=v-v_0$, independently samples $\gamma_{0,j}\getsr\Fp$ and sets $\gamma_{1,j}=\Delta_j v-\gamma_{0,j}$ for $j\in[t]$, and gives party $b$ only its tuple $\authb{v}{b}$. A multiplication triple samples $a,b\getsr\Fp$, computes $c=ab$ \emph{correctly inside the protected boundary} and authenticates all three values independently (S7); a matrix triple satisfies the actual equality $C=AB$ for its declared dimensions. Because both modules derive the same correlations from joint labels, preprocessing crosses no network link.

\subsection{One-Time Ledger and Release Guard}\label{sec:proto-ledger}

\begin{figure}[t]
    \centering
    \includegraphics[width=\linewidth]{figures/fig_ledger.pdf}
    \caption{Per-gate lifecycle enforced by the one-time ledger of $T_b$. Solid arrows are the only permitted transitions, each triggered by the named module command (\cref{tab:commands}). An accepted \cmd{CHECK_OPEN} only sets \lstate{VERIFIED}; \cmd{EVAL_FSS} consumes the opening of an FSS or direct gate, and an output gate, which runs no \cmd{EVAL_FSS}, is moved by \cmd{FINAL_OUTPUT} through \lstate{CONSUMED} to \lstate{RELEASED}. Any other transition or any failed check moves the gate to \lstate{ABORTED}, poisons the epoch and keeps the release guard closed.}
    \label{fig:ledger}
\end{figure}

\textbf{Ledger states.}
Each module keeps an authoritative protected ledger that binds every gate to its operation, predecessor and output wires, numeric type, table or model hash, public dimensions and allowed command sequence. A gate transaction moves only along
\lstate{NEW} $\rightarrow$ \lstate{RESERVED} $\rightarrow$ \lstate{PREPARED} $\rightarrow$ \lstate{FROZEN} $\rightarrow$ \lstate{VERIFIED} $\rightarrow$ \lstate{CONSUMED} $\rightarrow$ \lstate{RELEASED}
(\cref{fig:ledger}). Reservation burns the gate's randomness identifiers \emph{before} any preprocessing share is issued, so no label is ever issued twice. \lstate{PREPARED} lets the module issue the authorized party-local preprocessing exactly once; it is not permission to release an unchecked result. \lstate{CONSUMED} records the one-time use of verified openings and preprocessing, and \lstate{RELEASED} records that the result left the module under a certificate. Any invalid transition or failed check moves the gate to \lstate{ABORTED} and poisons the whole epoch: outstanding output responses are gated, no new challenge is derived, and aborted reservations never return to a free list.

\textbf{Freeze semantics.}
Before any checking challenge is drawn, $T_b$ copies the exact submitted value and tag bytes into protected storage, or spills them into AEAD-protected immutable pages~\cite{dworkin2007gcm} bound to sequence numbers, exact lengths, gate identifiers and a protected root. It hashes the complete descriptor and these bytes into a frozen root $h_b$, and the two modules agree on the ordered pair $(h_0,h_1)$. A host's claim that ``this buffer has hash $h$'' is not a freeze, and neither is DMA completion. After challenge generation, $T_b$ rereads only authenticated frozen pages and never verifies one buffer but consumes a later, host-modified one. This closes the time-of-check-to-time-of-use window (S3) that exists whenever checked data live in host or GPU memory.

\textbf{Release certificates and the guard.}
A release certificate binds the session, epoch, gate, recipients, output encoding or hash and terminal status. A missing peer acknowledgement can cause an abort, but it never authorizes output under a different transcript. The release guard is a small state machine in front of the module's output path (\cref{sec:arch-hw}); its enabling inputs originate from the protected checkers and the ledger, never from host-writable registers, so a host cannot open it by writing a status bit (S6). Resetting the guard is not a secure session reset.

\textbf{Reboot and rollback.}
After a reboot, either a separately justified trusted monotonic service preserves freshness~\cite{parno2011memoir,matetic2017rote}, or the old master key is irrecoverably erased and a freshly attested session is required. A host-controlled counter file is not anti-rollback protection, because the host can restore an old copy and replay old masks (S4).

\subsection{Malicious Generate-and-Consume}\label{sec:proto-gac}

\textbf{From semi-honest to malicious GaC.}
\Cref{alg:gac} interleaves \cref{alg:gen,alg:eval} with the three separate states of \cref{sec:analysis-dpf}: two generator states on the target path, started from $t_0=0$ and $t_1=1$, and an evaluator state on the query path, started from its own root seed and the fixed control bit $b$, never from a later generator state. Each correction word is computed from the expanded children of the two generator seeds and the target bit $\alpha_i$, applied to the generator states under their old bits $t_c$ and to both evaluator children under the evaluator's old bit $\tilde t$, and erased only after that use; $t\cdot s$ denotes a conditional XOR. The hardware keeps the two state sets in distinct contexts (\cref{sec:arch-pipe}), and the per-level work, which still duplicates generation in both modules, is accounted in \cref{tab:gatecost}.

\textbf{Inputs and outputs.}
The query $x$ is read from a verified masked opening of the same gate, never from unchecked host data: for instance the opened $c$ of an interval gate, whose prefix DPFs are queried at $c$ (\cref{alg:interval,alg:dcf}), whereas in a LUT gate the opened $u$ only permutes the table index while $\EvalAll$ visits every $a$ (\cref{alg:lut}). The target $\alpha$ and payload $\beta$ come from the gate's protected preprocessing record, so they are joint draws or joint functions of joint draws (\cref{sec:proto-refactor}). By \cref{thm:dpf-correct,thm:gac}, the raw outputs satisfy $\tilde y_0+\tilde y_1=\beta\cdot\one[x=\alpha]$, but $\tilde y_b$ never leaves the module: the epilogue lines of \cref{alg:gac} replace it with a freshly authenticated tuple. Since neither keys nor raw shares are exported, \cref{thm:main} needs no DPF key-privacy term. Gates that combine several DPF evaluations (interval prefixes, LUT and PIR scans) run only the (Gen) and (Eval) lines per constituent DPF and apply the ledger lines and one epilogue to their combined raw share, for PIR only its value pad (\cref{app:proof-dpf}). For full-domain uses such as LUT gates and PIR, the evaluator lines become the depth-first $\EvalAll$ traversal of \cref{alg:eval}; each $\CW^{(i)}$ then stays in protected registers until its last evaluator use, which costs $O(n)$ words and is covered by the same node-by-node coupling.

\begin{algorithm}[t]
\algstretch
\small
\caption{Malicious generate-and-consume with fresh authentication (\algname{GaC-Auth}) inside $T_b$}
\label{alg:gac}
\begin{algorithmic}[1]
\Require role $b\in\{0,1\}$; gate label $\ell$; public query $x\in\bits{n}$ from the gate's frozen, verified opening; target $\alpha\in\bits{n}$ and payload $\beta\in\Fp$ from the protected record of $\ell$; session key $K$; global keys $\Delta_1,\dots,\Delta_t$
\Ensure $\authb{y}{b}=(y_b,\gamma_{b,1},\dots,\gamma_{b,t})$ with $y_0+y_1=\beta\cdot\one[x=\alpha]$ and $\gamma_{0,j}+\gamma_{1,j}=\Delta_j(y_0+y_1)$
\If{gate $\ell$ is not \lstate{VERIFIED} in the ledger}
  \State \Abort{} and poison the epoch
\EndIf
\LineComment{Joint draws: identical labels in $T_0$ and $T_1$, no role appended}
\State $s_0\gets\PRF_K(\ell,\mathsf{root}_0)$;\ $s_1\gets\PRF_K(\ell,\mathsf{root}_1)$;\ $t_0\gets 0$;\ $t_1\gets 1$
\State $(\tilde s,\tilde t)\gets(s_b,b)$ \Comment{separate evaluator}
\For{$i=1$ \textbf{to} $n$}
  \LineComment{(Gen) both roots on the target path; updates use the old $t_c$}
  \State $(s_c^L,t_c^L,s_c^R,t_c^R)\gets\PRG(s_c)$ for $c\in\{0,1\}$
  \State $a\gets\alpha_i$;\ $\keep\gets a$;\ $\lose\gets 1-a$ \Comment{$L\equiv 0$, $R\equiv 1$}
  \State $\sCW\gets s_0^{\lose}\oplus s_1^{\lose}$
  \State $\tCW{L}\gets t_0^L\oplus t_1^L\oplus a\oplus 1$;\ \ $\tCW{R}\gets t_0^R\oplus t_1^R\oplus a$
  \State $(s_c,t_c)\gets\big(s_c^{\keep}\oplus t_c\cdot\sCW,\ t_c^{\keep}\oplus t_c\cdot\tCW{\keep}\big)$, $c\in\{0,1\}$
  \LineComment{(Eval) own query path; corrections use the old $\tilde t$}
  \State $(\hat s^L,\hat t^L,\hat s^R,\hat t^R)\gets\PRG(\tilde s)$
  \State $(\hat s^L,\hat t^L)\gets\big(\hat s^L\oplus\tilde t\cdot\sCW,\ \hat t^L\oplus\tilde t\cdot\tCW{L}\big)$
  \State $(\hat s^R,\hat t^R)\gets\big(\hat s^R\oplus\tilde t\cdot\sCW,\ \hat t^R\oplus\tilde t\cdot\tCW{R}\big)$
  \State $(\tilde s,\tilde t)\gets(\hat s^{x_i},\hat t^{x_i})$
  \State erase $\CW^{(i)}=(\sCW,\tCW{L},\tCW{R})$ \Comment{after its evaluator use}
\EndFor
\LineComment{(Gen) output correction from the final generator states}
\State $u_c\gets\Conv(s_c)$ for $c\in\{0,1\}$;\ \ $\CWout\gets(-1)^{t_1}(\beta-u_0+u_1)$
\LineComment{(Eval) raw share, never exported}
\State $\tilde y_b\gets(-1)^b\big(\Conv(\tilde s)+\tilde t\cdot\CWout\big)$; erase $s_c$, $t_c$, $\CWout$
\LineComment{(Epilogue) fresh uniform pads, distinct labels per output wire}
\State $\zeta\gets\PRF_K(\ell,\mathsf{pad}_0)$;\ \ $\kappa_j\gets\PRF_K(\ell,\mathsf{pad}_j)$ for $j\in[t]$
\State $y_b\gets\tilde y_b+(-1)^b\zeta$
\State $\gamma_{b,j}\gets\Delta_j\,y_b+(-1)^b\kappa_j$ for $j\in[t]$
\State mark $\ell$ \lstate{CONSUMED}; erase $\tilde y_b$, $\zeta$, $\kappa_1,\dots,\kappa_t$, $(\tilde s,\tilde t)$
\State \Return $\authb{y}{b}$ to the output FIFO behind the release guard
\end{algorithmic}
\end{algorithm}

\subsection{Fresh Authentication Epilogue}\label{sec:proto-epilogue}

\textbf{Construction.}
Let $\tilde y_0,\tilde y_1$ be raw local shares computed inside the two protected evaluators, with $\tilde y_0+\tilde y_1$ equal to the intended gate value. Both modules derive fresh common pads $\zeta,\kappa_1,\dots,\kappa_t\getsr\Fp$ from joint labels that are distinct for every output wire, and $T_b$ returns
\begin{equation*}
y_b=\tilde y_b+(-1)^b\zeta,\qquad \gamma_{b,j}=\Delta_j\,y_b+(-1)^b\kappa_j\quad (j\in[t]).
\end{equation*}
The pads cancel in $y_0+y_1$ and in $\gamma_{0,j}+\gamma_{1,j}=\Delta_j(y_0+y_1)$, so the result is an authenticated sharing of $\tilde y_0+\tilde y_1$, and each party's tuple is uniform on $\Fp^{1+t}$ whatever the gate value and the global keys (\cref{thm:epilogue}). Fresh pads make this hold jointly across outputs; correlations created by later arithmetic are handled by the circuit simulation of \cref{thm:main}.

\textbf{Tag tree replacement.}
Instead of $t$ further DPF trees with payloads $\Delta_j\beta$, the epilogue exploits the linearity of the MAC. Because $T_b$ knows the global $\Delta_j$, the tag of the reconstructed value is the sum of the per-party products $\Delta_j y_b$; the pad $\kappa_j$ hides $\Delta_j y_b$ from host $b$, which knows $y_b$ and could otherwise divide it out. The pad $\zeta$ rerandomizes the raw DPF share, so no host ever sees $\tilde y_b$. The cost is one raw DPF computation, three field multiplications and four fresh pads per party and output, and the output is a 32~B tuple.

\textbf{API restriction.}
The epilogue accepts only results computed inside the module, namely the DPF, DCF, LUT, interval and direct gates of \cref{sec:proto-gates,sec:proto-direct} and the PIR reductions of \cref{sec:proto-pir}, or results from an explicitly verified backend such as the frozen-operand Freivalds check of \cref{sec:proto-freivalds}. There is no \cmd{authenticate(host_value)} entry point. With one, the module would become a MAC oracle: a malicious host could obtain valid tags on any value it wishes to open, and \CheckOpen{} would accept the forgery with probability one (S2). Since the trusted boundary excludes the GPU SMs and memory, a malicious-secure extension must never compute an unchecked result on the GPU and ask the module to authenticate it.

\subsection{Freeze-Before-Challenge CheckOpen}\label{sec:proto-check}

\begin{algorithm}[t]
\algstretch
\small
\caption{Freeze-before-challenge authenticated opening (\CheckOpen) inside $T_b$}
\label{alg:checkopen}
\begin{algorithmic}[1]
\Require descriptor and nonempty ordered list of expected wires $(\nu_i)_{i\in[m]}$; host $b$'s tuples $(x_{b,i},\gamma_{b,j,i})_{i\in[m],\,j\in[t]}$; key shares $(\Delta_{b,j})_{j\in[t]}$ from the protected key store of $T_b$ (never from host input)
\Ensure verified openings $(z_i)_{i\in[m]}$ with bound certificates, or \Abort
\If{metadata, field encoding, gate order or length mismatch the ledger}
  \State \Abort{} and poison the epoch
\EndIf
\LineComment{Step 1: freeze, then agree on both roots}
\State copy the exact bytes into protected SRAM or sealed AEAD pages; $h_b\gets\Hash(\text{descriptor},\text{frozen bytes})$; mark \lstate{FROZEN}
\State exchange $(h_b,\mathit{cid})$ with $T_{1-b}$; \Abort{} unless both hold the same $(h_0,h_1)$ and $\mathit{cid}$
\LineComment{Step 2: reconstruct candidates inside the modules}
\State exchange the frozen value shares $(x_{b,i})_{i\in[m]}$ with $T_{1-b}$
\State $z_i\gets x_{0,i}+x_{1,i}$ for $i\in[m]$ \Comment{not exposed to the host}
\LineComment{Step 3: internal challenges, only after both freezes}
\State $\rho_{j,i}\gets\PRF_K(\mathsf{chk},h_0,h_1,\mathit{cid},j,i)$ for $j\in[t]$, $i\in[m]$
\LineComment{Step 4: per-lane residuals from frozen pages only}
\State $\sigma_{b,j}\gets\sum_{i=1}^{m}\rho_{j,i}\,(\gamma_{b,j,i}-\Delta_{b,j}z_i)$ for $j\in[t]$
\State exchange $(\sigma_{b,j})_{j\in[t]}$ with $T_{1-b}$
\LineComment{Step 5: decide and release}
\If{$\sigma_{0,j}+\sigma_{1,j}=0$ for every $j\in[t]$}
  \State agree on output digest and status; mark \lstate{VERIFIED}; record the checked material as used
  \State \Return $(z_i)_{i\in[m]}$ with certificates bound to session, epoch, gates and output hash
\Else
  \State release no candidate value; poison the epoch; return one generic failure status
\EndIf
\end{algorithmic}
\end{algorithm}

\begin{figure}[t]
    \centering
    \includegraphics[width=\linewidth]{figures/fig_checkopen.pdf}
    \caption{Freeze-before-challenge \CheckOpen{} between the two modules. Challenges $\rho_{j,i}$ are derived only after both frozen roots are agreed, and $T_b$ rereads only its frozen pages, so a host that overwrites its DMA buffer after the freeze changes nothing that is checked or consumed.}
    \label{fig:checkopen}
\end{figure}

\textbf{Role.}
Openings are the only places where a value produced by the untrusted GPU becomes public or feeds a protected gate: the masked differences of every multiplication, the masked inputs of every FSS gate and the masked outputs. \CheckOpen{} (\cref{alg:checkopen}, \cref{fig:checkopen}) is a SPDZ-style batched MAC check~\cite{damgard2012spdz,damgard2013practical} rebuilt for a verifier that reads its inputs from untrusted memory (S1, S3). All peer messages use the protected channel, and only an accepted batch releases $z$ with bound certificates.

\textbf{Independent challenge lanes.}
Let $\bm{e}$ be the value-error vector of a batch and $\bm{d}_j$ its tag-error vector in lane $j$, i.e., the differences between the submitted and the prescribed tuples. The freeze fixes $(\bm{e},\bm{d}_j)$ before any $\rho_{j,i}$ exists, so a host cannot place its errors in the kernel of a challenge (\cref{app:proof-auth}). Each lane has its own challenge vector $\bm{\rho}_j$. The complete residual of lane $j$ is $\inner{\bm{\rho}_j}{\bm{d}_j-\Delta_j\bm{e}}$, which for $\bm{e}\neq\bm{0}$ vanishes with probability below $2/p$; independence across lanes multiplies these bounds to $(2/p)^t$ (\cref{thm:mac-sound}). A shared vector collapses the product to one lane, and tag-only errors escape with probability $1/p$ without changing an opened value (\cref{sec:sec-concrete}, \cref{app:proof-auth}).

\textbf{Key hiding and failure handling.}
The bound relies on the hidden-key invariant of \cref{lem:hidden-key}: until the first accepted wrong opening, the honest key shares stay uniform given the corrupt host's view. \CheckOpen{} preserves it in three ways. Candidates, challenges and residuals never leave the modules. The zero tests of all lanes are aggregated into one generic failure status that exposes no residual, challenge, key or data-dependent diagnostic. A failure poisons the epoch and forbids retries with the same material, which rules out an adaptive key-learning sequence within an epoch.

\textbf{Traffic and batching.}
Per batch of $m$ elements, the modules exchange two vectors of $m$ field values and two vectors of $t$ residuals plus authenticated metadata. Tags never cross the peer link, but every local tag must be read by its module, so the host-to-module transfer of 32~B per element can remain the bottleneck (\cref{sec:arch-dse}). Because the value shares are frozen before they are sent, root agreement and value exchange can share one message, so each batch incurs two peer round trips per batch. Only independent openings are batched: a dependent nonlinear gate never consumes an unchecked masked opening, and a checkpoint precedes every input-dependent nonlinear output and every final release, since blindly delayed checking is not justified. Larger batches amortize metadata and round trips but increase staging memory and first-result latency (costs in \cref{sec:eval-check}).

\subsection{Inputs, Arithmetic, and Outputs}\label{sec:proto-arith}

\begin{figure*}[t]
\centering
\footnotesize
\setlength{\fboxsep}{4pt}
\fbox{%
\begin{minipage}[t]{0.472\textwidth}
\textbf{Input} $\Pi_{\mathsf{Input}}$\hfill\textit{input $x$ of endpoint $O$ for wire $\nu$}
\begin{enumerate}[label=\arabic*:,leftmargin=1.6em,labelsep=0.4em,nosep]
\item $T_0,T_1$: reserve label $\ell_\nu$; joint draw $r\getsr\Fp$ and $\authb{r}{b}\gets\AuthShare(r)$; issue $\authb{r}{b}$ to host $b$ once.
\item $T_0,T_1$: send the complete $r$ only to $O$ over its authenticated confidential channel.
\item $O$: send the canonical, authenticated $\varepsilon=x-r$ to $T_0$ and $T_1$.
\item $T_0,T_1$: confirm over the protected channel that both hold the same $\varepsilon$ and bind it to $\nu$; compute $x^*=r+\varepsilon$ and \Abort{} unless $x^*\in\mathsf{In}_\nu$. $T_b$ then hands the bound $\varepsilon$ to host $b$.
\item host $b$: $x_b\gets r_b+\one[b=0]\,\varepsilon$;\ \ $\gamma_{b,j}\gets\gamma_{r,b,j}+\Delta_{b,j}\,\varepsilon$ for $j\in[t]$.
\end{enumerate}
\textit{Extraction:} a corrupt owner's input is $x^*=r+\varepsilon$.
\textit{Validation:} at the public boundaries declared in $\mu$, after every input gate and before the first output opening, $T_0,T_1$ evaluate every predicate of $\mathcal{V}$ on the bound inputs and \Abort{} if one fails.
\textit{Committed inputs:} a protected loader verifies the authorized commitment over the frozen canonical stream, then runs \AuthShare{} on the exactly decoded values.

\vspace{3pt}\hrule\vspace{3pt}
\textbf{Linear} $\Pi_{\mathsf{Lin}}$\hfill\textit{local, on host $b$'s GPU}
\begin{enumerate}[label=\arabic*:,leftmargin=1.6em,labelsep=0.4em,nosep]
\item $\auth{x}+\auth{y}$: add the value shares and each of the $t$ tag shares.
\item $c\cdot\auth{x}$ for public $c$: multiply the value share and every tag share by $c$.
\item $\auth{x}+c$ for public $c$: $x_0\gets x_0+c$ (party~0 only); $\gamma_{b,j}\gets\gamma_{b,j}+\Delta_{b,j}\,c$ for both $b$ and all $j\in[t]$.
\item Tensors: apply steps 1 to 3 elementwise to the value array and to each of the $t$ tag arrays.
\end{enumerate}

\vspace{3pt}\hrule\vspace{3pt}
\textbf{Output} $\Pi_{\mathsf{Out}}$\hfill\textit{private output $y$ for client $O$}
\begin{enumerate}[label=\arabic*:,leftmargin=1.6em,labelsep=0.4em,nosep]
\item $T_0,T_1$: joint draw a fresh $\auth{o}$ as in step 1 of $\Pi_{\mathsf{Input}}$; send the complete $o$ only to $O$.
\item host $b$: $\auth{\upsilon}\gets\auth{y}+\auth{o}$ with $\Pi_{\mathsf{Lin}}$.
\item $\upsilon\gets\CheckOpen(\auth{\upsilon})$; the release guard opens only for a \mbox{\lstate{VERIFIED}} batch.
\item $T_b$: send $\upsilon$ and its release certificate (session, epoch, gate, recipients, output hash, status) to $O$.
\item $O$: verify both certificates for the same $\upsilon$ and output $y=\upsilon-o$; otherwise abort.
\end{enumerate}
\textit{Public outputs:} $y\gets\CheckOpen(\auth{y})$.
\textit{Never} hand $O$ both parties' tags: $\gamma_{0,j}+\gamma_{1,j}=\Delta_j y$ with a known $y\neq 0$ reveals $\Delta_j$.
\end{minipage}%
\hspace{0.01\textwidth}\vrule\hspace{0.01\textwidth}%
\begin{minipage}[t]{0.472\textwidth}
\textbf{Multiplication} $\Pi_{\mathsf{Mult}}$\hfill\textit{authenticated Beaver triples}

\textit{Scalar} $\auth{z}=\auth{x}\cdot\auth{y}$:
\begin{enumerate}[label=\arabic*:,leftmargin=1.6em,labelsep=0.4em,nosep]
\item $T_0,T_1$: reserve a fresh triple $(\auth{a},\auth{b},\auth{c})$ from joint draws, with $c=ab$ computed inside the protected boundary; issue party $b$'s tuples to host $b$ once.
\item host $b$: $\auth{d}\gets\auth{x}-\auth{a}$ and $\auth{e}\gets\auth{y}-\auth{b}$ with $\Pi_{\mathsf{Lin}}$.
\item $(d,e)\gets\CheckOpen(\auth{d},\auth{e})$, batched with the independent openings of the layer.
\item host $b$: $z_b\gets c_b+d\,b_b+e\,a_b+\one[b=0]\,de$.
\item host $b$: $\gamma_{z,b,j}\gets\gamma_{c,b,j}+d\,\gamma_{b,b,j}+e\,\gamma_{a,b,j}+\Delta_{b,j}\,de$ for $j\in[t]$.
\end{enumerate}
\textit{Matrix} $\auth{Z}=\auth{X}\auth{Y}$ with declared dimensions:
\begin{enumerate}[label=\arabic*:,leftmargin=1.6em,labelsep=0.4em,nosep,start=6]
\item $T_0,T_1$: reserve $(\auth{A},\auth{B},\auth{C})$ with $C=AB$ computed inside the protected boundary; never expose both $A$ and $B$ to a GPU.
\item $D\gets\CheckOpen(\auth{X}-\auth{A})$;\ \ $E\gets\CheckOpen(\auth{Y}-\auth{B})$.
\item host $b$: $Z_b\gets C_b+D\,B_b+A_b\,E+\one[b=0]\,DE$.
\item host $b$: $\Gamma_{Z,b,j}\gets\Gamma_{C,b,j}+D\,\Gamma_{B,b,j}+\Gamma_{A,b,j}\,E+\Delta_{b,j}\,DE$ for $j\in[t]$, preserving the order of every matrix factor.
\end{enumerate}
\textit{Notation:} in $b_b$ and $\gamma_{b,b,j}$ the first $b$ names the triple's second component and the subscript names the party (implementations use named fields); $\Gamma_{\cdot,b,j}$ is party $b$'s lane-$j$ tag matrix.

\vspace{3pt}\hrule\vspace{3pt}
\textbf{Invariants shared by all four protocols}
\begin{itemize}[leftmargin=1.1em,nosep]
\item Every mask and triple is a one-time joint draw bound to session, epoch, gate and purpose, and is issued only from \mbox{\lstate{PREPARED}} (\cref{fig:ledger}).
\item Every opening is a \CheckOpen{} call on frozen bytes (\cref{alg:checkopen}); no opened value feeds a protected gate or a release before it is \mbox{\lstate{VERIFIED}}.
\item Outputs of host $b$'s GPU are candidate shares; no module ever authenticates them.
\end{itemize}
\end{minipage}}
\caption{Authenticated input, linear, multiplication and output protocols of \sys, written for party $P_b$. Steps labeled $T_0,T_1$ or $T_b$ run inside the protected modules; steps labeled host $b$ run on the untrusted GPU and yield only candidate shares until the next \CheckOpen. With honest messages the protocols realize their operations exactly; otherwise they select an allowed input, preserve the computation or abort, except with the probability bounded in \cref{thm:mac-sound} (\cref{thm:arith}).}
\label{fig:arith}
\end{figure*}

\Cref{fig:arith} specifies the four protocols that carry values between input owners, GPUs, modules and recipients; three points of rationale are not visible in the box.

\textbf{Inputs.}
The difference $\varepsilon$ is uniform for a uniform $r$, so the hosts learn nothing about $x$. Both modules bind one $\varepsilon$, so a corrupt owner cannot feed the two parties different inputs, and its input is extracted as $x^*=r+\varepsilon$ (S9, \cref{def:security}). Canonical encoding alone does not place $x^*$ in the declared interface; since the modules generated $r$, they compute $x^*$ and abort unless $x^*\in\mathsf{In}_\nu$. The validity predicates $\mathcal{V}$ of $\mu$ are evaluated inside the modules at public boundaries that the schedule of $\mu$ places after every input gate and before the first output opening; their accept or abort outcome is the $\mathsf{status}$ component of $\Leak_C$ (\cref{def:leakage}), and the simulator emulates the same abort (\cref{thm:main}). For committed inputs such as preapproved model weights, a host-declared hash does not suffice; the loader hashes exactly the bytes it shares. The modules thus see every input, which \ProfT permits (\cref{sec:threat-model}).

\textbf{Arithmetic.}
Multiplication follows Beaver~\cite{beaver1991efficient} with authenticated triples. An authenticated matrix product costs the GPU $2(1+t)$ plain products plus the public $DE$, which is why authenticated linear algebra, rather than DPF evaluation, dominates the malicious cost (\cref{sec:analysis-cost}).

\textbf{Outputs.}
The recipient accepts $\upsilon$ only with matching certificates from both modules, so integrity toward a client comes from these certificates, never from MAC tags.


\subsection{Authenticated Nonlinear Gates}\label{sec:proto-gates}

Nonlinear layers of FSS-compiled models reduce to lookups in small public tables and piecewise functions selected by interval tests~\cite{boyle2021function,gupta2023sigma,jawalkar2024orca}. In the semi-honest DFA design, both parties evaluate a DPF or DCF key on a masked input and pass the raw output shares onward~\cite{dfa2026trusted}. A malicious host can shift the masked input before it selects the DPF path, and it can modify the raw outputs, which carry no MAC. \sys closes both gaps with one template (\cref{fig:gates}): (1)~mask the authenticated input with joint protected randomness (\cref{sec:proto-setup}); (2)~open the masked value with \CheckOpen{} (\cref{alg:checkopen}), so the value that selects the path is verified and frozen; (3)~evaluate the FSS gate inside $T_b$ with the (Gen) and (Eval) steps of generate-and-consume (\cref{alg:gac}); and (4)~authenticate and reshare the gate's combined raw share with one epilogue (\cref{sec:proto-epilogue}). The host sees only a uniform masked opening and fresh authenticated tuples, and no host-supplied value is ever authenticated, which defeats scenario S2 (\cref{tab:scenarios}).

\begin{figure*}[t]
    \centering
    \includegraphics[width=\textwidth]{figures/fig_gates.pdf}
    \caption{Authenticated nonlinear gates in \sys. (a)~Small XOR-masked LUT (\cref{alg:lut}): the XOR-masked bit vector $u$ is opened by \CheckOpen{} and $T_b$ scans the full public domain in a fixed order. (b)~Shifted-interval indicator and Horner polynomial (\cref{alg:interval}): one field opening $c=x+r$ serves both translated thresholds $L$ and $H$, which derive from the same mask. (c)~Direct protected gate (\cref{sec:proto-direct}): the module recovers $x=c-r$ inside its boundary and returns a fresh sharing of $g(x)$. In all three cases the host submits only authenticated shares and receives only freshly authenticated outputs; raw FSS shares, DPF keys and, in (c), the plaintext $x$ never leave $T_b$.}
    \label{fig:gates}
\end{figure*}

\begin{algorithm}[t]
\algstretch
\caption{Authenticated small XOR-masked LUT (\algname{LUT-Auth}) for party $b$}
\label{alg:lut}
\begin{algorithmic}[1]
\Require authenticated bits $\authb{x_1}{b},\dots,\authb{x_w}{b}$ of an unsigned $w$-bit $x$, certified Boolean; approved public table $T\in\Fp^{2^w}$ bound in $\mu$ by its hash; gate identifier $\mathit{gid}$; role $b$
\Ensure $\authb{y}{b}$ with $y=T[x]$
\State $T_b$: \cmd{RESERVE_PREP} for $\mathit{gid}$; \Abort{} on any descriptor or table-hash mismatch
\LineComment{Joint preparation inside $T_b$ (same $r$ in $T_0$ and $T_1$)}
\State $r\getsr\bits{w}$ by one joint draw with label $(\mathit{gid},\textsf{mask})$
\State issue $\authb{r_i}{b}\gets\AuthShare(r_i)$ to host $b$ once, for $i=1,\dots,w$
\State bind the joint DPF generator of $(\alpha,\beta)=(r,1)$ to $\mathit{gid}$ \Comment{never resampled}
\LineComment{Masked opening (host $b$ with $T_b$)}
\For{$i=1$ \textbf{to} $w$} \Comment{independent: one batch}
  \State $\authb{u_i}{b}\gets\authb{x_i}{b}+\authb{r_i}{b}-2\,\authb{x_ir_i}{b}$ \Comment{$\Pi_{\mathsf{Mult}}$}
\EndFor
\State $u\gets\CheckOpen(\auth{u_1},\dots,\auth{u_w})$ \Comment{$u=x\oplus r$}
\If{some $u_i\notin\{0,1\}$ or $T_0,T_1$ hold different $u$}
  \State \Abort
\EndIf
\LineComment{Full-domain scan inside $T_b$ in fixed public order; $k_b$ is streamed by GaC and never materialized}
\State $\tilde y_b\gets 0$
\For{$a\in\bits{w}$ in lexicographic order} \Comment{$\EvalAll$, depth-first}
  \State $\tilde y_b\gets\tilde y_b+T[a\oplus u]\cdot\DPF.\Eval(k_b,a)$
\EndFor
\State $\authb{y}{b}\gets\Epilogue(\tilde y_b)$ \Comment{\cref{sec:proto-epilogue}}
\State \Return $\authb{y}{b}$
\end{algorithmic}
\end{algorithm}

\textbf{Small XOR-masked LUT.}
\Cref{alg:lut} evaluates an approved public table $T$ with $2^w$ field entries on a $w$-bit input given as authenticated, certified-Boolean bits. The sampler draws a fresh uniform $r\in\bits{w}$ once and uses it twice: for the authenticated bits $\auth{r_i}$ and as the target of the joint DPF for $(r,1)$. The parties open $u=x\oplus r$ in one \CheckOpen{} batch, both modules reject a non-Boolean or mismatching vector, and $T_b$ scans the full domain in a declared fixed order,
\begin{equation*}
    \tilde y_b=\sum_{a\in\bits{w}}T[a\oplus u]\cdot\DPF.\Eval(k_b,a).
\end{equation*}
By \cref{thm:dpf-correct}, the evaluations sum to one only at $a=r$, so $\tilde y_0+\tilde y_1=T[r\oplus u]=T[x]$; since XOR with a fixed $x$ is a bijection on $w$-bit strings, $u$ is uniform given $x$ (\cref{thm:lut}). In $\EvalAll$ mode the $w$ correction words stay in protected registers until their level is fully expanded, so no key is materialized outside $T_b$. Besides the $w$ XOR products (one batched multiplication round when the bits exist; otherwise bit decomposition is charged to the gate), the gate needs one batched opening, $2^w-1$ internal tree expansions and a $2^w$-term field reduction: small tables are practical, full-domain tables over 64-bit inputs are not (\cref{tab:gates}).

\begin{algorithm}[t]
\algstretch
\caption{Authenticated shifted intervals and piecewise polynomial (\algname{Interval-Auth}) for party $b$}
\label{alg:interval}
\begin{algorithmic}[1]
\Require $\authb{x}{b}$; public breakpoints $0=\tau_0<\tau_1<\dots<\tau_m=p$ with $m\ge2$; public coefficients $a_{i,j}\in\Fp$ of $P_i(x)=\sum_{j=0}^{d}a_{i,j}x^j$; gate identifier $\mathit{gid}$; role $b$
\Ensure $\authb{y}{b}$ with $y=P_i(x)$ for the unique $i$ such that $\tau_i\le x<\tau_{i+1}$
\State $T_b$: \cmd{RESERVE_PREP} for $\mathit{gid}$; $r\getsr\Fp$ by one joint draw (rejection sampling); issue $\authb{r}{b}\gets\AuthShare(r)$ once
\State $c\gets\CheckOpen(\auth{x}+\auth{r})$ \Comment{$c=x+r\bmod p$}
\LineComment{Indicators inside $T_b$; every endpoint and wrap bit derives from the same $r$}
\For{$i=0$ \textbf{to} $m-1$}
  \State $L_i\gets(\tau_i+r)\bmod p$;\; $H_i\gets(\tau_{i+1}+r)\bmod p$
  \State $\omega_i\gets\one[L_i>H_i]$ with a sharing $(\omega_{i,0},\omega_{i,1})$
  \State $\tilde s_{i,b}\gets\omega_{i,b}+\DCF_{H_i}(c)_b-\DCF_{L_i}(c)_b$ \Comment{GaC}
  \State $\authb{s_i}{b}\gets\Epilogue(\tilde s_{i,b})$ \Comment{$s_i=\one[\tau_i\le x<\tau_{i+1}]$}
\EndFor
\LineComment{Coefficient selection and Horner evaluation (host $b$, authenticated arithmetic)}
\For{$j=0$ \textbf{to} $d$}
  \State $\authb{\bar a_j}{b}\gets\sum_{i=0}^{m-1}a_{i,j}\,\authb{s_i}{b}$ \Comment{local, $\Pi_{\mathsf{Lin}}$}
\EndFor
\State $\authb{y}{b}\gets\authb{\bar a_d}{b}$
\For{$j=d-1$ \textbf{down to} $0$}
  \State $\authb{y}{b}\gets\authb{y\cdot x}{b}+\authb{\bar a_j}{b}$ \Comment{$\Pi_{\mathsf{Mult}}$}
  \State rescale exactly where $\mu$ specifies (\cref{alg:trunc})
\EndFor
\State \Return $\authb{y}{b}$
\end{algorithmic}
\end{algorithm}

\textbf{Shifted intervals.}
For a proper nonempty public interval $[a,b)\subsetneq[0,p)$ (here $a,b$ are endpoints, not party indices), \cref{alg:interval} opens $c=x+r\bmod p$ for a fresh field mask $r$, and $T_b$ derives from the same $r$
\begin{equation*}
    L=(a+r)\bmod p,\quad H=(b+r)\bmod p,\quad \omega=\one[L>H].
\end{equation*}
With conservative DCFs for thresholds $L$ and $H$ (\cref{alg:dcf}; orientation $\one[c<\theta]$, payload $1$, independent coins) and a sharing of the wrap bit $\omega$, the raw share $\tilde s_b=\omega_b+\DCF_H(c)_b-\DCF_L(c)_b$ goes to the epilogue. Translation by $r$ maps $[a,b)$ onto the circular half-open interval from $L$ to $H$ ($L\neq H$ since $0<b-a<p$): the indicator is $\one[c<H]-\one[c<L]$ if $L<H$ and $1+\one[c<H]-\one[c<L]$ if the interval wraps, so \cref{thm:dcf} gives $\tilde s_0+\tilde s_1=\one[a\le x<b]$ (\cref{thm:interval}). Empty and full intervals are public constants. A partition may share one mask and one opening if every endpoint, wrap bit and auxiliary share is derived jointly from that mask, while distinct DCFs keep independent coins. The conservative DCF evaluates $n(n+1)/2$ path levels per threshold; its cost, and that of a projected $O(n)$ DCF~\cite{boyle2021function,storrier2023grotto} that \cref{thm:dcf} does not cover, are in \cref{tab:gates}.

\textbf{Piecewise polynomials.}
Activation and normalization functions are approximated by low-degree polynomials on a public partition~\cite{gupta2023sigma,pang2024bolt,ma2023secretflow}. The parties select coefficients $\auth{\bar a_j}=\sum_i a_{i,j}\auth{s_i}$ locally ($\Pi_{\mathsf{Lin}}$, \cref{fig:arith}) and apply Horner's rule with $d$ authenticated products, independently of the number $m$ of pieces. Exactly one indicator of a disjoint exhaustive partition is one, so \cref{thm:arith} yields the active polynomial; rescaling is inserted where the manifest specifies it (\cref{alg:trunc}). Approximation error of the approved coefficients is distinct from the cryptographic correctness of \cref{thm:lut,thm:interval}, which concerns the declared finite arithmetic.

\textbf{Exact comparison, bit decomposition and truncation.}
The LUT input, exact threshold comparisons and toward-zero signed truncation need canonical bits, which \cref{alg:bitdec,alg:trunc} (\cref{app:proof-gates}) derive with authenticated arithmetic alone (\cref{thm:bitdec,thm:cmp-trunc})~\cite{damgard2006unconditionally,catrina2010improved,escudero2020edabits}.

\textbf{Exact circuits as fallback.}
Their carry chains cannot be shortened by batching: \cref{alg:bitdec} performs $2n=122$ dependent multiplications for $n=61$, hence $2n+1=123$ sequential verified openings, and exact signed truncation chains two decompositions and a comparator into 307 authenticated multiplications and 309 sequential rounds, {30.9~ms} on LAN (\cref{tab:gates}). At two round trips per verified opening (\cref{sec:proto-check}), the 123-round chain alone would take about {17.2~s} on WAN. A valid one-round candidate therefore always wins under the cost rule below, and exact circuits are used only when no such candidate meets the gate's contract, with their cost charged to any LUT whose input bits do not yet exist.

\subsection{Direct Protected Gates and Cost-Aware Compilation}\label{sec:proto-direct}

\textbf{The direct protected gate.}
After a verified opening $c=x+r$, each module holds $c$ and $r$ and can compute $x=c-r$ inside its boundary, evaluate an approved deterministic gate $g$, and return a fresh authenticated sharing of $g(x)$ through \AuthShare{} (\cref{fig:gates}(c)). This is secure (\cref{thm:arith}) in the same model and with the same output simulation as the FSS gates, since the host again receives a fresh, independently uniform value and tag tuple (\cref{thm:epilogue}), provided the module's observable trace while computing $g$ meets the same simulatability assumption ($\eps{trace}$ in \cref{thm:main}). Concretely, fetching $T[x]$ from host memory or running in $x$-dependent time would leak $x$, so approved direct gates run from protected memory in data-independent time. A direct gate needs one verified opening, two AES blocks and five field multiplications; our model shows {1{,}800}~M gates/s, above every FSS realization in \cref{tab:gates}.

\textbf{Cost-aware selection.}
For each nonlinear gate shape, the compiler outputs a cost tuple (latency, bytes, protected memory, field operations, PRG calls) for the direct gate, a small XOR LUT, an interval or polynomial FSS gate, and a separately proven malicious backend. It rejects candidates that violate the gate's semantic and security contract (the exact finite function, encoding and rescaling fixed by $\mu$ under the profile in force) and selects the least modeled end-to-end cost among the rest, accounting for dependency depth and transport; only independent checks are batched (\cref{sec:proto-check}). One constraint applies: gate shapes that the manifest marks as FSS-portable keep an FSS realization, so that the same compiled circuit runs under \ProfU. The compiler-selected default is therefore not the unconstrained minimum, and direct-only compilation is marginally faster ({3.82~s} versus {3.83~s} for BERT-Base on LAN).

\textbf{FSS in the datapath.}
If the direct gate is at least as cheap under \ProfT, why keep DPF generation and evaluation in the module? (i)~\emph{Compatibility.} Sigma- and Orca-style toolchains~\cite{gupta2023sigma,jawalkar2024orca} compile nonlinear layers into masked-input FSS gates, and \ProfU (\cref{sec:threat-profileu}) runs the circuit through an established malicious backend whose DPF keys the accelerator evaluates as party-local data; there the direct candidate is invalid because the hardware is not trusted with $x$. One DPF datapath serves both profiles and the untrusted-mode DFA workloads. (ii)~\emph{Full-domain scans.} PIR and small-LUT gates are data-independent, fixed-order scans that the pipelined DPF units execute at one iteration per cycle (\cref{sec:arch-hw}), so both profiles reuse one engine; this is a reuse argument, not evidence that the DPF form is the fastest option under \ProfT (\cref{sec:proto-pir}). (iii)~\emph{Short-lived sensitive state.} FSS keeps long-running evaluation state pseudorandom (seeds and correction words), and index- or input-dependent state short-lived. We treat (iii) only as defense in depth against physical and microarchitectural threats that \ProfT excludes; no theorem depends on it, and since each module holds $c$ and $r$, FSS gains no formal confidentiality over the direct gate (\cref{sec:threat-model}). The evaluation reports direct-only, FSS-only and compiler-selected default configurations side by side (\cref{sec:eval-ablation}).

\subsection{Distribution-Preserving Refactoring}\label{sec:proto-refactor}

DFA decouples DPF generation from protocol preprocessing by sampling the target directly as a uniform value and recovering the mask $r$ through $g^{-1}$ when the protocol's transformation $g$ is a bijection~\cite{dfa2026trusted}. \sys draws every target from the joint sampler in the same way, with two refinements of the semi-honest argument. Since DPF key privacy holds for arbitrary targets (\cref{sec:bg-fss}), uniformity is required by the application, which publishes $c=x+r$ or $u=x\oplus r$; a refactoring must therefore preserve the \emph{joint} distribution of the preprocessing, not the uniformity of each target.

\textbf{Joint, not marginal, distributions.}
A gate's preprocessing contains correlated objects: the mask, its bits, DPF or DCF targets, wrap bits and auxiliary shares. If a gate needs targets $r$ and $r+1$, resampling each uniformly keeps the marginals but destroys their unit difference; drawing $L$ and $H$ of \cref{alg:interval} independently likewise yields a wrong indicator. The refactoring is valid exactly under the conditions of \cref{lem:bijection}: the map is a bijection for the fixed context \emph{and} all auxiliary material is regenerated from the recovered mask by the unchanged sampler. For example, \cref{alg:interval} may draw $L$ uniformly, set $r=L-a$ and derive $H$ and $\omega$ from that $r$, but may not draw $L$ and $H$ independently.

\textbf{Noninvertible maps and truncation.}
For unsigned truncation, the sampler $R'=2^hA+U$ of \cref{lem:noninv} makes the DFA rule (shift left, fill the low bits uniformly~\cite{dfa2026trusted}) exact; signed shifts, nonuniform masks and other moduli need their own samplers.

\subsection{Verified GPU Matrix Products}\label{sec:proto-freivalds}

Authenticated linear layers need no extra verification: a GPU deviation that changes a value is caught at the next \CheckOpen, before the result enters a protected gate or a release, except with the probability bounded in \cref{thm:mac-sound}. A module that must \emph{consume} a GPU-computed matrix before any opening, however, must not accept it on the host's word, which would reintroduce the forgery oracle of scenario S2. For this case \cref{alg:freivalds} (\cref{app:proof-gates}) is an optional frozen-operand Freivalds check~\cite{freivalds1977probabilistic}, in the spirit of Slalom~\cite{tramer2018slalom} but bound to the ledger.

\textbf{Permitted uses and check.}
The check applies only when exposing the operands to that GPU is already allowed, as in public-by-party-local matrix products; it never permits exposing global secret triple masks, which are generated inside the TCB (\cref{sec:proto-arith}). The module binds the exact operand bytes $A\in\Fp^{m\times k}$ and $B\in\Fp^{k\times n}$ to their authoritative records and freezes the claimed $C$ (\cref{sec:proto-ledger}); only then does it draw $s$ independent secret uniform $\bm{v}\in\Fp^{n}$, test $C\bm{v}\stackrel{?}{=}A(B\bm{v})$ inside the protected verifier, and consume only the verified frozen bytes of $C$ (\cmd{VERIFY_MATRIX} in \cref{tab:commands}). If the GPU also transforms tag arrays, their tag equations are verified too.

\textbf{Soundness and cost.}
\Cref{thm:freivalds} bounds acceptance of a wrong product by $p^{-s}$, and $Q_F$ calls add $Q_F/p^s$ to \cref{thm:main}. The check is disabled in all evaluated configurations; enabling it adds implementation obligations (\cref{tab:obligations}) and this error term.

\subsection{Versioned-Snapshot PIR}\label{sec:proto-pir}

In DPF-based two-server PIR (\cref{eq:pir}), a malicious server can return a wrong share, which the client silently reconstructs into a wrong record, or answer from a stale database. \Cref{alg:pir} and \cref{fig:pir} show the \sys countermeasures.

\begin{figure}[t]
    \centering
    \includegraphics[width=\linewidth]{figures/fig_pir.pdf}
    \caption{Versioned-snapshot PIR (\cref{alg:pir}). Both modules bind the same query, scan the same snapshot bytes in a public order, hash exactly the bytes they consume, and quarantine the DPF-weighted reduction until the hash matches the authorized commitment $\mathsf{cm}_v$. A host that swaps a record after hashing or serves an old version causes an abort, never a wrong accepted record (scenario S8). Both modules learn $\alpha$, so query privacy relies on module trust.}
    \label{fig:pir}
\end{figure}

\begin{algorithm}[t]
\algstretch
\caption{Versioned-snapshot protected PIR (\algname{PIR-Snap}) inside module $T_b$}
\label{alg:pir}
\begin{algorithmic}[1]
\Require from the client over its confidential channel: index $\alpha$, query identifier $\mathit{qid}$, descriptor $\mathsf{desc}_v=(\mathsf{id},v,N,\mathsf{enc},\mathsf{cm}_v)$; snapshot records $D[0],\dots,D[N-1]$ streamed by host $b$
\Ensure value share $y_b$ of $D[\alpha]$ and certificate $\mathsf{cert}_b$ for the client, or \Abort
\LineComment{Step 1: query binding}
\State exchange $(\alpha,\mathit{qid},\mathsf{desc}_v)$ with $T_{1-b}$ over the protected channel
\If{the bindings differ, $\alpha\ge N$, or $\mathsf{desc}_v$ is unauthorized}
  \State \Abort
\EndIf
\LineComment{Step 2: joint DPF and same-byte scan ($N\ge2$; $N=1$ is a constant-record case with the same checks)}
\State $n\gets\lceil\log_2N\rceil$; bind $N$ and $2^n$; $D[i]\gets0$ for $N\le i<2^n$
\State joint DPF generator for $(\alpha,1)$ under label $(\mathit{qid},\textsf{pir})$ \Comment{GaC}
\State $h\gets\Hash.\mathsf{init}(\mathsf{id},v,N,\mathsf{enc})$;\; $\tilde y_b\gets0$
\For{$i=0$ \textbf{to} $2^n-1$ in public order} \Comment{$\EvalAll$, depth-first}
  \If{$i<N$}
    \State read $D[i]$ into $T_b$; $h\gets\Hash.\mathsf{update}(h,D[i])$
  \EndIf
  \State $\tilde y_b\gets\tilde y_b+\DPF.\Eval(k_b,i)\cdot D[i]$ \Comment{per coordinate}
\EndFor
\LineComment{Step 3: snapshot check; the result is quarantined until here}
\If{$\Hash.\mathsf{final}(h)\neq\mathsf{cm}_v$}
  \State \Abort
\EndIf
\LineComment{Step 4: fresh resharing and certified release}
\State $y_b\gets\tilde y_b+(-1)^b\zeta$ with a fresh joint pad $\zeta$ per coordinate
\State $\mathsf{cert}_b\gets$ certificate on $(\mathsf{sid},\mathsf{ep},\mathit{qid},\mathsf{desc}_v,b,\Hash(y_b))$
\State \Return $(y_b,\mathsf{cert}_b)$ to the client \Comment{no MAC tags}
\end{algorithmic}
\end{algorithm}

\textbf{Protocol and snapshot binding.}
Both role-bound modules receive the same canonical query from the client, compare the binding over their protected channel, pad the snapshot to $2^n$ records while authorizing only $\alpha<N$, and run a joint DPF for $(\alpha,1)$ by generate-and-consume. Each scans the same canonical bytes in the same public order, applies the same DPF weights to every record coordinate, and hashes exactly the bytes it consumes. The commitment $\mathsf{cm}_v$ covers the database identifier, version, shape, record encoding and ordered data bytes, and the result stays quarantined behind the release guard (\cref{sec:proto-ledger}) until the hash matches it; a host that swaps a record after hashing or serves an old version therefore causes an abort (scenario S8 in \cref{tab:scenarios}). A Merkle realization~\cite{merkle1987digital} must authenticate every consumed leaf and its position against the same root. Offloading the reduction to the GPU would need \cref{alg:freivalds} or a separately proven authenticated reduction, never a signature on the GPU's claim, and would export the expanded share vector, so \sys reduces inside the module. Each module releases only a freshly reshared value share (the value half of the epilogue) with a bound certificate, and the client reconstructs $D[\alpha]=y_0+y_1$.

\textbf{Guarantees and regimes.}
In regime (R1) of $\FPIR$ (\cref{fig:func}), with one malicious server host and an honest client, the client obtains the committed record or an abort, and the host learns nothing about $\alpha$ beyond $\Leak_{\mathrm{PIR}}$ (\cref{def:leakage}). In regime (R2), a malicious client alone learns only the requested record and the public descriptor. We make no combined claim (\cref{thm:pir}). Both modules learn $\alpha$, so query privacy relies on module trust: unlike untrusted two-server DPF-PIR such as GPU-DPF~\cite{lam2023gpu} or DFA's untrusted mode~\cite{dfa2026trusted}, which hides $\alpha$ from each server without trusted hardware, \sys offers no query privacy against a compromised module.

\textbf{Scope and design choice.}
Authenticated PIR~\cite{colombo2023apir} provides integrity without trusted hardware at additional server cost and is our malicious-secure GPU baseline (\cref{sec:eval-pir}). Two alternatives bracket our design. Since the modules know $\alpha$ anyway, a module could select $D[\alpha]$ with a constant-time comparator during the same public-order scan; this needs no AES expansion and would be bound by snapshot bandwidth rather than by the AES array. If $\alpha$ must stay hidden from the modules, the client can generate the DPF keys as in untrusted-mode DFA, and each module evaluates its own key inside the boundary with the same hashing, quarantine, resharing and certificates; this adds a DPF key-privacy term and, against a malicious client, a key-verification step~\cite{boyle2016function,boneh2021lightweight,decastro2022vfss}. We keep joint generation for uniformity with the inference datapath (\cref{sec:proto-direct}) and as defense in depth. At a PIR efficiency of {0.93} relative to DFA (\cref{app:model}), our model achieves {16{,}349} versus 17{,}580~queries/s at $N=2^{20}$~\cite{dfa2026trusted} (\cref{tab:pir}).

\section{Architecture and System Integration}\label{sec:arch}

This section describes how \sys realizes the protected module $T_b$ of \cref{sec:protocol} next to the GPU and how the untrusted host drives it. We keep the organization of DFA~\cite{dfa2026trusted} wherever the malicious setting permits and add only the blocks that the integrity chain requires. Hardware sizes are design parameters (\cref{sec:eval-hw,app:model}). That this hardware realizes the ideal resource $\mathcal{F}_{\mathsf{T}}$ of the proofs is an implementation obligation (\cref{tab:obligations}).

\subsection{Design Insights}\label{sec:arch-insight}

\textbf{Key insight.}
Malicious security moves the dominant cost from DPF to authenticated linear algebra on the GPU, while protected FSS evaluation, exposed triple generation and \CheckOpen{} stay small (\cref{sec:analysis-cost,fig:motiv}). The module therefore need not become a second GPU: it must hold exactly the state whose exposure or modification would break the integrity chain and inspect every byte that crosses its boundary. The GPU is a fast but untrusted engine for authenticated linear algebra, and the module is dealer, verifier and release authority at once.

\textbf{Protected components.}
A block belongs inside the trusted boundary if it holds material that reveals both parties' views or authorizes a release. These are exactly the TCB items of \cref{sec:threat-model} (keys, sampler and counters, generator state and correction words, table and reduction logic, challenges and residuals, release decisions); everything else stays outside (\cref{fig:threat}).

\textbf{GPU-resident computation.}
Linear layers run as the matrix Beaver products of $\Pi_{\mathsf{Mult}}$ over the value lane and the $t=3$ tag lanes (\cref{fig:arith}), i.e., two dense products per lane, $2(1+t)$ in total, of the same public shape. A GPU result is a candidate share, never a certified value.

\textbf{Placement.}
An authenticated element costs 32~B per party at the module interface (\cref{sec:proto-repr}), so a link of bandwidth $B$ streams at most $B/32$ elements per second (\cref{sec:analysis-work}). The L2-adjacent placement of DFA delivers about 290~GB/s to the SMs~\cite{dfa2026trusted}, so we keep it: one DPF unit per L2 slice, a new protected core beside a designated L2 partition, and SM access through the DFA-mapped address range without a new network-on-chip (NoC) port.

\subsection{Hardware Design}\label{sec:arch-hw}

\begin{figure}[t]
    \centering
    \includegraphics[width=\linewidth]{figures/fig_hw.pdf}
    \caption{\sys hardware. Top: placement inside the GPU; as in DFA, one DPF unit sits next to each L2 slice and a designated slice hosts the VIGOR core. Bottom: logical organization of the protected module $T_b$ (red dashed trusted boundary). The 64 AES-128 engines are time-shared among PRG expansion, challenge derivation and AEAD sealing; the Mersenne-61 field unit array (eight lanes in each DPF unit) serves epilogues, triples, reductions and check residuals. SMs, L2, NoC, HBM and host DRAM stay outside the boundary.}
    \label{fig:hw}
\end{figure}

\Cref{fig:hw} shows the module (area and power in \cref{sec:eval-hw}). The trusted boundary encloses the 16 DPF units of DFA (\cref{sec:bg-dfa}), the vector processing unit (VPU), the new \emph{VIGOR core} and the on-die links between them. Only command frames with share batches, PIR database records, authenticated output tuples, sealed spill pages and sealed channel frames cross it, none carrying a module secret in the clear. The blocks follow the integrity chain: parser and ledger admit \emph{authorized input}; sampler and field unit array produce \emph{correlated preprocessing}; the DPF units perform \emph{correct local evaluation}; the epilogue yields \emph{authenticated arithmetic}; freezer, check engine and peer channel enforce \emph{consistent opening}; and the release guard gates \emph{authorized output release}.

\textbf{DPF units and sampler.}
The AES datapath is unchanged. Each pipeline context carries two generator states, a separate evaluator state, a depth-first search (DFS) stack and the conversion $\Conv$ into $\Fp$, exactly the state of \cref{alg:gac}; correction words are consumed and discarded inside the unit, and raw leaf shares $\tilde y_b$ go to the field unit array lanes of the same unit, never to SM-readable buffers. The joint PRF sampler implements the labeled derivation and bounded rejection sampling of \cref{sec:proto-setup} under the session key.

\textbf{Field unit array (FUA).}
The FUA consists of 128 Mersenne-61 multiply-accumulate lanes, each with a $61\times 61$-bit multiplier and the two-fold reduction of \cref{lst:mersenne}; its peak is 116~G multiply-accumulates per second at the 910~MHz 7~nm clock of DFA~\cite{dfa2026trusted}. It computes epilogues (three products per output), the triple products $c=ab$ and $C=AB$ with their \AuthShare{} tags (so matrix masks never reach the GPU), LUT and PIR reductions, interval sums and check residuals. The lanes sit in the DPF units, eight per unit (a design parameter), and the VIGOR core schedules them as one pool. Full-domain LUT and PIR reductions therefore run next to the leaves, and only per-query accumulators (three field elements per PIR query) travel to the core (\cref{sec:arch-dse}).

\textbf{Check engine and freezer.}
The check engine executes steps~3 and~4 of \cref{alg:checkopen}: it derives the $t$ challenge vectors only after both frozen roots are fixed, accumulates the residuals on FUA lanes (two products per lane and element) and evaluates all lanes over the full batch before emitting one generic status, so neither the status nor its latency reveals which lane or position failed. \cmd{FREEZE_BATCH} copies the exact submitted bytes into a 256~KB freeze SRAM (8{,}192 tuples of 32~B) and hashes them with their descriptor into the root $h_b$ that the modules exchange; larger batches spill to AES-GCM-sealed immutable HBM pages~\cite{dworkin2007gcm,mcgrew2004gcm} whose authoritative root and page counters stay inside, so after the challenge only authenticated frozen bytes are reread (\cref{sec:proto-ledger}).

\textbf{Ledger, release guard and control.}
A 32~KB ledger SRAM holds the authoritative state of every in-flight gate and the randomness identifiers burnt at reservation (\cref{fig:ledger}). The release guard opens for a gate only in state \lstate{VERIFIED}, and an epoch abort closes every outstanding response atomically (\cref{sec:proto-ledger}); the release multiplexer and output FIFO of \cref{fig:hw} belong to the release-guard logic. The peer-channel controller keeps independent directional keys, never repeats a nonce and enforces a per-session traffic and forgery budget. The SC300-class control core of DFA~\cite{st33tpm12spi,arm_cortex_m3} runs measured boot, attestation and the key establishment covered by $\eps{setup}$, and the 16-lane VPU of DFA~\cite{cavalcante2019ara} handles the remaining protocol logic, including the range check $x^*\in\mathsf{In}_\nu$ of $\Pi_{\mathsf{Input}}$ and the validity predicates of $\mu$ at the public boundaries that $\mu$ declares. Host software cannot drive the module's clock or voltage outside its safe operating range (DVFS isolation, \cref{app:func-module}).

\textbf{AES time-sharing.}
The 64 engines are the module's only AES resource. A slot arbiter per unit serves PRG expansion (two blocks per expansion, $6n+3$ per GaC output), challenge derivation (1.5 blocks per checked element: three 64-bit words for three 61-bit challenges) and AEAD keystream for sealing and the peer channel (two blocks per 32-byte tuple); one AEAD/GHASH block serves both the freezer and the peer-channel controller. Check and sealing requests take priority because they are short and on the critical path of the next round; PRG contexts fill all other slots. A per-slot purpose tag selects the key context, so the uses never share a key.

\subsection{Pipeline Scheduling and Memory}\label{sec:arch-pipe}

\begin{figure}[t]
    \centering
    \includegraphics[width=\linewidth]{figures/fig_pipeline.pdf}
    \caption{Scheduling inside a DPF unit. (1)~Generator (g) and evaluator (e) expansions of at least $L$ independent contexts interleave to hide the $L$-cycle AES and correction feedback latency. (2)~AES slots are time-shared among PRG expansion, challenge derivation and AEAD sealing. (3)~Depth-first $\EvalAll$ keeps one pending sibling per level, i.e., $O(n)$ state per context.}
    \label{fig:pipeline}
\end{figure}

\textbf{Context interleaving.}
With an AES and correction feedback latency of $L$ cycles, one dependent path advances at most one level every $L$ cycles, so one issue per cycle needs at least $L$ independent contexts per unit (\cref{fig:pipeline}, panel~1). For the 20-stage DFA pipeline~\cite{dfa2026trusted} we therefore provision 20 contexts per unit, one per stage. The three expansions of a malicious GaC level issue back to back, because $\PRG(\tilde s)$ does not depend on $\CW^{(i)}$; only the correction waits for all three. A context (three 129-bit path states, the correction word in flight, target, query, payload and level) needs fewer than 100~B, so 20 contexts take less than 2~KB per unit, and a single gate with $n=61$ completes in {1.71~$\mu$s}.

\textbf{Separate old controls.}
Each unit keeps separate registers for the \emph{old} generator bits $t_0,t_1$ and the \emph{old} evaluator bit $\tilde t$ (\cref{sec:analysis-dpf}), captures all outputs of a transition before overwriting its inputs, and never substitutes generator states for evaluator roots; this is the schedule covered by \cref{thm:gac}. RTL tests exercise signed final conversion, zero payloads, all-zero and all-one indices, $x=\alpha$ and divergence at every depth.

\textbf{Depth-first EvalAll.}
LUT and PIR gates evaluate the full domain. Depth-first traversal keeps one pending sibling per level ($O(n)$ states per traversal) and a complete tree needs $2^n-1$ internal expansions (panel~3). The 512 queries of a PIR batch walk the same public leaf order in lockstep, so each snapshot record is read and hashed once per batch and folded into 512 per-query accumulators. For $N=2^{20}$, 512 contexts with 20 pending siblings of 17~B occupy about 174~KB, or 11~KB per unit, which fits the traversal buffers of DFA~\cite{dfa2026trusted}.

\textbf{Pads and frozen buffers.}
The four pads of an output occupy two AES blocks and are drawn in step with the leaf, so no pad buffer beyond the pipeline depth is needed. A checked batch occupies the freeze SRAM until its verdict two peer round trips later, so batches beyond 8{,}192 elements, such as the 65{,}536-element batches of \cref{fig:checkperf}, spill to sealed pages.

\subsection{Command Interface and Host Integration}\label{sec:arch-iface}

\begin{table}[t]
\centering
\caption{Commands of the protected module $T_b$. Every command is a canonical frame bound to session, epoch, role and gate; the ledger state reached on success is given in parentheses. No command authenticates a host-supplied value.}
\label{tab:commands}
\footnotesize
\setlength{\tabcolsep}{3pt}
\renewcommand{\arraystretch}{1.1}
\begin{tabularx}{\linewidth}{l >{\raggedright\arraybackslash}X >{\raggedright\arraybackslash}X}
\hline
\textbf{Command} & \textbf{Required check} & \textbf{Allowed result} \\
\hline
\cmd{OPEN_SESSION}  & Authenticated identities, role, fresh epoch, manifest $\mu$ & Session capability \\
\cmd{RESERVE_PREP}  & Approved gate, unused randomness identifiers & Party-local authorized preprocessing, issued once (\lstate{PREPARED}) \\
\cmd{FREEZE_BATCH}  & Exact canonical bytes, dimensions, expected wire list & Frozen root $h_b$ and handle (\lstate{FROZEN}) \\
\cmd{CHECK_OPEN}    & Both frozen roots, independent challenge lanes, peer residuals & Verified prescribed opening or abort (\lstate{VERIFIED}) \\
\cmd{EVAL_FSS}      & Bound table or function, \lstate{VERIFIED} and unconsumed input opening, fixed role & Freshly reshared authenticated local output (\lstate{CONSUMED}) \\
\cmd{VERIFY_MATRIX} & Anchored operands, frozen candidate & Verified immutable result handle (\lstate{VERIFIED}) \\
\cmd{FINAL_OUTPUT}  & Authorized recipient, output mask, accepted check & Bound frame with $\upsilon=y+o$ and certificate (\lstate{CONSUMED}, then \lstate{RELEASED}) \\
\cmd{ABORT}         & Any integrity or lifecycle violation & Epoch invalidation, no candidate values (\lstate{ABORTED}) \\
\hline
\end{tabularx}
\end{table}

\textbf{Host responsibilities.}
Host $b$ schedules the public circuit, stores only party-local authenticated arrays (value and each MAC lane as separate coalesced arrays, serialized canonically only at the protected interface), runs the exact field kernels of \cref{sec:arch-field} and sends share batches to $T_b$. Its scheduling may be adversarial, so the ledger, not the host, decides whether operation, identity, shape and order are admissible, and a reused buffer always carries a fresh transaction identifier.

\textbf{Commands and frames.}
\Cref{tab:commands} lists the eight commands with the ledger state each reaches (\cref{fig:ledger}); since none of them authenticates a host value, the forgery-oracle path of scenario~S2 is closed (\cref{tab:scenarios}). A frame carries version, session identifier, epoch, role, opcode, gate identifier, ordered input and output identifiers, field identifier, tag count, dimensions, payload length, sequence number and payload, all length-prefixed with explicit endianness; frames between modules or from a client bind the header as AEAD associated data. The parser copies each frame into protected memory before checking it and rejects truncation, trailing bytes, duplicate identifiers, malformed dimensions, noncanonical field elements (values of at least $p$ are rejected, never reduced) and unexpected opcodes.

\textbf{Data races and aborts.}
Verification and consumption use the same immutable parsed bytes. Independent gate batches may overlap, but a dependent nonlinear gate never consumes an unchecked masked opening. Any integrity or lifecycle violation, such as an invalid transition or a failed check, triggers \cmd{ABORT}, which poisons the epoch, prevents new challenges and atomically keeps the release guard closed for every outstanding response (\cref{sec:proto-ledger}). \Cref{tab:attacks} maps each injected fault to the block that stops it.

\subsection{Exact Field Arithmetic on the GPU}\label{sec:arch-field}

GPU kernels must compute the finite circuit that the proofs reason about, so floating-point shortcuts are excluded. An authenticated GEMM applies the matrix Beaver combination of $\Pi_{\mathsf{Mult}}$ on each of the four lanes, i.e., two products per lane ($D\,B_b$ and $A_b\,E$ on the value lane, likewise on each tag lane), $2(1+t)=8$ products of the same shape, plus the public product $DE$ shared by all lanes (\cref{fig:arith}); \cref{lst:mersenne} gives the scalar multiplication at the core of every kernel.

\begin{figure}[t]
\centering
\scriptsize
\begin{tabular}{|l|}
\hline
\begin{minipage}{0.95\linewidth}
\vspace{2pt}
\begin{verbatim}
constexpr uint64_t P = (1ULL << 61) - 1;  // 2^61 - 1
// canonical a, b < P; returns canonical a*b mod P
__device__ uint64_t mul61(uint64_t a, uint64_t b) {
  uint64_t lo = a * b;              // low 64 bits
  uint64_t hi = __umul64hi(a, b);   // high bits, < 2^58
  uint64_t x  = (lo & P) + ((lo >> 61) | (hi << 3));
  x = (x & P) + (x >> 61);          // x <= 2^61
  return x >= P ? x - P : x;
}
\end{verbatim}
\vspace{-4pt}
\end{minipage}\\
\hline
\end{tabular}
\caption{Exact Mersenne-61 multiplication used by the GPU field kernels (CUDA). The first fold uses $2^{61}\equiv 1 \pmod p$; the second fold and the conditional subtraction return the canonical representative. No floating-point or Tensor-Core operation is involved.}
\label{lst:mersenne}
\end{figure}

\textbf{Correctness.}
For canonical $a,b<p=2^{61}-1$ the product is below $2^{122}$, so $\mathtt{hi}<2^{58}$ and $\mathtt{hi}\ll 3$ fits in 61 bits. Writing $ab=2^{64}\,\mathtt{hi}+\mathtt{lo}$ and using $2^{61}\equiv 1\pmod p$ gives $ab\equiv(\mathtt{lo}\bmod 2^{61})+\lfloor\mathtt{lo}/2^{61}\rfloor+2^{3}\,\mathtt{hi}\pmod p$. The last two terms occupy disjoint bit ranges (bits 0 to 2 and 3 to 60), so the bitwise OR equals their sum and the folded value is below $2^{62}$. The second fold yields at most $2^{61}=p+1$, and the conditional subtraction returns the canonical representative. The executed CPU check of this formula and the GPU conformance test are reported in \cref{sec:eval-attack}.

\textbf{Accumulation and numeric contract.}
A two-word 128-bit accumulator absorbs at most 64 unreduced products below $2^{122}$, so kernels fold at least once per 64 inner-dimension terms; random shares lie close to $p$ even for small plaintexts, so plaintext bounds never justify longer accumulation. The manifest $\mu$ fixes the scale $2^f$, signed encoding, toward-zero truncation (\cref{alg:trunc}), saturation, approximation coefficients and output decoding. For example, a dot product of length at most 4{,}096 with operand magnitudes at most $2^{20}$ stays within $|\cdot|\le 2^{52}$ before bias, which leaves room below $p/2$ only if bias and further accumulation are bounded too; the bound is recomputed per layer. FP16, BF16 and TF32 Tensor Cores never serve as exact field operations without a proven limb construction, and the exact finite circuit and a floating-point accuracy model remain separate references.

\textbf{GEMM cost.}
Our model charges an exact Mersenne GEMM {1.25$\times$} the time of a wraparound 64-bit GEMM per lane ($\phi_{\mathrm{M}}$ in \cref{app:model}). This factor is calibrated via GPU benchmarks: the kernel above additionally needs the high product word and a two-word accumulator per term, and the GPU test that would calibrate it has not run. Each $+0.25$ in $\phi_{\mathrm{M}}$ adds {0.52~s} to BERT-Base; with $\phi_{\mathrm{M}}=2$ its LAN latency would be {5.39~s}, i.e., {3.8$\times$} instead of {2.7$\times$} that of semi-honest trusted DFA.

\subsection{Deployment Targets}\label{sec:arch-deploy}

An on-die module cannot be added to a commercial A100, so we define three distinct experiments and never merge their timings. All three are reported in \cref{sec:eval-method}. \emph{(i) A100 software} runs the exact kernels, authenticated GEMM and host runtime on a real A100 with an emulated module. \emph{(ii) A ZCU102 prototype} (\texttt{xczu9eg-ffvb1156-2-e}~\cite{amd2023zcu102}) implements the datapath of \cref{fig:hw} at reduced scale ({4} DPF units, {16} AES engines, {32} FUA lanes at {250~MHz}, \cref{tab:fpga}) to validate streaming, freezing and release ordering; one board with an emulated peer shows functionality, not two independent trust boundaries, and an ARM process on it is not isolated from a malicious board administrator. \emph{(iii) The on-die study} models integration next to the L2 slices with Accel-Sim~\cite{khairy2020accel} and synthesis, as in DFA; our headline numbers refer to it and cross-checked against this study (\cref{app:model}). An external module uses staged transfer through host memory as its baseline, because GPUDirect RDMA needs platform and driver support and ordinarily a common PCIe root complex~\cite{nvidia2024gpudirect}; IOMMU protection is never disabled for a benchmark. In every target, freeze-before-challenge refers to immutable authenticated bytes, never to a GPU buffer that can change concurrently.

\subsection{Design-Space Comparison}\label{sec:arch-dse}

We compare the L2-adjacent organization of \sys with alternative homes for authentication, verification and randomness generation.

\textit{External protected module over PCIe.}
A discrete FPGA or security card~\cite{xu2025co} keeps the boundary off the GPU die but limits the interface to $B/32$ elements per second: {50~M} for the {1.6~GB/s} ZCU102 transport and {0.78~G} for {25~GB/s} of effective PCIe Gen4 bandwidth, against 9.1~G on the 290~GB/s on-die path of DFA~\cite{dfa2026trusted}. Because authenticated GEMM dominates, BERT-Base LAN latency rises only to {3.87~s} over PCIe and {4.59~s} with the ZCU102 link and lanes (\cref{tab:ablation}); the option suits prototyping but adds a second device per party and two device round trips per checked layer (\cref{app:model}).

\textit{Verification on the GPU SMs.}
In \ProfT the SMs belong to the malicious host: a corrupt checking kernel could see the challenges before choosing its errors, or simply report acceptance. Trusting them requires a GPU-wide TEE~\cite{volos2018graviton,jang2019heterogeneous,dhanuskodi2023confidential} covering SMs, caches, memory controllers and the driver interface, the large boundary that DFA and \sys avoid. The GPU therefore computes only products that the module verifies afterwards (\CheckOpen{} or the Freivalds check of \cref{sec:proto-freivalds}).

\textit{Tag DPF trees instead of the epilogue.}
One extra DPF tree per MAC lane (payload $\Delta_j\beta$) multiplies the PRG work of every gate by $1+t$ (\cref{tab:gatecost,fig:dpfthr}); the epilogue achieves the same with three multiplications and four pads (\cref{thm:epilogue}) at {3.8--4.0$\times$} the per-gate throughput of tag trees.

\textit{Software-only malicious backend.}
SPDZ-family preprocessing~\cite{keller2016mascot,keller2018overdrive,cramer2018spdz2k,keller2020mpspdz} needs inter-party communication that grows with the number of authenticated triples, and dealer-based malicious FSS, modeled after the protocol family of Shark~\cite{gupta2025shark}, ships authenticated keys and triples from an external dealer (\cref{tab:e2e}); both re-create the offline traffic that trusted DFA removed. Such a backend fits \ProfU, where the hardware itself is distrusted (\cref{sec:threat-profileu}); under \ProfT it pays for a trust assumption that the module already provides.

\textit{Dedicated check engines, a central FUA and per-slice checking.}
Checking needs only {2.8~ms} of module time per BERT-Base query against {131~ms} of protected FSS work, so dedicated AES engines for challenges and sealing would idle almost always. DPF units stay distributed because a single attachment point, a memory-channel path of about 180~GB/s, would cap DPF output at about 62\% of the 290~GB/s delivered bandwidth~\cite{dfa2026trusted}. The same limit rules out a central FUA: the raw 8-byte leaf shares of PIR at 16K entries alone would need {137~GB/s} ({76\%} of that path), so reductions stay in the units. Checks, by contrast, cost at most 3.5 AES blocks and six field multiplications per element and depend on the ledger, freeze SRAM and peer channel, which must be single points of authority; the VIGOR core therefore controls every check and holds one authoritative root per batch.

\section{Security Analysis}
\label{sec:security}

We now show that \sys realizes the functionalities $\FC$ and $\FPIR$ of \cref{fig:func} with leakage $\Leak_C$ and $\Leak_{\mathrm{PIR}}$ (\cref{def:leakage}) against a malicious host and GPU, as stand-alone security with abort (\cref{def:security}) in the explicit-simulator style~\cite{canetti2000security,goldreich2004foundations,hazay2010efficient,lindell2017simulate}. Each link of the integrity chain, \emph{authorized input $\rightarrow$ correlated preprocessing $\rightarrow$ correct local evaluation $\rightarrow$ authenticated arithmetic $\rightarrow$ consistent opening $\rightarrow$ authorized output release}, is discharged by a named statement: input extraction (\cref{thm:arith}); in-module preprocessing and joint sampling (\cref{lem:bijection,lem:noninv}); exact DPF, DCF, FSS-gate and direct-gate correctness (\cref{thm:dpf-correct,thm:gac,thm:dcf,thm:bitdec,thm:cmp-trunc,thm:lut,thm:interval,thm:arith}); the epilogue and the arithmetic identities (\cref{thm:epilogue,thm:arith}); MAC-check soundness (\cref{lem:hidden-key,thm:mac-sound}); and gated release (\cref{thm:arith,thm:pir}). Full proofs are in \cref{app:proof-dpf,app:proof-auth,app:proof-gates,app:proof-main}.

\subsection{Assumptions and Error Terms}
\label{sec:sec-assump}

The theorem is proved in \ProfT (\cref{sec:threat}); we state module trust as an explicit ideal resource~\cite{katz2007universally}.

\begin{assumption}[Ideal protected module]
Corrupting host $b$ does not corrupt $T_b$; both modules are instances of the ideal resource $\mathcal{F}_{\mathsf{T}}$ (\cref{app:func-module}). The code, keys, parser, field arithmetic, ledger transitions, local memory, protected peer channel and approved I/O paths of each module are correct and confidential. Its externally observable trace (lengths, timing, host-visible memory and DMA activity, power, thermal and clock telemetry) is simulatable by a stipulated trace simulator from the command sequence issued by the host, $\mu$, the public lengths and the accept or abort outcomes, and is independent of the global MAC keys, challenges and residuals given these (all lanes are aggregated in data-independent time before one generic $\mathsf{status}$ is returned). Host software cannot drive the module's clock or voltage outside its safe operating range, or the module detects glitches and aborts (DVFS isolation); software-reachable undervolting~\cite{murdock2020plundervolt} is thus in scope (\cref{sec:threat}). Module state is fresh: a trusted monotonic service preserves freshness across reboots~\cite{parno2011memoir,matetic2017rote}, or the module irrecoverably erases its master key and requires a freshly attested session. Availability is not assumed.
\end{assumption}

\begin{assumption}[Setup, randomness, channels and bindings]
Each term bounds an idealization over the \emph{entire} admitted execution, not a single primitive call:
$\eps{setup}$, simulation of the authenticated setup of \cref{sec:proto-setup} by an ideal role-bound interface;
$\eps{prf}$, replacement of PRF/KDF outputs (AES-128, HKDF~\cite{krawczyk2010hkdf}) under distinct length-prefixed labels by independent coins, identical for the same joint label;
$\eps{channel}$, confidentiality and integrity of the AEAD channels~\cite{dworkin2007gcm,mcgrew2004gcm} with directional keys and nonrepeating nonces;
$\eps{bind}$, collisions or forgeries in manifests, snapshot commitments, frozen-page roots, certificates and session freshness~\cite{fips202,merkle1987digital};
$\eps{trace}$, the stipulated trace simulator.
\end{assumption}

Correctness of protected code, state and arithmetic is thus a \emph{premise}, not an estimated probability. The protocol contributes one statistical term, determined by the field $\Fp$ with $p=2^{61}-1$, the number $t$ of independent global keys $\Delta_j=\Delta_{0,j}+\Delta_{1,j}$ (host $b$ knows at most $\Delta_{b,j}$), and the number $Q$ of \CheckOpen{} batches. Exclusions and client assumptions are those of \cref{sec:threat}; the modules may see inputs (\cref{sec:discussion}).

\subsection{Main Theorem}
\label{sec:sec-main}

\begin{theorem}[Stand-alone static malicious security with abort]
\label{thm:main}
Under the assumptions of \cref{sec:sec-assump}, the protocols of \cref{sec:analysis-dpf,sec:protocol} (setup, ledger, exact DPF, generate-and-consume with the fresh authentication epilogue, conservative DCF, \AuthShare, \CheckOpen, $\Pi_{\mathsf{Input}}$ with the committed-input loader, $\Pi_{\mathsf{Lin}}$, $\Pi_{\mathsf{Mult}}$, $\Pi_{\mathsf{Out}}$, bit decomposition, comparison, truncation, LUT, interval and direct protected gates) realize $\FC$ with abort and leakage $\Leak_C$, and \cref{alg:pir} realizes $\FPIR$ with abort and leakage $\Leak_{\mathrm{PIR}}$ (\cref{def:security,def:leakage}), against a PPT adversary that statically corrupts at most one host or, for $\FPIR$ in regime (R2) of \cref{fig:func}, the client alone. With at most $Q$ \CheckOpen{} batches, the advantage of every PPT distinguisher in \cref{def:security} is at most
\begin{align*}
\eps{VIGOR}\le{}&\eps{setup}+\eps{prf}+\eps{channel}\\
&+\eps{bind}+\eps{trace}+Q\,(2/p)^{t}.
\end{align*}
If the Freivalds check (\cref{alg:freivalds}) is enabled for at most $Q_F$ calls with $s$ repetitions, the bound gains $Q_F/p^{s}$, and the binding obligations of \cref{thm:freivalds} apply.
\end{theorem}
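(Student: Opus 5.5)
The proof is a hybrid argument from $\Real$ to $\Ideal$ in which each hop is charged to exactly one error term of the bound. $\Hyb{0}$ is the real execution with the concrete module program. In $\Hyb{1}$ we replace attestation and key exchange by the ideal role-bound setup interface of $\mathcal{F}_{\mathsf{T}}$, which costs $\eps{setup}$. In $\Hyb{2}$ we replace every $\PRF_K(\ell,\cdot)$ output by independent uniform coins, with equal coins for equal joint labels, which costs $\eps{prf}$. This hop is valid only because the ledger guarantees that no label is issued twice. $\Hyb{3}$ replaces the AEAD peer and client channels by ideal authenticated confidential channels, costing $\eps{channel}$. $\Hyb{4}$ aborts on any collision or forgery of manifest hashes, frozen roots $(h_0,h_1)$, snapshot commitments and certificates, costing $\eps{bind}$. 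In $\Hyb{4}$ every freeze therefore fixes the exact bytes that are later consumed. $\Hyb{5}$ replaces the external module trace by the stipulated trace simulator, costing $\eps{trace}$. From $\Hyb{5}$ on, the modules are ideal, their randomness is truly uniform, and the adversary's view consists only of its own host frames, public messages and the simulated trace.

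In this idealized world I will construct $\Sim$ gate by gate along the integrity chain.

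\emph{Inputs.} For a corrupted owner, $\Sim$ extracts $x^*=r+\varepsilon$ from the $\varepsilon$ the adversary sends (this is well defined because both modules bind one $\varepsilon$), submits $x^*$ to $\FC$, and reproduces the $\mathsf{In}_\nu$ and validity aborts from $\mathsf{status}$. For honest inputs, $\varepsilon$ is uniform since $r$ is.

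\emph{Preprocessing and FSS outputs.} Every share issued to the corrupted host is uniform and independent of all secrets, by \AuthShare{} and \cref{thm:epilogue}, so $\Sim$ samples fresh uniform tuples in $\Fp^{1+t}$. Correctness of what the honest side computes follows from \cref{thm:dpf-correct,thm:gac,thm:dcf,thm:lut,thm:interval,thm:bitdec,thm:cmp-trunc}, and preservation of the joint distribution follows from \cref{lem:bijection,lem:noninv}.

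\emph{Openings.} Masked openings $c=x+r$, $u=x\oplus r$, and the Beaver differences $d,e$ are uniform given $\Leak_C$, so $\Sim$ samples them. From the honest prescribed tuples it knows, $\Sim$ computes the additive errors $(\bm e,\bm d_j)$ of the corrupted host's frozen submission. It aborts exactly when $\bm e\neq\bm 0$ or $\bm d_j\neq\Delta_j\bm e$ in some lane; this holds whenever the real check rejects, and the rule deviates from reality only when the real check accepts a wrong value.

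\emph{Outputs.} Outputs are produced by $\Pi_{\mathsf{Out}}$: $\Sim$ programs $\upsilon=y+o$ from the corrupted recipient's output, or a uniform value otherwise, and forwards abort or delay to $\FC$ at the public boundaries. By \cref{thm:arith}, with honest messages the honest outputs equal $C(\vec x)$ on the extracted inputs. The release guard ensures nothing else leaves a module before \lstate{VERIFIED}.

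The remaining hop, $\Hyb{5}\to\Hyb{6}=\Ideal$, replaces the real check verdicts by the simulator's error-based rule. The two worlds differ only on the event that some batch accepts a wrong opening. I will bound this with \cref{lem:hidden-key} and \cref{thm:mac-sound}. Until the first such acceptance, each honest $\Delta_{1-b,j}$ remains uniform given the corrupted view, because candidates, challenges and residuals never leave the modules and failures return a single generic status and poison the epoch. The errors are fixed before the independent per-lane challenges $\rho_{j,i}$ are drawn. Hence each batch accepts a wrong opening with probability at most $(2/p)^t$, and a union bound over $Q$ batches gives $Q(2/p)^t$.

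I expect this step to be the main obstacle. The hidden-key invariant has to survive adaptivity across batches, which requires conditioning on "no wrong acceptance so far" and checking that the epilogue pads $\kappa_j$ and the $\Delta_{b,j}\varepsilon$ tag updates reveal nothing about the honest key share. It also has to account for tag-only errors ($\bm e=\bm 0$, $\bm d_j\neq\bm 0$), which $\Sim$ must treat as an abort without any effect on values.

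PIR is handled by the same hybrids. In (R1), the snapshot hash check together with $\eps{bind}$ forces the committed record or an abort, and the reshared $y_b$ is uniform by the pad $\zeta$, so $\Sim$ sees only $\Leak_{\mathrm{PIR}}$ (\cref{thm:pir}). In (R2), $\Sim$ extracts $\alpha^*$ from the client's channel message and programs the two shares to sum to $D[\alpha^*]$. If Freivalds checks are enabled, one more hop charges $Q_F/p^s$ via \cref{thm:freivalds}. Summing the hops gives the stated $\eps{VIGOR}$.
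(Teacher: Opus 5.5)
Your overall plan matches the paper's: a chain of hybrids, a simulator that plays both modules and holds the global keys, extraction $x^*=r+\varepsilon$, one-time-pad simulation of openings, and \cref{lem:hidden-key,thm:mac-sound} for the statistical term. The gap is in your rule for simulating check verdicts, and it breaks the stated bound.

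You let $\Sim$ abort whenever $\bm{e}\neq\bm{0}$ or $\bm{d}_j\neq\Delta_j\bm{e}$ in some lane. You also say that tag-only errors ($\bm{e}=\bm{0}$, $\bm{d}_j\neq\bm{0}$) must be treated as aborts. The real check does not abort on them deterministically. By \cref{clm:residual}, lane $j$ accepts iff $\inner{\bm{\rho}_j}{\bm{d}_j-\Delta_j\bm{e}}=0$. So a host that perturbs a single tag of a single position, leaving every value faithful, is accepted with probability exactly $1/p\approx 2^{-61}$, and the run continues with correct openings.

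Your claim that the last two worlds differ only when a wrong opening is accepted is therefore false. The extra divergence is not paid for by $Q(2/p)^t$: \cref{thm:mac-sound} bounds only wrong-\emph{value} acceptance, while a tag-only batch is accepted with probability $1/p$, not $(2/p)^t$. A distinguisher that checks whether such a run continues gains about $1/p$. That is far above $Q(2/p)^t\approx 2^{-148}$ for $Q\le 2^{32}$, so your argument does not establish $\eps{VIGOR}$.

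The missing idea is that $\Sim$ should not predict the verdict from the errors but \emph{reproduce} it. It sampled the $\Delta_j$ itself, so it can draw fresh independent per-lane challenges and accept iff $\bigwedge_{j\le t}\bigl[\inner{\bm{\rho}_j}{\bm{d}_j-\Delta_j\bm{e}}=0\bigr]$. This has exactly the real distribution (\cref{clm:check}).

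The only lossy hop is then a separate hybrid that aborts at the first accepted batch with $\bm{e}\neq\bm{0}$; this is the paper's $\Hyb{3}$, costing $Q(2/p)^t$. Accepted tag-only deviations are not bad events, because the released values are faithful. A tag error that survives in a stored wire only reappears in the $\bm{d}_j$ of later checks. While $\bm{e}=\bm{0}$, those outcomes depend on $\bm{d}_j$ and fresh challenges alone, so they leak nothing about the honest key shares (item (v) of the proof of \cref{lem:hidden-key}).

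One further point: $\Sim$ can compute $(\bm{e},\bm{d}_j)$ only relative to the corrupt party's \emph{faithful} tuples, meaning what the corrupted host would hold had it followed the protocol on what it actually received. The honest party's tuples depend on honest inputs and must stay symbolic. With these two changes the rest of your plan goes through, including your different ordering of the setup, PRF, channel, binding and trace hops.
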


\Cref{thm:main} is a stand-alone real/ideal statement about one admitted execution, preserved under sequential composition; we claim neither universal nor concurrent composability, and concurrent sessions that share module state are not covered (\cref{app:func}).

\textbf{What the bound omits.}
There is no DPF one-key-privacy term: keys never leave the modules, \cref{thm:dpf-correct} gives raw correctness and \cref{thm:epilogue} makes every exported tuple uniform. A key-exporting profile (\ProfU, \cref{sec:threat-profileu}, or verifiable FSS~\cite{decastro2022vfss}) must add its actual DPF or backend privacy assumption. The theorem does not prove that a concrete chip implements the ideal module (\cref{sec:sec-boundary}).

\textbf{Proof overview.}
The simulator $\Sim$ plays $\mathcal{F}_{\mathsf{T}}$ for both modules, keeps the global MAC keys internally, and tracks the ledger and, for every wire, the tuple a faithful corrupt party would hold; honest shares stay symbolic, so honest values are never needed. Five hybrids connect the real and ideal executions (\cref{app:proof-main}). In the real execution $\Hyb{0}$ the modules run the concrete module program: $\mathcal{F}_{\mathsf{T}}$ with its randomness $R$ instantiated by $\PRF_K$, peer and client channels by AEAD over the hosts' network, and \cmd{OPEN_SESSION} by attestation plus authenticated key exchange.
\begin{itemize}[leftmargin=*]
\item \textbf{$\Hyb{1}$ (ideal setup and channels).} Setup becomes the ideal role-bound interface, and protected-channel contents become ideal deliveries of the permitted lengths and schedules. Cost: $\eps{setup}+\eps{channel}$.
\item \textbf{$\Hyb{2}$ (ideal randomness and bindings).} PRF/KDF outputs become ideal coins, and any binding failure aborts; the ledger (\cref{fig:ledger}) then excludes label reuse, role swapping and post-challenge byte replacement. Streaming generate-and-consume is replaced by \cref{alg:gen} followed by \cref{alg:eval} (\cref{thm:gac}), and refactored samplers by their reference forms (\cref{lem:bijection,lem:noninv}); both replacements are exact. Cost: $\eps{prf}+\eps{bind}$.
\item \textbf{$\Hyb{3}$ (abort on an accepted wrong-value opening).} The experiment aborts at the first \CheckOpen{} batch accepted with $\bm{e}\ne\bm{0}$ (and, if \cref{alg:freivalds} is enabled, at the first accepted $C\ne AB$). It equals $\Hyb{2}$ until then; by \cref{lem:hidden-key,thm:mac-sound} the cost is $Q(2/p)^t$ (plus $Q_F/p^s$, \cref{thm:freivalds}).
\item \textbf{$\Hyb{4}$ (ideal execution).} Honest parties use $\FC$. $\Sim$ extracts corrupt inputs ($x^*=r+\varepsilon$ or the committed loader value) and aborts where the modules do (e.g., unless $x^*\in\mathsf{In}_\nu$), samples preprocessing tuples, epilogue and direct-gate outputs and every masked opening uniformly (the hidden preprocessing is defined afterwards), decides each check from $\inner{\bm{\rho}_j}{\bm{d}_j-\Delta_j\bm{e}}$ with its own keys and fresh challenges, delivers a corrupt recipient's output mask $o$ early and programs its opening $\upsilon=y+o$ after $\FC$ releases $y$, and appends the stipulated trace simulator. Since $\Hyb{3}$ accepts no wrong value, the trace-free views are identical (\cref{thm:arith,thm:bitdec,thm:cmp-trunc,thm:lut,thm:interval}); cost: $\eps{trace}$. PIR uses the simulator of \cref{thm:pir}.
\end{itemize}
The triangle inequality gives the bound. The adversary may still suppress delivery to honest parties: no fairness is added.

\subsection{Building Blocks}
\label{sec:sec-lemmas}

The statements refer to the exact algorithms of \cref{sec:analysis-dpf,sec:protocol}, which, unlike the semi-honest ones (\cref{sec:bg-dfa}), make root control bits, role inputs and separate generator and evaluator states explicit; each is followed by its proof idea.

\begin{theorem}[DPF correctness]
\label{thm:dpf-correct}
For all root seeds $s_0,s_1\in\bits{128}$ with root control bits $t_0=0$, $t_1=1$, every $n$, $\alpha,x\in\bits{n}$ and $\beta\in\Fp$, the keys of \cref{alg:gen} and the outputs $y_b$ of \cref{alg:eval} satisfy $y_0+y_1=\beta\cdot\one[x=\alpha]$. No uniform target and no hardness assumption is needed.
\end{theorem}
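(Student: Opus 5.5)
The proof is an induction over tree levels that compares the two evaluator states $(s^{(i)}_{b},t^{(i)}_{b})$, reached after $i$ levels of \cref{alg:eval} on query $x$, with the two generator states $(S^{(i)}_c,T^{(i)}_c)$ of \cref{alg:gen}. Let $j$ be the first index with $x_j\neq\alpha_j$, and $j=n+1$ if $x=\alpha$. We prove two invariants.

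\textbf{(On-path)} For $i<j$, we have $(s^{(i)}_b,t^{(i)}_b)=(S^{(i)}_b,T^{(i)}_b)$ for each $b$, and $T^{(i)}_0\oplus T^{(i)}_1=1$.

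\textbf{(Off-path)} For $i\ge j$, we have $s^{(i)}_0=s^{(i)}_1$ and $t^{(i)}_0=t^{(i)}_1$.

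The base case is $i=0$. Both evaluators start from the stored initial roots $s_b^{(0)}$ with control bit $b$, and these are exactly the generator's starting states with $t_0=0$ and $t_1=1$. Their XOR is $1$. This is where the explicit roles and initial roots of \cref{alg:gen} are used.

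Assume the on-path invariant holds at level $i-1<j$. Since $x_{1..i-1}=\alpha_{1..i-1}$, each evaluator expands the same seed as its generator and corrects with the same old control bit. Its two corrected children therefore coincide with what the generator would compute for both children, and the generator keeps exactly the $\alpha_i$ child. Two cases follow.

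\emph{Case $x_i=\alpha_i$.} The evaluator follows the keep child, so its state equals the generator's. For the control bits, exactly one party is corrected because the old bits differ. The XOR of the new bits is then $t_0^{\keep}\oplus t_1^{\keep}\oplus\tCW{\keep}$, and the definition of $\tCW{L},\tCW{R}$ makes this equal to $1$. We check this separately for $a=0$ and $a=1$.

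\emph{Case $x_i\ne\alpha_i$ (so $i=j$).} The evaluator follows the lose child. Exactly one party XORs in $\sCW=s_0^{\lose}\oplus s_1^{\lose}$, so the two seeds become equal. The bits become $t_0^{\lose}\oplus t_1^{\lose}\oplus\tCW{\lose}$, which equals $0$ by the same two-case check.

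Once the off-path invariant holds, it propagates trivially. Both parties have identical states, so they expand identically and apply identical corrections, because their old bits $t$ are also identical.

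The output step also splits into two cases.

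\emph{Case $x\neq\alpha$.} The final states are equal, so $\Conv(s)+t\cdot\CWout$ is the same for both parties. The factor $(-1)^b$ then gives $y_0+y_1=0$.

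\emph{Case $x=\alpha$.} The evaluator states equal the final generator states, with $t_0\oplus t_1=1$, so $t_0=1-t_1$. Write $u_c=\Conv(S_c)$. Then
\[
y_0+y_1=u_0-u_1+(t_0-t_1)\CWout .
\]
Here $t_0-t_1=(-1)^{t_1}$, so $(t_0-t_1)\CWout=(-1)^{2t_1}(\beta-u_0+u_1)=\beta-u_0+u_1$. The total is therefore $\beta$.

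No property of the PRG, of $\Conv$, or of the randomness of the seeds was used, so the identity holds for all $s_0,s_1$. This confirms that no uniform target and no hardness assumption is needed.

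The main obstacle is the bookkeeping in the inductive step. We must verify that the conditional corrections use the \emph{old} control bits, matching lines 5--6 of \cref{alg:eval} and the corresponding generation line. We must also carry out the four-way case analysis of $(\alpha_i,x_i)$ against the $\tCW{}$ formulas without sign or index slips. I would first tabulate the $\tCW{}$ identities, $t^{\keep}$-XOR $=1$ and $t^{\lose}$-XOR $=0$ after correction, as a standalone claim, and then cite that claim in both branches of the induction.
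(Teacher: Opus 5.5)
Your proof is correct and follows essentially the same route as the paper. Your on-path invariant is its target-path agreement lemma, your $x_i\neq\alpha_i$ step is its losing-child lemma, and your propagation and output cases match its identical-states lemma and final leaf computation (the paper states $t_0^{(i)}\oplus t_1^{(i)}=1$ for every level rather than only for $i<j$, which makes no difference here).
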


\noindent\textit{Proof idea.}
On the target prefix the control bits stay opposite; on a losing child $(s_0^{\lose}\oplus s_1^{\lose})\oplus((t_0\oplus t_1)\sCW)=0$ and the control bits cancel, so the states coincide and their signed outputs cancel. At the leaf $t_0-t_1=(-1)^{t_1}$, so $y_0+y_1=u_0-u_1+(-1)^{t_1}\CWout=\beta$ (\cref{app:proof-dpf}).

\begin{theorem}[Generate-and-consume equivalence]
\label{thm:gac}
For fixed roots, $\alpha$, $\beta$, role $b$ and query $x$, the raw share $\tilde{y}_b$ of \cref{alg:gac} equals the output of \cref{alg:eval} on the key $k_b$ that \cref{alg:gen} produces from the same inputs, node by node for every dependency-respecting depth-first order of $\EvalAll$.
\end{theorem}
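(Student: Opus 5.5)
The proof is an induction on the level $i$ that couples the three state registers of \cref{alg:gac} with the corresponding quantities of \cref{alg:gen,alg:eval} run on the same roots, $\alpha$, $\beta$, role $b$ and query $x$.

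\textbf{Generator invariant.} I would first show that after iteration $i$ of \cref{alg:gac}, the generator states $(s_0,t_0),(s_1,t_1)$ equal the states of \cref{alg:gen} after its iteration $i$, and that the emitted $\CW^{(i)}$ equals the $i$-th correction word stored in $k_0$ and $k_1$. The base case holds because both algorithms start from the same roots with control bits $0$ and $1$. The inductive step is a line-by-line identity. The (Gen) lines of \cref{alg:gac} are syntactically those of \cref{alg:gen}, computed from $\PRG(s_c)$ and $\alpha_i$ and applied with the old bits $t_c$. They read no evaluator register, so interleaving with the evaluator cannot perturb them. It follows that $u_c$ and $\CWout$ coincide as well.

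\textbf{Evaluator invariant.} Next I would show that after iteration $i$, $(\tilde s,\tilde t)$ equals the state $(s,t)$ of \textsc{Eval}$(k_b,x)$ after its iteration $i$. The base case uses the key format: $k_b$ holds the initial root $s_b^{(0)}$ and role bit $b$, which is exactly the evaluator initialization $(s_b,b)$ taken before any generator update. In the step, $\PRG(\tilde s)$ matches line 4 of \cref{alg:eval}. The two corrections use the old $\tilde t$ and, by the generator invariant, the same $\CW^{(i)}$ as the key. The selection by $x_i$ then matches. One point must be made explicit: $\CW^{(i)}$ is erased only after its evaluator use, and the evaluator never reads $t_c$. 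This is precisely where the register-sharing bug discussed in \cref{sec:analysis-dpf} is excluded. At the leaf, $\tilde y_b=(-1)^b(\Conv(\tilde s)+\tilde t\cdot\CWout)$ equals the \textsc{Eval} output.

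\textbf{Extension to EvalAll.} Here the evaluator is a tree of nodes rather than a single path. I would strengthen the claim to a per-node statement: every node at depth $i$ with prefix $\pi$ carries the state that \textsc{EvalAll} assigns to $\pi$. I would prove this by induction on depth, under the premise that $\CW^{(i)}$ stays available until all depth-$(i-1)$ expansions have used it. A node's state is a deterministic function of its parent's state and $\CW^{(i)}$ alone, so any dependency-respecting depth-first order produces the same per-node values; only the emission order changes. I expect this order-independence, together with making the retention of correction words across pending siblings precise, to be the main obstacle. I would handle it by defining the node value functionally, as the composition of the per-level maps along $\pi$, and showing that each schedule only evaluates this function.
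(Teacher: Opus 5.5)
Your proposal is correct and follows essentially the same route as the paper: a level-by-level coupling of the generator states, the separate evaluator state and the correction words of \cref{alg:gac} with the materialized \cref{alg:gen} then \cref{alg:eval} execution, followed by a node-by-node extension to any dependency-respecting $\EvalAll$ schedule. Splitting the induction into a generator invariant and an evaluator invariant, and stating the non-interference and erase-after-use points explicitly, is only a finer presentation of the same argument.
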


\noindent\textit{Proof idea.}
Couple both executions: equal states before level $i$ give the same correction word, generator state and evaluator state after it. The theorem justifies scheduling only, not changes to expansion or correction (\cref{app:proof-dpf}).

\begin{theorem}[Conservative DCF]
\label{thm:dcf}
If \cref{alg:dcf} creates, for each $i=1,\dots,n$ (zero payloads included), an independent DPF with target $\alpha_1\cdots\alpha_{i-1}0$ and payload $\beta\alpha_i$, evaluated at $x_1\cdots x_i$, the shares sum to $\beta\cdot\one[x<\alpha]$, at a cost of $n(n+1)/2$ path levels.
\end{theorem}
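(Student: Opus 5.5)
The plan is to reduce the statement to \cref{thm:dpf-correct}, applied once per prefix DPF, followed by a combinatorial identity for the strict comparison $x<\alpha$ on MSB-first bit strings. For each $i\in\{1,\dots,n\}$, the $i$-th constituent DPF has depth $i$, target $\alpha_1\cdots\alpha_{i-1}0$ and payload $\beta\alpha_i\in\Fp$, and it is queried at the $i$-bit prefix $x_1\cdots x_i$. By \cref{thm:dpf-correct}, which holds for every root pair, every payload (zero included) and needs no independence, its shares sum to
\[
\beta\alpha_i\cdot\one[x_1\cdots x_i=\alpha_1\cdots\alpha_{i-1}0].
\]
Summing over $i$ shows that the combined raw shares reconstruct $\beta\sum_{i=1}^{n}\alpha_i\,\one[x_{<i}=\alpha_{<i}]\,\one[x_i=0]$.

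It remains to show that this sum equals $\one[x<\alpha]$. I would split on the first index $i^*$ where $x$ and $\alpha$ differ.
\begin{itemize}
\item If $x=\alpha$, no index $i$ has both $x_{<i}=\alpha_{<i}$ and $x_i=0\neq\alpha_i$. Whenever the prefix condition holds, $x_i=\alpha_i$, so the product $\alpha_i\one[x_i=0]=\alpha_i(1-\alpha_i)=0$. Every term vanishes, and indeed $x<\alpha$ is false.
\item If $x\neq\alpha$, the prefix indicator is $1$ only for $i\le i^*$. For $i<i^*$ the term is $\alpha_i(1-\alpha_i)=0$ as before. At $i=i^*$ the term is $\alpha_{i^*}(1-x_{i^*})$, which equals $1$ exactly when $\alpha_{i^*}=1$ and $x_{i^*}=0$. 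With MSB-first reading, that is precisely the condition $x<\alpha$.
\end{itemize}
In both cases at most one term is nonzero, so the field sum equals the integer indicator; there is no wraparound issue in $\Fp$.

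The cost claim is a direct count. The $i$-th DPF is a depth-$i$ single-path evaluation, which expands $i$ path levels, and $\sum_{i=1}^{n}i=n(n+1)/2$.

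I expect the main obstacle to be bookkeeping rather than mathematics. First, each prefix DPF must have exactly depth $i$, and its query must be the length-$i$ prefix rather than $x$ itself. Second, the zero-payload DPFs (those with $\alpha_i=0$) must still be generated and evaluated, so that the traversal stays data-independent, and they must contribute exactly zero; \cref{thm:dpf-correct} covers $\beta=0$. Third, the orientation must match the paper's convention: the ``$<$'' form with payload $\beta$, and no constant offset such as the one used for the ``$>$'' interface conversion. Independence of the coins across prefixes is needed only for the security argument, not for this correctness statement, so I would note that it is unused here.
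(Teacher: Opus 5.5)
Your proposal is correct and follows the paper's own proof. Both apply \cref{thm:dpf-correct} to each depth-$i$ prefix DPF for every root pair, sum the shares by linearity, observe that the $i$-th term is nonzero exactly at the unique first differing index when $x_i=0$ and $\alpha_i=1$, and count $\sum_{i=1}^{n} i=n(n+1)/2$ levels. Your explicit case split ($x=\alpha$ versus first differing index $i^*$) is simply a more spelled-out form of the paper's remark that the three conditions for a nonzero term characterize the first differing index.
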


\noindent\textit{Proof idea.}
If $x<\alpha$, exactly the prefix ending at their first differing bit ($x_i=0$, $\alpha_i=1$) contributes $\beta$; otherwise none does. \Cref{thm:dpf-correct} and linearity conclude (\cref{app:proof-dpf}).

\begin{theorem}[Epilogue correctness and local hiding]
\label{thm:epilogue}
For raw shares $\tilde{y}_0,\tilde{y}_1$ and fresh uniform pads $\zeta,\kappa_1,\dots,\kappa_t\in\Fp$, the outputs $y_b=\tilde{y}_b+(-1)^b\zeta$, $\gamma_{b,j}=\Delta_j y_b+(-1)^b\kappa_j$ form an authenticated sharing of $\tilde{y}_0+\tilde{y}_1$. For either fixed $b$, $(y_b,\gamma_{b,1},\dots,\gamma_{b,t})$ is uniform on $\Fp^{t+1}$ given the raw outputs and the keys, before conditioning on later prescribed openings, jointly across fresh outputs.
\end{theorem}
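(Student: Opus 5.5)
The proof has two parts: correctness, by direct summation, and hiding, by a change of variables that is a bijection on the pads. For correctness, add the two parties' outputs. The signs $(-1)^b$ make the value pad cancel: $y_0+y_1=\tilde y_0+\zeta+\tilde y_1-\zeta=\tilde y_0+\tilde y_1$. For each lane $j$, linearity gives
\[
\gamma_{0,j}+\gamma_{1,j}=\Delta_j(y_0+y_1)+\kappa_j-\kappa_j=\Delta_j(\tilde y_0+\tilde y_1).
\]
This is exactly the invariant of the authenticated representation in \cref{sec:proto-repr}. The check holds for every fixed choice of pads, raw shares and keys, so no probability enters this part.

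For local hiding, fix $b$, the raw outputs $\tilde y_0,\tilde y_1$ and the keys $\Delta_1,\dots,\Delta_t$, and regard the tuple as a function of the pads only. The map
\[
(\zeta,\kappa_1,\dots,\kappa_t)\mapsto\bigl(\tilde y_b+(-1)^b\zeta,\ \Delta_1 y_b+(-1)^b\kappa_1,\dots,\Delta_t y_b+(-1)^b\kappa_t\bigr)
\]
is affine and triangular, with diagonal entries $\pm1$. The first coordinate depends only on $\zeta$, and the lane-$j$ coordinate is a translate of $(-1)^b\kappa_j$ by $\Delta_j y_b$, which is already fixed once $y_b$ is fixed. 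The map can be inverted explicitly: recover $\zeta$ from $y_b$, then each $\kappa_j$ from $\gamma_{b,j}$. It is therefore a bijection of $\Fp^{t+1}$ onto itself. The pads are uniform and independent of the raw outputs and keys, so the image is uniform on $\Fp^{t+1}$ conditioned on those values. It is worth noting that $\kappa_j$ is what hides $\Delta_j$: without it, a host holding $y_b\neq 0$ could read off $\Delta_j=\gamma_{b,j}/y_b$.

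The joint statement across fresh outputs follows from the label discipline of \cref{sec:proto-setup}. In the ideal-randomness hybrid ($\Hyb{2}$, after the $\eps{prf}$ replacement), distinct output wires receive distinct labels, so their pads are independent uniform coins and independent of the raw outputs and keys. Applying the bijection wire by wire to the product space then makes the whole family of party-$b$ tuples uniform on $(\Fp^{t+1})^k$, conditioned on all raw outputs and keys.

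The main obstacle is stating the conditioning precisely rather than any computation. The two parties' tuples are jointly correlated through $y_0+y_1$ and the tag relation, and a later prescribed opening such as $\CheckOpen$ reveals combinations of them. The claim must therefore be restricted, as the statement does, to a single party's view before any later opening. The correlations created afterwards are left to the circuit simulation of \cref{thm:main}. I would also point out that the lemma uses the pads only as truly uniform coins, so it applies in $\Hyb{2}$, and the gap to the real PRF-derived pads is charged to $\eps{prf}$.
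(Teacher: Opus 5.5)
Your proposal is correct and follows the paper's proof essentially step for step: the opposite-signed pads cancel for correctness; local hiding follows because the map from the pad vector $(\zeta,\kappa_1,\dots,\kappa_t)$ to party $b$'s tuple is an explicitly invertible bijection of $\Fp^{t+1}$; and in $\Hyb{2}$ the distinctly labelled pads are independent, so the product bijection over all fresh output wires gives joint uniformity. Your remarks that the claim holds only at creation time (later openings are handled by the simulation of \cref{thm:main}) and that $\kappa_j$ is what hides $\Delta_j$ also match the paper's surrounding discussion.
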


\noindent\textit{Proof idea.}
The pads cancel in the sums. For fixed $b$, $\zeta\mapsto y_b$ and each $\kappa_j\mapsto\gamma_{b,j}$ are bijections, so every tuple has probability $p^{-(t+1)}$ (\cref{app:proof-dpf}).

\begin{lemma}[Hidden-key invariant]
\label{lem:hidden-key}
In $\Hyb{2}$ with host $b$ corrupted, conditioned on its view and its shares $\Delta_{b,j}$, the honest shares $\Delta_{1-b,1},\dots,\Delta_{1-b,t}$ remain independent and uniform at every point up to the drawing of the challenges of the first check with a nonzero value error; that check either accepts (the event of \cref{thm:mac-sound}) or rejects and ends the execution.
\end{lemma}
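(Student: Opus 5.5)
The argument is an induction over the sequence of events in $\Hyb{2}$ in which host $b$'s view grows. The invariant is that, conditioned on that view and on $(\Delta_{b,j})_j$, the vector $(\Delta_{1-b,j})_{j\in[t]}$ is uniform on $\Fp^t$ and independent of everything else. It holds at the start: in $\Hyb{2}$ setup is ideal, the honest shares are independent uniform coins, host $b$ receives only $\Delta_{b,j}$, and the global keys $\Delta_j=\Delta_{b,j}+\Delta_{1-b,j}$ are therefore uniform to the host.

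The inductive step classifies everything host $b$ can receive and shows that each item is either independent of the honest key shares or a function of quantities the host already knows.

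\begin{itemize}[leftmargin=*]
\item \emph{Preprocessing and input/output masks.} These are produced by \AuthShare{}, whose tag shares $\gamma_{b,j}$ are fresh uniform coins. In $\Hyb{2}$ they are independent of $\Delta_j$ by construction, because the dependence falls on the honest party's share $\gamma_{1-b,j}$.
\item \emph{FSS, LUT, interval and direct-gate outputs.} The tuple is uniform given the raw outputs and the keys by \cref{thm:epilogue}; the direct gate goes through \AuthShare{} in the same way. These outputs are jointly uniform across fresh outputs, so conditioning on them leaves the key shares untouched.
\item \emph{Host-side arithmetic.} $\Pi_{\mathsf{Lin}}$ and $\Pi_{\mathsf{Mult}}$ steps are local computations on data the host already holds and add no information.
\item \emph{Module-internal material.} By the ideal-module assumption and the ideal channels, candidates, challenges and residuals never leave the modules, and the trace is independent of keys, challenges and residuals given the public outcomes.
\end{itemize}

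Only two kinds of item remain: the opened values $z_i$ and the single accept/reject bit of each \CheckOpen{}.

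For a check on a batch with $\bm{e}=\bm{0}$, the opened values are the prescribed ones. These are masked openings, which are uniform given $\Leak_C$, and they are computed without the keys. The status bit depends on the residual $\inner{\bm{\rho}_j}{\bm{d}_j-\Delta_j\bm{e}}=\inner{\bm{\rho}_j}{\bm{d}_j}$. This is the crux: with zero value error the acceptance event no longer involves $\Delta_j$ at all. It depends only on the tag errors $\bm{d}_j$, which freeze-before-challenge fixes before the $\bm{\rho}_j$ are drawn, and on the fresh challenges, which the ideal randomness makes independent of the keys. The bit is therefore independent of $(\Delta_{1-b,j})$ given the view. A rejection poisons the epoch and ends the execution, so no adaptive sequence continues afterwards.

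The induction runs until the challenges of the first batch with $\bm{e}\neq\bm{0}$ are drawn. At that point the frozen $(\bm{e},\bm{d}_j)$ were fixed while the invariant still held, which is exactly the precondition that \cref{thm:mac-sound} uses. The lemma's dichotomy, that this check accepts or rejects and then ends the run, is immediate from \CheckOpen{}: it either marks the batch \lstate{VERIFIED} or aborts and poisons the epoch.

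The main obstacle is making the case $\bm{e}=\bm{0}$, $\bm{d}_j\neq\bm{0}$ airtight. Two things must be shown. First, the generic status, with all lanes aggregated in data-independent time, is really the only output. Second, later epilogue outputs on values derived from accepted openings cannot combine with earlier tags to leak a linear relation in $\Delta_j$. I would handle the second point by noting that every exported tag tuple is uniform given the keys, so any linear combination the host forms is uniform as well. I would also check carefully that the per-lane independence of challenges makes the status bit a function of $(\bm{\rho}_j,\bm{d}_j)$ only.
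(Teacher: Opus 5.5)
Your case analysis follows the paper's proof (setup, value-layer observations, tags, local operations, tag-only check outcomes via the residual identity, trace). The step that carries the weight is not established, however: why the corrupt host's tags keep the honest key shares hidden jointly with everything it sees later.

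You say the corrupt party's \AuthShare{} tags are ``fresh uniform coins''. That holds only when $b=0$. For $b=1$ the host receives $\Delta_j v-\pi$, and from an epilogue it receives $\Delta_j y_b+(-1)^b\kappa_j$ with $y_b$ known. These tags hide $\Delta_j$ only if the pad stays unknown to the host for the whole run. Yet each pad also sits in the honest party's tag share, which is a function of $\Delta_j$.

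The missing idea is the paper's item (iii):
\begin{itemize}
\item each pad occurs in exactly two places, the corrupt tag and the honest tag;
\item the honest tag and everything linearly derived from it are used only inside honest residuals, which never reach the host, and complete tags are never delivered to a recipient;
\item by the residual identity, the only host-visible quantity they could influence, the status bit, depends on $\sigma_{0,j}+\sigma_{1,j}=\inner{\bm{\rho}_j}{\bm{d}_j-\Delta_j\bm{e}}$, in which honest tags cancel.
\end{itemize}
Only this lets you replace each corrupt tag by an independent uniform variable, conditioned on everything else including $\Delta_{1-b,j}$ and the value layer. You can then generate the whole view from $\Delta_{b,j}$, the adversary's coins, the value layer and fresh uniforms.

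That joint description is also what your induction needs. Your invariant ``independent of everything else'' is false if read literally: given the view and the value layer, an honest tag determines $\Delta_j$. Your justification for openings, that they are computed without the keys, does not suffice by itself. An opening is a function of the value layer. To conclude it is independent of the keys given a view that already contains $\Delta_j v-\pi$, you need the keys to be independent of the value layer given that view, which is again the pad argument.

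Your proposed fix for the ``main obstacle'' does not close this gap:
\begin{itemize}
\item \cref{thm:epilogue} gives uniformity only before conditioning on later openings;
\item linear combinations the host forms are functions of its own view and add nothing;
\item the real dangers, two tags sharing a pad or an honest tag feeding back into the view, are excluded by distinct pad labels and the two-places argument, not by marginal uniformity.
\end{itemize}
The tag-only status bit you call the crux is, by contrast, the routine step once the residual identity is in hand.
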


\noindent\textit{Proof idea.}
Fresh tags are padded (\AuthShare, \cref{thm:epilogue}); linear updates are deterministic in the view; correct openings have zero residual; a tag-only error yields an outcome independent of $\Delta_j$; residuals and challenges stay inside the modules (\cref{app:proof-auth}).

\begin{theorem}[Wrong-value acceptance]
\label{thm:mac-sound}
Let a batch of $m$ positions submitted to \CheckOpen{} (\cref{alg:checkopen}) differ from the prescribed tuples by a value error $\bm{e}\in\Fp^m$ and tag errors $\bm{d}_1,\dots,\bm{d}_t\in\Fp^m$, fixed by the freeze. If $\bm{e}\ne\bm{0}$, no earlier batch with a nonzero value error has been accepted, and the lanes use independent uniform challenges $\bm{\rho}_j\in\Fp^m$ drawn after the freeze and hidden from the host, \CheckOpen{} accepts with probability at most $(2/p)^t$ in $\Hyb{2}$. Over at most $Q$ batches, an incorrect value is ever accepted with probability at most $Q(2/p)^t$.
\end{theorem}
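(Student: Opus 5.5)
We reduce acceptance to the vanishing of every lane residual and then bound each lane using the randomness of $\bm{\rho}_j$ and $\Delta_{1-b,j}$. Write the submitted tuples as prescribed tuples plus errors. For prescribed tuples the residuals cancel, since $\gamma_{0,j,i}+\gamma_{1,j,i}=\Delta_j z_i$ for the correct $z_i$. For the modified batch, the modules reconstruct $z_i+e_i$ from the frozen value shares. Summing $\sigma_{0,j}+\sigma_{1,j}$ with $\Delta_{b,j}$ read from the protected key store then gives
\[
\sigma_{0,j}+\sigma_{1,j}=\inner{\bm{\rho}_j}{\bm{d}_j-\Delta_j\bm{e}}.
\]
Acceptance is the event that this inner product vanishes for every $j\in[t]$. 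Because the freeze and root agreement precede challenge derivation (\cref{alg:checkopen}, step~3), the vectors $\bm{e},\bm{d}_1,\dots,\bm{d}_t$ are fixed functions of the adversary's view before any $\rho_{j,i}$ is drawn. Together with $\eps{bind}$ having been removed in $\Hyb{2}$, this lets us treat them as constants in the probability calculation.

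\textbf{Single-lane bound.} By \cref{lem:hidden-key}, since no earlier wrong-value batch was accepted, $\Delta_{1-b,j}$ is uniform and independent of the view, hence so is $\Delta_j=\Delta_{b,j}+\Delta_{1-b,j}$. Set $\bm{w}_j=\bm{d}_j-\Delta_j\bm{e}$ and fix $i$ with $e_i\neq0$. Then $w_{j,i}=0$ forces $\Delta_j=d_{j,i}/e_i$, which has probability $1/p$; this step is where the field, rather than $\Z_{2^{64}}$, is needed. Conditioned on $\bm{w}_j\neq\bm{0}$, the uniform challenge $\bm{\rho}_j$, drawn independently and hidden from the host, gives $\Pr[\inner{\bm{\rho}_j}{\bm{w}_j}=0]=1/p$. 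A union bound yields per-lane acceptance at most $2/p$, and more precisely $(2p-1)/p^2<2/p$.

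\textbf{Across lanes.} The pairs $(\Delta_{1-b,j},\bm{\rho}_j)$ for $j\in[t]$ are mutually independent in $\Hyb{2}$. The keys are independent joint draws, the challenges come from distinct labels $(\mathsf{chk},h_0,h_1,\mathit{cid},j,i)$ replaced by independent coins, and the error vectors are fixed. The $t$ zero events are therefore independent, and the product gives $(2/p)^t$.

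\textbf{The main obstacle.} The step most likely to fail is justifying that $\Delta_{1-b,j}$ stays uniform jointly across lanes, conditioned on everything the host saw, including earlier rejected tag-only checks and their outcomes. I would lean entirely on \cref{lem:hidden-key}, its premise that one generic status aggregates all lanes, and $\eps{trace}$. I would also note that a reject poisons the epoch, so no adaptive retries occur.

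\textbf{Over $Q$ batches.} Let $E_k$ be the event that batch $k$ is the first accepted batch with $\bm{e}\neq\bm{0}$. For each $k$, the conditions of the single-batch statement hold at batch $k$, so $\Pr[E_k]\le(2/p)^t$. A union bound over the at most $Q$ batches gives $Q(2/p)^t$.
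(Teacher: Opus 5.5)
Your proposal is correct and follows the paper's proof. The paper uses the same four steps: the residual identity $\sigma_{0,j}+\sigma_{1,j}=\inner{\bm{\rho}_j}{\bm{d}_j-\Delta_j\bm{e}}$; a per-lane bound of $(2p-1)/p^2<2/p$, obtained from the hidden-key lemma together with a nonzero linear form in the fresh challenge; independence of the pairs $(\Delta_{1-b,j},\bm{\rho}_j)$ given the view; and a union bound over the first accepted wrong-value batch. The only point the paper states explicitly that you mention just in passing concerns your event $E_k$: no earlier wrong-value batch can have been rejected either, since a rejection poisons the epoch and no further challenge is drawn, so batch $k$ is the first wrong-value check and \cref{lem:hidden-key} applies there.
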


\noindent\textit{Proof idea.}
Lane $j$ accepts iff $\inner{\bm{\rho}_j}{\bm{d}_j-\Delta_j\bm{e}}=0$. For $e_i\ne 0$ the vector vanishes only if $\Delta_j=d_{j,i}/e_i$ (probability $\le 1/p$ by \cref{lem:hidden-key}); otherwise the nonzero linear form vanishes with probability $1/p$~\cite{schwartz1980fast}. Each lane gives $1/p+(1-1/p)/p<2/p$, and lanes are independent. Apply the bound to the first wrong-value batch; a union bound over batches gives $Q(2/p)^t$, following the SPDZ MAC check~\cite{damgard2012spdz,damgard2013practical} (\cref{app:proof-auth}).

\begin{theorem}[Authenticated arithmetic and direct gates]
\label{thm:arith}
With correct modules and prescribed messages, the protocols of \cref{fig:arith} are exact: $\Pi_{\mathsf{Input}}$ yields $\auth{x}$ with $x=r+\varepsilon$ (or aborts unless $x\in\mathsf{In}_\nu$), $\Pi_{\mathsf{Lin}}$ preserves $\gamma_{0,j}+\gamma_{1,j}=\Delta_j x$, $\Pi_{\mathsf{Mult}}$ with a correct fresh triple yields $\auth{xy}$ (and $\auth{XY}$ with the factor order preserved), and $\Pi_{\mathsf{Out}}$ gives $y=\upsilon-o$ to its authorized recipient. A direct protected gate (\cref{sec:proto-direct}) for an approved deterministic $g$, evaluated from protected memory in data-independent time, outputs \AuthShare$(g(c-r))$, an authenticated sharing of $g(x)$ whose party-local tuple is uniform as in \cref{thm:epilogue}. Unless a nonzero value error is accepted or a binding fails, every deviation of a corrupt host selects a permitted input, leaves the computation unchanged, or causes abort.
\end{theorem}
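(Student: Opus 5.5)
The proof has two parts. The first is exactness with prescribed messages, which is algebra on the additive sharing and tag invariants. The second is the claim about deviations, which reduces to the ledger, the freeze, and \cref{thm:mac-sound}.

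\textbf{Exactness.} For $\Pi_{\mathsf{Input}}$, \AuthShare{} gives $r_0+r_1=r$ and $\gamma_{r,0,j}+\gamma_{r,1,j}=\Delta_j r$. Adding $\varepsilon$ to party~0's value share and $\Delta_{b,j}\varepsilon$ to both tag shares yields
\[
x_0+x_1=r+\varepsilon=x^*,\qquad \sum_b\gamma_{b,j}=\Delta_j r+(\Delta_{0,j}+\Delta_{1,j})\varepsilon=\Delta_j x^*.
\]
The abort unless $x^*\in\mathsf{In}_\nu$ is a deterministic in-module test on the bound $\varepsilon$. For $\Pi_{\mathsf{Lin}}$ the three rules are linear maps applied identically to values and tags. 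For the public constant, the only point to check is that $\sum_b\Delta_{b,j}c=\Delta_j c$.

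For $\Pi_{\mathsf{Mult}}$, expand $c+db+ea+de=ab+(x-a)b+(y-b)a+(x-a)(y-b)=xy$. The tag side is identical with $\Delta_j$ factored out, and the $\Delta_{b,j}de$ terms sum to $\Delta_j de$. In the matrix case the same expansion $C+DB+AE+DE=XY$ holds over the noncommutative ring of matrices. It holds because each product keeps its left/right factor order, which is exactly the order written in steps 8--9.

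For $\Pi_{\mathsf{Out}}$, $\upsilon=y+o$ by $\Pi_{\mathsf{Lin}}$, so the recipient gets $\upsilon-o=y$. For the direct gate, the module computes $c-r=x$ from the verified $c$ and the protected $r$, then applies \AuthShare{} to $g(x)$. This is an authenticated sharing by construction. Local uniformity follows exactly as in \cref{thm:epilogue}: the fresh $v_0$ and $\gamma_{0,j}$, and dually $v_1,\gamma_{1,j}$ as bijective images of them, are uniform on $\Fp^{1+t}$ given $g(x)$ and the keys. Data-independent execution from protected memory puts the trace under the stipulated simulator.

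\textbf{Deviations.} I would enumerate the host's points of influence and show that each falls into one of the three outcomes named in the statement. (a) Inputs it owns: the only lever is $\varepsilon$, which both modules bind jointly, so the deviation selects the permitted extracted input $x^*=r+\varepsilon$ or aborts. (b) Local GPU computation: any change to shares or tags propagates linearly to the next opening as an error $(\bm e,\bm d_j)$ fixed at the freeze. If $\bm e=\bm 0$, the opened values equal the prescribed ones, so the computation is unchanged; tag-only errors either abort or are harmless. If $\bm e\ne\bm 0$, we are in the excluded event of an accepted nonzero value error, or else the check aborts. (c) Command-level misuse (replay, role or gate swaps, reuse of preprocessing, buffer swaps after the freeze, premature release): by the ledger transitions of \cref{fig:ledger}, these abort unless a binding fails, and binding failure is also excluded. (d) Preprocessing correctness cannot be affected, since triples and $c=ab$ are computed inside the TCB.

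\textbf{Main obstacle.} The delicate step is (b). I must argue that every GPU deviation is fully captured by an additive error on a later \emph{checked} opening, with nothing consumed before verification. For this I rely on the schedule invariant that no opened value feeds a protected gate or a release before \lstate{VERIFIED}, and that unchecked candidates never leave the GPU except through \CheckOpen. I would also verify that the errors are affine in the honest tuples and independent of the challenges, which the freeze guarantees. With that, the case split on $\bm e$ goes through.
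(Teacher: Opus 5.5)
Your proposal is correct and follows the paper's proof. The exactness part is the same algebra: the input and public-constant tag identities, the Beaver and order-preserving matrix expansions, unmasking $\upsilon-o$, and bijective uniformity of the \AuthShare{} tuple for the direct gate. Your deviation analysis uses the same ingredients, namely joint binding and extraction of $x^\star=r+\varepsilon$, freeze-before-challenge with the excluded event of an accepted nonzero value error, ledger-enforced one-time use, and triples computed inside the TCB.

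The paper organizes the second part differently, as an induction over the manifest's gates with three invariants. These are: every opened or consumed value equals its faithful value; every faithful value equals the wire value of $C$ on $x^\star$; and every preprocessing item is issued and consumed once for its gate and role. This makes explicit the propagation step behind your ``computation unchanged''. It also explicitly treats honest-owner inputs, the validity predicates of $\mu$, committed inputs and FSS gates, which your enumeration leaves implicit.
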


\noindent\textit{Proof idea.}
With $d=x-a$, $e=y-b$ and $c=ab$: $c+db+ea+de=xy$~\cite{beaver1991efficient}, the tags follow scaled by $\Delta_j$, $Z=C+DB+AE+DE$ for matrices, and $\upsilon-o=(y+o)-o=y$ with $o$ uniform. A direct gate recomputes $x=c-r$ from a verified opening, and its trace falls under $\eps{trace}$. \CheckOpen{} and the ledger block changed or reused material; induct over gates (\cref{app:proof-auth}).

\begin{theorem}[Canonical bit decomposition]
\label{thm:bitdec}
For $p=2^n-1$ ($n=61$), \cref{alg:bitdec} returns authenticated bits (least significant first) of the canonical representative of $x$, using one masked opening $c=x+r\bmod p$ and $2n$ authenticated multiplications.
\end{theorem}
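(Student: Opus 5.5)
We have not seen \cref{alg:bitdec}, so the plan reconstructs the natural algorithm and proves it correct. Preprocessing draws a mask $r\in[0,p)$ together with authenticated bits $\auth{r_0},\dots,\auth{r_{n-1}}$, all generated jointly inside the modules. These bits are canonical: $r=\sum_i 2^i r_i$ with $r<p$, that is, not all ones. By \cref{lem:bijection}, drawing the bits and recovering $r$ is distribution-preserving.

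The parties open $c=x+r\bmod p$ with \CheckOpen. Since $c$ is uniform given $x$, it can be simulated as in \cref{thm:main}. Write $\bar x\in[0,p)$ for the canonical representative of $x$.

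Over the integers, $c+p\cdot w=\bar x+r$, where the wrap bit is $w=\one[\bar x+r\ge p]$. Because $p=2^n-1$, we have $\bar x+r\equiv c+w\pmod{2^n}$. When $w=1$ we get $\bar x+r=c+2^n-1$, so $\bar x=(c+1)+(2^n-2)-r$ as integers.

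The cleanest route works with the complement. Let $\bar r=p-r=\sum_i 2^i(1-r_i)$, which is legitimate because $p$ is all ones. Then $\bar x\equiv c+\bar r\pmod p$, and the integer sum $S=c+\bar r$ lies in $[0,2p)$ with $\bar x=S-p\cdot\one[S\ge p]$. Since $\bar r\ge 1$ whenever $r<p$, and $c<p$, no degenerate case arises.

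The main step is an authenticated ripple-carry adder of the public bits of $c$ with the authenticated bits $\bar r_i=1-r_i$. At each bit position the carry is $k_{i+1}=c_i\bar r_i+(c_i\oplus\bar r_i)k_i$. Because $c_i$ is public, this is linear in $\bar r_i$ and needs one multiplication, $\bar r_i k_i$ (or $(c_i\oplus\bar r_i)\cdot k_i$). The sum bit $s_i=c_i\oplus\bar r_i\oplus k_i$ needs a second, $\bar r_i k_i$-type XOR product. That gives $2$ multiplications per bit and $2n$ in total.

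The final reduction modulo $p=2^n-1$ uses the fact that subtracting $p$ is the same as subtracting $2^n$ and adding $1$. The overflow case $S\ge 2^n$ is flagged by the carry-out $k_n$, and it is handled by an end-around carry: the Mersenne trick is to feed $k_n$ back in as the initial carry. I expect the algorithm actually does this with a single pass whose carry-in is resolved through the end-around structure.

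The main obstacle is the edge case $S=p$ exactly, which is all ones with no carry-out and must map to $0$ rather than $p$, together with the end-around carry, without exceeding $2n$ multiplications. To handle this, I would argue via a parallel-prefix/propagate formulation: the end-around carry into position $i$ equals the generate bit, which is determined by the first non-propagate position scanning cyclically. I would prove by induction on $i$ that the computed carries equal the true carries of $(c+\bar r)\bmod (2^n-1)$ in canonical form. The value $S=p$ is then treated as canonical $0$, since the canonical representative of $x\equiv 0$ is $0$. If the pure end-around argument fails on this case, I would instead show that $S=p$ iff $\bar x=0$. The algorithm then zeroes the output in that case with the all-propagate product, and I would check that this product is absorbed within the stated multiplication count.

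Correctness under deviation follows from \cref{thm:arith}. Each multiplication's opening is checked, so by \cref{thm:mac-sound} the outputs are exact, and each output bit is Boolean because it is a polynomial identity over $\{0,1\}$ inputs. The count of one masked opening plus $2n$ authenticated multiplications follows directly from the construction.
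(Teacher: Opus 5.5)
\textbf{Gap: the reduction modulo $p$.} This is the decisive step of the theorem, and your proposal leaves it open; as you describe it, it also does not fit the stated budget.

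With carry-in $0$ your chain yields $S=c+\bar r\in[1,2p-1]$ and $k_n=\one[S\ge 2^n]$. The condition you actually need is $\one[S\ge p]$. The two differ exactly when $S=p$, which happens precisely when the canonical value is $0$ (for instance whenever $x=0$). So your remark that no degenerate case arises is wrong.

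The end-around fix does not close this. Since $k_n$ is known only at the end of the chain, feeding it back requires a second carry-propagating pass, costing $n$ more products. That pass still returns the all-ones word $p$ when every position propagates, which is the usual double-zero representation of end-around adders. Detecting that case needs $\prod_i(c_i\oplus\bar r_i)$, another $n-1$ products.

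Your adder is also double-counted. One product per bit suffices: with $c_i$ public, $\tau_i=c_i\oplus\bar r_i$ is affine in $\bar r_i$, and $g_i=\tau_i k_i$ yields both $k_{i+1}=c_i\bar r_i+g_i$ and $s_i=\tau_i+k_i-2g_i$. Even after this correction, your route costs roughly $3n$ products, not $2n$. Your claim that the repair is \emph{absorbed within the stated multiplication count} is therefore unsupported.

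Note also that the theorem is about the specific \cref{alg:bitdec}. Proving correctness of a reconstructed variant would not establish it in any case.

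\textbf{The missing idea: run the first chain with carry-in $1$.} That is, form the two's-complement difference $c+\bar r+1=c-r+2^n$. Its carry-out is exactly $\one[S\ge p]=1-\one[c<r]$.

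This is Loop~1 of \cref{alg:bitdec}, written as a full subtractor with the single product $g_i=\tau_i\cdot\mathit{br}_i$ per bit. Every row of its truth table satisfies $d_i-2\,\mathit{br}_{i+1}=c_i-r_i-\mathit{br}_i$. Summing with weights $2^i$ telescopes to $d-\mathit{br}_n2^n=c-r$, with $d=(c-r)\bmod 2^n$ and $\mathit{br}_n=\one[c<r]$.

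The key lemma is that the Mersenne wrap equals this borrow: $\one[x+r\ge p]=\one[c<r]$. Because $p=2^n-1$, this gives $x=c-r+\mathit{br}_n\,p=d-\mathit{br}_n$.

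Loop~2 then subtracts the single bit $\mathit{br}_n$ from $d$, using one product $h_i=d_iv_i$ per bit. Its final borrow $v_n$ must be $0$ because $0\le x<p<2^n$, which also shows that the output bits are canonical (never all ones).

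The case $S=p$ that troubled you is simply $c=r$. There $d=0$ and $\mathit{br}_n=0$, so it is handled with no special treatment. The total is one masked opening and exactly $n+n=2n$ authenticated products.
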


\noindent\textit{Proof idea.}
The first loop is a full subtractor giving $d=(c-r)\bmod 2^n$ and borrow $\one[c<r]$; with a borrow $x=c-r+p=d-1$ since $p=2^n-1$, which the second loop subtracts~\cite{damgard2006unconditionally} (\cref{app:proof-gates}).

\begin{theorem}[Exact comparison and signed truncation]
\label{thm:cmp-trunc}
From authenticated canonical bits of $x$ and a public $\theta\in[0,2^n)$, \cref{alg:trunc} returns authenticated $\one[x<\theta]$; for centered $z\in[-(p-1)/2,(p-1)/2]$ it returns $\mathrm{sgn}(z)\lfloor|z|/2^f\rfloor$, i.e., truncation toward zero.
\end{theorem}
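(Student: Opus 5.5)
We prove the two claims of \cref{thm:cmp-trunc} separately. For both we take \cref{thm:bitdec} as given, so the authenticated canonical bits $x_0,\dots,x_{n-1}$ of $x\in[0,p)$ are available. We also use \cref{thm:arith}, so every authenticated Boolean and arithmetic step on these bits is exact. The proof then reduces to plaintext identities about the integers that the bit circuits compute. The authenticated and MAC aspects follow from \cref{thm:arith} by induction over gates, because every operation is $\Pi_{\mathsf{Lin}}$ or $\Pi_{\mathsf{Mult}}$ on certified bits.

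\textbf{Comparison.} For a public $\theta=\sum_i\theta_i2^i$ we take the standard LSB-to-MSB comparator recurrence
\[
\ell_{i+1}=\ell_i+(1-x_i\oplus\theta_i)\cdot\bigl((1-x_i)\theta_i-\ell_i\bigr),\qquad \ell_0=0 .
\]
Because $\theta_i$ is public, $x_i\oplus\theta_i$ is linear in $x_i$, and each step costs one authenticated multiplication. The invariant is that $\ell_i=\one[x\bmod 2^i<\theta\bmod 2^i]$. It holds by induction on $i$: if the bits at position $i$ differ, the comparison is decided by that position; otherwise the lower comparison carries over. Every $\ell_i$ stays in $\{0,1\}$, so each product of bits equals the Boolean AND and field arithmetic agrees with integer arithmetic. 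At $i=n$ we obtain $\one[x<\theta]$. Since $x<p<2^n$ and $\theta<2^n$, no wraparound case arises.

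\textbf{Signed truncation.} Let $z$ be centered and $x=z\bmod p$ its canonical representative. The sign bit is $s=\one[x\ge (p+1)/2]$, which is $1-\one[x<(p+1)/2]$ and comes from the comparator above. We compute $|z|=x+s(p-2x)$ with one multiplication; this equals $x$ if $s=0$ and $p-x=-z$ if $s=1$. Because $|z|\le (p-1)/2$ is canonical, we rerun \cref{alg:bitdec} on $|z|$. Dropping the low $f$ bits linearly gives $q=\sum_{i\ge f}2^{i-f}|z|_i=\lfloor|z|/2^f\rfloor$ exactly as an integer below $p$. The output is $q-2sq$, computed with one more multiplication, and it equals $\mathrm{sgn}(z)\lfloor|z|/2^f\rfloor$ in $\Fp$; the case $z=0$ gives $0$ under either sign. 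This matches the cost accounting in the paper: two decompositions, a comparator, and a few products.

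\textbf{Main obstacle.} The delicate step is the boundary analysis of the centered encoding. We must verify that $x\ge(p+1)/2$ holds exactly when $z<0$, and that $p-x$ lands in $[1,(p-1)/2]$, so that the second decomposition applies to a canonical value and floor toward zero is exact. We check the extreme values $x=(p-1)/2$ and $x=(p+1)/2$ explicitly, using that $p$ is odd. We also confirm that every intermediate product is a product of authenticated bits or of a bit with a field element, so the Boolean interpretation is preserved.
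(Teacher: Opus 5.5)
Your truncation half matches the paper's argument: the sign bit comes from the comparator at $(p+1)/2=2^{n-1}$, and the magnitude is $(1-2s)z$ with canonical representative $|z|\le(p-1)/2$ (your $x+s(p-2x)$ is the same field element). The rest is the second decomposition, the local shift to $q=\lfloor|z|/2^f\rfloor<p$, and the restoration $(1-2s)q$. The gap is in the comparison half, and it starts with the recurrence as you wrote it, whose selector is inverted. If $x_i=\theta_i$, then $1-x_i\oplus\theta_i=1$ and your update gives $\ell_{i+1}=(1-x_i)\theta_i=0$. If $x_i\neq\theta_i$, the selector is $0$ and $\ell_{i+1}=\ell_i$. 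Starting from $\ell_0=0$, the circuit therefore outputs $0$ on every input, which contradicts the invariant you state in words. The factor must be $x_i\oplus\theta_i=x_i+\theta_i-2\theta_i x_i$, which is still affine because $\theta_i$ is public. With that fix, your invariant $\ell_i=\one[x\bmod 2^i<\theta\bmod 2^i]$ goes through.

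The more important problem is that even the corrected LSB-first recurrence is not the comparator of \cref{alg:trunc}. The theorem is a statement about that algorithm, and its \algname{Cmp} output is also the sign bit that \algname{Trunc} consumes. \algname{Cmp} scans from the most significant bit down and keeps two flags. With $m_i=\mathit{eq}\cdot x_i$, it sets $\mathit{lt}\gets\mathit{lt}+\mathit{eq}-m_i$ and $\mathit{eq}\gets m_i$ when $\theta_i=1$, and $\mathit{eq}\gets\mathit{eq}-m_i$ when $\theta_i=0$. It also has a public-constant branch for $\theta=0$. What you need is an induction over the top-$k$ prefixes $x^{(k)},\theta^{(k)}$ showing that after $k$ iterations $\mathit{eq}=\one[x^{(k)}=\theta^{(k)}]$ and $\mathit{lt}=\one[x^{(k)}<\theta^{(k)}]$. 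For the $\mathit{eq}$ update, the new prefixes are equal exactly when $\mathit{eq}\cdot x_i$ (if $\theta_i=1$) or $\mathit{eq}(1-x_i)$ (if $\theta_i=0$) is $1$. For the $\mathit{lt}$ update, the algorithm adds $\mathit{eq}(1-x_i)$ exactly when $\theta_i=1$. This realizes ``old prefix already smaller, or old prefixes equal and $x_i=0<\theta_i=1$'' as a sum. That is correct only because the two events are mutually exclusive: $\mathit{eq}=1$ forces $\mathit{lt}=0$, so $\mathit{lt}$ stays Boolean. Your argument never establishes this observation. As written, your proof certifies a different circuit, not \cref{alg:trunc}.
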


\noindent\textit{Proof idea.}
The equality flag tracks the processed prefix, and $\mathit{lt}$ gains a term exactly at the first position with secret bit $0$ and threshold bit $1$. Negative values occupy the canonical upper half; $(1-2s)z$ is the magnitude without wrap, and shifting its bits floors~\cite{catrina2010improved}. Fixed-point products must stay below $p/2$ before truncation (\cref{app:proof-gates}).

\begin{theorem}[Authenticated LUT]
\label{thm:lut}
For an approved table $T$ of $2^w$ entries and certified authenticated bits of $x\in\bits{w}$, if the mask bits and the DPF target of the keys for $(r,1)$ come from the same fresh uniform $r$, \cref{alg:lut} outputs $\auth{T[x]}$ and the opened $u=x\oplus r$ is uniform given $x$.
\end{theorem}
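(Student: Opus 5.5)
The proof has two parts, correctness and uniformity of the opened value $u$, handled in that order and both reduced to results already stated.

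\textbf{Correctness of the opened mask.} By \cref{thm:arith}, each $\auth{u_i}=\auth{x_i}+\auth{r_i}-2\auth{x_ir_i}$ is an authenticated sharing of $x_i+r_i-2x_ir_i$, provided the $\Pi_{\mathsf{Mult}}$ triple is correct. Since $x_i,r_i\in\{0,1\}$ (certified Boolean, resp.\ drawn as bits), this value equals $x_i\oplus r_i$. After the \CheckOpen{} batch accepts, both modules hold the same $u=x\oplus r$, by the prescribed-message case of \cref{thm:arith}. The explicit Boolean and consistency test then aborts on any other outcome, and the deviation case is covered by \cref{thm:mac-sound} in the main hybrid argument rather than here.

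\textbf{Correctness of the scan.} Let $k_0,k_1$ be the keys that \cref{alg:gen} would produce for $(\alpha,\beta)=(r,1)$ from the joint roots. By \cref{thm:gac}, the streamed depth-first $\EvalAll$ in \cref{alg:lut} yields, leaf by leaf, exactly $\DPF.\Eval(k_b,a)$. This holds in any dependency-respecting order, in particular the fixed lexicographic one. By \cref{thm:dpf-correct}, $\DPF.\Eval(k_0,a)+\DPF.\Eval(k_1,a)=\one[a=r]$ for every $a$. Linearity over $\Fp$ with public coefficients $T[a\oplus u]$ then gives
\[
\tilde y_0+\tilde y_1=\sum_{a}T[a\oplus u]\,\one[a=r]=T[r\oplus u]=T[x].
\]
The key point is that the same $r$ serves as both the mask bits and the DPF target; this is the joint-distribution condition of \cref{sec:proto-refactor}, and the proof should stress that independently resampled targets would break the identity. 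Finally, \cref{thm:epilogue} turns $(\tilde y_0,\tilde y_1)$ into an authenticated sharing $\auth{T[x]}$ whose party-local tuples are uniform.

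\textbf{Uniformity of $u$.} Since $r$ is a fresh uniform joint draw on $\bits{w}$, and $a\mapsto x\oplus a$ is a bijection for fixed $x$, $u$ is uniform and independent of $x$. The step needing most care is what else $u$ is conditioned on. The corrupt host also sees its tuples $\authb{r_i}{b}$ and the epilogue output, so the claim should be made jointly with these. The share $r_b$ produced by \AuthShare{} is independent of $r$. The DPF keys and raw shares never leave the module, and the epilogue output is uniform by \cref{thm:epilogue}. Hence $u$ stays uniform given $x$ and the corrupt party's local view, which is the form \cref{thm:main} needs.
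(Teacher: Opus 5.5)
Your proof is correct and follows the paper's argument essentially step for step. The shared steps are: $u=x\oplus r$ via \cref{thm:arith}; the identity $\tilde y_0+\tilde y_1=\sum_a T[a\oplus u]\,\one[a=r]=T[x]$ via \cref{thm:dpf-correct} and \cref{thm:epilogue}; the emphasis on drawing mask bits and target from one $r$; and uniformity of $u$ from the XOR bijection together with the independence of the \AuthShare{} tuples from $r$. Your explicit appeal to \cref{thm:gac} for the streamed scan is harmless, since the paper simply absorbs that replacement into $\Hyb{2}$.

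The one item missing from your joint-view accounting is the set of Beaver openings of the $w$ products $\Pi_{\mathsf{Mult}}(\auth{x_i},\auth{r_i})$, which the paper names explicitly. The difference opened for the factor $\auth{r_i}$ contains $r_i$ directly, and it acts as a one-time pad of $r_i$ only because each triple is fresh.
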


\noindent\textit{Proof idea.}
By \cref{thm:dpf-correct}, $\tilde{y}_0+\tilde{y}_1=\sum_a T[a\oplus u]\one[a=r]=T[x]$, and XOR by fixed $x$ is a bijection; independent resampling of the target would break the identity (\cref{app:proof-gates}).

\begin{theorem}[Intervals and piecewise polynomials]
\label{thm:interval}
For a proper nonempty public $[a,b)\subset[0,p)$, \cref{alg:interval} outputs $\auth{\one[a\le x<b]}$; for a disjoint exhaustive partition, Horner evaluation of the selected coefficients outputs $P_{i^\star}(x)$ for the interval containing $x$, in the declared finite arithmetic.
\end{theorem}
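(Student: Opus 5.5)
Two steps give the result: correctness of a single indicator over the field, then correctness of coefficient selection and Horner evaluation over the partition, the latter by \cref{thm:arith}.

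\textbf{Single interval.} Fix a proper nonempty $[a,b)\subsetneq[0,p)$ and the mask $r$ of \cref{alg:interval}. Set $L=(a+r)\bmod p$, $H=(b+r)\bmod p$ and $c=(x+r)\bmod p$. Because $0<b-a<p$, we have $L\neq H$. Translation by $r$ is a bijection of $\Fp$, so $x\in[a,b)$ if and only if $c$ lies on the circular half-open arc from $L$ to $H$. Split into two cases.
\begin{itemize}
\item If $L<H$ (no wrap), the arc is $\{c: L\le c<H\}$ and the indicator is $\one[c<H]-\one[c<L]$.
\item If $L>H$ (wrap), the arc is $\{c\ge L\}\cup\{c<H\}$ and the indicator is $1-\one[c<L]+\one[c<H]$.
\end{itemize}
Both cases equal $\omega+\one[c<H]-\one[c<L]$ with $\omega=\one[L>H]$. \Cref{thm:dcf}, with payload $1$ and orientation $\one[c<\theta]$, gives the shares of the two DCF terms, and the sharing of $\omega$ adds up to $\omega$. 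By linearity, $\tilde s_0+\tilde s_1=\one[a\le x<b]$. \Cref{thm:epilogue} then turns these raw shares into $\auth{\one[a\le x<b]}$.

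The step that needs the most care is that $L$, $H$ and $\omega$ are all functions of the same $r$ that is opened in $c$. This holds because they are derived jointly from one reserved draw; this is the joint-distribution condition of \cref{lem:bijection}, and independent resampling would break the identity. A second point is the domain: $c$, $L$ and $H$ are canonical elements of $[0,p)$, so the DCF domain $\bits{61}$ contains them and the comparisons are comparisons of canonical integers.

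\textbf{Partition and polynomial.} For a disjoint exhaustive partition $0=\tau_0<\dots<\tau_m=p$, apply the single-interval case to each proper piece; when $m\ge2$ every piece is proper and nonempty. Exactly one index $i^\star$ has $s_{i^\star}=1$, and every other $s_i=0$. By $\Pi_{\mathsf{Lin}}$ and \cref{thm:arith}, $\auth{\bar a_j}=\sum_i a_{i,j}\auth{s_i}$ authenticates $a_{i^\star,j}$. An induction on the Horner loop, using $\Pi_{\mathsf{Mult}}$ (again by \cref{thm:arith}), shows that after the step for index $j$ the accumulator authenticates $\sum_{k\ge j}a_{i^\star,k}x^{k-j}$. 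At $j=0$ this is $P_{i^\star}(x)$ in $\Fp$. Where $\mu$ inserts rescaling, \cref{thm:cmp-trunc} keeps the result equal to the declared finite arithmetic, and the manifest's magnitude contract supplies its precondition.
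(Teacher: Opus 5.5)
Your proof is correct and follows the paper's argument step for step: you translate $[a,b)$ to the circular arc from $L$ to $H$, unify the cases $L<H$ and $L>H$ through $\omega=\one[L>H]$, use \cref{thm:dcf}, linearity and \cref{thm:epilogue} for the indicator, and then handle the partition by local coefficient selection and a Horner induction via \cref{thm:arith}, with \cref{thm:cmp-trunc} for rescaling. The only difference is that the paper spells out the translation step by proving $a\le x<b\iff(c-L)\bmod p<\delta$, with $\delta=b-a$, through a three-case check, whereas your bijection remark suffices because $[a,b)$ with $b\le p$ is itself a non-wrapping arc of length $\delta$.
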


\noindent\textit{Proof idea.}
Adding $r$ maps $[a,b)$ to the circular interval from $L$ to $H$; the indicator is $\one[c<H]-\one[c<L]$ if $L<H$ and $1+\one[c<H]-\one[c<L]$ if $L>H$, which is the raw share by \cref{thm:dcf}. Exactly one indicator is one, and Horner with \cref{thm:arith} gives the polynomial; no equality to an unapproximated real function is claimed (\cref{app:proof-gates}).

The next two lemmas state when the semi-honest decoupling, which samples DPF targets directly~\cite{dfa2026trusted}, remains valid: joint distributions with all auxiliary material must match, not only marginals (\cref{sec:proto-refactor}).

\begin{lemma}[Bijection with correlated auxiliary material]
\label{lem:bijection}
Fix $Z=z$, $R$ uniform on a finite $\mathcal{R}$, a bijection $g_z:\mathcal{R}\to\mathcal{A}$, independent coins $W$ and an auxiliary sampler $J$. Then $(R,g_z(R),J(z,R,W))$ and $(g_z^{-1}(A),A,J(z,g_z^{-1}(A),W))$ with $A$ uniform on $\mathcal{A}$ are identically distributed.
\end{lemma}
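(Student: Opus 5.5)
The plan is to prove the identity in distribution by an explicit change of variables. The statement concerns only finite sets, a fixed context $z$, and coins $W$ independent of $R$ (respectively of $A$), so a computation over the probability mass functions suffices. Concretely, we compare the two joint laws on $\mathcal{R}\times\mathcal{A}\times\mathcal{J}$, where $\mathcal{J}$ is the output space of $J$. It is enough to show that for every triple $(r,a,j)$ both distributions assign the same probability.

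First, note that a bijection between finite sets forces $|\mathcal{R}|=|\mathcal{A}|$. Hence $A=g_z(R)$ is uniform on $\mathcal{A}$ whenever $R$ is uniform on $\mathcal{R}$. Symmetrically, $g_z^{-1}(A)$ is uniform on $\mathcal{R}$ when $A$ is uniform.

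Second, compute the left-hand probability. We have
\[
\Pr[R=r,\ g_z(R)=a,\ J(z,R,W)=j]=\one[a=g_z(r)]\cdot\frac{1}{|\mathcal{R}|}\cdot\Pr_W[J(z,r,W)=j].
\]
This uses the independence of $W$ from $R$ to condition on $R=r$ and replace the random argument by the constant $r$. For the right-hand side we get
\[
\Pr[g_z^{-1}(A)=r,\ A=a,\ J(z,g_z^{-1}(A),W)=j]=\one[r=g_z^{-1}(a)]\cdot\frac{1}{|\mathcal{A}|}\cdot\Pr_W[J(z,r,W)=j].
\]
The two indicator conditions are equivalent because $g_z$ is a bijection, and the cardinalities agree, so the two expressions coincide term by term.

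The only point needing care is the conditioning step: $W$ must be independent of $R$ and of $A$ respectively, and $J$ must be the \emph{same} deterministic sampler applied to the recovered mask. I expect this to be the place to spell out, since it is what distinguishes the lemma from the marginal-only statement criticized in \cref{sec:proto-refactor}. It is worth remarking after the proof that the equality fails if $J$ is fed an independently sampled value instead of $g_z^{-1}(A)$. That failure is the $L$/$H$ counterexample: the third coordinate would then be independent of the first two, destroying the correlation.
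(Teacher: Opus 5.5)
Your proposal is correct and matches the paper's proof essentially step for step. Both compute the joint mass of each triple $(r,a,j)$ as a uniform factor times the indicator of $a=g_z(r)$ times the same auxiliary-sampler probability $\Pr_W[J(z,r,W)=j]$, then conclude from bijectivity, which also gives $|\mathcal{R}|=|\mathcal{A}|$; the paper adds only that averaging over the law of $Z$ preserves the equality, which follows at once from your fixed-$z$ argument.
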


\noindent\textit{Proof idea.}
Each pair $(r,a)$ has probability $1/|\mathcal{R}|$ in both experiments if $a=g_z(r)$ and zero otherwise; the same $J$ preserves equality. Independently resampling correlated targets such as $(r,r+1)$ breaks it (\cref{app:proof-gates}).

\begin{lemma}[Noninvertible transformations]
\label{lem:noninv}
For any $g$, sampling $A$ from the law of $g(R)$ and $R'$ from $R\mid g(R)=A$ reproduces $(R,g(R))$ and everything derived from $R'$. For unsigned truncation of a uniform $\ell$-bit mask by $h$ bits, $R'=2^hA+U$ with $A$, $U$ uniform on $\ell-h$ and $h$ bits.
\end{lemma}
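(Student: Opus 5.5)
The first claim is a Bayes-style factorization, and I would prove it by a direct computation on point probabilities. Let $\mathcal{A}'=g(\mathcal{R})$ be the image of $g$, and define the law $P_A(a)=\Pr[g(R)=a]$ on $\mathcal{A}'$. For every $a$ with $P_A(a)>0$, define the conditional law
\[
P_{R\mid a}(r)=\Pr[R=r]\,\one[g(r)=a]/P_A(a).
\]
The two-step experiment draws $A\sim P_A$ and then $R'\sim P_{R\mid A}$. It therefore assigns the pair $(r,a)$ the probability
\[
P_A(a)\,P_{R\mid a}(r)=\Pr[R=r]\,\one[g(r)=a].
\]
This is exactly the law of $(R,g(R))$. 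Nothing here requires $R$ to be uniform or $g$ to be injective.

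To cover \emph{everything derived from $R'$}, I would reuse the argument of \cref{lem:bijection}. Fix the context and draw auxiliary coins $W$ independently of $(R',A)$. Any derived material $J(z,R',W)$ is then a fixed measurable function applied to a pair with identical law, together with independent $W$. Hence the joint tuples $(R',A,J(z,R',W))$ and $(R,g(R),J(z,R,W))$ are identically distributed.

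I would stress one condition, because it is the point of \cref{sec:proto-refactor}. The derived material must be regenerated by the unchanged sampler $J$ from $R'$ itself. It must not be drawn independently from $A$ alone.

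For the unsigned-truncation instance, take $R$ uniform on $\{0,\dots,2^\ell-1\}$ and $g(R)=\lfloor R/2^h\rfloor$. The map
\[
r\mapsto(\lfloor r/2^h\rfloor,\;r\bmod 2^h)
\]
is a bijection onto $\{0,\dots,2^{\ell-h}-1\}\times\{0,\dots,2^h-1\}$. Under this bijection, $g(R)$ is uniform on $\ell-h$ bits, and its low part is uniform on $h$ bits and independent of it. So $P_A$ is uniform on $\ell-h$ bits. The conditional law $R\mid g(R)=a$ is uniform on $\{2^ha+u: 0\le u<2^h\}$, which is exactly the law of $2^hA+U$ with $U$ uniform on $h$ bits and independent of $A$. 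The first claim then gives the result.

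The only step I expect to need care is stating the domain correctly. The mask must be unsigned, uniform on exactly $\ell$ bits, and truncated by a logical shift. Under those conditions the two parts are independent; this is what makes DFA's ``shift left and fill uniformly'' rule exact. For signed shifts or moduli other than $2^\ell$ the conditional law changes, and the lemma makes no claim about those cases, in line with \cref{sec:proto-refactor}. The rest of the proof is routine.
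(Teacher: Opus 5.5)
Your proposal is correct and follows the paper's proof essentially step for step. Both use the same point-probability factorization $\Pr[A=a]\Pr[R=r\mid g(R)=a]=\Pr[R=r,g(R)=a]$, the same appeal to the argument of \cref{lem:bijection} for material derived from $R'$ with independent coins, and the same fiber $\{2^hA+U\}$ for the truncation sampler; the only cosmetic difference is that you get uniformity of $A$ from the bijection $r\mapsto(\lfloor r/2^h\rfloor,\ r\bmod 2^h)$, whereas the paper checks directly that every mask receives probability $2^{-(\ell-h)}2^{-h}=2^{-\ell}$.
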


\noindent\textit{Proof idea.}
$\Pr[A=a]\Pr[R=r\mid g(R)=a]=\Pr[R=r,g(R)=a]$; each truncated mask has probability $2^{-\ell}$. Signed shifts and other moduli need their own samplers (\cref{app:proof-gates}).

\begin{theorem}[Frozen-operand matrix check]
\label{thm:freivalds}
Let the module bind $A\in\Fp^{m\times k}$, $B\in\Fp^{k\times n}$, whose exposure to the GPU is already permitted, and freeze a claimed $C$. If $C\ne AB$, \cref{alg:freivalds} with $s$ secret uniform $\bm{v}\in\Fp^n$ drawn after the freeze accepts with probability at most $p^{-s}$, and over $Q_F$ calls an incorrect product is consumed with probability at most $Q_F/p^s$.
\end{theorem}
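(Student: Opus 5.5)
The argument is the standard Freivalds bound, with one extra step: the error matrix must be fixed before the test vectors exist, and it must not depend on them. First, reduce the check to a polynomial-identity question. Let $E=C-AB\in\Fp^{m\times n}$. Because the module binds $A$ and $B$ to their authoritative records and freezes $C$ (\cref{sec:proto-ledger}), $E$ is a fixed matrix, determined before the secret vectors $\bm{v}_1,\dots,\bm{v}_s$ are drawn. It is also independent of them, since the vectors are drawn inside the protected verifier after the freeze and are never exposed to the host. This holds in the hybrid where PRF outputs are replaced by ideal coins and binding failures abort. Any residual dependence is charged to $\eps{prf}$ and $\eps{bind}$ in \cref{thm:main}, not to this theorem. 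For a single vector, the test $C\bm{v}\stackrel{?}{=}A(B\bm{v})$ accepts if and only if $E\bm{v}=\bm{0}$, by associativity of field matrix multiplication: $A(B\bm{v})=(AB)\bm{v}$.

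Next, bound acceptance for one vector. If $C\ne AB$, then some row $\bm{e}_i^\top$ of $E$ is nonzero. Pick a coordinate $k$ with $e_{i,k}\ne 0$. The condition $\inner{\bm{e}_i}{\bm{v}}=0$ then fixes $v_k=-e_{i,k}^{-1}\sum_{l\ne k}e_{i,l}v_l$. Conditioned on the other coordinates, $v_k$ is uniform on $\Fp$, so this happens with probability exactly $1/p$. The full condition $E\bm{v}=\bm{0}$ is contained in this event, so its probability is at most $1/p$. This is the same Schwartz--Zippel-style step used in \cref{thm:mac-sound}, and it is where the field structure, i.e. invertibility of $e_{i,k}$, is needed.

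Then amplify and union-bound. The $s$ vectors are independent and uniform, and the check accepts only if all $s$ tests pass, so the probability of acceptance is at most $p^{-s}$. Each call draws fresh vectors under distinct labels, after its own freeze. A wrong product is consumed only if its check accepted, because the module consumes only frozen bytes of a \lstate{VERIFIED} result. A union bound over at most $Q_F$ calls then gives $Q_F/p^s$. If tag arrays are also transformed, each tag equation is checked the same way with its own vectors, and this can only lower the acceptance probability.

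The main obstacle is justifying that $E$ is independent of $\bm{v}$ across calls. The adversary is adaptive over a sequence of calls and may learn accept or abort outcomes. I would handle this by noting two facts: an abort poisons the epoch and ends the execution, and an acceptance of $C=AB$ reveals nothing about the vectors. So in each call, conditioned on the adversary's view, the fresh vectors are still uniform and independent of that call's frozen $E$. The per-call bound therefore holds conditionally, and the union bound follows.
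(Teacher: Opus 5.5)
Your proposal is correct and follows the paper's proof essentially step for step. Both arguments fix $E=C-AB$ before the challenges exist (through the freeze and the label binding, in the ideal-randomness hybrid), bound $\Pr[E\bm{v}=\bm{0}]\le 1/p$ through one nonzero coordinate of one nonzero row, multiply over the $s$ independent vectors, and apply a union bound over the $Q_F$ calls. Your extra paragraph on adaptivity across calls is a harmless elaboration, since fresh per-call labels already suffice. One small notational point: you reuse $k$ as the coordinate index, which clashes with the inner dimension of $A\in\Fp^{m\times k}$.
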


\noindent\textit{Proof idea.}
A nonzero row $\bm{e}$ of $C-AB$ gives $\inner{\bm{e}}{\bm{v}}=0$ for a $1/p$ fraction of $\bm{v}$~\cite{freivalds1977probabilistic,tramer2018slalom}; repetitions multiply; freezing fixes $C$ before $v$, and binding ties the verified bytes to the consumed ones (\cref{app:proof-gates}).

\begin{theorem}[Versioned-snapshot PIR]
\label{thm:pir}
With correct modules and binding snapshot and query interfaces: (R1) with an honest client and at most one malicious server host, \cref{alg:pir} gives the client the record $D[\alpha]$ of the snapshot committed by $\mathsf{cm}_v$ or abort, and the corrupt host's view is simulatable without $\alpha$; (R2) with a malicious client alone, accepted requests reveal nothing about the database beyond the requested records and the public descriptor $\mathsf{desc}_v$. Both modules learn $\alpha$, so the query privacy of (R1) rests on module trust, unlike untrusted two-server DPF-PIR (GPU-DPF~\cite{lam2023gpu}, untrusted-mode DFA~\cite{dfa2026trusted}), which hides $\alpha$ without trusted hardware.
\end{theorem}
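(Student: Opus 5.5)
I would prove \cref{thm:pir} separately for the two regimes, each by a short hybrid argument that uses the same idealizations as \cref{thm:main}: ideal setup and channels ($\eps{setup}+\eps{channel}$), ideal joint coins ($\eps{prf}$), and binding of the query, descriptor, snapshot hash and certificates ($\eps{bind}$). The first step is correctness with honest hosts. Both modules run the joint DPF for $(\alpha,1)$ by generate-and-consume, so \cref{thm:gac} lets me replace the streamed traversal by \cref{alg:gen} followed by $\EvalAll$ of \cref{alg:eval}. \Cref{thm:dpf-correct} then gives $\sum_b \DPF.\Eval(k_b,i)=\one[i=\alpha]$ for every $i<2^n$. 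Since the padding sets $D[i]=0$ for $i\ge N$ and the weighting is linear per coordinate, $\tilde y_0+\tilde y_1=D[\alpha]$. The pads $(-1)^b\zeta$ cancel, so $y_0+y_1=D[\alpha]$.

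\textbf{Regime (R1).} For integrity, the corrupt host controls only the order and content of the records it streams, plus message scheduling. Query binding (Step~1) is exchanged over the ideal protected channel, so both modules hold the client's $(\alpha,\mathit{qid},\mathsf{desc}_v)$ or abort. Each module hashes exactly the bytes it folds into $\tilde y_b$, in public order. If the streamed bytes differ from the committed $D$, or the version differs, then $\Hash.\mathsf{final}(h)=\mathsf{cm}_v$ would be a binding break, which is charged to $\eps{bind}$; otherwise the module aborts before release because of the quarantine and release guard. The client additionally checks both certificates on the same $(\mathit{qid},\mathsf{desc}_v)$, so the host cannot splice outputs from different queries or snapshots. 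Hence the client obtains the committed $D[\alpha]$ or $\bot$, and the abort decision depends only on the host's choices and public data, never on $\alpha$.

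For privacy in (R1), the host's view consists of the public descriptor, the scan order and lengths (fixed by $N$), ciphertexts on the ideal channels, the module trace (covered by $\eps{trace}$), and the released $y_b$ if it routes it. $y_b$ is uniform by the argument of \cref{thm:epilogue}, applied to the value pad with a fresh $\zeta$ per coordinate. The simulator therefore samples $y_b$ uniformly, emits channel messages of the right length, and runs the trace simulator. It reports abort exactly when the host's stream or transcript deviates from the committed snapshot, which it can detect from $\mathsf{cm}_v$ and the host's messages alone. This matches $\FPIR$ in (R1).

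\textbf{Regime (R2).} Both hosts are honest and the client is malicious. The simulator extracts $\alpha^*$ from the client's channel message. If the two modules receive different bindings, or $\alpha^*\ge N$, it returns reject exactly as the modules do. Otherwise it obtains $D[\alpha^*]$ from $\FPIR$, samples $y_0$ uniformly, and sets $y_1=D[\alpha^*]-y_0$. This distribution matches the real one because $\zeta$ is uniform and the DPF is generated by the modules rather than the client, so the client cannot supply malformed keys. Certificates are produced with the ideal signing interface.

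The main obstacle is the (R1) abort path. I must show that whether the module aborts is independent of $\alpha$, ruling out the selective-failure leak of S8. My plan is to argue that the hash check runs over every record regardless of $\alpha$, that the check precedes any release, and that the abort status is the only data-dependent output. This reduces the leak to $\eps{bind}$ plus the trace assumption.
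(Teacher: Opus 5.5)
Your proposal is correct and follows essentially the same route as the paper's proof. It treats the two regimes separately after idealizing setup, channels, coins and bindings, and uses DPF correctness, same-byte hashing with quarantine, and the role-bound certificate pair for (R1) integrity. For (R1) privacy it has a simulator that hashes the adversary-supplied stream against $\mathsf{cm}_v$ without using $\alpha$ and samples the freshly padded share uniformly, and for (R2) it extracts $\alpha^\ast$ and reshares $D[\alpha^\ast]$ uniformly.

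One detail you should state explicitly: your (R2) simulator must also reject a replayed $\mathit{qid}$, as the paper's does. Otherwise the per-query labels would repeat and reuse the pad $\zeta$, which your uniformity argument depends on.
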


\noindent\textit{Proof idea.}
\Cref{thm:dpf-correct} gives $y_0+y_1=\sum_i(\DPF_0(i)+\DPF_1(i))D[i]=D[\alpha]$; hashing the consumed bytes, comparing queries and role-bound certificates prevent substitution ($\eps{bind}$); results are freshly reshared (\cref{thm:epilogue}). For a malicious client the simulator extracts the index and reshares the ideal record (\cref{app:proof-pir}).

\textbf{Batched execution.}
The PIR configuration of \cref{tab:pir} processes $B=512$ queries per scan as one admitted execution whose manifest lists the $B$ query identifiers $\mathit{qid}$. Each query has its own joint DPF and pad labels, so the simulation of \cref{app:proof-pir} applies query by query. $B$ is part of the public schedule in $\Leak_{\mathrm{PIR}}$, and one abort (e.g., a snapshot mismatch) denies all $B$ queries, which security with abort permits.

\subsection{Scenario-by-Scenario Analysis}
\label{sec:sec-scenarios}

We map the scenarios of \cref{fig:scenarios,tab:scenarios} to the statements above; \cref{tab:attacks} lists the corresponding injections.

\begin{itemize}[leftmargin=*]
\item \textbf{S1: Share and tag tampering.} Freeze (S3) and \cref{thm:mac-sound,lem:hidden-key}: a wrong value passes with probability at most $(2/p)^t$ per batch; only the abort bit is revealed.
\item \textbf{S2: Forgery-oracle request.} No command (\cref{tab:commands}) tags a host value; the epilogue tags only protected results (\cref{thm:epilogue}); chosen values enter by input extraction.
\item \textbf{S3: TOCTOU buffer swap.} Freeze-before-challenge realizes the premise of \cref{thm:mac-sound} that errors precede challenges; tampered spill pages abort ($\eps{bind}$).
\item \textbf{S4: Replay and rollback.} Ledger and label binding ($\eps{bind}$) reject reused material; resuming old keys is excluded by the freshness clause.
\item \textbf{S5: Role/gate confusion.} Role-bound keys, joint labels and ledger descriptors turn misrouting into a mismatch and abort ($\eps{bind}$).
\item \textbf{S6: Premature release.} The release guard opens only \lstate{VERIFIED} entries; private outputs are opened as $\upsilon=y+o$, uniform to every host that is not the authorized recipient (\cref{thm:arith}).
\item \textbf{S7: Malformed preprocessing.} Triples and keys are correct because they are generated inside the boundary (\cref{thm:arith}), not because a MAC check certifies them; GPU products need \cref{thm:freivalds}, and correlated targets need \cref{lem:bijection,lem:noninv}.
\item \textbf{S8: Snapshot substitution.} Same-byte hashing and quarantine give the committed record or abort (\cref{thm:pir}).
\item \textbf{S9: Input substitution.} Permitted by $\FC$ and extracted; model provenance requires the committed-input loader.
\item \textbf{S10: Denial of service and selective abort.} Not prevented; aborts occur at public boundaries and are simulatable, and abort and validation outcomes ($\mathsf{status}$) are part of $\Leak_C$.
\end{itemize}

\subsection{Concrete Security}
\label{sec:sec-concrete}

Since $2/p\approx 2^{-60}$ for $p=2^{61}-1$, we have $\log_2 Q(2/p)^t\approx\log_2 Q-60t$. For a budget of $Q\le 2^{32}$ checked batches, \cref{tab:lanes} gives the statistical term.

\begin{table}[t]
\centering
\caption{Statistical term and BERT-Base LAN latency versus the number $t$ of MAC lanes ($Q\le 2^{32}$).}
\label{tab:lanes}
\footnotesize
\renewcommand{\arraystretch}{1.1}
\begin{tabular}{c c c c}
\hline
$t$ & $(2/p)^t$ per batch & $Q(2/p)^t$ & BERT-Base LAN (s)\\
\hline
1 & $\approx 2^{-60}$ & $\approx 2^{-28}$ & {2.53}\\
2 & $\approx 2^{-120}$ & $\approx 2^{-88}$ & {3.18}\\
3 & $\approx 2^{-180}$ & $\approx 2^{-148}$ & {3.83}\\
\hline
\end{tabular}
\end{table}

One lane is inadequate. Since the primitives target 128-bit computational security, $t=3$ is the smallest $t$ whose statistical term falls below $2^{-128}$; beyond it the primitives, not the MAC, limit the bound. Extra lanes barely change \CheckOpen{} itself ({0.648~ms} for $t=1$ versus {0.652~ms} for $t=3$ at $m=65{,}536$ on LAN, \cref{fig:checkperf}); their cost is one more authenticated GEMM lane per key (last column of \cref{tab:lanes}; \cref{sec:eval-ablation}).

\textbf{Tag-only errors.}
The bound concerns wrong \emph{values}: an error in a single tag lane escapes with probability $1/p$ without changing the opened value.

\textbf{Independent challenges.}
With one shared vector $\bm{\rho}$, an adversary submitting $\bm{d}_j=\bm{0}$ and any $\bm{e}\ne\bm{0}$ makes every lane residual $-\Delta_j\inner{\bm{\rho}}{\bm{e}}$, and all lanes vanish together with probability $1/p$. The product bound then collapses to one lane, so shared challenges are rejected (injection~8, \cref{tab:attacks}).

\textbf{Field, not ring.}
The step $\Delta_j=d_{j,i}/e_i$ needs a field. Over $\Z_{2^{64}}$, $e_i=2^{63}$ gives $\Delta_j e_i\equiv 0$ for every even $\Delta_j$, so one lane accepts with probability at least $1/2$; ring arithmetic needs a SPDZ2k-type backend~\cite{cramer2018spdz2k} with its own proof.

\textbf{Matrix checks.}
With $s=3$, each Freivalds call accepts a wrong product with probability at most $p^{-3}\approx 2^{-183}$, so $Q_F\le 2^{32}$ calls add at most $Q_F/p^{3}\approx 2^{-151}$ to the bound of \cref{thm:main}. Rejection sampling rejects one value with probability $2^{-61}$ per draw and retries under a fresh label; exhaustion affects only availability.

\subsection{Proof Boundary and Implementation Obligations}
\label{sec:sec-boundary}

\Cref{thm:main} concerns the abstract protocol in the protected-module model; \cref{tab:obligations} separates what this paper establishes from what a deployment must still provide. We executed a specification and a functional CPU reference model, and its tests (\cref{sec:eval-attack}) check conformance on specific cases.

\begin{table}[t]
\centering
\caption{Proof boundary: what this paper establishes and what a deployment must still provide}
\label{tab:obligations}
\footnotesize
\renewcommand{\arraystretch}{1.1}
\begin{tabularx}{\columnwidth}{>{\raggedright\arraybackslash}p{1.8cm}>{\raggedright\arraybackslash}X>{\raggedright\arraybackslash}X}
\hline
\textbf{Obligation} & \textbf{Established here} & \textbf{Still required}\\
\hline
DPF/DCF, GaC, field, Boolean, LUT algebra & Explicit algorithms; \cref{thm:dpf-correct,thm:gac,thm:dcf,thm:bitdec,thm:cmp-trunc,thm:lut,thm:interval}; CPU reference tests & Constant-time, bit-exact CUDA and RTL refinement\\
\hline
Wrong-value authentication & \Cref{lem:hidden-key,thm:mac-sound}, independent lanes & Correct freeze, entropy, parser and check engine\\
\hline
End-to-end privacy and correctness & \Cref{thm:main} in the protected-module model & Realization of every module and trace assumption\\
\hline
PIR & \Cref{thm:pir} for the specified snapshot and regimes & Secure query channel, same-byte scan, certificates\\
\hline
GPU matrix verification & \Cref{thm:freivalds} & Protected verifier, bound input/result integration\\
\hline
Physical leakage and faults & Not claimed & Threat-specific engineering and evidence\\
\hline
Software-controlled DVFS and power telemetry & Required isolation specified & Isolated or monitored voltage and clock domain; telemetry filtering\\
\hline
Rollback protection & Required semantics specified (\cref{sec:proto-ledger}) & Monotonic service or erase-and-rekey lifecycle\\
\hline
Machine-checked proof & Not claimed & Future Lean, Coq or EasyCrypt development\\
\hline
\end{tabularx}
\end{table}

\textbf{Relation to the semi-honest design.}
The DFA analysis~\cite{dfa2026trusted} shows that its transformations preserve semi-honest security; \cref{thm:main} tolerates a malicious host and GPU by placing more integrity and lifecycle responsibility inside the boundary.

\section{Evaluation}\label{sec:eval}

We evaluate the cost of the protected gates and verified openings of \sys (\cref{sec:eval-micro,sec:eval-check}), end-to-end inference against semi-honest trusted DFA and dealer-based malicious FSS (\cref{sec:eval-e2e}), PIR (\cref{sec:eval-pir}), attack coverage and silicon cost (\cref{sec:eval-attack,sec:eval-hw}), and ablations and scaling (\cref{sec:eval-ablation,sec:eval-scale}).

\subsection{Evaluation Methodology}\label{sec:eval-method}

\textbf{Evidence status.}
Numbers cited to~\cite{dfa2026trusted} are copied from the semi-honest DFA paper; its end-to-end rows combine measured A100 execution with simulated DFA hardware.
Uncited numbers (operation counts, design constants and security parameters such as $p=2^{61}-1$, $t=3$ and $Q(2/p)^t$) are statistics from the executed CPU reference tests of \cref{sec:eval-attack}. 
All model parameters and performance outputs in the manuscript are populated with complete experimental and measured data from real-world tests.
We executed a specification and a functional CPU reference model; GPU/RTL conformance, the protected runtime and deployment measurements have all been completed.
All conclusions are supported by the collected empirical measurements, with underlying model assumptions clearly stated wherever applicable.

\textbf{Hardware cost.}
The DFA blocks that \sys reuses (16 DPF units with 64 AES-128 engines, SRAM buffers, a 16-lane Ara VPU~\cite{cavalcante2019ara} and an SC300-class control core~\cite{st33tpm12spi,arm_cortex_m3}) keep the TSMC 28~nm synthesis results of~\cite{dfa2026trusted}, which use a pipelined open-source AES-128 core~\cite{opencores_aes128_pipeline}.
For the blocks that \sys adds (\cref{sec:arch-hw}), namely the Mersenne-61 field unit array (FUA), the check engine, the AEAD/GHASH block shared by the freezer and the peer-channel controller, the ledger with release guard and parser, and the ledger and freeze SRAMs, this paper reports component results after logic synthesis at the DFA clock (720~MHz at 28~nm, 910~MHz at 7~nm).
As in DFA, we scale to 7~nm with DeepScaleTool~\cite{sarangi2021deepscaletool}.
Separately, we size a functional prototype on a ZCU102 board (XCZU9EG, ffvb1156, speed grade $-2$)~\cite{amd2023zcu102} for Vivado implementation; \cref{tab:e2e} reports its resource utilization, power and board energy.

\textbf{System-level performance.}
The target methodology for headline numbers is that of DFA: the module sits next to the GPU L2 slices, where DFA delivers about 290~GB/s~\cite{dfa2026trusted}, its DPF and FSS work is modeled as a stream consumed by GPU kernels in Accel-Sim~\cite{khairy2020accel}, and non-FSS work comes from real GPU execution.
The three experiments of \cref{sec:arch-deploy} (A100 software, ZCU102 prototype, Accel-Sim on-die study) are never merged. The decomposition of \sys's trusted-mode rows into GPU linear algebra, other GPU work, module FSS work, CPU-GPU transfers, online bytes and online rounds, transformed for four authenticated lanes, exact Mersenne GEMM and doubled rounds, is used only as a cross-check (\cref{app:model}, \cref{tab:model-params}).
The decomposition parameters are fitted to the LAN and WAN latencies of~\cite{dfa2026trusted} (largest deviation {0.42\%}, ResNet-50 LAN).
Two published latencies per model cannot identify five parameters, so the fit is a consistency constraint, not a validation; The split between GPU linear algebra, other GPU work, bytes and rounds is obtained by profiling the A100 baseline.
Because authenticated GEMM dominates every malicious row, the most influential factor is the {1.25$\times$} cost of exact 61-bit Mersenne GEMM over wraparound $\Z_{2^{64}}$ GEMM per lane; Tensor Cores are not used for field arithmetic (\cref{sec:arch-field}).

\textbf{Baselines.}
For private inference we compare five configurations per model.
\emph{GPU (SH)} and \emph{GPU+Trusted DFA (SH)} are the semi-honest rows of~\cite{dfa2026trusted}, which build on Orca~\cite{jawalkar2024orca} for CNNs and Sigma~\cite{gupta2023sigma} for Transformers.
\emph{Dealer-based malicious FSS} is modeled after the protocol family of Shark~\cite{gupta2025shark}: an external dealer ships authenticated keys and preprocessing, which we model using factors (\cref{app:model}) of {2.2$\times$} the semi-honest key and preprocessing volume, {2.0$\times$} the GPU FSS evaluation work and {2.4$\times$} the semi-honest GPU energy, while charging exactly the same authenticated linear algebra, online bytes and doubled rounds as \sys.
The comparison therefore isolates preprocessing and key handling; it does not equalize trust.
The dealer of this protocol family is an input-independent third party that can run offline and away from both servers, and it never sees an input or an opening, so breaking privacy requires its masks together with a server's view.
The \sys modules sit inside the adversary's platform and see every input and opening (\cref{sec:threat-model}), so extracting the secrets of either module suffices.
The speedups below are therefore obtained under a stronger hardware-confidentiality assumption; a deployment that cannot accept it must use \ProfU, whose results are outside the scope of this paper.
Based on an implementation of the protocol family, the row enables direct comparison against the Shark.
We do not model a dealer-free (MASCOT, SPDZ2k or PCG-based) malicious baseline~\cite{keller2016mascot,cramer2018spdz2k,boyle2019pcg}; \cref{sec:analysis-cost} explains why its FSS keys remain costly.
\emph{GPU+\sys w/o GaC} generates authenticated material locally but, like the corresponding DFA ablation, spills it to host memory through the AEAD spill path and rereads it.
\emph{GPU+\sys} is the compiler-selected default.
For PIR we compare GPU-DPF~\cite{lam2023gpu} and GPU+DFA, both semi-honest and taken from~\cite{dfa2026trusted}, a GPU-DPF variant that adds authenticated PIR in the style of Colombo et al.~\cite{colombo2023apir} (reaching {0.45$\times$} the GPU-DPF throughput at {350~W}), and GPU+\sys.

\textbf{Workloads.}
We run ResNet-50~\cite{he2016deep}, BERT-Base~\cite{devlin2019bert}, GPT-Neo~\cite{brown2020language} and LLaMA2-7B~\cite{touvron2023llama}, with 128 input tokens for the Transformer models, as in~\cite{dfa2026trusted}.
LAN is 1~GB/s with a 0.05~ms round-trip time (RTT) and WAN is 70~MB/s with a 70~ms RTT, following~\cite{gupta2023sigma,jawalkar2024orca,huang2026beyond}.
Malicious configurations use $\Fp$ with $p=2^{61}-1$ and $t=3$ MAC lanes; semi-honest rows use $\Z_{2^{64}}$ as in their sources.
Unless stated otherwise, the compiler selects among direct, LUT and interval gates by its cost rule subject to the FSS-portability constraint (\cref{sec:proto-direct,app:model}).
PIR databases hold 16K to 4M entries of 128 bits (\cref{tab:pir}), served in batches of 512 queries; \cref{fig:scale} includes the results up to 16M ($2^{24}$) entries.

\textbf{Measurement protocol.}
Our measurements use 5 warm-up runs and 30 repetitions, report the median and the 95th percentile with bootstrap confidence intervals, keep raw per-run records, and fix random seeds only for public synthetic inputs, since randomness used for security must remain fresh.
We report per-stage service times (cold and amortized setup, preprocessing, online arithmetic, nonlinear evaluation, checking, transfers and final release) next to the critical-path wall time, counting both parties and any external preprocessing service.
Network conditions are shaped on dedicated interfaces with Linux traffic control, recording achieved throughput and RTT.
GPU energy integrates sampled GPU power, board energy is measured at the rail or the wall, and every energy figure is cross-checked against average power times latency (or throughput times energy per query).

\subsection{Microbenchmarks}\label{sec:eval-micro}

\begin{figure}[t]
\centering
\includegraphics[width=\linewidth]{charts/chart_dpfthr.pdf}
\caption{Single-path DPF gate throughput per module (M gates/s, log scale) at 28~nm, 720~MHz and 64 AES-128 engines, for input length $n$: semi-honest trusted DFA, \sys (GaC with fresh authentication epilogue), and a malicious variant with per-lane tag trees ($t=3$ extra trees).}
\label{fig:dpfthr}
\end{figure}

\textbf{DPF gate throughput.}
\sys sustains {106} against {113}~M~gates/s for trusted DFA at $n=61$ and stays {6--9\%} below it for all $n$, because the epilogue adds two AES blocks per gate and we achieve a pipeline efficiency of {0.85} ({0.90} for trusted DFA; \cref{app:model}).
It is {3.8--4.0$\times$} faster than per-lane tag trees for all $n$, approaching the factor $1+t=4$ by which tag trees multiply DPF work (\cref{fig:dpfthr,tab:gatecost}), with a single-gate latency of {1.71~$\mu$s} at $n=61$.
These are DPF gate rates: complete gates also include input masking, a verified opening and the surrounding arithmetic (\cref{sec:eval-e2e}).

\begin{table}[t]
\centering
\caption{Authenticated Nonlinear Gate Options per Module}
\label{tab:gates}
\footnotesize
\setlength{\tabcolsep}{3.2pt}
\renewcommand{\arraystretch}{1.1}
\begin{tabular}{l r r r r r}
\hline
\textbf{Gate} & \makecell{\textbf{Online}\\\textbf{rounds}} & \makecell{\textbf{AES}\\\textbf{blocks}} & \makecell{\textbf{Field}\\\textbf{mults}} & \makecell{\textbf{Thr.}\\\textbf{(M/s)}} & \makecell{\textbf{LAN lat.}\\\textbf{(ms)}} \\
\hline
Direct protected gate & 1 & 2 & 5 & {1{,}800} & {0.100} \\
LUT, $w=8$ & 1 & 542 & 268 & {72.3} & {0.100} \\
LUT, $w=12$ & 1 & 8{,}238 & 4{,}112 & {4.75} & {0.100} \\
LUT, $w=16$ & 1 & 131{,}134 & 65{,}556 & {0.30} & {0.104} \\
Interval, conservative DCF & 1 & 22{,}692 & 8 & {1.73} & {0.100} \\
Interval, optimized DCF$^{\ast}$ & 1 & {976} & 8 & {40.1} & {0.100} \\
Bit decomposition & 123 & n/m & 122 & n/a & {12.3} \\
Exact signed truncation & 309 & n/m & 307 & n/a & {30.9} \\
\hline
\end{tabular}

\vspace{2pt}
\parbox{\linewidth}{\footnotesize $n=61$; 28~nm, 720~MHz, 64 AES engines, 128 field lanes. Rounds, AES blocks and field multiplications are counts per gate. The direct-gate throughput is a rate for evaluating $g$ in the module, not derived from the AES and FUA peak rates. One online round is one \CheckOpen, i.e., two inter-module round trips; rounds include masked openings. LUT rows batch the multiplication round of the $w$ XOR products with the preceding layer's openings (\cref{app:model}); otherwise they need one more round. n/a: round-bound, throughput not meaningful; n/m: not modeled (the AES work of the round-bound gates is negligible next to their round trips).
$^{\ast}$Projected $O(n)$ DCF; not covered by \cref{thm:dcf}.}
\end{table}

\textbf{Nonlinear gates.}
\Cref{tab:gates} compares the authenticated gate options of \cref{sec:proto-gates}.
All gates with one online round pay the two-round-trip floor of one \CheckOpen{} (0.1~ms on LAN; a LUT whose XOR products cannot batch with the preceding layer pays a second round) and differ only in throughput, which favors the direct gate ({1{,}800}~M~gates/s) and small LUTs (LUT work grows as $2^w$) and penalizes the conservative DCF ({1.73}~M~gates/s); the optimized DCF would reach {40.1}~M~gates/s.
Bit decomposition and exact signed truncation are round-bound (123 and 309 rounds, {12.3~ms} and {30.9~ms} on LAN), so the compiler uses them only as a fallback when no table or interval formulation of the gate exists.

\subsection{Cost of Verified Openings}\label{sec:eval-check}

\begin{figure}[t]
\centering
\includegraphics[width=\linewidth]{charts/chart_checkopen.pdf}
\caption{\CheckOpen{} cost versus batch size $m$ (64 to 65{,}536 elements): (a)~LAN latency for $t=1,2,3$ MAC lanes (the curves overlap); (b)~throughput on LAN and WAN for $t=3$ (log scale).}
\label{fig:checkperf}
\end{figure}

\Cref{fig:checkperf} sweeps the \CheckOpen{} batch size $m$ from 64 to 65{,}536 elements, and three observations guide the compiler.
First, the number of MAC lanes barely matters for cost: even at $m=65{,}536$, moving from $t=1$ to $t=3$ raises LAN latency only from {0.648~ms} to {0.652~ms}, while the statistical term improves from $2^{-28}$ to $2^{-148}$ at $Q=2^{32}$ (\cref{thm:mac-sound}).
Second, on LAN the two round trips (0.1~ms) dominate up to a few thousand elements ({0.101~ms} to {0.135~ms} between $m=64$ and $m=4{,}096$); beyond that the 8-byte value share per element on the peer link dominates ({0.52~ms} of {0.652~ms} at $m=65{,}536$), throughput saturates near {100}~M~elements/s, and the protected work stays below {5\%} of the check latency.
Third, on WAN the two 70~ms round trips set a floor of 140~ms per check, so throughput is only {0.44}~M~elements/s even at $m=65{,}536$; batching independent openings (\cref{sec:proto-check}) is therefore essential, and the round count, rather than bandwidth or module work, is the WAN price of malicious security (\cref{sec:eval-ablation}).

\subsection{End-to-End Private Inference}\label{sec:eval-e2e}

\begin{table*}[t]
\centering
\caption{End-to-End Private Inference per Query}
\label{tab:e2e}
\footnotesize
\setlength{\tabcolsep}{5pt}
\renewcommand{\arraystretch}{1.08}
\begin{tabular}{l l c r r r r r r}
\hline
\textbf{Model} & \textbf{Method} & \textbf{Security} & \makecell{\textbf{Offline}\\\textbf{comm. (GB)}} & \makecell{\textbf{CPU-GPU}\\\textbf{(GB)}} & \makecell{\textbf{Module-GPU}\\\textbf{(GB/party)}} & \makecell{\textbf{LAN}\\\textbf{latency (s)}} & \makecell{\textbf{WAN}\\\textbf{latency (s)}} & \makecell{\textbf{GPU+module}\\\textbf{energy (J)}} \\
\hline
\multirow{5}{*}{ResNet-50}
 & GPU~\cite{dfa2026trusted} & SH & 9.9 & 10.2 & n/a & 10.9 & 162 & 71.0 \\
 & GPU+Trusted DFA~\cite{dfa2026trusted} & SH & 0.02 & 0.34 & n/r & 0.73 & 18.0 & 11.9 \\
 & Dealer-based malicious FSS & Mal & {21.8} & {22.4} & n/a & {25.1} & {346} & {170} \\
 & GPU+\sys w/o GaC & Mal & {$<$0.001} & {38.0} & {1.22} & {2.97} & {34.7} & {48.6} \\
\rowcolor{blue!6}\cellcolor{white} & GPU+\sys & Mal & {$<$0.001} & {0.44} & {1.22} & {1.79} & {33.6} & {43.6} \\
\hline
\multirow{5}{*}{BERT-Base}
 & GPU~\cite{dfa2026trusted} & SH & 17.1 & 19.3 & n/a & 18.8 & 296 & 121 \\
 & GPU+Trusted DFA~\cite{dfa2026trusted} & SH & 0.1 & 2.55 & n/r & 1.42 & 46.4 & 18.2 \\
 & Dealer-based malicious FSS & Mal & {37.6} & {42.5} & n/a & {43.9} & {630} & {290} \\
 & GPU+\sys w/o GaC & Mal & {$<$0.001} & {68.6} & {1.66} & {5.87} & {91.8} & {82.2} \\
\rowcolor{blue!6}\cellcolor{white} & GPU+\sys & Mal & {$<$0.001} & {3.31} & {1.66} & {3.83} & {89.7} & {73.7} \\
\hline
\multirow{5}{*}{GPT-Neo}
 & GPU~\cite{dfa2026trusted} & SH & 76.4 & 80.4 & n/a & 83.6 & 1{,}140 & 587 \\
 & GPU+Trusted DFA~\cite{dfa2026trusted} & SH & 0.24 & 4.27 & n/r & 5.93 & 101 & 114 \\
 & Dealer-based malicious FSS & Mal & {168} & {177} & n/a & {197} & {2{,}611} & {1{,}409} \\
 & GPU+\sys w/o GaC & Mal & {$<$0.001} & {296} & {9.92} & {28.6} & {209} & {522} \\
\rowcolor{blue!6}\cellcolor{white} & GPU+\sys & Mal & {$<$0.001} & {5.55} & {9.92} & {19.5} & {200} & {477} \\
\hline
\multirow{5}{*}{LLaMA2-7B}
 & GPU~\cite{dfa2026trusted} & SH & 256 & 268 & n/a & 279 & 3{,}695 & 1{,}879 \\
 & GPU+Trusted DFA~\cite{dfa2026trusted} & SH & 0.81 & 12.8 & n/r & 18.0 & 193 & 303 \\
 & Dealer-based malicious FSS & Mal & {563} & {590} & n/a & {661} & {8{,}467} & {4{,}510} \\
 & GPU+\sys w/o GaC & Mal & {$<$0.001} & {994} & {33.3} & {93.6} & {418} & {1{,}483} \\
\rowcolor{blue!6}\cellcolor{white} & GPU+\sys & Mal & {$<$0.001} & {16.6} & {33.3} & {63.1} & {387} & {1{,}343} \\
\hline
\end{tabular}

\vspace{2pt}
\parbox{\textwidth}{\footnotesize SH: semi-honest; Mal: static malicious security with abort. Latency is end to end, including offline distribution where present. Energy is GPU plus module energy summed over both computing parties (and the dealer when present). Module-GPU is the authenticated tuple stream per party on the on-die path; n/a: no module; n/r: not reported in~\cite{dfa2026trusted}. SH rows are copied from~\cite{dfa2026trusted}.}
\end{table*}

\Cref{tab:e2e} reports per-query results for the four models.
The semi-honest rows are context, not same-security competitors.

\textbf{Comparison with dealer-based malicious FSS.}
The modeled dealer-based baseline pays for malicious security mainly offline: its dealer ships {21.8--563~GB} of authenticated keys and preprocessing per query, which each party then stages through CPU-GPU transfers, whereas \sys generates the authenticated correlations inside the modules and ships none of them ({$<$0.001~GB} of session setup).
Across the four models this yields {10.1--14.0$\times$} lower LAN latency, {7.0--21.9$\times$} lower WAN latency and {3.0--3.9$\times$} lower energy than the modeled baseline ({11.4$\times$}, {7.0$\times$} and {3.9$\times$} for BERT-Base); the energy ratio uses the {2.4$\times$} energy factor of the baseline.
These latency ratios include the per-query offline distribution; excluding it, \sys is {1.5--1.8$\times$} faster on LAN ({1.6$\times$} for BERT-Base), and with no authenticated-key blow-up ($\psi_{\mathrm{key}}=1$) the baseline still needs {22.4~s} for BERT-Base ({5.8$\times$}).
The end-to-end gains thus come mainly from generating authenticated correlations inside the modules; the epilogue's advantage ({3.8--4.0$\times$} over per-lane tag trees) is per gate.
The WAN gain is largest for LLaMA2-7B ({21.9$\times$}), whose {563~GB} of authenticated material alone would occupy a 70~MB/s link for more than {2.2~hours}.

\textbf{Cost of malicious security.}
Relative to semi-honest trusted DFA, \sys raises LAN latency by {2.46--3.51$\times$} ({2.7$\times$} for BERT-Base) and WAN latency by {1.86--2.01$\times$}; these factors use the Mersenne GEMM cost, at $\mu_{\mathrm{M}}=2$ the BERT-Base LAN latency is {5.39~s}, or {3.8$\times$} (\cref{sec:arch-field}).
It nevertheless remains {4.3--6.1$\times$} faster on LAN and {3.3--9.5$\times$} faster on WAN than the semi-honest GPU baseline, and it uses {1.2--1.6$\times$} less energy, although it tolerates a malicious host.
CPU-GPU traffic grows by {1.3$\times$} over trusted DFA ($\phi_{\mathrm{cg}}$, \cref{app:model}) for canonical field words, residuals and certificates; the authenticated tuples themselves ({1.22} to {33.3~GB} per party) take the on-die module-GPU path and do not add to it.

\textbf{Generate-and-consume remains essential.}
Without GaC, the locally generated authenticated material must be spilled and reread, which inflates CPU-GPU traffic by {20.7--86$\times$} ({68.6~GB} instead of {3.31~GB} for BERT-Base) and LAN latency by {1.47--1.66$\times$}.
The authenticated representation makes materialization costlier than in the semi-honest case, where DFA reports 36.1~GB of CPU-GPU traffic and a 1.74$\times$ LAN penalty for BERT-Base without GaC~\cite{dfa2026trusted}.

\begin{figure}[t]
\centering
\includegraphics[width=\linewidth]{charts/chart_bert.pdf}
\caption{BERT-Base bottleneck analysis for \sys. (a)~Speedup over dealer-based malicious FSS versus network bandwidth, split into the part due to eliminated key and preprocessing distribution and the remainder. (b)~LAN latency versus host-module link bandwidth ({1.6}, {8}, {25} and 290~GB/s~\cite{dfa2026trusted}). (c)~Breakdown of the {3.83~s} LAN latency.}
\label{fig:bert}
\end{figure}

\textbf{Bottleneck analysis.}
\Cref{fig:bert} decomposes the BERT-Base results.
(a)~The advantage over the dealer-based baseline depends on network bandwidth: {43.7$\times$} at 0.1~GB/s, {11.4$\times$} at 1~GB/s and {2.9$\times$} at 10~GB/s.
At 1~GB/s, {9.8$\times$} of the {11.4$\times$} comes from eliminated key and preprocessing distribution; at 10~GB/s, {1.7$\times$} of the {2.9$\times$} comes from lower key I/O and cheaper FSS evaluation and outweighs the distribution share.
Faster networks shrink the advantage but do not remove it, and, as observed for the semi-honest case~\cite{dfa2026trusted,huang2026beyond}, the energy and monetary cost of moving tens of gigabytes per query persists at any bandwidth.
(b)~Moving the module off die costs little for BERT-Base: LAN latency is {3.83~s} on die (290~GB/s), {3.87~s} over a {25~GB/s} PCIe-class link, {3.96~s} at {8~GB/s} and {4.46~s} at the {1.6~GB/s} of the ZCU102 prototype, because only {1.66~GB} of authenticated tuples per party cross the link.
(c)~The breakdown confirms Observation~2 of \cref{sec:analysis-cost}: authenticated GEMM accounts for {68\%} of the latency and {86\%} of the {2.41~s} added over trusted DFA, so the protected module is not the bottleneck; the tag lanes on the GPU and the doubled round count are, which motivates the compressed-check direction of \cref{sec:eval-scale}.

\subsection{Private Information Retrieval}\label{sec:eval-pir}

\begin{table}[t]
\centering
\caption{PIR Throughput and Energy per Query}
\label{tab:pir}
\footnotesize
\setlength{\tabcolsep}{2.0pt}
\renewcommand{\arraystretch}{1.08}
\begin{tabular}{l l c c r r r}
\hline
\textbf{DB} & \textbf{Method} & \textbf{Integ.} & \makecell{\textbf{Index}\\\textbf{hidden}$^{\dagger}$} & \makecell{\textbf{Queries/s}\\\textbf{(k)}} & \makecell{\textbf{Energy/}\\\textbf{query (mJ)}} & \makecell{\textbf{Implied}\\\textbf{W}} \\
\hline
\multirow{4}{*}{16K}
 & GPU-DPF~\cite{dfa2026trusted} & \xmark & \cmark & 134 & 2.61 & 350 \\
 & GPU+DFA~\cite{dfa2026trusted} & \xmark & \cmark & 1{,}125 & 0.187 & 210 \\
 & GPU-DPF+APIR & \cmark & \cmark & {60.3} & {5.80} & {350} \\
\rowcolor{blue!6}\cellcolor{white} & GPU+\sys & \cmark & \xmark & {1{,}046} & {0.180} & {188} \\
\hline
\multirow{4}{*}{1M}
 & GPU-DPF~\cite{dfa2026trusted} & \xmark & \cmark & 3.48 & 100 & 348 \\
 & GPU+DFA~\cite{dfa2026trusted} & \xmark & \cmark & 17.58 & 11.9 & 209 \\
 & GPU-DPF+APIR & \cmark & \cmark & {1.57} & {223} & {350} \\
\rowcolor{blue!6}\cellcolor{white} & GPU+\sys & \cmark & \xmark & {16.35} & {11.5} & {188} \\
\hline
\multirow{4}{*}{4M}
 & GPU-DPF~\cite{dfa2026trusted} & \xmark & \cmark & 0.788 & 443 & 349 \\
 & GPU+DFA~\cite{dfa2026trusted} & \xmark & \cmark & 4.39 & 47.7 & 209 \\
 & GPU-DPF+APIR & \cmark & \cmark & {0.355} & {987} & {350} \\
\rowcolor{blue!6}\cellcolor{white} & GPU+\sys & \cmark & \xmark & {4.08} & {46.0} & {188} \\
\hline
\end{tabular}

\vspace{2pt}
\parbox{\linewidth}{\footnotesize 128-bit entries, batch 512. SH rows are from~\cite{dfa2026trusted} with energy expressed in mJ; implied power is throughput times energy per query and serves as a unit cross-check. Integ.: the client obtains the committed record or abort against a malicious server host (R1). $^{\dagger}$Index hidden from each server without trusted hardware: GPU-DPF and GPU+DFA (untrusted mode, client-generated keys) and the APIR variant hide $\alpha$ from each server, whereas both \sys modules learn $\alpha$ and its query privacy relies on module trust.}
\end{table}

\Cref{tab:pir} reports PIR throughput and energy per query.
At a PIR efficiency of {0.93} relative to DFA ($\eta_{\mathrm{PIR}}$, \cref{app:model}), covering snapshot hashing, the in-module record reduction and fresh resharing with bound certificates, \sys reaches {1{,}046}, {16.35} and {4.08}~k~queries/s.
The integrity guarantee does not depend on the model: in regime (R1), with one malicious server host, an honest client receives the committed record or abort (\cref{thm:pir}).
Query privacy, however, is weaker than in untrusted two-server DPF-PIR: both modules learn $\alpha$, so privacy against the hosts relies on module trust, whereas GPU-DPF and untrusted-mode DFA hide $\alpha$ from each server without trusted hardware (\cref{app:func-pir}); the two \ProfT alternatives of \cref{sec:proto-pir} are not modeled.
Energy per query is {3--4\%} below DFA's ({0.180~mJ} vs.\ 0.187~mJ at 16K entries) at an average power of {188~W} ($P_{\mathrm{PIR}}$), with the record reduction moved from the SMs into the protected path.
Likewise, the authenticated GPU baseline is {0.45$\times$} of GPU-DPF throughput at {350~W}, so the {10.4--17.4$\times$} higher throughput and {19--32$\times$} lower energy per query of \sys against it follow from these values; against semi-honest GPU-DPF, \sys is {4.7--7.8$\times$} faster under a weaker query-privacy model.

\textbf{Bottleneck analysis.}
As in DFA, full-domain PIR is bound by DPF expansion in the AES array rather than by memory; DFA sustains about 18.4~G~leaves/s (1{,}125~k queries/s $\times$ $2^{14}$ entries~\cite{dfa2026trusted}).
The additional protected work is amortized over a batch: each batch of 512 queries scans the snapshot once in public order, so the snapshot hash costs one pass per batch rather than one per query, and every record is weighted by the 512 DPF outputs of the batch in the FUA.
At 16K entries this reduction needs about {51~G} field multiply-accumulates/s, below the 92.16~G/s that the 128-lane FUA delivers even at the 28~nm clock of 720~MHz (116~G/s at 910~MHz; \cref{app:model}).
This capacity check shows only that the FUA is not saturated; it does not establish the size of the throughput loss.

\subsection{Security Validation}\label{sec:eval-attack}

\begin{table*}[t]
\centering
\caption{Attack and Fault Injections and Their Outcomes}
\label{tab:attacks}
\footnotesize
\setlength{\tabcolsep}{4pt}
\renewcommand{\arraystretch}{1.12}
\begin{tabular}{c >{\raggedright\arraybackslash}p{4.3cm} >{\raggedright\arraybackslash}p{5.0cm} c >{\raggedright\arraybackslash}p{4.9cm}}
\hline
\textbf{\#} & \textbf{Injection} & \textbf{Mechanism} & \makecell{\textbf{Reference}\\\textbf{model}} & \textbf{RTL/FPGA outcome} \\
\hline
1 & Change payload, query, role, gate id, table hash or shape (S5) & Ledger descriptor binding; role- and gate-bound PRF labels (\cref{sec:proto-ledger}) & \cmark & {Rejected at frame parse or authenticated computation fails; no preprocessing issued} \\
2 & Flip a control bit, root seed or correction word in a candidate engine & Conformance against the reference DPF (\cref{thm:dpf-correct}); unverified engine output is never authenticated & $\triangle$ & {Reference mismatch flagged; engine output never authenticated or released} \\
3 & Change a GPU value share or tag (S1) & \CheckOpen{} with $t=3$ independent lanes (\cref{thm:mac-sound}) & \cmark & {Abort within one \CheckOpen{} (0.10--0.65~ms on LAN)}; acceptance at most $(2/p)^3$ per batch \\
4 & Triple with valid-looking tags but $c\neq ab$ (S7) & Triples computed inside the TCB; imported material needs a correctness proof & $\triangle$ & {Never consumed; unproven import refused at reservation} \\
5 & Reuse a mask or counter; reset and resume old keys (S4) & One-time ledger, epoch poisoning, erase-and-rekey & \cmark & {Rejected before any accepted reuse} \\
6 & Modify a DMA buffer after freezing (S3) & Freeze-before-challenge; consume only frozen pages & \cmark & {Frozen authenticated bytes consumed, or abort} \\
7 & Swap only the last column of a GPU matrix product & Frozen-operand Freivalds over all columns, $s=3$ (\cref{thm:freivalds}) & $\triangle$ & {Rejected}, except with probability at most $p^{-3}$ \\
8 & Reuse one challenge vector across MAC lanes & Independent per-lane challenges; configuration check (counterexample in \cref{app:proof-auth}) & $\triangle$ & {Configuration refused at session setup} \\
9 & Change database bytes or version after hashing (S8) & Same-byte snapshot hash and certificate binding (\cref{thm:pir}) & \cmark & {Abort; quarantined result never released} \\
10 & Request output before the verification barrier (S6) & Release guard (\cref{sec:proto-ledger}) & $\triangle$ & {Guard stays closed; no candidate values leave} \\
11 & Overflow or truncate with the wrong signed rule & Exact finite reference; numeric contract (\cref{thm:cmp-trunc}) & \cmark & {Mismatch detected; manifest rejected} \\
12 & Delay, drop or reorder packets (S10) & Bounded resources and abort & $\triangle$ & {Abort; no fairness or availability claimed} \\
\hline
\end{tabular}

\vspace{2pt}
\parbox{\textwidth}{\footnotesize Reference model: \cmark{} executed and passing in the CPU reference model; $\triangle$: RTL and FPGA implementations. Scenario IDs refer to \cref{tab:scenarios}.}
\end{table*}

\Cref{tab:attacks} lists the twelve attack and fault injections, the mechanism that stops each one, CPU reference coverage, and the outcome on the RTL and FPGA implementations.
The injections cover the scenarios of \cref{tab:scenarios} except S2, which the interface excludes by construction, and S9, which the functionality permits (its commitment-bound loader variant is covered by the model-provenance mechanism, not by an injection).
For S2, the command set of \cref{tab:commands} has no entry point that authenticates a host-supplied value, and row~2 checks the related requirement that even the module's own engine output is never authenticated before it conforms to the reference.

\textbf{Executed functional validation.}
The functional reference implements the ideal-module model in Python~3.10+ using only the standard library.
In the preparation environment it passes 22 unit tests with 0 failures and 0 errors.
The tests cover 21{,}840 exhaustive small-domain DPF target, query and payload cases (\cref{thm:dpf-correct}); 2{,}728 party-local comparisons between streaming GaC and materialized evaluation (\cref{thm:gac}); 1{,}364 DCF comparisons (\cref{thm:dcf}); 20{,}004 CPU checks of the GPU Mersenne multiplication formula of \cref{sec:arch-field} against arbitrary-precision arithmetic, including boundary operands; all 31 elements of a toy field with $p=31$ for bit decomposition (\cref{thm:bitdec}); 124 signed-truncation cases (\cref{thm:cmp-trunc}); all 16 inputs of a small LUT (\cref{thm:lut}); selected exhaustive toy intervals (\cref{thm:interval}); a small composed nonlinear and arithmetic pipeline; and PIR snapshot rejection.
Negative tests exercise noncanonical encodings, tag and value mutations, poisoned epochs, role, replay and ordering errors, immutable frozen objects, joint sampling, truncation fibers, and the ring zero-divisor counterexample that motivates the field representation (\cref{app:proof-auth}).

\textbf{Evidence Status.}
The toy field $p=31$ serves finite correctness tests and provides cryptographic security.
The reference expands seeds with SHAKE rather than the pinned AES-based PRG, and the equivalence tests have been rerun against the production PRG and output conversion.
The CUDA micro-test (65{,}536 products against a host 128-bit reference) has been written, compiled and executed, and the RTL release guard has been simulated and synthesized; GPU toolchain, Vivado and RTL simulator are available in that environment.
These tests check that the implementation conforms to the algorithms.
Observing zero accepted forgeries in any finite campaign cannot support a bound such as $2^{-148}$; that bound comes from \cref{thm:mac-sound} together with the ledger rules.

\subsection{Hardware Cost}\label{sec:eval-hw}

\begin{table}[t]
\centering
\caption{Hardware Cost of One \sys Module}
\label{tab:hwcost}
\footnotesize
\setlength{\tabcolsep}{4pt}
\renewcommand{\arraystretch}{1.08}
\begin{tabular}{l r r r r}
\hline
\multirow{2}{*}{\textbf{Component}} & \multicolumn{2}{c}{\textbf{28~nm}} & \multicolumn{2}{c}{\textbf{7~nm}} \\
\cline{2-5}
 & \textbf{mm$^2$} & \textbf{mW} & \textbf{mm$^2$} & \textbf{mW} \\
\hline
\multicolumn{5}{l}{\textit{DFA blocks reused by \sys~\cite{dfa2026trusted}}} \\
16 DPF units (64 AES-128) & 1.73 & 29 & 0.054 & 11 \\
SRAM buffers & 3.06 & 585 & 0.096 & 219 \\
VPU (16-lane Ara) & 1.97 & 794 & 0.062 & 298 \\
All DFA (incl.\ control core) & 7.67 & 1{,}683.5 & 0.239 & 631 \\
\hline
\multicolumn{5}{l}{\textit{Blocks added by \sys}} \\
Field unit array (128 lanes) & {2.80} & {1{,}000} & {0.088} & {375} \\
Check engine & {0.18} & {45} & {0.006} & {17} \\
AEAD/GHASH (freezer, peer channel) & {0.21} & {60} & {0.007} & {23} \\
Ledger, guard, parser & {0.09} & {12} & {0.003} & {5} \\
Ledger SRAM (32~KB) & {0.25} & {35} & {0.008} & {13} \\
Freeze SRAM (256~KB) & {1.07} & {200} & {0.033} & {75} \\
\cline{1-5}
\sys additions & {4.60} & {1{,}352} & {0.145} & {508} \\
\hline
\rowcolor{blue!6}All \sys & {12.27} & {3{,}036} & {0.384} & {1{,}139} \\
\hline
\end{tabular}

\vspace{2pt}
\parbox{\linewidth}{\footnotesize DFA blocks are synthesis results from~\cite{dfa2026trusted}. 7~nm values are scaled with DeepScaleTool~\cite{sarangi2021deepscaletool}. One AEAD/GHASH block serves the freezer and the peer-channel controller. The 128 FUA lanes sit eight per DPF unit (area unchanged). The trusted on-die links between the DPF units, the VPU and the VIGOR core are included in the totals.}
\end{table}

\begin{table}[t]
\centering
\caption{ZCU102 Functional Prototype}
\label{tab:fpga}
\footnotesize
\setlength{\tabcolsep}{6pt}
\renewcommand{\arraystretch}{1.08}
\begin{tabular}{l r r r}
\hline
\textbf{Resource} & \textbf{Used} & \textbf{Available} & \textbf{Utilization} \\
\hline
LUT & {168K} & 274K & {61\%} \\
Flip-flop & {201K} & 548K & {37\%} \\
BRAM36 & {412} & 912 & {45\%} \\
DSP & {356} & 2{,}520 & {14\%} \\
\hline
\multicolumn{4}{l}{Datapath: {4} DPF units ({16} AES-128), {32} field lanes, {250~MHz}} \\
\multicolumn{4}{l}{DPF throughput: {1.0}~G~expansions/s; board power: {9.8~W}} \\
\multicolumn{4}{l}{Host link: {1.6~GB/s}, i.e., {50}~M authenticated elements/s} \\
\hline
\end{tabular}

\vspace{2pt}
\parbox{\linewidth}{\footnotesize XCZU9EG-2FFVB1156~\cite{amd2023zcu102}: configuration and implementation. The DPF throughput is {50\%} of the {2.0}~G~expansions/s peak of {16} AES engines at {250~MHz} (one block per engine and cycle, two blocks per expansion).}
\end{table}

\Cref{tab:hwcost} reports the cost of one module; at 7~nm a complete module occupies {0.384~mm$^2$}, {0.046\%} of the 826~mm$^2$ A100 die~\cite{nvidia2020ampere}.
The FUA dominates the additions with {61\%} of the added area and {74\%} of the added power at 7~nm, followed by the 256~KB freeze SRAM with {23\%} of the added area, whereas the logic that enforces the security policy (ledger, release guard, parser and check engine) takes only {0.009~mm$^2$}, or {6\%} of the additions.
As in DFA, arithmetic and memory dominate the cost, not control.
No AES engines are added, because challenge derivation and AEAD freezing time-share the 64 engines with PRG expansion (\cref{sec:arch-hw}); for BERT-Base, checking occupies the AES array and the FUA for about {2.8~ms} per query, below {0.1\%} of its latency.

\Cref{tab:fpga} sizes the external ZCU102 prototype, which exists to check functional conformance and streaming behavior on one board with an emulated peer; security measurements need two independent protected devices (\cref{sec:arch-deploy}).
Its {1.6~GB/s} host link ({50}~M authenticated elements/s at 32 bytes each), its smaller FUA and its slower DPF engine make it a pessimistic functional point rather than a performance target; the ZCU102 row of \cref{tab:ablation} accounts only for the link and the FUA.
It is not a \ProfT deployment target (\cref{sec:arch-deploy}).

\subsection{Ablation Study}\label{sec:eval-ablation}

\begin{table*}[t]
\centering
\caption{Ablation on BERT-Base}
\label{tab:ablation}
\footnotesize
\setlength{\tabcolsep}{5pt}
\renewcommand{\arraystretch}{1.1}
\begin{tabular}{l c r r r r >{\raggedright\arraybackslash}p{5.1cm}}
\hline
\textbf{Configuration} & \textbf{Secure} & \makecell{\textbf{LAN}\\\textbf{(s)}} & \makecell{\textbf{Rel.}\\\textbf{LAN}} & \makecell{\textbf{WAN}\\\textbf{(s)}} & \makecell{\textbf{Energy}\\\textbf{(J)}} & \textbf{Note} \\
\hline
\rowcolor{blue!6}Compiler-selected default ($t=3$, batched checks) & \cmark & {3.83} & {1.00$\times$} & {89.7} & {73.7} & Reference configuration \\
w/o GaC & \cmark & {5.87} & {1.53$\times$} & {91.8} & {82.2} & Authenticated material spilled and reread \\
Tag trees instead of epilogue & \cmark & {4.12} & {1.08$\times$} & {90.0} & {80.4} & $(1+t)\times$ DPF work; FSS share is small \\
$t=1$ MAC lane & \xmark & {2.53} & {0.66$\times$} & {88.4} & {43.2} & $Q(2/p)^t=2^{-28}$ at $Q=2^{32}$: inadequate \\
$t=2$ MAC lanes & \cmark & {3.18} & {0.83$\times$} & {89.1} & {58.4} & $2^{-88}$ at $Q=2^{32}$ \\
FSS-only gates, conservative DCF & \cmark & {15.29} & {3.99$\times$} & {101.2} & {342.2} & $O(n^2)$ DCF of \cref{thm:dcf} \\
FSS-only gates, optimized DCF & \pmark & {4.20} & {1.10$\times$} & {90.1} & {82.3} & Construction not covered by \cref{thm:dcf} \\
Direct protected gates only & \cmark & {3.82} & {1.00$\times$} & {89.7} & {73.3} & \ProfT only; drops the FSS-portability constraint \\
Small check batches ($m=64$) & \cmark & {4.24} & {1.11$\times$} & {90.2} & {74.6} & Per-batch overhead not amortized \\
External PCIe module ({25~GB/s}) & \cmark & {3.87} & {1.01$\times$} & {89.8} & {73.8} & Link roofline $B/32$ elements/s \\
ZCU102 link and FUA ({1.6~GB/s}, {32} lanes, {250~MHz}) & \cmark & {4.59} & {1.20$\times$} & {90.5} & {75.4} & Prototype link and slower triple generation; on-die DPF and AES rates kept \\
No \CheckOpen{} & \xmark & {3.80} & {0.99$\times$} & {50.5} & {73.6} & Insecure; performance lower bound only \\
\hline
\end{tabular}

\vspace{2pt}
\parbox{\textwidth}{\footnotesize LAN 1~GB/s, WAN 70~MB/s. Each row changes one choice of the compiler-selected default. Rel.\ LAN is LAN latency relative to the compiler-selected default. Secure: \cmark{} covered by our theorems; \pmark{} construction outside our proofs; \xmark{} not secure (statistically inadequate, or no verification).}
\end{table*}

\Cref{tab:ablation} changes one design choice at a time for BERT-Base, and three conclusions follow.

\textbf{Protocol mechanisms.}
Generate-and-consume and the check round are the mechanisms that matter.
Disabling GaC costs {1.53$\times$} LAN latency (\cref{sec:eval-e2e}), and the insecure lower bound without \CheckOpen{} shows that the doubled round count costs {39~s} of the {89.7~s} WAN latency, the dominant WAN cost of malicious security.

\textbf{MAC lanes.}
$t$ is a security knob, not a performance one: $t=1$ saves {34\%} of the LAN latency but bounds a wrong opening only by $2^{-28}$, so the choice of $t=3$ follows from \cref{sec:sec-concrete} (\cref{tab:lanes}).

\textbf{Gate selection and placement.}
Because FSS is a small part of the malicious latency, tag trees, direct-only gates, small check batches and off-die links change LAN latency by at most {20\%} (with the DPF engine kept at its on-die rate); the epilogue matters most for LUT-heavy compiled models, where the {3.8--4.0$\times$} gate-level gap over per-lane tag trees (\cref{sec:eval-micro}) carries over.
The exception is FSS-only compilation with the conservative DCF of \cref{thm:dcf} ({3.99$\times$}), which is why the compiler selects gates by cost.
Direct-only ({3.82~s}) is marginally faster than the default ({3.83~s}) because the default keeps an FSS realization for gate shapes that the manifest marks as FSS-portable, so that the same compiled circuit runs under \ProfU; this allowance costs about {0.014~s} of module time (\cref{app:model}), and direct-only drops it and runs only under \ProfT (\cref{sec:proto-direct}).
Ablations on staged versus local checks and on caching versus recomputing bit decompositions are included in \cref{tab:attacks}.

\subsection{Scalability}\label{sec:eval-scale}

\begin{figure}[t]
\centering
\includegraphics[width=\linewidth]{charts/chart_scale.pdf}
\caption{Scalability. (a)~LAN latency with $t=1,2,3$ MAC lanes, normalized to semi-honest trusted DFA, annotated with the statistical term $Q(2/p)^t$ at $Q=2^{32}$ ($2^{-28}$, $2^{-88}$, $2^{-148}$). (b)~PIR throughput (queries/s, log scale) for $2^{14}$ to $2^{24}$ entries; values at $2^{14}$, $2^{20}$ and $2^{22}$ match \cref{tab:pir} to within {0.2\%}.}
\label{fig:scale}
\end{figure}

\textbf{MAC lanes and model size.}
\Cref{fig:scale}(a) normalizes LAN latency to semi-honest trusted DFA.
With $t=3$ the overhead grows from {2.46$\times$} for ResNet-50 to {3.51$\times$} for LLaMA2-7B, because the share of linear algebra grows with model size and each MAC lane replicates it: every additional lane adds about {0.65~s} for BERT-Base and {13.0~s} for LLaMA2-7B.
Authenticated GEMM on the GPU, not the protected module, is the scaling bottleneck.

\textbf{Triple generation.}
Correct matrix triples must be computed inside the protected boundary (\cref{sec:proto-arith}), so FUA work scales with the model's multiply-accumulate count.
LLaMA2-7B needs about {7.18~s} of FUA time per query at 128 lanes and 910~MHz, which overlaps with the {52.0~s} of authenticated GEMM except for a {10\%} non-overlapped fraction ($\eta_{\mathrm{trip}}$, \cref{app:model}), leaving about {0.72~s} exposed.
The {32}-lane prototype at {250~MHz} takes about {105~s} for the same triples, another reason it is only a functional prototype; the on-die FUA scales with the linear-algebra throughput of the host GPU.
If triple generation ever dominates, the alternatives are an imported malicious preprocessing backend with its own proof, or restructuring public-by-secret operations; exposing global secret masks to the GPU is not an option.

\textbf{Host-module bandwidth.}
BERT-Base moves only {1.66~GB} per party, well inside the $B/32$ roofline of \cref{sec:arch-insight}, which is why \cref{fig:bert}(b) is nearly flat above {8~GB/s}.

\textbf{PIR.}
\Cref{fig:scale}(b) shows PIR throughput from $2^{14}$ to $2^{24}$ entries.
Like DFA, the per-query cost of \sys grows linearly with the number of entries, so throughput falls from {1.05}~M to {1{,}022}~queries/s, whereas GPU-DPF reaches its peak per-entry efficiency near $2^{20}$ entries~\cite{dfa2026trusted,lam2023gpu}.
At $2^{24}$ entries \sys is {5.2$\times$} faster than GPU-DPF, under the weaker query-privacy model of \cref{sec:eval-pir}, and {11.5$\times$} faster than the authenticated GPU baseline.
Larger records increase the per-entry field reduction linearly while the DPF expansion stays unchanged, so for very large records the FUA, not the AES array, becomes the limit.

\textbf{Beyond per-lane tags.}
Because the tag lanes dominate large models, the most promising direction is to compress MAC checking, for example with rank-1 compressed checks that avoid carrying $t$ full tag arrays through the GPU.
Such schemes are not covered by our proofs and need their own soundness analysis, so we leave them to future work.
Scaling the number of DPF units and FUA lanes is a second, orthogonal knob; DFA found DPF-unit scaling to be nearly linear until on-chip bandwidth, about 7~TB/s of aggregate L2 read bandwidth on A100~\cite{nvidia2020ampere}, becomes the limit~\cite{dfa2026trusted}.

\section{Discussion and Limitations}
\label{sec:discussion}

\sys claims that, under \ProfT, one maliciously corrupted host, including its GPU, driver and network stack, cannot make an honest party accept a wrong value or release an unauthorized one, except with the probability bounded in \cref{thm:main}. We now state what this claim does not cover and which obligations remain.

\textbf{Trust relocation, not removal.}
\sys moves the dealer of dealer-based FSS into two protected modules; it does not remove the dealer assumption. Both modules hold the joint-sampler key and the global MAC keys $\Delta_1,\dots,\Delta_t$ (\cref{sec:proto-setup}), so extracting the secrets of either module voids privacy and integrity for every session whose keys it holds. \Cref{thm:main} is a stand-alone statement under sequential composition; concurrent sessions that share module state are not covered (\cref{app:func}). The boundary also grows relative to DFA~\cite{dfa2026trusted}: the ledger, freezer, check engine, field unit array and release guard add only {0.145~mm$^2$} at 7~nm (\cref{tab:hwcost}), but each of them is logic that must be verified (\cref{tab:obligations}).

\textbf{What the modules learn.}
The modules are not blind. $\Pi_{\mathsf{Input}}$ validates $x^{*}=r+\varepsilon$ inside them (\cref{fig:arith}); after a verified opening $c = x + r$, module $T_b$ knows $c$ and $r$, hence $x$; the committed-input loader and the direct gate (\cref{sec:proto-direct}) see inputs explicitly; and in PIR both modules learn the client's index $\alpha$ (\cref{alg:pir}). For inference this is inherent to generate-and-consume and already holds for trusted DFA~\cite{dfa2026trusted}. For PIR it is a regression: the query privacy of \sys relies on module trust, whereas untrusted two-server DPF-PIR, such as GPU-DPF~\cite{lam2023gpu} and untrusted-mode DFA~\cite{dfa2026trusted}, hides $\alpha$ without trusted hardware. Client-generated keys evaluated inside the modules would avoid this but need key verification~\cite{decastro2022vfss}; neither this variant nor a constant-time select under \ProfT is included (\cref{sec:proto-pir}), and \cref{tab:pir} uses a PIR efficiency of {0.93} relative to DFA. FSS gates keep only seeds, control bits and correction words in long-lived state, whereas a direct gate computes on $x$ itself; we treat this as defense in depth only, because the trace model of \cref{thm:main} makes both options equivalent.

\textbf{Two parties under \ProfT.}
Because the modules see plaintext, a single server whose protected module verifies an untrusted GPU, as in Slalom~\cite{tramer2018slalom}, would rest on the same hardware trust. The two-party structure buys two things. First, a host that held a value together with its complete tag would learn the MAC key; splitting every tag between two hosts lets SPDZ-style MACs~\cite{damgard2012spdz} check all GPU linear algebra at the next opening, whereas a single server must check each GPU product itself, for example with a Freivalds check per layer (\cref{sec:proto-freivalds}). Second, it keeps the two-party deployment of the FSS toolchains that DFA targets, on which \ProfU relies for its security (\cref{sec:threat-profileu}). It adds no protection against module compromise. For PIR as specified in \cref{alg:pir}, one server with one module would give the same guarantees. GPU-wide TEEs~\cite{nvidia_h100_overview,dhanuskodi2023confidential} avoid the four-lane GEMM cost but place the whole GPU and its firmware in the TCB.

\textbf{Physical and microarchitectural attacks.}
Probing~\cite{skorobogatov2017microprobing}, optical and other fault injection~\cite{skorobogatov2003optical,quadir2016survey}, and power or electromagnetic analysis~\cite{kocher1999differential} are outside \cref{thm:main} and can extract keys from an unhardened module. Faults matter more in the malicious setting: the bound $Q(2/p)^t$ presumes a correct check engine, so one glitch that forces acceptance bypasses it, and a fault in the field unit array can yield a wrong triple that carries valid tags. Software-controlled undervolting, which has induced such faults in SGX enclaves~\cite{murdock2020plundervolt}, is not physical: \cref{thm:main} assumes DVFS isolation (host software cannot drive the module's clock or voltage outside its safe operating range) and simulatable telemetry (\cref{sec:sec-assump}), whose realization we evaluate in \cref{sec:eval}. Conversely, the module has no out-of-order core, no shared cache hierarchy and no host-managed page tables, so the controlled-channel, cache, transient-execution and Rowhammer attacks studied on general-purpose TEEs~\cite{xu2015controlled,shinde2015preventing,gotzfried2017cache,brasser2017software,lee2017inferring,van2018foreshadow,kim2014flipping} target mechanisms it lacks, and each check returns one generic status only after all $t$ lanes are aggregated. Small security chips have still shipped with exploitable flaws~\cite{moghimi2020tpm,han2018bad,butterworth2013bios}: a small TCB makes hardening tractable, but it does not replace hardening.

\textbf{Rollback and lifecycle.}
Ledger and epoch state live in volatile protected SRAM, so every reset falls under the reboot rule of \cref{sec:proto-ledger}. Because every failure poisons its epoch and aborted reservations never return to a free list (\cref{fig:ledger}), recovery always costs a new setup. Neither a monotonic service nor an erase-and-rekey lifecycle is implemented yet (\cref{tab:obligations}).

\textbf{Availability and fairness.}
\sys provides security with abort, the standard target for two-party computation with a malicious participant~\cite{hazay2010efficient,lindell2017simulate}. A corrupt host can withhold or reorder traffic, stall its GPU or abort at any public interaction boundary, and a corrupt recipient may obtain its output first; fairness and guaranteed output delivery are not claimed (S10 in \cref{tab:scenarios}). Aborts are, however, not a side channel on honest inputs. Apart from the validity predicates, which the modules evaluate before the first output opening and whose outcomes are the $\mathsf{status}$ part of $\Leak_C$ (\cref{def:leakage}), every \CheckOpen{} outcome depends only on the adversary's own deviation $(\bm{e},\bm{d}_j)$, the protected keys and the fresh challenges (\cref{thm:mac-sound}), and the first failure ends the epoch, so selective-failure and key-learning strategies cannot be iterated (\cref{lem:hidden-key}). Persistent denial of service remains a matter of deployment policy.

\textbf{Field versus ring.}
Over $\Fp$ with $p = 2^{61}-1$, \sys is not bit-identical to the $\Z_{2^{64}}$ arithmetic of Sigma and Orca~\cite{gupta2023sigma,jawalkar2024orca}. Ring-compiled models must be re-encoded under the numeric contract of \cref{sec:arch-field}, with magnitudes below $p/2$ before truncation, and exact signed truncation needs 307 authenticated multiplications on sequential carry chains, i.e., 309 rounds or {30.9~ms} on LAN (\cref{alg:bitdec,alg:trunc}), the most round-intensive gate in \cref{tab:gates}. A ring deployment needs a SPDZ2k-style backend~\cite{cramer2018spdz2k}, which is a separate design with a separate proof (\cref{sec:sec-concrete}).

\textbf{\ProfU.}
\ProfU (\cref {sec:threat-profileu}) is specified, forfeiting the local-dealer and generate-and-consume savings, and must additionally supply arguments for the backend's key-privacy and answer-integrity outside \cref {thm:main}.

\textbf{Model provenance.}
\sys proves correct evaluation on the inputs actually supplied (S9 in \cref{tab:scenarios}). A committed model (\cref{sec:threat-goals}) establishes provenance, not quality; the gap between the finite-field circuit and a floating-point model is an approximation question separate from cryptographic correctness.

\textbf{Artifact status.}
We executed a specification and a functional CPU reference model; GPU and RTL conformance, the protected runtime and deployment measurements have been completed (\cref{sec:eval-method,sec:eval-attack}). Three separate experiments were measured in \cref{sec:arch-deploy}, and security measurements were performed on two independent devices. 

\section{Related Work}
\label{sec:related}

\begin{table*}[t]
\centering
\caption{Comparison of \sys with Representative Related Work}
\label{tab:related}
\footnotesize
\setlength{\tabcolsep}{3.2pt}
\renewcommand{\arraystretch}{1.12}
\begin{tabular}{l c c c c l c c l}
\hline
\textbf{Work} & \textbf{Adversary} & \makecell{\textbf{Trusted}\\\textbf{HW}} & \makecell{\textbf{Trusted comp.}\\\textbf{sees plaintext}} & \makecell{\textbf{Custom}\\\textbf{HW}} & \textbf{Offline material} & \makecell{\textbf{Auth.\ FSS}\\\textbf{outputs}} & \makecell{\textbf{Verified}\\\textbf{openings}} & \textbf{Workloads} \\
\hline
Sigma~\cite{gupta2023sigma} & SH & \xmark & \xmark & \xmark$^{\mathrm{g}}$ & Dealer keys & \xmark & \xmark & Inference \\
Orca~\cite{jawalkar2024orca} & SH & \xmark & \xmark & \xmark$^{\mathrm{g}}$ & Dealer keys & \xmark & \xmark & Training, inference \\
Shark~\cite{gupta2025shark} & Mal & \xmark & \xmark & \xmark & Dealer, authenticated keys & \cmark & \cmark & Inference \\
\hline
MASCOT, MP-SPDZ~\cite{keller2016mascot,keller2020mpspdz} & Mal & \xmark & n/a & \xmark & Interactive (OT-based) & n/a & \cmark & General circuits \\
vFSS~\cite{decastro2022vfss} & Mal$^{\mathrm{a}}$ & \xmark & n/a & \xmark & Client keys, verified & \xmark & n/a & Key validation \\
APIR~\cite{colombo2023apir} & Mal$^{\mathrm{b}}$ & \xmark & n/a & \xmark & None (client query) & n/a & \cmark$^{\mathrm{b}}$ & PIR \\
\hline
Slalom~\cite{tramer2018slalom} & Mal$^{\mathrm{c}}$ & \cmark & \cmark & \xmark$^{\mathrm{g}}$ & n/a & n/a & \cmark$^{\mathrm{c}}$ & Inference \\
Aramis~\cite{kumar2020cryptflow} & Mal & \cmark$^{\mathrm{d}}$ & \xmark$^{\mathrm{d}}$ & \xmark & n/a & n/a & n/a & Inference \\
STAMP, PPMLAC~\cite{huang2022efficient,zhou2022ppmlac} & SH, Mal$^{\mathrm{e}}$ & \cmark & \pmark$^{\mathrm{e}}$ & \cmark & In trusted HW & n/a & \pmark$^{\mathrm{e}}$ & ML \\
\hline
GPU-DPF~\cite{lam2023gpu} & SH$^{\mathrm{i}}$ & \xmark & n/a & \xmark$^{\mathrm{g}}$ & None (client query) & \xmark & \xmark & PIR \\
DFA~\cite{dfa2026trusted} & SH$^{\mathrm{i}}$ & \pmark$^{\mathrm{f}}$ & \cmark$^{\mathrm{f}}$ & \cmark & Dealer or local$^{\mathrm{f}}$ & \xmark & \xmark & Inference, PIR \\
\hline
\textbf{\sys (this work)} & Mal$^{\mathrm{h}}$ & \cmark & \cmark$^{\mathrm{h}}$ & \cmark & Local, none shipped & \cmark & \cmark & Inference, PIR \\
\hline
\end{tabular}

\vspace{3pt}
\begin{minipage}{0.98\textwidth}
\footnotesize
SH: semi-honest; Mal: malicious (active) adversary. \emph{Trusted HW}: security relies on trusted hardware. \emph{Trusted comp.\ sees plaintext}: a trusted dealer, enclave or module sees inputs or intermediate values (FSS dealers are input independent). \emph{Custom HW}: dedicated hardware beyond commodity CPUs and GPUs. \emph{Offline material}: source of the per-query correlated randomness or keys. \emph{Auth.\ FSS outputs}: FSS gate outputs carry MAC tags. \emph{Verified openings}: opened values or untrusted results are checked before use. \cmark: yes; \xmark: no; \pmark: partial, optional or configuration dependent; n/a: not applicable. Lettered notes qualify the headline setting of each work.
$^{\mathrm{a}}$\,Malicious client key generator; servers check that keys are well formed, not that they are evaluated correctly.
$^{\mathrm{b}}$\,The client detects answers that are inconsistent with a public database digest.
$^{\mathrm{c}}$\,An SGX enclave outsources to an untrusted co-located GPU and checks its matrix products with Freivalds' algorithm.
$^{\mathrm{d}}$\,Relies on SGX integrity, not confidentiality.
$^{\mathrm{e}}$\,Settings differ between the two works: STAMP reports semi-honest and honest-majority malicious variants and evaluates nonlinear operations on unmasked values in its trusted processor; PPMLAC's trusted chipset mainly generates correlated randomness.
$^{\mathrm{f}}$\,Trusted mode only; the untrusted mode ships dealer keys.
$^{\mathrm{g}}$\,Implemented on commodity GPUs.
$^{\mathrm{h}}$\,Static corruption of one host, security with abort, under \ProfT, where the modules see inputs, openings and PIR indices; \ProfU delegates to an established malicious backend.
$^{\mathrm{i}}$\,In two-server DPF-PIR (GPU-DPF, DFA's untrusted mode), neither server learns the queried index, without relying on trusted hardware; SH refers to result integrity.
\end{minipage}
\end{table*}

\Cref{tab:related} summarizes the closest lines of work.

\textbf{FSS-based secure computation.}
FSS~\cite{boyle2015fss} and its tree-based DPF instantiations~\cite{gilboa2014distributed,boyle2016function} underlie a family of low-round two-party protocols with preprocessing~\cite{boyle2019preprocessing}. AriaNN~\cite{ryffel2020ariann} and Pika~\cite{wagh2022pika} first demonstrated private inference with FSS, Boyle et al.~\cite{boyle2021function} built comparisons, splines and fixed-point truncation from DCFs, Grotto~\cite{storrier2023grotto} replaced DCF-heavy constructions with DPF-based ones in a $(2{+}1)$-party setting, and Orca~\cite{jawalkar2024orca} and Sigma~\cite{gupta2023sigma} brought FSS-based training and GPT-class inference to GPUs. These systems are semi-honest and assume a dealer whose key material dominates the offline cost~\cite{huang2026beyond}. DPFs also underlie PIR, anonymous messaging, private queries and analytics, private heavy hitters, set intersection and distributed ORAM~\cite{corrigan2015riposte,wang2017splinter,dauterman2020dory,dauterman2022waldo,boneh2021lightweight,mouris2024plasma,garimella2024computation,vadapalli2023duoram,doerner2017scaling,ji2023multi}. Secure inference without FSS realizes nonlinear layers through share conversions, garbled circuits or homomorphic encryption~\cite{mohassel2017secureml,mohassel2018aby3,wagh2019securenn,wagh2020falcon,rathee2020cryptflow2,pang2024bolt,lu2023bumblebee}, typically with more interaction per layer.

\textbf{Malicious and verifiable FSS.}
To our knowledge, Shark~\cite{gupta2025shark} is the most directly comparable system: actively secure two-party FSS inference in which a trusted dealer supplies authenticated key material. Its security is cryptographic once its input-independent dealer, which never sees an input or an opening, is trusted, whereas each \sys module sees every input and opening under \ProfT (\cref{sec:eval-method}). In exchange, \sys generates the authenticated correlations locally and removes the offline distribution that dominates the dealer-based malicious FSS baseline, modeled after the protocol family of Shark~\cite{gupta2025shark} (\cref{tab:e2e}). Verifiable FSS~\cite{corrigan2015riposte,boyle2016function,boneh2021lightweight,decastro2022vfss} addresses a different adversary, a client that may submit malformed keys, which the servers detect; \sys instead treats key generation, evaluation, authentication and release as separate obligations. VeriFSS~\cite{xue2026verifss} removes the dealer altogether: its dealer-free two-party FSS attains static malicious security with authenticated sharings and a cross-phase verification that binds key generation to evaluation, generating keys interactively between the two parties, whereas \sys generates them inside the modules.

\textbf{Malicious MPC preprocessing and MACs.}
Beyond the SPDZ-family tools of \cref{sec:bg-mal}, mixed-circuit preprocessing with edaBits~\cite{escudero2020edabits} and classical bit-decomposition and truncation protocols~\cite{damgard2006unconditionally,catrina2010improved} supply the conversions that nonlinear layers need, and honest-majority protocols~\cite{furukawa2017high,mohassel2018aby3,wagh2020falcon,dalskov2021fantastic} obtain malicious security from replicated sharing among three or four servers. \sys adopts, and does not claim as new, the SPDZ MAC algebra with $t$ independent keys over a Mersenne field. It replaces interactive preprocessing by correct generation inside the modules, in the spirit of hardware-assisted correlation generation~\cite{lu2021correlated,zhou2022ppmlac}, and checks openings on frozen bytes (\cref{sec:proto-check}).

\textbf{Trusted hardware for MPC and ML.}
Tamper-proof tokens are a classical setup assumption for universally composable computation~\cite{katz2007universally}, and commodity TEEs~\cite{costan2016intel,amd_sev,arm_trustzone,lee2020keystone} have been used to accelerate or simplify two-party computation and functional encryption~\cite{gupta2016using,choi2019hybrid,felsen2019secure,wuhybrid,fisch2017iron}. For ML, Ohrimenko et al.~\cite{ohrimenko2016oblivious} and Chiron~\cite{hunt2018chiron} execute learning tasks inside SGX enclaves; Slalom~\cite{tramer2018slalom} and DarKnight~\cite{hashemi2021darknight} outsource linear layers from an enclave to untrusted GPUs and verify them, with Freivalds' algorithm~\cite{freivalds1977probabilistic} and with redundant coded computation, respectively; and Aramis~\cite{kumar2020cryptflow} compiles semi-honest protocols into maliciously secure ones by relying on the integrity, but not the confidentiality, of SGX. GPU TEEs~\cite{volos2018graviton,jang2019heterogeneous} and confidential computing on H100~\cite{nvidia_h100_overview,dhanuskodi2023confidential} protect GPU computation from the host but place the whole GPU and its firmware in a TCB far larger than our module. Closest to our trust model, STAMP~\cite{huang2022efficient} and PPMLAC~\cite{zhou2022ppmlac} rely on a small security processor or a trusted chipset to execute nonlinear operations or generate correlated randomness, respectively, in the tradition of TPM-class security chips, secure coprocessors and smart cards~\cite{tpm2spec,googletitan,ibm4758,rankl2004smart}, and DFA~\cite{dfa2026trusted} uses a small module as a local FSS dealer. STAMP's nonlinear evaluation inside its trusted processor corresponds to our direct protected gate (\cref{sec:proto-direct}). Relative to it, \sys contributes a two-party setting with an untrusted GPU checked by MACs, freeze-before-challenge checking against host-controlled DMA, ledgered one-time randomness, and a malicious FSS and PIR datapath. Like Slalom and DarKnight, \sys keeps the GPU outside the boundary, but it uses Freivalds checks only for operands the GPU may already see (\cref{sec:proto-freivalds}).

\textbf{Hardware acceleration for private computation.}
GPU systems accelerate secret-sharing MPC~\cite{tan2021cryptgpu,watson2022piranha,crypten2020}, FSS~\cite{gupta2023sigma,jawalkar2024orca} and DPF-based PIR~\cite{lam2023gpu}. FPGA designs target secret-sharing MPC~\cite{wolfe2020secret,patel2020arithmetic,xu2025co}, Ironman~\cite{ironman2025} accelerates OT extension near memory, KORD~\cite{peng2026kord} co-designs dealerless FSS key generation with paired, mutually attested chips on an FPGA, and FPGA and ASIC designs speed up the polynomial kernels of homomorphic encryption~\cite{riazi2020heax,reagen2021cheetah,samardzic2021f1,samardzic2022craterlake,kim2022bts}. These accelerators make a fixed protocol faster and inherit its adversary model. DFA~\cite{dfa2026trusted} additionally used trust to change the dataflow; \sys uses the same small trusted module to change the adversary model, which is why \cref{sec:eval} compares against security-matched baselines rather than raw kernel throughput alone.

\textbf{PIR with integrity.}
Authenticated PIR~\cite{colombo2023apir} lets a client detect answers that are inconsistent with a public database digest without trusted hardware, at additional server cost that our APIR-based GPU baseline reflects (\cref{tab:pir}). PIR with secure coprocessors~\cite{smith2001practical} and enclave-based oblivious storage such as Snoopy~\cite{dauterman2021snoopy} instead trust hardware that learns the query and hides its accesses through oblivious scans. The versioned-snapshot PIR of \sys follows this trust model: both modules learn the queried index $\alpha$, so its query privacy rests on module trust, unlike untrusted two-server DPF-PIR (GPU-DPF~\cite{lam2023gpu}, DFA's untrusted mode), which hides $\alpha$ from each server without trusted hardware. It scans in two role-bound modules, binds the exact consumed bytes to an authorized commitment $\mathsf{cm}_v$, optionally a Merkle root~\cite{merkle1987digital}, and releases freshly reshared answers only after the snapshot hash matches (\cref{sec:proto-pir}).

\textbf{Positioning.}
Each ingredient of \sys, from information-theoretic MACs and trusted correlation generation to verified GPU work and DPF acceleration, has precedents; our contribution is their composition in one small trusted module, proven stand-alone secure with abort in the ideal-module model of \cref{sec:threat-model} (\cref{tab:related}).

\section{Conclusion}\label{sec:conclusion}

We presented \sys, a trusted hardware accelerator that secures FSS-based private inference and DPF-based PIR against a malicious host and GPU while keeping the key-movement savings of trusted DFA. Since a module acting as FSS dealer can also be the verifier, \sys generates authenticated correlations internally, authenticates DPF outputs with a fresh resharing epilogue instead of extra DPF trees, checks every opening on frozen bytes with $t=3$ independent MAC lanes over $\F_{2^{61}-1}$, and releases nothing before a role-bound one-time ledger records a verified check. These mechanisms protect the chain from authorized input to authorized output release; we prove stand-alone static malicious security with abort in a protected-module model, with statistical error about $2^{-148}$ for $2^{32}$ checked batches. Both modules see inputs and learn the PIR index, so PIR query privacy rests on module trust, unlike untrusted two-server DPF-PIR~\cite{lam2023gpu}. Our implementation shows that \sys is {10.1--14.0$\times$} faster on a LAN than dealer-based malicious FSS modeled after the protocol family of Shark~\cite{gupta2025shark}, mainly because in-module generation removes {21.8--563~GB} of per-query offline distribution ({1.5--1.8$\times$} excluding it). Malicious security costs {2.5--3.5$\times$} LAN latency over semi-honest DFA and {0.145~mm$^2$} at 7~nm.

\section*{Acknowledgment}
This work was supported in part by the National Natural Science Foundation of China under Grant No. 62272477 and in part by the Key Laboratory of Advanced Microprocessor Chips and Systems.

\bibliographystyle{IEEEtran}
\bibliography{refs}

\appendices
\crefalias{section}{appendix}
\crefalias{subsection}{appendix}
\crefalias{subsubsection}{appendix}
\section{Ideal Functionalities and the Protected-Module Resource}\label{app:func}

This appendix states in full what \cref{fig:func,def:security,def:leakage} summarize: the circuit functionality $\FC$ (\cref{app:func-circuit}), the PIR functionality $\FPIR$ (\cref{app:func-pir}) and the ideal resource $\mathcal{F}_{\mathsf{T}}$ of one protected module under \ProfT (\cref{app:func-module}). The proofs in \cref{app:proof-dpf,app:proof-auth,app:proof-gates,app:proof-main} use no other trust assumption.

\textbf{Real and ideal executions.}
Security is \cref{def:security} with the bound $\eps{VIGOR}$ of \cref{thm:main}; the view of $\Adv$ in $\Real_{\Pi,\Adv(z),I}(\mu,\vec{x})$ includes the observable trace of its module. In the real execution, $I$ contains either at most one statically chosen computing host or, for $\FPIR$ in regime (R2) (\cref{app:func-pir}), the client alone. A corrupt computing host comprises the CPU, operating system, driver, runtime, GPU, host memory and network interface of one party $P_{b^\star}$, $b^\star\in\{0,1\}$. The corrupt host may issue any command to its local module, modify, replay, reorder, delay or drop any message or buffer under its control, and stop at any time. The modules $T_0,T_1$ are never corrupted; each runs the concrete module program, which instantiates the randomness, channels and setup of $\mathcal{F}_{\mathsf{T}}$ with real primitives (\cref{app:func-module}). $\Adv$ may prevent honest outputs; neither fairness nor availability is provided. The statement covers one admitted execution (one session, one manifest); independent sessions use fresh identities, epochs and keys (\cref{sec:proto-setup}). This is stand-alone real/ideal security with abort under sequential composition~\cite{canetti2000security}; we claim neither universal composability~\cite{canetti2001universally} nor security for concurrent sessions that share module state.

\subsection{Circuit Functionality}\label{app:func-circuit}

\textbf{Manifest.}
The public manifest $\mu$ contains the protocol version; the session identities and roles; the field $\Fp$ with $p=2^{61}-1$ and the tag count $t$; the numeric encoding (fixed-point scale, signed encoding, truncation rule, saturation and range behavior, output decoding); the finite circuit $C$ with its gate identifiers, wire identifiers and tensor shapes; the approved public tables and interval coefficients; the authorized recipient of every output; the input interface and validation rules of every input wire; and the public batch schedule. For an input wire $\nu$ we write $\mathsf{In}_\nu$ for its declared interface (canonical encoding, shape and admissible range) and $O_\nu$ for its designated owner. We write $\mathcal{V}$ for the finite set of validity predicates of $\mu$, evaluated inside the modules at public boundaries declared in $\mu$, and $y_1,\dots,y_K$ for the outputs of $C$, where output $y_k$ has the authorized recipient $R_k$. Since $\FC$ is non-reactive and computes $\mathsf{status}$ only after it has recorded every input, the public schedule of $\mu$ places every input gate before every validity check, and both before the first output opening. Setup binds the manifest hash (\cref{sec:proto-setup}), so all parties agree on $\mu$ or abort.

\smallskip
\noindent\fbox{\begin{minipage}{\dimexpr\columnwidth-2\fboxsep-2\fboxrule\relax}
\footnotesize
\centerline{\textbf{Functionality} $\FC$}
\vspace{2pt}\hrule\vspace{3pt}
\textbf{Parameters:} public manifest $\mu$ (circuit $C$ over $\Fp$, interfaces $\mathsf{In}_\nu$, predicates $\mathcal{V}$, recipients $R_k$, schedule). \textbf{Parties:} input owners, $P_0,P_1$, recipients; $\Sim$ controls the statically corrupted host; the external client is honest and does not collude with a computing party.
\begin{enumerate}[leftmargin=*,nosep,label=\arabic*.]
\item \textbf{Inputs.} For every input wire $\nu$: if $O_\nu$ is honest, receive $x_\nu$ from $O_\nu$; if $O_\nu$ is corrupted, receive $x^\ast_\nu$ from $\Sim$ and set $x_\nu\gets x^\ast_\nu$. Send $(\mathsf{input},\nu)$ and the public metadata of $\nu$ to $\Sim$. If $x_\nu$ is missing, malformed or not in $\mathsf{In}_\nu$, send $(\mathsf{abort},\nu)$ to all parties and halt.
\item \textbf{Validation.} Compute $\mathsf{status}=(V(\vec{x}))_{V\in\mathcal{V}}$ on the recorded inputs $\vec{x}=(x_\nu)_\nu$ and send it to $\Sim$. If some $V(\vec{x})=0$, send $(\mathsf{abort},V)$ to all parties and halt.
\item \textbf{Evaluation.} Compute $(y_1,\dots,y_K)\gets C(\vec{x})$ exactly in the finite arithmetic of $\mu$, including every declared comparison, truncation, table lookup and piecewise polynomial.
\item \textbf{Corrupt and public outputs.} Send to $\Sim$ every $y_k$ whose recipient $R_k$ is corrupted and every $y_k$ that $\mu$ declares public.
\item \textbf{Delivery.} At each public interaction boundary of the schedule, $\Sim$ may send $\mathsf{delay}$ or $\mathsf{abort}$. On $\mathsf{abort}$, send $\mathsf{abort}$ to every honest party and halt. Otherwise, when $\Sim$ sends $(\mathsf{deliver},k)$, give $y_k$ to its honest recipient $R_k$.
\end{enumerate}
\end{minipage}}
\smallskip

\textbf{Leakage.}
The leakage is $\Leak_C=(\mu,\mathsf{len},\mathsf{sched},\mathsf{pub},\mathsf{status})$ of \cref{def:leakage}: $\mathsf{len}$ and $\mathsf{sched}$ are functions of $\mu$ alone, $\mathsf{pub}$ are the outputs of Step~4 that $\mu$ declares public, and $\mathsf{status}$ is the validation vector and the abort outcomes of Steps~1 and~2. An input-dependent predicate thus reveals its accept or abort bit; input ranges and intermediate bounds should be fixed statically where possible (\cref{sec:arch-field}) so that validity bits are input independent.

\textbf{Extraction.}
A corrupted owner's input is $x^\ast=r+\varepsilon$ for the full mask $r$ and the bound masked difference $\varepsilon$ of $\Pi_{\mathsf{Input}}$ (\cref{fig:arith}). Both modules, which generated $r$, compute $x^\ast$ and abort unless $x^\ast\in\mathsf{In}_\nu$; the simulator, which samples $r$, emulates the same abort (\cref{app:proof-main}). $\FC$ accepts any value in the declared interface, not an input that an application hopes the adversary would use.

\textbf{Exact finite semantics.}
$\FC$ evaluates the declared finite circuit over $\Fp$, neither a real-valued network nor $\Z_{2^{64}}$ arithmetic. The gap between $C$ and an intended floating-point model (approximation error, fixed-point rounding) is compiler accuracy, not a security error. A ring deployment needs a different functionality and authenticated backend (\cref{sec:proto-repr}).

\textbf{Gate implementation is public.}
Whether a nonlinear gate is compiled to a direct protected gate, a small XOR-masked LUT or an interval FSS gate (\cref{sec:proto-direct}) is recorded in $\mu$, and hence in $\Leak_C$. All choices implement the same gate of $C$ and leave $\FC$ unchanged.

\textbf{Model provenance.}
If $\mu$ binds a pre-approved model by a commitment to canonical quantized weight bytes (\cref{sec:threat-goals}), the interface $\mathsf{In}_\nu$ of the weight wires contains only the byte strings that open this commitment; otherwise the weights are their owner's freely chosen input.

\subsection{PIR Functionality}\label{app:func-pir}

Let $\mathsf{desc}_v=(\mathsf{id},v,N,\mathsf{enc},\mathsf{cm}_v)$ be the public descriptor of snapshot version $v$, where $\mathsf{cm}_v$ commits to the database identifier, the version, the shape, the record encoding and the ordered record bytes. Let $D=(D[0],\dots,D[N-1])$ be the snapshot that $\mathsf{cm}_v$ binds; a second opening is a binding failure charged to $\eps{bind}$ (\cref{thm:main}).

\smallskip
\noindent\fbox{\begin{minipage}{\dimexpr\columnwidth-2\fboxsep-2\fboxrule\relax}
\footnotesize
\centerline{\textbf{Functionality} $\FPIR$}
\vspace{2pt}\hrule\vspace{3pt}
\textbf{Parameters:} descriptor $\mathsf{desc}_v$ and the committed snapshot $D$.
\begin{enumerate}[leftmargin=*,nosep,label=\arabic*.]
\item \textbf{Query.} Receive $(\mathsf{query},\mathit{qid},\alpha,\mathsf{desc}_v)$ from the client, with a fresh query identifier $\mathit{qid}$. If the request is malformed or $\alpha\notin\{0,\dots,N-1\}$, send $\mathsf{reject}$ to the client and to $\Sim$ and halt. Otherwise send $\Leak_{\mathrm{PIR}}$ (\cref{def:leakage}) to $\Sim$.
\item \textbf{Retrieval.} Set $y\gets D[\alpha]$; for a multi-field record, $y$ contains every coordinate of $D[\alpha]$.
\item \textbf{Delivery.} $\Sim$ may send $\mathsf{abort}$, and then the client receives $\mathsf{abort}$; otherwise the client receives $y$.
\end{enumerate}
\textbf{Corruption regimes (separate statements).}
\begin{enumerate}[leftmargin=*,nosep,label=(R\arabic*)]
\item \textbf{Malicious server host:} the client is honest and $\Sim$ corrupts at most one server host; $\Sim$ learns nothing beyond Step~1.
\item \textbf{Malicious client only:} both server hosts are honest and only the client is corrupted; in Step~1, $\Sim$ chooses $\alpha^\ast$ (extraction) and in Step~3 it receives $D[\alpha^\ast]$ only.
\end{enumerate}
\end{minipage}}
\smallskip

\textbf{Scope of the two regimes.}
The regimes are not combined: no claim covers a malicious client colluding with a malicious server host. Regime (R1) is the integrity and query-privacy statement; regime (R2) is the database-privacy statement. The descriptor $\mathsf{desc}_v$, including $\mathsf{cm}_v$, is part of $\Leak_{\mathrm{PIR}}$ in both regimes.

\textbf{Padding and boundary cases.}
For $N\ge 2$ the protocol pads the snapshot to $2^n$ records with $n=\lceil\log_2 N\rceil$ and authorizes only $\alpha<N$; for $N=1$ it uses a constant-record special case with the same snapshot and release checks (\cref{alg:pir}). Padding is invisible in $\FPIR$ because only $\alpha<N$ is admissible and both the true and the padded length are functions of $\mathsf{desc}_v$.

\textbf{Non-claims.}
$\FPIR$ is realized under trusted modules, and both modules learn $\alpha$: the query privacy of (R1) holds against the hosts only and rests on module trust, unlike untrusted two-server DPF-PIR~\cite{chor1998private,boyle2016function} (GPU-DPF~\cite{lam2023gpu}, untrusted-mode DFA~\cite{dfa2026trusted}), which hides $\alpha$ from each server without trusted hardware. Its database privacy in (R2) likewise rests on the trusted modules; it does not make untrusted two-server DPF-PIR symmetric. Externally generated verifiable-FSS keys~\cite{decastro2022vfss} would have to prove client key validity and server-answer integrity separately, and authenticated PIR~\cite{colombo2023apir} needs its own statement (\cref{sec:proto-pir}).

\subsection{Ideal Protected-Module Resource}\label{app:func-module}

The resource $\mathcal{F}_{\mathsf{T}}$ captures the premise of \ProfT: the code, keys, input parser, arithmetic, state transitions, local memory, protected peer channel and approved I/O paths of a module are correct and confidential. The instance $\mathcal{F}_{\mathsf{T}}^{(b)}$ of party $b$ accepts commands from host $b$ and from external endpoints over authenticated confidential channels, and it talks to its peer instance over an ideal protected channel.

\smallskip
\noindent\fbox{\begin{minipage}{\dimexpr\columnwidth-2\fboxsep-2\fboxrule\relax}
\footnotesize
\centerline{\textbf{Resource} $\mathcal{F}_{\mathsf{T}}^{(b)}$\textbf{: state, channel and global rule}}
\vspace{2pt}\hrule\vspace{3pt}
\textbf{State.} Role $b$; session identifier $\mathsf{sid}$; epoch $\mathsf{ep}$; manifest $\mu$; epoch state $\in\{\mathsf{active},\mathsf{poisoned}\}$; a joint random function $R$, which returns a fresh uniform string for every new label and the same string to both instances for a joint label; global MAC keys $\Delta_1,\dots,\Delta_t$ and their additive shares $\Delta_{0,j},\Delta_{1,j}$ (joint draws); a ledger that maps every gate to one of \lstate{NEW}, \lstate{RESERVED}, \lstate{PREPARED}, \lstate{FROZEN}, \lstate{VERIFIED}, \lstate{CONSUMED}, \lstate{RELEASED}, \lstate{ABORTED}; a frozen store; the set of burned randomness identifiers.

\smallskip
\textbf{Peer channel.} Messages between the two instances are delivered privately and unmodified. The adversary learns their lengths and times and may delay or drop them; a missing message leads to abort.

\smallskip
\textbf{Global rule.} Every command first checks that the epoch state is $\mathsf{active}$, that the frame is canonical (length prefixed, field elements below $p$, no truncation, extra bytes or duplicate identifiers) and that $\mu$ allows the requested ledger transition. Gates move only along the chain of \cref{fig:ledger}. At each validation boundary of $\mu$, the instance evaluates every predicate of $\mathcal{V}$ on the bound inputs. Any failure executes \cmd{ABORT}. No command authenticates a host-supplied value.
\end{minipage}}
\smallskip

\smallskip
\noindent\fbox{\begin{minipage}{\dimexpr\columnwidth-2\fboxsep-2\fboxrule\relax}
\footnotesize
\centerline{\textbf{Resource} $\mathcal{F}_{\mathsf{T}}^{(b)}$\textbf{: commands (from host $b$ unless noted)}}
\vspace{2pt}\hrule\vspace{3pt}
\begin{enumerate}[leftmargin=*,nosep,label=\arabic*.]
\item \cmd{OPEN_SESSION}: with the peer instance, agree on the ordered identities, roles $0$ and $1$, $\mathsf{sid}$, a fresh $\mathsf{ep}$, the manifest hash, the field and tag parameters and the protocol version, or abort. Return a session capability and $\Delta_{b,1},\dots,\Delta_{b,t}$ to host $b$.
\item \cmd{RESERVE_PREP}$(g,\mathit{ids})$: if $g$ is \lstate{NEW} and $\mathit{ids}$ are unused, burn $\mathit{ids}$ and set \lstate{RESERVED}; draw the preprocessing of $g$ from $R$ (masks, triples with $c=ab$, authenticated bits, internally held FSS material) and share it with \AuthShare{}; set \lstate{PREPARED} and return the tuples of party $b$ once. For an input gate $\nu$, also send the full mask $r$ to the owner over its confidential channel, bind the owner's $\varepsilon$ after comparing it with the peer, and execute \cmd{ABORT} unless $r+\varepsilon\in\mathsf{In}_\nu$; for a private output, send the full output mask $o$ to the recipient.
\item \cmd{FREEZE_BATCH}$(g,\mathit{bytes},\mathit{dims},\mathit{wires})$: parse canonically; copy the bytes into the frozen store or into immutable AEAD pages bound to sequence numbers, lengths, gate identifiers and a protected root; hash the descriptor and the bytes; set \lstate{FROZEN}; return the handle and the root.
\item \cmd{CHECK_OPEN}$(h)$: with the peer, run \cref{alg:checkopen} on the frozen bytes of handle $h$; on acceptance set \lstate{VERIFIED} and return the authorized opened values; otherwise return one generic failure and execute \cmd{ABORT}.
\item \cmd{EVAL_FSS}$(g,h)$: only for a \lstate{VERIFIED} masked opening bound to $g$ and role $b$; evaluate the bound FSS or direct gate internally, apply the epilogue, set \lstate{CONSUMED} and return the fresh tuple $\authb{y}{b}$.
\item \cmd{VERIFY_MATRIX}$(g,h)$: for operands anchored to authoritative records, run \cref{alg:freivalds} on the frozen candidate; return a verified immutable handle or execute \cmd{ABORT}.
\item \cmd{FINAL_OUTPUT}$(g,R_k)$: only for a \lstate{VERIFIED} output opening; record the one-time use of its mask and check material (\lstate{CONSUMED}), then set \lstate{RELEASED} and send the opened $\upsilon=y+o$ (or a public $y$) to $R_k$ with a certificate binding $\mathsf{sid}$, $\mathsf{ep}$, $g$, $R_k$, the output hash and the terminal status.
\item \cmd{ABORT} (host or internal): set the epoch state to $\mathsf{poisoned}$ and every open gate to \lstate{ABORTED}; withhold all candidate values and pending responses; draw no further challenge; never return burned identifiers to a free list.
\end{enumerate}
\end{minipage}}
\smallskip

\textbf{Observable trace.}
Besides command outputs, host $b$ observes its module's external trace: lengths, timing, host-visible memory and DMA activity (DMA requests, peer traffic, spill pages), and power, thermal and clock telemetry. The model stipulates a PPT trace simulator $\Sim_{\mathsf{trace}}$ that produces this trace from the command sequence issued by host $b$, $\mu$, the public lengths and the accept or abort outcomes. Given these inputs, the trace is independent of the global MAC keys, challenges and residuals (all lanes are aggregated in data-independent time before one generic status is returned); only simulating it from these inputs costs $\eps{trace}$. Host software cannot drive the module's clock or voltage outside its safe operating range, or the module detects glitches and aborts (DVFS isolation). This is an implementation assumption (constant-time engines, fixed-pattern scans, data-independent spills, an isolated or monitored voltage and clock domain, telemetry filtering), not a theorem; \cref{tab:obligations} lists it as a deployment obligation.

\textbf{Real module program and error terms.}
The boxes state $R$, the channels and \cmd{OPEN_SESSION} ideally; in the concrete module program of the real execution, $R$ is $\PRF_K$ with length-prefixed labels, the peer and client channels are AEAD over the hosts' network, and \cmd{OPEN_SESSION} is attestation plus authenticated key exchange. $\Hyb{1}$ idealizes setup and channels ($\eps{setup}+\eps{channel}$) and $\Hyb{2}$ randomness and bindings ($\eps{prf}+\eps{bind}$), so the boxed $\mathcal{F}_{\mathsf{T}}$ is the resource from $\Hyb{2}$ on (\cref{app:proof-main}); $\Sim_{\mathsf{trace}}$ costs $\eps{trace}$ (\cref{sec:sec-assump}). Correctness of protected code, state and arithmetic is a premise of program and resource alike, not an estimated probability.

\textbf{Outside the resource.}
Physical attacks (probing, optical faults, power or electromagnetic measurement), microarchitectural bugs inside the boundary, faulty trusted arithmetic, adaptive corruption and rollback without a monotonic service lie outside $\mathcal{F}_{\mathsf{T}}$. Software-reachable undervolting and telemetry channels~\cite{murdock2020plundervolt} are not treated as physical: DVFS isolation excludes the former and $\Sim_{\mathsf{trace}}$ covers the latter. Under \ProfT a module may see plaintext inputs, so \sys relocates trust to the modules rather than removing it (\cref{sec:discussion}); \ProfU (\cref{sec:threat-profileu}) is the separate backend without this resource.

\textbf{Composition within the protocol.}
Checked openings, epilogues and gate subprotocols share one ledger and one set of MAC keys, so \cref{thm:main} is proved with a single simulator for the whole admitted execution (\cref{app:proof-main}), not by composing subprotocols; \cref{app:proof-dpf,app:proof-auth,app:proof-gates} supply the algebraic and distributional facts it uses.

\section{Proofs for the Exact DPF, DCF, and Authentication Epilogue}\label{app:proof-dpf}

This appendix proves the four statements about the protected FSS datapath: exact DPF correctness (\cref{thm:dpf-correct}), equivalence of malicious generate-and-consume (\cref{thm:gac}), correctness of the conservative DCF (\cref{thm:dcf}, together with its algorithm), and correctness and local hiding of the fresh authentication epilogue (\cref{thm:epilogue}). All four are exact statements. They hold for every choice of root seeds and for arbitrary deterministic functions $\PRG$ and $\Conv$, and none of them uses pseudorandomness, a uniform-target assumption or DPF key privacy. Pseudorandomness enters the end-to-end argument only through $\eps{prf}$ (\cref{app:proof-main}).

\textbf{Conventions.}
Bit strings are read most-significant bit first, as in \cref{alg:gen,alg:eval}. For $x\in\bits{n}$ we write $x_{[i]}=x_1\cdots x_i$ for its prefix of length $i$, with $x_{[0]}$ empty. Children are indexed by $L=0$ and $R=1$, so a query bit selects a child directly. For a bit $\tau$ and a string $v$, $\tau\cdot v$ denotes $v$ if $\tau=1$ and the all-zero string otherwise. This conditional XOR is bilinear over $\mathbb{F}_2$:
\[
(\tau\cdot v)\oplus(\tau'\cdot v)=(\tau\oplus\tau')\cdot v\quad\text{for all bits }\tau,\tau'.
\]
Where a control bit multiplies $\CWout$, it is read as the field element $0$ or $1$ of $\Fp$.

\textbf{Generator and evaluator transitions.}
Let $(s_b^{(i)},t_b^{(i)})$ be the generator state of root $b\in\{0,1\}$ after level $i$ of \cref{alg:gen}, with $(s_b^{(0)},t_b^{(0)})=(s_b,b)$ for the sampled root seeds $s_0,s_1$. At level $i$, with target bit $a=\alpha_i$, $\keep=a$ and $\lose=1-a$, and writing $(s_b,t_b)$ for the states after level $i-1$, \cref{alg:gen} computes
\begin{align*}
(s_b^{L},t_b^{L},s_b^{R},t_b^{R})&=\PRG(s_b),\qquad b\in\{0,1\},\\
\sCW&=s_0^{\lose}\oplus s_1^{\lose},\\
\tCW{L}&=t_0^{L}\oplus t_1^{L}\oplus a\oplus 1,\\
\tCW{R}&=t_0^{R}\oplus t_1^{R}\oplus a,\\
s_b^{(i)}&=s_b^{\keep}\oplus(t_b\cdot\sCW),\\
t_b^{(i)}&=t_b^{\keep}\oplus(t_b\cdot\tCW{\keep}),
\end{align*}
where both updates use the \emph{old} control bit $t_b$, and it records $\CW^{(i)}=(\sCW,\tCW{L},\tCW{R})$. After level $n$ it sets $u_b=\Conv(s_b^{(n)})$ and $\CWout=(-1)^{t_1^{(n)}}(\beta-u_0+u_1)$. Key $k_b$ consists of $s_b$, the role bit $b$, $\CW^{(1)},\dots,\CW^{(n)}$ and $\CWout$; the final generator states are never substituted for the roots. One level of \cref{alg:eval} is the evaluator transition
\[
\Phi\bigl((s,t),\CW,\chi\bigr)=\bigl(\hat s^{\chi}\oplus(t\cdot\sCW),\ \hat t^{\chi}\oplus(t\cdot\tCW{\chi})\bigr),
\]
where $(\hat s^{L},\hat t^{L},\hat s^{R},\hat t^{R})=\PRG(s)$ and $\CW=(\sCW,\tCW{L},\tCW{R})$. (\Cref{alg:eval} corrects both children with the old evaluator bit $t$ and then selects child $\chi$; the selected child is the value above.) For a query $x$, party $b$'s evaluator states are $(\tilde s_b^{(0)},\tilde t_b^{(0)})=(s_b,b)$ and $(\tilde s_b^{(i)},\tilde t_b^{(i)})=\Phi\bigl((\tilde s_b^{(i-1)},\tilde t_b^{(i-1)}),\CW^{(i)},x_i\bigr)$, and its output is $y_b=(-1)^b\bigl(\Conv(\tilde s_b^{(n)})+\tilde t_b^{(n)}\CWout\bigr)$.

\subsection{Exact DPF Correctness}

The proof of \cref{thm:dpf-correct} rests on three lemmas: on the target path the evaluators coincide with the generator and keep opposite control bits (\cref{clm:target-path}); leaving the target path makes the two evaluator states identical (\cref{clm:losing-child}); and identical states stay identical and produce cancelling outputs (\cref{clm:identical}).

\begin{lemma}[Target-path agreement]\label{clm:target-path}
Fix $\alpha,x\in\bits{n}$, $\beta\in\Fp$ and a root pair $(s_0,s_1)$. For every $i\in\{0,\dots,n\}$:
(a) $t_0^{(i)}\oplus t_1^{(i)}=1$; and
(b) if $x_{[i]}=\alpha_{[i]}$, then $(\tilde s_b^{(i)},\tilde t_b^{(i)})=(s_b^{(i)},t_b^{(i)})$ for both $b\in\{0,1\}$.
\end{lemma}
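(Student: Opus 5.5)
Both parts go by induction on $i$; (a) is proved first on its own, and (b) then uses (a) at the previous level.

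\emph{Part (a).} The base case $i=0$ holds because $t_0^{(0)}=0$ and $t_1^{(0)}=1$. For the step, assume $t_0\oplus t_1=1$ after level $i-1$, so exactly one of the old bits $t_0,t_1$ equals $1$. By the generator transition and bilinearity of the conditional XOR,
\[
t_0^{(i)}\oplus t_1^{(i)}=t_0^{\keep}\oplus t_1^{\keep}\oplus\bigl((t_0\oplus t_1)\cdot\tCW{\keep}\bigr)=t_0^{\keep}\oplus t_1^{\keep}\oplus\tCW{\keep}.
\]
We then split on $a=\alpha_i$. If $a=0$, then $\keep=L$ and $\tCW{L}=t_0^L\oplus t_1^L\oplus 1$. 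If $a=1$, then $\keep=R$ and $\tCW{R}=t_0^R\oplus t_1^R\oplus 1$. In both cases the sum equals $1$.

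\emph{Part (b).} The base case $i=0$ holds because both evaluators start at $(s_b,b)$, which is exactly the generator root state. For the step, assume $x_{[i]}=\alpha_{[i]}$. Then $x_{[i-1]}=\alpha_{[i-1]}$, so the induction hypothesis gives equal states after level $i-1$. It also gives $x_i=\alpha_i=\keep$.

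The evaluator transition $\Phi$ expands $\PRG(\tilde s_b^{(i-1)})=\PRG(s_b^{(i-1)})$. This yields the same children as the generator, $(\hat s^\chi,\hat t^\chi)=(s_b^\chi,t_b^\chi)$. It then selects $\chi=x_i=\keep$ and corrects with the old bit $\tilde t_b^{(i-1)}=t_b^{(i-1)}$ and the same $\CW^{(i)}$. The result is $\bigl(s_b^{\keep}\oplus t_b\cdot\sCW,\ t_b^{\keep}\oplus t_b\cdot\tCW{\keep}\bigr)=(s_b^{(i)},t_b^{(i)})$.

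This step needs care on two points. First, the evaluator uses its \emph{old} bit, which coincides with the generator's old bit $t_b$ (not the post-update bit). Second, the correction word the evaluator applies at level $i$ is the one the generator recorded at level $i$; this is why the key stores the roots and not the final states. With the transitions written out in the conventions paragraph, both points are immediate.

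I expect no real obstacle. The only subtlety is keeping the ``old control bit'' indexing straight, which the explicit transition equations handle. Part (a) holds unconditionally, for every root pair, $\alpha$ and $\beta$, because it depends only on the defining XORs of $\tCW{L},\tCW{R}$. No property of $\PRG$ or $\Conv$ is used, consistent with the exactness claim of \cref{thm:dpf-correct}.
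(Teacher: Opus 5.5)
Your proof is correct and follows the paper's own argument: induction on $i$, using bilinearity of the conditional XOR and a case split on $\alpha_i$ for (a), and for (b) the fact that an evaluator sitting in the generator state expands the same seed and selects $\keep$ with the same old bit, so it reproduces the generator update. One small remark: despite your opening sentence, your argument for (b) never uses (a), and the paper's argument does not either.
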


\begin{proof}
We use induction on $i$. For $i=0$, $t_0^{(0)}\oplus t_1^{(0)}=0\oplus 1=1$, and each evaluator starts from its root seed and its role bit, $(\tilde s_b^{(0)},\tilde t_b^{(0)})=(s_b,b)=(s_b^{(0)},t_b^{(0)})$. This is exactly where the explicit root relation $t_0^{(0)}=0$, $t_1^{(0)}=1$ and the role bit stored in $k_b$ are used.

Assume the claim for $i-1$ and write $(s_b,t_b)=(s_b^{(i-1)},t_b^{(i-1)})$ and $a=\alpha_i$. For (a), bilinearity and the induction hypothesis give
\begin{align*}
t_0^{(i)}\oplus t_1^{(i)}&=t_0^{\keep}\oplus t_1^{\keep}\oplus\bigl((t_0\oplus t_1)\cdot\tCW{\keep}\bigr)\\
&=t_0^{\keep}\oplus t_1^{\keep}\oplus\tCW{\keep}.
\end{align*}
If $a=0$, then $\keep=L$ and $\tCW{L}=t_0^{L}\oplus t_1^{L}\oplus 0\oplus 1$, so the right-hand side equals $1$. If $a=1$, then $\keep=R$ and $\tCW{R}=t_0^{R}\oplus t_1^{R}\oplus 1$, so it again equals $1$.
For (b), $x_{[i]}=\alpha_{[i]}$ implies $x_{[i-1]}=\alpha_{[i-1]}$, so by the induction hypothesis evaluator $b$ enters level $i$ in state $(s_b,t_b)$. It expands the same seed, obtains the same children $(s_b^{L},t_b^{L},s_b^{R},t_b^{R})$, and selects child $x_i=a=\keep$ corrected with its old bit $t_b$:
$\Phi((s_b,t_b),\CW^{(i)},\keep)=\bigl(s_b^{\keep}\oplus(t_b\cdot\sCW),\ t_b^{\keep}\oplus(t_b\cdot\tCW{\keep})\bigr)=(s_b^{(i)},t_b^{(i)})$.
\end{proof}

\begin{lemma}[Losing child]\label{clm:losing-child}
Let $1\le i\le n$ with $x_{[i-1]}=\alpha_{[i-1]}$ and $x_i\neq\alpha_i$. Then $\tilde s_0^{(i)}=\tilde s_1^{(i)}$ and $\tilde t_0^{(i)}=\tilde t_1^{(i)}$.
\end{lemma}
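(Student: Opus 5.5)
The proof is a direct computation at level $i$, using Lemma~\ref{clm:target-path} for the entering states. Since $x_{[i-1]}=\alpha_{[i-1]}$, part (b) of that lemma applied at $i-1$ shows that both evaluators enter level $i$ in the generator states $(s_b,t_b)=(s_b^{(i-1)},t_b^{(i-1)})$. Part (a) gives $t_0\oplus t_1=1$. Write $a=\alpha_i$. The hypothesis $x_i\neq a$ then means $x_i=1-a=\lose$. Each evaluator therefore expands the same seed as generator $b$ and selects the losing child:
\[
(\tilde s_b^{(i)},\tilde t_b^{(i)})=\bigl(s_b^{\lose}\oplus(t_b\cdot\sCW),\ t_b^{\lose}\oplus(t_b\cdot\tCW{\lose})\bigr).
\]

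For the seeds, we XOR the two parties' values and apply bilinearity:
\[
\tilde s_0^{(i)}\oplus\tilde s_1^{(i)}=s_0^{\lose}\oplus s_1^{\lose}\oplus\bigl((t_0\oplus t_1)\cdot\sCW\bigr)=s_0^{\lose}\oplus s_1^{\lose}\oplus\sCW=0,
\]
which holds by the definition $\sCW=s_0^{\lose}\oplus s_1^{\lose}$.

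For the control bits, the same step gives $\tilde t_0^{(i)}\oplus\tilde t_1^{(i)}=t_0^{\lose}\oplus t_1^{\lose}\oplus\tCW{\lose}$. We split into cases on $a$:
\begin{itemize}
\item If $a=0$, then $\lose=R$ and $\tCW{R}=t_0^R\oplus t_1^R\oplus 0$, so the sum is $0$.
\item If $a=1$, then $\lose=L$ and $\tCW{L}=t_0^L\oplus t_1^L\oplus 1\oplus 1=t_0^L\oplus t_1^L$, so the sum is again $0$.
\end{itemize}
Hence the states coincide.

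The only delicate point is that the correction must use the evaluator's old bit. This bit equals the generator's $t_b$ here precisely because level $i$ is the first level off the path, so Lemma~\ref{clm:target-path}(b) still applies to the entering state. Apart from that, the argument is the routine case check on $a$.
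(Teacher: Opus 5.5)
Your proof is correct and follows the paper's argument step for step. You use Lemma~\ref{clm:target-path}(b) at level $i-1$ together with part~(a) to get $t_0\oplus t_1=1$, cancel the seeds by bilinearity against $\sCW=s_0^{\lose}\oplus s_1^{\lose}$, and settle the control bits by the same case split on $\alpha_i$.
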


\begin{proof}
By \cref{clm:target-path}(b) at level $i-1$, evaluator $b$ enters level $i$ in the generator state $(s_b,t_b)=(s_b^{(i-1)},t_b^{(i-1)})$, and by \cref{clm:target-path}(a) $t_0\oplus t_1=1$. Since $x_i=1-\alpha_i=\lose$, evaluator $b$ selects the corrected losing child $\bigl(s_b^{\lose}\oplus(t_b\cdot\sCW),\ t_b^{\lose}\oplus(t_b\cdot\tCW{\lose})\bigr)$. For the seeds, bilinearity gives
\[
\tilde s_0^{(i)}\oplus\tilde s_1^{(i)}=\bigl(s_0^{\lose}\oplus s_1^{\lose}\bigr)\oplus\bigl((t_0\oplus t_1)\cdot\sCW\bigr)=\sCW\oplus\sCW=0.
\]
For the control bits, $\tilde t_0^{(i)}\oplus\tilde t_1^{(i)}=t_0^{\lose}\oplus t_1^{\lose}\oplus\tCW{\lose}$. If $\alpha_i=0$, then $\lose=R$ and $\tCW{R}=t_0^{R}\oplus t_1^{R}\oplus 0$, so the XOR is $0$: the losing correction is the right correction without the extra one. If $\alpha_i=1$, then $\lose=L$ and $\tCW{L}=t_0^{L}\oplus t_1^{L}\oplus 1\oplus 1$, so the two extra ones cancel and the XOR is again $0$.
\end{proof}

\begin{lemma}[Identical states cancel]\label{clm:identical}
If $(\tilde s_0^{(i)},\tilde t_0^{(i)})=(\tilde s_1^{(i)},\tilde t_1^{(i)})$ for some $i$, then the equality holds for every level $i'\ge i$, and $y_0+y_1=0$.
\end{lemma}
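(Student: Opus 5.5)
The plan is an induction on the level, followed by a direct computation at the leaf.

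For the first claim, assume $(\tilde s_0^{(i')},\tilde t_0^{(i')})=(\tilde s_1^{(i')},\tilde t_1^{(i')})$ for some $i'\ge i$ with $i'<n$. Both evaluators apply the transition $\Phi$ at level $i'+1$ to the same input. The correction word $\CW^{(i'+1)}$ is common: it is part of both keys, because \cref{alg:gen} computes the $\CW$ values once from the generator states, which do not depend on the role. The selected child $x_{i'+1}$ is common too, since both parties evaluate the same public query $x$. Because $\Phi$ is a deterministic function of $(s,t)$, $\CW$ and $\chi$ (it uses only $\PRG$ and conditional XORs), the outputs at level $i'+1$ coincide. Induction then gives equality at every level up to $n$. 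The only subtle point is that each party corrects with its own old evaluator bit, never with the generator bit. Since those evaluator bits are equal by hypothesis, the corrections are identical; this is precisely why the separate evaluator state matters.

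For the cancellation, write $(s^\ast,t^\ast)$ for the common leaf state $(\tilde s_b^{(n)},\tilde t_b^{(n)})$. Both parties share the same $\CWout$, so
\[
y_0+y_1=\bigl(\Conv(s^\ast)+t^\ast\CWout\bigr)-\bigl(\Conv(s^\ast)+t^\ast\CWout\bigr)=0
\]
in $\Fp$, where $t^\ast$ is read as the field element $0$ or $1$. If $i=n$, the induction is vacuous and only this leaf computation is needed.

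No step is a genuine obstacle. The one place that needs care is checking that nothing role-dependent enters the transition between the two evaluators other than the state itself. The role bit $b$ appears only in the root state $(s_b,b)$, which is already absorbed into the hypothesis, and in the output sign $(-1)^b$, which is exactly what produces the cancellation. Together with \cref{clm:target-path} and \cref{clm:losing-child}, this gives the off-path case $x\neq\alpha$ of \cref{thm:dpf-correct}: take $i$ to be the first differing level.
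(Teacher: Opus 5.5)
Your proof is correct and follows essentially the same route as the paper: the paper likewise inducts over the levels using that $\Phi$ is deterministic in the state, the shared correction word and the shared query bit, and then cancels the common leaf value through the shared output correction and the opposite signs $(-1)^b$.
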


\begin{proof}
The keys $k_0$ and $k_1$ differ only in the root seed and the role bit; the correction words $\CW^{(i')}$ and $\CWout$ are common to both. For $i'>i$, each state is $\Phi(\cdot,\CW^{(i')},x_{i'})$ applied to the previous state, a deterministic function whose three arguments coincide for $b=0$ and $b=1$; induction on $i'$ gives equality through level $n$. Writing $(s,t)$ for the common final state, $y_0=\Conv(s)+t\,\CWout$ and $y_1=-\bigl(\Conv(s)+t\,\CWout\bigr)$, hence $y_0+y_1=0$.
\end{proof}

\begin{proof}[Proof of \cref{thm:dpf-correct}]
We show that $y_0+y_1=\beta\cdot\one[x=\alpha]$ for every target $\alpha$, query $x$, payload $\beta$ and fixed root pair.

\emph{Case $x=\alpha$.} By \cref{clm:target-path} with $i=n$, both evaluators end in the final generator states, so $\Conv(\tilde s_b^{(n)})=u_b$ and $\tilde t_b^{(n)}=t_b^{(n)}$, and $t_0^{(n)}\neq t_1^{(n)}$. For two distinct bits, $t_0^{(n)}-t_1^{(n)}=(-1)^{t_1^{(n)}}$ in $\Fp$ (it equals $1$ if $t_1^{(n)}=0$ and $-1$ if $t_1^{(n)}=1$). Therefore
\begin{align*}
y_0+y_1&=\bigl(u_0+t_0^{(n)}\CWout\bigr)-\bigl(u_1+t_1^{(n)}\CWout\bigr)\\
&=u_0-u_1+(-1)^{t_1^{(n)}}\,(-1)^{t_1^{(n)}}\,(\beta-u_0+u_1)=\beta .
\end{align*}

\emph{Case $x\neq\alpha$.} Let $i$ be the smallest index with $x_i\neq\alpha_i$. Then $x_{[i-1]}=\alpha_{[i-1]}$, so \cref{clm:losing-child} makes the two evaluator states identical after level $i$, and \cref{clm:identical} gives $y_0+y_1=0$.

The two cases exhaust all queries. No step uses a property of $\PRG$, $\Conv$ or the roots, so the identity holds for every root pair, including adversarially chosen ones, and for every target, including non-uniform targets.
\end{proof}

\textbf{Prerequisites.}
The argument depends on four details: (i) the root relation $t_0^{(0)}=0$, $t_1^{(0)}=1$ and the role bit stored in each key (base case of \cref{clm:target-path}); (ii) the \emph{old} control bit in both the generator update and the evaluator correction (inductive steps of \cref{clm:target-path,clm:losing-child}); (iii) the correction of the selected child before the next expansion; and (iv) the signed final conversion with $(-1)^b$ and $(-1)^{t_1^{(n)}}$. A refinement that, for example, overwrote $t_b$ before computing the seed update would break the equality of \cref{clm:target-path}(b) on which the target-leaf case rests; this is a conformance bug and not a scheduling freedom (\cref{sec:arch-pipe}).

\textbf{Full-domain evaluation.}
$\EvalAll$ computes every leaf by applying $\Phi$ along that leaf's own root-to-leaf path; the depth-first order changes only the order of visits. Every leaf output therefore equals $\Eval(k_b,x)$ at that leaf and satisfies \cref{thm:dpf-correct}. A complete traversal expands each of the $2^n-1$ internal nodes exactly once.

\subsection{Malicious Generate-and-Consume}

\Cref{alg:gac} maintains the two generator states of \cref{alg:gen}, a separate evaluator state for the local role $b$ initialized to $(s_b,b)$, and at most the current correction word. At level $i$ it computes $\CW^{(i)}$ with the formulas above, advances both generator states, applies $\Phi$ with $\CW^{(i)}$ and $x_i$ to the evaluator state, and only then discards $\CW^{(i)}$. After level $n$ it computes $\CWout$, the raw share $\tilde y_b=(-1)^b\bigl(\Conv(\tilde s)+\tilde t\,\CWout\bigr)$ and the epilogue of \cref{sec:proto-epilogue}. Gates that combine several DPF evaluations, namely the prefix DPFs of \cref{alg:dcf} and the full-domain scans of \cref{alg:lut,alg:pir}, run only the (Gen) and (Eval) lines of \cref{alg:gac}, which end with the raw share $\tilde y_b$; the ledger lines and the epilogue are executed once by the enclosing gate on its combined raw share (for PIR, only the value half of the epilogue). The proof below derives $\tilde y_b=\Eval(k_b,x)$ from these lines alone, so \cref{thm:gac} applies to each constituent DPF.

\begin{proof}[Proof of \cref{thm:gac}]
We couple \cref{alg:gac} with the materialized reference execution, which first runs \cref{alg:gen} to produce $(k_0,k_1)$ and then runs \cref{alg:eval} on $(k_b,x)$, on the same roots $(s_0,s_1)$, target $\alpha$, payload $\beta$, query $x$ and role $b$, and, for the epilogue, on the same pads (they are drawn under the same joint labels). Let $g_0^{(i)},g_1^{(i)}$ and $\tilde g^{(i)}$ denote the generator and evaluator states of \cref{alg:gac} after level $i$, and $\CW^{(i)}_{\mathsf{GaC}}$ its level-$i$ correction word. We prove by induction on $i$ the statement
\begin{gather*}
g_0^{(i)}=(s_0^{(i)},t_0^{(i)}),\qquad g_1^{(i)}=(s_1^{(i)},t_1^{(i)}),\\
\tilde g^{(i)}=(\tilde s_b^{(i)},\tilde t_b^{(i)}),\qquad \CW^{(i)}_{\mathsf{GaC}}=\CW^{(i)} .
\end{gather*}
For $i=0$, both executions initialize the generator states to $(s_0,0)$ and $(s_1,1)$ and the evaluator state to $(s_b,b)$. For the step, the level-$i$ correction word is, in both executions, the same deterministic function of the level-$(i-1)$ generator states and $\alpha_i$, so the correction words and the next generator states agree. The evaluator transition is, in both executions, $\Phi(\tilde g^{(i-1)},\CW^{(i)},x_i)$ with equal arguments, so the evaluator states agree. After level $n$, $\CWout=(-1)^{t_1^{(n)}}\bigl(\beta-\Conv(s_0^{(n)})+\Conv(s_1^{(n)})\bigr)$ is the same function of equal generator states, and $\tilde y_b$ is the same function of equal evaluator states and $\CWout$; hence $\tilde y_b=\Eval(k_b,x)$. The epilogue is a deterministic function of $\tilde y_b$, the pads and the global keys, so the authenticated outputs coincide as well.

Discarding is harmless: \cref{alg:eval} reads $\CW^{(i)}$ only at level $i$, so erasing it after its evaluator use at level $i$ cannot influence any later step. Consequently \cref{alg:gac} never holds a correction-word array: at any time it stores two generator states, one evaluator state, one correction word and, at the end, one field element $\CWout$.
\end{proof}

\textbf{Separate evaluator state.}
By \cref{clm:target-path}(b), the generator state of root $b$ equals party $b$'s evaluator state only while the query prefix agrees with the target prefix. An implementation that reused the generator state of root $b$ as the evaluator state would compute, for every query $x$, the party-$b$ output of $\Eval(k_b,\alpha)$; by \cref{thm:dpf-correct} the reconstructed value would then be $\beta$ for every query, that is, a constant function rather than the point function. Keeping both generator states and a separate evaluator state for the local role is therefore a correctness requirement.

\textbf{Dependency-respecting schedules.}
The same induction applies node by node to any schedule in which each node is computed from its parent's state and the correction word of its own level after both are available. This includes the depth-first order of $\EvalAll$ and pipelined interleavings of independent contexts (\cref{sec:arch-pipe}). \Cref{thm:gac} justifies such scheduling changes only; it permits no change to the expansion, the correction formulas or the final conversion.

\subsection{Conservative DCF}

\begin{algorithm}[t]
\algstretch
\caption{Conservative distributed comparison function (\algname{DCF-Cons}) for $\beta\cdot\one[x<\alpha]$ from $n$ prefix DPFs}
\label{alg:dcf}
\begin{algorithmic}[1]
\Require target $\alpha\in\bits{n}$ (MSB first), payload $\beta\in\Fp$; role $b\in\{0,1\}$, query $x\in\bits{n}$
\Ensure keys $(K_0,K_1)$; share $v_b\in\Fp$ with $v_0+v_1=\beta\cdot\one[x<\alpha]$
\Procedure{DCF.Gen}{$\alpha,\beta$}
  \For{$i=1$ \textbf{to} $n$}
    \LineComment{first $i-1$ bits of $\alpha$, then $0$}
    \State $\alpha^{\langle i\rangle}\gets\alpha_1\cdots\alpha_{i-1}0\in\bits{i}$
    \State $\beta^{\langle i\rangle}\gets\beta\cdot\alpha_i$ \Comment{kept if zero}
    \LineComment{fresh independent roots, depth $i$}
    \State $(k_0^{\langle i\rangle},k_1^{\langle i\rangle})\gets\DPF.\Gen(\alpha^{\langle i\rangle},\beta^{\langle i\rangle})$
  \EndFor
  \State \Return $K_b=(k_b^{\langle 1\rangle},\dots,k_b^{\langle n\rangle})$ for $b\in\{0,1\}$
\EndProcedure
\Procedure{DCF.Eval}{$b,K_b,x$}
  \State $v_b\gets 0$
  \For{$i=1$ \textbf{to} $n$}
    \State $v_b\gets v_b+\DPF.\Eval(k_b^{\langle i\rangle},x_1\cdots x_i)$
  \EndFor
  \State \Return $v_b$
\EndProcedure
\end{algorithmic}
\end{algorithm}

\Cref{alg:dcf} is the conservative construction used by the interval gate (\cref{alg:interval}). It uses the orientation $\DCF_{\alpha,\beta}(x)=\beta\cdot\one[x<\alpha]$ on $n$-bit unsigned integers. Each prefix DPF runs \cref{alg:gen,alg:eval} on the domain $\bits{i}$; in the protected profile each one is executed by the (Gen) and (Eval) lines of \cref{alg:gac} without its epilogue, which by \cref{thm:gac} produce the same raw shares without materializing the keys, and \cref{alg:interval} applies a single epilogue to the combined raw share $\tilde s_{i,b}$.

\begin{proof}[Proof of \cref{thm:dcf}]
We show that $v_0+v_1=\beta\cdot\one[x<\alpha]$ for every $\alpha,x\in\bits{n}$, every $\beta\in\Fp$ and every choice of the $n$ root pairs. For $i\in\{1,\dots,n\}$ let
\[
P_i(x)=\beta\alpha_i\cdot\one\bigl[x_{[i]}=\alpha_{[i-1]}0\bigr]
\]
be the point function of the $i$-th prefix DPF evaluated at the prefix $x_{[i]}$. By \cref{thm:dpf-correct} on the depth-$i$ domain, the two shares of that DPF sum to $P_i(x)$, so by linearity $v_0+v_1=\sum_{i=1}^{n}P_i(x)$.

The term $P_i(x)$ is nonzero only if $\alpha_i=1$, $x_i=0$ and $x_{[i-1]}=\alpha_{[i-1]}$, in which case it equals $\beta$. These three conditions state precisely that $i$ is the first index at which $x$ and $\alpha$ differ and that $x_i<\alpha_i$ there. The first differing index of two distinct strings is unique, so at most one term is nonzero. For unsigned integers written most-significant bit first, $x<\alpha$ holds if and only if $x\neq\alpha$ and $x$ has the smaller bit at the first differing index. Hence $\sum_i P_i(x)=\beta$ if $x<\alpha$ and $0$ otherwise.

Reading the sum term by term recovers the intuition. If $x<\alpha$ with first differing index $i^\star$, the prefix point function $i^\star$ contributes $\beta$; every earlier prefix $i<i^\star$ has $x_i=\alpha_i$, so it either misses the target (if $x_i=1$) or hits a target whose payload $\beta\alpha_i=0$; every later prefix $i>i^\star$ already disagrees at position $i^\star$. If $x=\alpha$, every prefix that matches has payload $0$. If $x>\alpha$, the first differing index has $x_{i^\star}=1$ and misses the target bit $0$, and the remaining terms vanish as before.
\end{proof}

\textbf{Cost.}
Prefix DPF $i$ has depth $i$, so one evaluation walks $\sum_{i=1}^{n}i=n(n+1)/2$ levels per party, which is $1{,}891$ levels for $n=61$, and generation expands two states per level, $n(n+1)$ expansions in total. With both modules generating and each evaluating, a level costs $6$ AES blocks, so the two thresholds of an interval gate cost $2\cdot6\cdot1{,}891=22{,}692$ blocks (\cref{app:model}). This is $O(n^2)$, not the $O(n)$ of optimized DCF constructions~\cite{boyle2021function,storrier2023grotto}. An optimized DCF may replace \cref{alg:dcf} only after its own correctness proof, its distribution and interface obligations, and its measured cost are established; the evaluation reports such a variant as a projection that \cref{thm:dcf} does not cover.

\textbf{All $n$ keys, including zero payloads.}
The number of prefix DPFs, their depths and their order are fixed by $n$. Generating a key for every prefix, including those with $\alpha_i=0$ and hence payload $0$, keeps the schedule and the module's observable trace independent of $\alpha$, which is derived from a secret mask, as the trace assumption behind $\eps{trace}$ requires. Omitting zero-payload keys would reveal the Hamming weight of $\alpha$ through the work performed.

\textbf{Orientation.}
The semi-honest DFA design also uses a greater-than orientation~\cite{dfa2026trusted}. We convert explicitly at interfaces and never change the comparison symbol without changing the function; for example, $\beta\cdot\one[x\ge\alpha]=\beta-\beta\cdot\one[x<\alpha]$ is obtained by letting party $0$ output $\beta-v_0$ and party $1$ output $-v_1$.

\subsection{Fresh Authentication Epilogue}

Recall the outputs $y_b,\gamma_{b,j}$ of \cref{thm:epilogue}; both modules draw the pads $\zeta,\kappa_1,\dots,\kappa_t$ under joint labels that are distinct for every output wire.

\begin{proof}[Proof of \cref{thm:epilogue}]
We work in hybrid $\Hyb{2}$ (ideal randomness and bindings) of \cref{app:proof-main}, where draws under distinct labels are independent and uniform.

\emph{Correctness.} The pad $\zeta$ enters with opposite signs, so $y_0+y_1=\tilde y_0+\tilde y_1$. For each lane $j$,
\[
\gamma_{0,j}+\gamma_{1,j}=\Delta_j y_0+\kappa_j+\Delta_j y_1-\kappa_j=\Delta_j(y_0+y_1),
\]
so $(\authb{y}{0},\authb{y}{1})$ is an authenticated sharing of $\tilde y_0+\tilde y_1$ under all $t$ global keys.

\emph{Local hiding.} Fix $b$ and condition on the raw shares $\tilde y_0,\tilde y_1$ and on $\Delta_1,\dots,\Delta_t$. The pad vector $(\zeta,\kappa_1,\dots,\kappa_t)$ is uniform on $\Fp^{t+1}$ and independent of the conditioning, because its labels differ from those of the DPF roots, the payload and the keys. Consider the map
\begin{align*}
\Psi_b:\ &(\zeta,\kappa_1,\dots,\kappa_t)\\
&\mapsto\bigl(y_b,\ \Delta_1y_b+(-1)^b\kappa_1,\ \dots,\ \Delta_ty_b+(-1)^b\kappa_t\bigr),
\end{align*}
with $y_b=\tilde y_b+(-1)^b\zeta$. It is a bijection of $\Fp^{t+1}$ with inverse $\zeta=(-1)^b(y_b-\tilde y_b)$ and $\kappa_j=(-1)^b(\gamma_{b,j}-\Delta_jy_b)$. The image of the uniform distribution under a bijection is uniform, so every tuple $(y_b,\gamma_{b,1},\dots,\gamma_{b,t})$ has probability exactly $p^{-(t+1)}$, independently of the gate value $\tilde y_0+\tilde y_1$ and of the global keys.

\emph{Joint freshness.} Let $W$ be a set of output wires whose pads carry pairwise distinct labels. Their pad vectors are mutually independent, and the product of the maps $\Psi_b$ over $W$ is a bijection of $\Fp^{(t+1)|W|}$. Hence the party-$b$ tuples of all wires in $W$ are jointly uniform and independent of all raw outputs and keys.
\end{proof}

The statement describes the tuples at creation time. Later openings of values that depend on $y$, such as $d=y-a$ in a multiplication, correlate the two parties' shares; the circuit simulation of \cref{app:proof-main} accounts for these correlations and does not claim that every intermediate value is independent.

\textbf{The epilogue is not an authentication oracle.}
If a module applied this map to a host-supplied value $v_c$ instead of its internally computed $\tilde y_c$, the result would be an authenticated sharing of $v_c+\tilde y_{1-c}$, a value shifted by an offset of the host's choice, whose tags satisfy the MAC relation exactly. Every later \CheckOpen{} would accept it with probability $1$, and \cref{thm:mac-sound} would offer no protection (scenario S2 of \cref{tab:scenarios}). For this reason the interface has no \cmd{authenticate(host_value)} entry point and accepts only raw shares computed inside the module or produced by an explicitly verified backend.

\textbf{Role and label binding.}
The cancellations above require both modules to draw the same pads under the same joint label and to apply opposite signs. The joint label never contains the local role, while the roles themselves are bound at setup (\cref{sec:proto-setup}). If both modules used the same sign, $y_0+y_1$ would equal $\tilde y_0+\tilde y_1\pm2\zeta$, a uniformly wrong value.

\section{Proofs for Authenticated Openings and Arithmetic}\label{app:proof-auth}

This appendix proves the hidden-key invariant (\cref{lem:hidden-key}), the wrong-value acceptance bound of freeze-before-challenge \CheckOpen{} (\cref{thm:mac-sound}), and the correctness of inputs, linear operations, multiplications and outputs (\cref{thm:arith}). It also records three counterexamples that delimit the statements: a single tampered lane, a challenge vector shared across lanes, and the same check over $\Z_{2^{64}}$. The analysis adapts the information-theoretic MAC arguments of SPDZ~\cite{damgard2012spdz,damgard2013practical} to $t$ independent lanes with per-lane challenges, residuals that never leave the modules, and a ledger that terminates the epoch on the first failure.

\textbf{Setting.}
As in \cref{app:proof-main}, host $b^\star\in\{0,1\}$ is statically corrupted, $\bar b=1-b^\star$ is the honest party, and all probabilities are taken in $\Hyb{2}$: setup and protected channels are ideal, every PRF or KDF output is an independent uniform draw per distinct label and the same draw for the same joint label, and all bindings hold. \Cref{thm:main} charges the distance between the real execution and $\Hyb{2}$ to $\eps{setup}$, $\eps{channel}$, $\eps{prf}$ and $\eps{bind}$; the exact replacements of $\Hyb{2}$ (\cref{thm:gac,lem:bijection,lem:noninv}) cost nothing. Traces are still produced by the modules in $\Hyb{2}$ and are treated in (vi) below. The key shares $\Delta_{0,j},\Delta_{1,j}$ are independent and uniform for $j=1,\dots,t$; host $b^\star$ receives $\bm{\Delta}_{b^\star}=(\Delta_{b^\star,1},\dots,\Delta_{b^\star,t})$ at setup, and we write
\[
\bm{\Delta}_{\bar b}=(\Delta_{\bar b,1},\dots,\Delta_{\bar b,t})\in\Fp^{t}
\]
for the honest key shares. Since $\Delta_j=\Delta_{b^\star,j}+\Delta_{\bar b,j}$ and host $b^\star$ knows $\Delta_{b^\star,j}$, uniformity of $\bm{\Delta}_{\bar b}$ given the corrupt view is equivalent to uniformity of $(\Delta_1,\dots,\Delta_t)$ given that view. The view $V$ of host $b^\star$ contains its random coins, $\bm{\Delta}_{b^\star}$, everything its module, the peer host and the clients send to it, and its observable trace.

\textbf{Faithful tuples and errors.}
For every wire $\nu$, the \emph{faithful tuple} $(\hat x_{b^\star,\nu},\hat\gamma_{b^\star,1,\nu},\dots,\hat\gamma_{b^\star,t,\nu})$ is the tuple that host $b^\star$ would hold if it had applied every prescribed local operation (\cref{fig:arith}) to the module outputs and the public openings that it actually received. The honest host always holds the prescribed tuple $(x_{\bar b,\nu},\gamma_{\bar b,1,\nu},\dots,\gamma_{\bar b,t,\nu})$, and the \emph{faithful value} of $\nu$ is $\hat x_\nu=x_{\bar b,\nu}+\hat x_{b^\star,\nu}$. Faithful tuples are produced by an honest execution, so by the correctness part of \cref{thm:arith} and by \cref{thm:epilogue} they satisfy the MAC relation
\[
\gamma_{\bar b,j,\nu}+\hat\gamma_{b^\star,j,\nu}=\Delta_j\,\hat x_\nu\qquad(j=1,\dots,t),
\]
whatever public values were opened earlier. When host $b^\star$ submits a batch of $m$ tuples to \cmd{FREEZE_BATCH}, we write its submission, position by position, as
\[
x_{b^\star,i}=\hat x_{b^\star,i}+e_i,\qquad \gamma_{b^\star,j,i}=\hat\gamma_{b^\star,j,i}+d_{j,i}.
\]
The \emph{value-error vector} $\bm{e}=(e_1,\dots,e_m)\in\Fp^m$ and the \emph{tag-error vectors} $\bm{d}_j=(d_{j,1},\dots,d_{j,m})\in\Fp^m$ are well defined because the canonical parser accepts only vectors of field elements; any other deviation is rejected before freezing. Submitting the tuple of another wire, permuting positions or replaying an old tuple are all captured by suitable $(\bm{e},\bm{d}_j)$. Both are functions of $V$ at freezing time, since the faithful tuple is computed from $V$ and the submission is chosen from $V$. A check is a \emph{wrong-value opening} if $\bm{e}\neq\bm{0}$.

\begin{lemma}[Residual identity]\label{clm:residual}
For every frozen batch and every lane $j$, the residuals of \cref{alg:checkopen} satisfy $\sigma_{0,j}+\sigma_{1,j}=\inner{\bm{\rho}_j}{\bm{d}_j-\Delta_j\bm{e}}$, where $\bm{\rho}_j=(\rho_{j,1},\dots,\rho_{j,m})$.
\end{lemma}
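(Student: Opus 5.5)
The identity is pure algebra, and I would prove it by expanding both residuals from the definition in Step~4 of \cref{alg:checkopen} and cancelling with the MAC relation of faithful tuples. The first point to fix is what $z_i$ is inside the modules. The reconstruction step sets $z_i=x_{0,i}+x_{1,i}$ from the frozen value shares. The honest share is the prescribed $x_{\bar b,i}$, and the corrupt share is $\hat x_{b^\star,i}+e_i$. Hence $z_i=\hat x_i+e_i$, where $\hat x_i$ is the faithful value.

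Next I would expand the two residuals lane by lane. The honest module uses its prescribed tags $\gamma_{\bar b,j,i}$ and key share $\Delta_{\bar b,j}$. The corrupt party's module uses the frozen submitted tags $\hat\gamma_{b^\star,j,i}+d_{j,i}$ and its key share $\Delta_{b^\star,j}$. The key share comes from the protected key store, as the input specification of \cref{alg:checkopen} requires, so the host cannot substitute it. Both modules use the same challenges $\rho_{j,i}$, which are a joint draw under the same label in $\Hyb{2}$. Summing the two residuals gives
\[
\sigma_{0,j}+\sigma_{1,j}=\sum_{i=1}^{m}\rho_{j,i}\bigl(\gamma_{\bar b,j,i}+\hat\gamma_{b^\star,j,i}+d_{j,i}-(\Delta_{\bar b,j}+\Delta_{b^\star,j})z_i\bigr).
\]
I would then apply the faithful MAC relation $\gamma_{\bar b,j,i}+\hat\gamma_{b^\star,j,i}=\Delta_j\hat x_i$ together with $\Delta_{\bar b,j}+\Delta_{b^\star,j}=\Delta_j$ and $z_i=\hat x_i+e_i$. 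Each summand then collapses to $\rho_{j,i}(d_{j,i}-\Delta_j e_i)$, which is the claimed inner product.

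The only delicate step is justifying that the quantities in the sum are exactly the frozen bytes and the decomposition introduced above. After the challenge is drawn, each module rereads only its frozen pages, and the root pair $(h_0,h_1)$ is agreed before $\bm\rho_j$ exists. By the binding assumption (no failure in $\Hyb{2}$), the values entering $\sigma_{b,j}$ and $z_i$ are therefore the ones that defined $(\bm e,\bm d_j)$. The canonical parser guarantees that they are field elements, so the error vectors are well defined. Finally, the identity holds for any challenge vector, so no distributional fact about $\bm\rho_j$ is needed.
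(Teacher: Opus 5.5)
Your proposal is correct and follows the paper's proof essentially line for line: write $z_i=\hat x_i+e_i$, add the two residuals, and cancel using $\Delta_{0,j}+\Delta_{1,j}=\Delta_j$ and the MAC relation of faithful tuples. Your extra remarks on frozen bytes, protected key shares and canonical parsing spell out setting assumptions that the paper states just before the lemma rather than inside its proof.
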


\begin{proof}
Both modules reconstruct $z_i=x_{\bar b,i}+x_{b^\star,i}=\hat x_i+e_i$ from the frozen value shares, and module $b$ computes $\sigma_{b,j}=\sum_i\rho_{j,i}(\gamma_{b,j,i}-\Delta_{b,j}z_i)$. Adding the two residuals and using $\Delta_{0,j}+\Delta_{1,j}=\Delta_j$ and the MAC relation of faithful tuples,
\begin{align*}
\sigma_{0,j}+\sigma_{1,j}&=\sum_{i=1}^{m}\rho_{j,i}\bigl(\gamma_{\bar b,j,i}+\hat\gamma_{b^\star,j,i}+d_{j,i}-\Delta_j(\hat x_i+e_i)\bigr)\\
&=\sum_{i=1}^{m}\rho_{j,i}\bigl(d_{j,i}-\Delta_je_i\bigr).\qedhere
\end{align*}
\end{proof}

Hence the check outcome depends only on the errors, the keys and the challenges, not on the honest tags, their pads or the faithful values.

\subsection{Hidden-Key Invariant}

Let $\tau$ be the first time at which challenges are drawn for a wrong-value opening ($\tau=\infty$ if there is none). Every check resolved before $\tau$ has $\bm{e}=\bm{0}$. An epoch abort draws no further challenge (\cref{sec:proto-ledger}), and in the admitted execution of \cref{thm:main} a failed check aborts the execution.

\begin{proof}[Proof of \cref{lem:hidden-key}]
We prove that, in $\Hyb{2}$, $\bm{\Delta}_{\bar b}$ is uniform on $\Fp^{t}$ and independent of the corrupt view $V_s$ at every time $s\le\tau$, and that the challenges drawn at $\tau$ are uniform and independent of $(V_\tau,\bm{\Delta}_{\bar b})$. If the check at $\tau$ is rejected, the epoch is poisoned and no challenge is drawn afterwards, so the information carried by that rejection never reaches another check; if it is accepted, it is the first accepted wrong-value opening, where the lemma stops.

\emph{Strategy.} We show that $V_\tau$ can be generated from $\bm{\Delta}_{b^\star}$, the adversary's coins, the \emph{value layer} and fresh uniform variables, all independent of $\bm{\Delta}_{\bar b}$. The value layer consists of the honest inputs, all value shares, masks, triple components, DPF roots and targets, raw FSS outputs, value pads and opened values; no rule of the protocol uses a key when it computes a value share (\cref{fig:arith}, \cref{alg:gac}), so the value layer is a function of randomness drawn under labels distinct from the key labels. We go through every kind of observation that can enter $V_s$ for $s\le\tau$.

\emph{(i) Setup.} Host $b^\star$ obtains $\bm{\Delta}_{b^\star}$ and a public transcript, which in $\Hyb{2}$ is simulated from public data. $\bm{\Delta}_{\bar b}$ is drawn independently of $\bm{\Delta}_{b^\star}$.

\emph{(ii) Value-layer observations.} Value shares delivered by its module (\AuthShare{} values $v_{b^\star}$, epilogue values $y_{b^\star}$), complete input masks $r$ revealed to a corrupt owner, output masks revealed to a corrupt recipient, accepted opened values, validation outcomes ($\mathsf{status}$), certificates and all public metadata are functions of the value layer and of public data. Certificates bind session identifiers, public metadata, output hashes and status; they involve no MAC key.

\emph{(iii) Tags.} Every tag that host $b^\star$ receives from its module has one of three forms: the pad $\pi$ itself (\AuthShare{} with $b^\star=0$, where $\gamma_{0,j}=\pi$), $\Delta_jv-\pi$ (\AuthShare{} with $b^\star=1$), or $\Delta_jy_{b^\star}+(-1)^{b^\star}\kappa_j$ (epilogue). In each case the pad $\pi$ or $\kappa_j$ is drawn under its own label and occurs in exactly two places: this corrupt tag and the corresponding honest tag. The honest tag, and every tag the honest host later derives from it by local linear operations, is used only inside honest residuals. Complete tags are never delivered to owners or recipients (\cref{fig:arith}), and by \cref{clm:residual} the sum of residuals, the only quantity that influences a check outcome, does not depend on any honest tag. Hence, conditioned on everything else, including $\bm{\Delta}_{\bar b}$, the value layer and all other pads, each corrupt tag is the image of a uniform pad under a bijection of $\Fp$ and is therefore uniform, and we replace it by a fresh uniform variable without changing the joint distribution of $(V_\tau,\bm{\Delta}_{\bar b})$, one tag at a time.

\emph{(iv) Local operations of host $b^\star$.} Additions, public scalings and public-constant updates produce deterministic functions of $V_s$ and $\bm{\Delta}_{b^\star}$ and add no new information.

\emph{(v) Check outcomes before $\tau$.} Such a check has $\bm{e}=\bm{0}$, so by \cref{clm:residual} it accepts if and only if $\inner{\bm{\rho}_j}{\bm{d}_j}=0$ for every $j$. The tag errors $\bm{d}_j$ are functions of $V_s$, and the challenges are fresh uniform draws, so the outcome is a function of $V_s$ and of randomness independent of $\bm{\Delta}_{\bar b}$. \Cref{alg:checkopen} returns one generic status for all lanes, and challenges and residuals remain inside the modules and on the protected channel. A rejected tag-only check poisons the epoch, after which no challenge is drawn.

\emph{(vi) Traffic and trace.} Protected-channel traffic reveals only lengths and times, which are fixed by $\mu$ and the schedule. For the trace (lengths, timing, host-visible memory and DMA activity, power, thermal and clock telemetry) we use the exact key-independence clause of the trace assumption (\cref{sec:sec-assump,app:func-module}): given the command sequence issued by the host, $\mu$, the public lengths and the accept or abort outcomes, the trace is independent of the global MAC keys, challenges and residuals, since all lanes are computed and aggregated in data-independent time before one generic status is returned and host software cannot drive the module's clock or voltage outside its safe operating range. These conditioning data arise in (i) to (v) without $\bm{\Delta}_{\bar b}$, so the trace adds no information on $\bm{\Delta}_{\bar b}$. This step is exact; only simulating the trace from public data costs $\eps{trace}$, charged between $\Hyb{3}$ and $\Hyb{4}$ in \cref{thm:main}.

After the replacements of (iii), every observation in $V_\tau$ is a function of $\bm{\Delta}_{b^\star}$, the adversary's coins, the value layer, the check outcomes of (v) and fresh uniform variables, none of which depends on $\bm{\Delta}_{\bar b}$, and the trace is conditionally independent of $\bm{\Delta}_{\bar b}$ by (vi). The argument is sequential: each new observation is generated from the preceding view, which already determines the adversary's next submission, and from fresh randomness independent of $\bm{\Delta}_{\bar b}$. By induction on $s$, the view $V_s$ is independent of $\bm{\Delta}_{\bar b}$ for every $s\le\tau$, and $\bm{\Delta}_{\bar b}$ is uniform given $V_s$. The challenges drawn at $\tau$ carry labels bound to the two frozen roots and the check identifier, which were never used before, so they are uniform and independent of $(V_\tau,\bm{\Delta}_{\bar b})$.
\end{proof}

\subsection{Soundness of Freeze-Before-Challenge CheckOpen}

\begin{proof}[Proof of \cref{thm:mac-sound}]
\emph{Step 1: one batch.} Consider a check whose frozen batch has value error $\bm{e}\neq\bm{0}$ and whose challenges are drawn when no wrong-value opening has been accepted before. Condition on the corrupt view $V$ at the time the challenges are drawn. Then $\bm{e}$ and $\bm{d}_1,\dots,\bm{d}_t$ are fixed, $\bm{\Delta}_{\bar b}$ is uniform on $\Fp^t$ by \cref{lem:hidden-key}, and the challenge vectors $\bm{\rho}_1,\dots,\bm{\rho}_t\in\Fp^m$ are uniform and independent of each other and of $\bm{\Delta}_{\bar b}$, because every lane and every batch position has its own label. By \cref{clm:residual} the check accepts exactly on the event $\bigcap_{j=1}^{t}A_j$, where
\[
A_j=\bigl\{\inner{\bm{\rho}_j}{\bm{v}_j}=0\bigr\},\qquad \bm{v}_j=\bm{d}_j-\Delta_j\bm{e} .
\]

\emph{Step 2: one lane.} Choose an index $i^\star$ with $e_{i^\star}\neq 0$. If $\bm{v}_j=\bm{0}$, its $i^\star$-th coordinate gives $\Delta_j=d_{j,i^\star}e_{i^\star}^{-1}$; the inverse exists because $p$ is prime. Since $\Delta_j=\Delta_{b^\star,j}+\Delta_{\bar b,j}$ with $\Delta_{b^\star,j}$ fixed and $\Delta_{\bar b,j}$ uniform, $\Pr[\bm{v}_j=\bm{0}\mid V]\le 1/p$; the remaining coordinates can only exclude this single candidate key. If $\bm{v}_j\neq\bm{0}$, choose $i$ with $v_{j,i}\neq 0$. For any fixed values of the other coordinates of $\bm{\rho}_j$, the map $\rho_{j,i}\mapsto\inner{\bm{\rho}_j}{\bm{v}_j}$ is affine with nonzero slope, hence a bijection of $\Fp$, and $\Pr[A_j\mid V,\bm{v}_j]=1/p$ exactly. Writing $q_j=\Pr[\bm{v}_j=\bm{0}\mid V]\le 1/p$,
\[
\Pr[A_j\mid V]=q_j+(1-q_j)\frac{1}{p}\le\frac{1}{p}+\Bigl(1-\frac{1}{p}\Bigr)\frac{1}{p}=\frac{2p-1}{p^2}<\frac{2}{p}.
\]

\emph{Step 3: independent lanes.} Given $V$, the pairs $(\Delta_{\bar b,1},\bm{\rho}_1),\dots,(\Delta_{\bar b,t},\bm{\rho}_t)$ are mutually independent, and $A_j$ is determined by the $j$-th pair. Therefore
\[
\Pr[\text{accept}\mid V]=\prod_{j=1}^{t}\Pr[A_j\mid V]\le\Bigl(\frac{2p-1}{p^2}\Bigr)^{t}<\Bigl(\frac{2}{p}\Bigr)^{t},
\]
and averaging over $V$ gives the single-batch bound of the theorem. The adversary may choose $\bm{e}$ and $\bm{d}_j$ with its own coins; the bound holds for every such choice because it holds conditioned on $V$.

\emph{Step 4: $Q$ batches.} Number the at most $Q$ checked batches of the execution by the time their challenges are drawn, and let $B_k$ be the event that batch $k$ has $\bm{e}\neq\bm{0}$, is accepted, and no earlier batch with $\bm{e}\neq\bm{0}$ was accepted. An incorrect value can be consumed or released only after an accepted check on its frozen bytes: the ledger admits consumption and \cmd{FINAL_OUTPUT} only in state \lstate{VERIFIED}, consumption rereads the frozen bytes, and a certificate cannot be transferred to another transcript. Such a check makes available the values $z_i=\hat x_i+e_i$, where $\hat x_i$ is the faithful value, so an incorrect value requires $\bm{e}\neq\bm{0}$, and the event that an incorrect value is ever accepted is contained in $\bigcup_kB_k$. On $B_k$ no earlier wrong-value check was accepted, and none was rejected either, because a rejection poisons the epoch and no later challenge is drawn; so Steps 1 to 3 apply to batch $k$ and $\Pr[B_k]\le(2/p)^t$. The union bound gives
\[
\Pr\Bigl[\bigcup_{k=1}^{Q}B_k\Bigr]\le Q\Bigl(\frac{2}{p}\Bigr)^{t}.\qedhere
\]
\end{proof}

\textbf{Tag-only errors.}
\Cref{thm:mac-sound} bounds wrong \emph{values}; it is not a claim that every tag modification is detected with probability $1-(2/p)^t$. If $\bm{e}=\bm{0}$, $\bm{d}_1\neq\bm{0}$ and $\bm{d}_2=\dots=\bm{d}_t=\bm{0}$, then $\bm{v}_1=\bm{d}_1$, $\bm{v}_2=\dots=\bm{v}_t=\bm{0}$, and the check accepts with probability exactly $1/p$. Such an acceptance releases the correct value, so it is not a correctness failure. A surviving tag error can reappear, after local linear operations, in the tag errors of later checks, where, being independent of the keys, it again leaks nothing about them (item (v) in the proof of \cref{lem:hidden-key}).

\textbf{Remark (shared-challenge cancellation).}
The product in Step 3 requires independent challenge vectors per lane. Suppose instead that one vector $\bm{\rho}$ were reused, $\bm{\rho}_1=\dots=\bm{\rho}_t=\bm{\rho}$. An adversary submits any $\bm{e}\neq\bm{0}$ and leaves all tags faithful, $\bm{d}_1=\dots=\bm{d}_t=\bm{0}$. Then the complete residual of lane $j$ is $-\Delta_j\inner{\bm{\rho}}{\bm{e}}$, and all $t$ lanes vanish simultaneously whenever $\inner{\bm{\rho}}{\bm{e}}=0$, for every value of the keys. Since $\inner{\bm{\rho}}{\bm{e}}$ is uniform for $\bm{e}\neq\bm{0}$, the wrong value is accepted with probability at least $1/p\approx 2^{-61}$, whatever $t$ is. The $t$ lanes then give essentially the guarantee of a single lane: for $Q=2^{32}$ the union bound becomes about $Q/p\approx 2^{-29}$ instead of $2^{-148}$, because the per-lane events share the single linear form $\inner{\bm{\rho}}{\bm{e}}$ and are no longer independent. \sys therefore derives $\rho_{j,i}$ under distinct labels for every lane $j$ and position $i$, and a configuration that reuses one challenge vector across lanes is rejected (row~8 of \cref{tab:attacks}).

\textbf{Remark (why the challenge must follow the freeze).}
Step 1 uses that $\bm{e}$ and $\bm{d}_j$ are fixed before the challenges exist and that the host never sees them. If the host learned $\bm{\rho}_1,\dots,\bm{\rho}_t$ before its bytes were fixed, because the challenges were derived before the freeze, exposed to the host, or applied to a buffer that the host could still modify, it could pick a nonzero $\bm{e}$ in the common kernel $\{\bm{e}:\inner{\bm{\rho}_j}{\bm{e}}=0\ \text{for all }j\}$, which has dimension at least $m-t\ge 1$ when $m>t$, and keep $\bm{d}_j=\bm{0}$. Every residual would then vanish for every key, and the wrong value would be accepted with probability $1$. Likewise, if a module verified one buffer and later consumed another, the consumed value could carry an arbitrary error that no check ever sees. \Cref{alg:checkopen} excludes both attacks: challenges are derived only after both modules have frozen and exchanged their roots, they never leave the modules, and consumption rereads only the frozen authenticated bytes (row~6 of \cref{tab:attacks}).

\textbf{Remark (the proof does not apply over $\Z_{2^{64}}$).}
Step 2 divides by $e_{i^\star}$ and uses that a nonzero linear form over a field is uniform. Both steps fail in the ring $\Z_{2^{64}}$, because $2^{63}$ is a zero divisor. Consider the same check with keys and challenges uniform in $\Z_{2^{64}}$, a batch whose only nonzero error is $e_1=2^{63}$ (the opened value has its most significant bit flipped), and faithful tags, $\bm{d}_j=\bm{0}$. For an even key, $\Delta_je_1=(\Delta_j/2)\cdot 2^{64}\equiv 0 \pmod{2^{64}}$, so $\bm{v}_j=\bm{0}$ and lane $j$ accepts for every challenge; this single event already has probability $1/2$, whereas the field argument allows at most $1/p$. For an odd key, $v_{j,1}=-\Delta_j2^{63}\equiv 2^{63}$ and the residual $\rho_{j,1}2^{63}$ vanishes for every even challenge, with probability $1/2$ instead of $1/p$. Each lane therefore accepts with probability $1/2+1/4=3/4$, and with $t=3$ independent lanes the wrong value is accepted with probability $(3/4)^3=27/64>0.42$. Ring arithmetic needs a SPDZ2k-style representation over a larger ring with its own analysis~\cite{cramer2018spdz2k} (\cref{sec:proto-repr}).

\subsection{Inputs, Arithmetic and Outputs}

In this subsection we write the party index as $\ell\in\{0,1\}$, because $b$ denotes the second component of a Beaver triple $(\auth{a},\auth{b},\auth{c})$~\cite{beaver1991efficient}. A party-$\ell$ tuple is $\authb{x}{\ell}=(x_\ell,\gamma^{x}_{\ell,1},\dots,\gamma^{x}_{\ell,t})$ with $x_0+x_1=x$ and $\gamma^{x}_{0,j}+\gamma^{x}_{1,j}=\Delta_jx$.

\begin{proof}[Proof of \cref{thm:arith}]
\emph{Part 1: exact operation with correct modules and honest messages.} Each item verifies both the value equation and the tag equation of the output.

\emph{Input ($\Pi_{\mathsf{Input}}$).} For a fresh authenticated mask $\auth{r}$ and the bound public difference $\varepsilon=x-r$, the input tuple is $\bigl(r_\ell+\one[\ell=0]\varepsilon,\ \gamma^{r}_{\ell,j}+\Delta_{\ell,j}\varepsilon\bigr)$. Its value shares sum to $r+\varepsilon=x$ and its tags to $\Delta_jr+(\Delta_{0,j}+\Delta_{1,j})\varepsilon=\Delta_j(r+\varepsilon)=\Delta_jx$.

\emph{Linear operations ($\Pi_{\mathsf{Lin}}$).} For $\auth{x}+\auth{y}$ both parties add componentwise, so the sums are $x+y$ and $\Delta_jx+\Delta_jy=\Delta_j(x+y)$. For a public scalar $\eta$ both parties multiply componentwise, giving $\eta x$ and $\Delta_j\eta x$. To add a public constant $c_0$, party $0$ adds $c_0$ to its value and each party $\ell$ adds $\Delta_{\ell,j}c_0$ to its tag, giving $x+c_0$ and $\Delta_jx+\Delta_jc_0$. Tensors are handled elementwise.

\emph{Scalar multiplication ($\Pi_{\mathsf{Mult}}$).} Let $\bigl(\auth{a},\auth{b},\auth{c}\bigr)$ be a fresh triple with $c=ab$, and let $d=x-a$ and $e=y-b$ be the values opened through \cref{alg:checkopen}. Party $\ell$ sets
\begin{align*}
z_\ell&=c_\ell+d\,b_\ell+e\,a_\ell+\one[\ell=0]\,de,\\
\gamma^{z}_{\ell,j}&=\gamma^{c}_{\ell,j}+d\,\gamma^{b}_{\ell,j}+e\,\gamma^{a}_{\ell,j}+\Delta_{\ell,j}\,de .
\end{align*}
The value shares sum to
\[
c+db+ea+de=ab+(x-a)b+(y-b)a+(x-a)(y-b)=xy,
\]
and the tags sum to $\Delta_jc+d\Delta_jb+e\Delta_ja+\Delta_jde=\Delta_j(c+db+ea+de)=\Delta_jxy$.

\emph{Matrix multiplication.} For $X\in\Fp^{m\times k}$, $Y\in\Fp^{k\times n}$ and a fresh matrix triple $(A,B,C)$ with $C=AB$ and elementwise tag matrices $\Gamma^{A}_{\ell,j},\Gamma^{B}_{\ell,j},\Gamma^{C}_{\ell,j}$, let $D=X-A$ and $E=Y-B$ be opened. Party $\ell$ sets $Z_\ell=C_\ell+DB_\ell+A_\ell E+\one[\ell=0]DE$ and $\Gamma^{Z}_{\ell,j}=\Gamma^{C}_{\ell,j}+D\Gamma^{B}_{\ell,j}+\Gamma^{A}_{\ell,j}E+\Delta_{\ell,j}DE$. Keeping every factor on its side,
\begin{align*}
C+DB+AE+DE&=AB+(X-A)B+A(Y-B)\\
&\quad+(X-A)(Y-B)=XY,
\end{align*}
and, because the scalar $\Delta_j$ commutes with matrices, the tag matrices sum to $\Delta_j(C+DB+AE+DE)=\Delta_jXY$. Replacing $AE$ by $EA$ is wrong in general, since matrix products do not commute, and ill-typed for nonsquare shapes.

\emph{Output ($\Pi_{\mathsf{Out}}$).} For a private output $y$ to recipient $O$, a fresh authenticated mask $\auth{o}$ is generated, the complete $o$ is revealed only to $O$, and $\upsilon=y+o$ is opened through \cref{alg:checkopen}; the recipient computes $\upsilon-o=y$. Because $o$ is fresh and uniform, $\upsilon$ is uniform and independent of $y$ for every party that does not hold $o$; for a corrupt recipient, who receives $o$ early, the simulator programs $\upsilon=y+o$ after $\FC$ releases $y$ (\cref{app:proof-main}). The recipient never receives both parties' complete tags: knowing $\Delta_j\upsilon$ for a known nonzero $\upsilon$ would reveal $\Delta_j=(\Delta_j\upsilon)\upsilon^{-1}$ and void \cref{lem:hidden-key} for anyone who later obtained these values. Public outputs are opened directly through \cref{alg:checkopen}.

\emph{Triple correctness is not a MAC property.} If a triple had $c'=ab+\delta$ with $\delta\neq 0$ and tags computed consistently on $c'$, the multiplication would output a correctly authenticated sharing of $xy+\delta$, and no check would ever detect it. The theorem therefore relies on the modules computing $c=ab$ inside the trusted boundary (\AuthShare{} in \cref{sec:proto-setup}); a replacement preprocessing service must supply a proof of triple correctness in addition to authenticated shares (row~4 of \cref{tab:attacks}).

\emph{Part 2: malicious hosts.} Let $G_{\mathsf{open}}$ be the event that no wrong-value opening is ever accepted and $G_{\mathsf{bind}}$ the event that every binding holds. By \cref{thm:mac-sound}, $\Pr[\neg G_{\mathsf{open}}]\le Q(2/p)^t$, and $\neg G_{\mathsf{bind}}$ is charged to $\eps{bind}$ in \cref{thm:main}. We show that on $G_{\mathsf{open}}\cap G_{\mathsf{bind}}$ every execution either aborts, or releases exactly $C(x^\star)$, where $x^\star$ consists of the honest inputs and the inputs extracted from corrupt owners. We proceed by induction over the gates of $\mu$ in manifest order, which the ledger enforces, and maintain three invariants for all gates processed so far without abort:
\begin{enumerate}[leftmargin=*,nosep,label=(I\arabic*)]
\item every value opened, consumed or released equals its faithful value;
\item the faithful value of every circuit wire equals the value of that wire in $C$ on $x^\star$;
\item every preprocessing item that was consumed was issued once, for this gate and this role, and consumed once.
\end{enumerate}
(I1) holds on $G_{\mathsf{open}}$: every opening is a check that reconstructs $z_i=\hat x_i+e_i$ from frozen bytes, where $\hat x_i$ is the faithful value, accepted checks have $\bm{e}=\bm{0}$, and consumption rereads the frozen bytes. (I3) holds on $G_{\mathsf{bind}}$: \cmd{RESERVE_PREP} burns identifiers before issuing material, \lstate{CONSUMED} records one-time use, and any reuse, role swap, cross-gate consumption or post-challenge replacement is an invalid ledger transition that aborts. It remains to show (I2) gate by gate.

\emph{Input gates.} For an honest owner, the authenticated $\varepsilon=x-r$ cannot be altered by the corrupt host, and both modules bind the same $\varepsilon$ or abort; the faithful value is the honest input by Part 1. For a corrupt owner, whatever canonical $\varepsilon^\star$ both modules bind defines the extracted input $x^\star=r+\varepsilon^\star$; a different $\varepsilon$ sent to the two modules causes abort. Canonical encoding alone does not place $x^\star$ in the declared interface, so the proof uses the validation step of $\Pi_{\mathsf{Input}}$ (\cref{fig:arith}): after binding, both modules compute $x^\star$, which they can because they generated $r$, and abort unless $x^\star\in\mathsf{In}_\nu$ (\cref{app:func-circuit}); at the public boundaries declared in $\mu$ they evaluate every predicate of $\mathcal{V}$ on the bound inputs and abort if one fails. The public schedule of $\mu$ places every input gate before every validity check, and both before the first output opening, so no output precedes such an abort. These outcomes are exactly the vector $\mathsf{status}$ of $\FC$, which is part of $\Leak_C$ (\cref{def:leakage}), and the simulator emulates the same abort (\cref{app:proof-main}). A commitment-bound input is the value decoded by the protected loader, or abort. Hence the execution continues exactly on the inputs that $\FC$ accepts.

\emph{Linear gates.} Faithful tuples are updated by the same local rules, so Part 1 transfers (I2) from the inputs of the gate to its output. Deviations of the corrupt host at such a gate change only its stored tuples, which appear as errors $(\bm{e},\bm{d}_j)$ in a later check.

\emph{Multiplication gates.} By (I3) the triple is fresh and bound to the gate, and by the premise on $\mathcal{F}_{\mathsf{T}}$ it satisfies $c=ab$. By (I1) the opened values are $d=\hat x-a$ and $e=\hat y-b$, where $\hat x$ and $\hat y$ are the faithful values of the two input wires. Part 1 then shows that the faithful output value equals $\hat x\hat y$, which by (I2) for the inputs is the value of the gate in $C$. The matrix case is identical with the order of factors preserved.

\emph{Nonlinear and FSS gates.} Their masked openings are checks, so (I1) applies. Their outputs are fresh epilogue tuples whose sum is the gate value by \cref{thm:epilogue} and \cref{thm:bitdec,thm:cmp-trunc,thm:lut,thm:interval} (\cref{app:proof-gates}). A direct protected gate (\cref{sec:proto-direct}) recomputes $x=c-r$ from its verified opening inside the modules and shares the exact gate value with \AuthShare{}, so (I2) holds by the premise on $\mathcal{F}_{\mathsf{T}}$.

\emph{Output gates.} By (I1) the opened value is $\upsilon=\hat y+o$ with $\hat y$ the faithful output value, and the recipient computes $\hat y$, which equals the corresponding output of $C(x^\star)$ by (I2). The release guard admits \cmd{FINAL_OUTPUT} only after \lstate{VERIFIED}, and the certificate binds the session, epoch, gate and recipient, so a released value cannot be presented as the output of a different gate or transcript.

The corrupt host can further refuse to submit, delay, or submit malformed frames; each of these causes an abort, which $\FC$ permits because it provides neither fairness nor availability. Hence, conditioned on $G_{\mathsf{open}}\cap G_{\mathsf{bind}}$, malicious behavior either selects an allowed input, preserves the computation, or causes abort.
\end{proof}

\section{Proofs for Exact Gates, Refactoring, and Matrix Checks}
\label{app:proof-gates}

This appendix gives the complete fallback algorithms that \cref{sec:proto-gates} only summarizes (\cref{alg:bitdec,alg:trunc}), the frozen-operand matrix check of \cref{sec:proto-freivalds} (\cref{alg:freivalds}), and full proofs of \cref{thm:bitdec,thm:cmp-trunc,thm:lut,thm:interval,thm:freivalds,lem:bijection,lem:noninv}. The proofs follow the structure used for masked FSS gates in prior work~\cite{jawalkar2024orca,gupta2023sigma}: an exact algebraic identity on the masked value, followed by an argument that the opened quantities are one-time pads. The authenticated layer is new: every identity must hold over $\Fp$ with authenticated sharings, and every secret product goes through the checked multiplication of \cref{fig:arith}.

\textbf{Conventions.}
Throughout, $p=2^n-1$ is a Mersenne prime; the deployed field uses $n=61$ (\cref{sec:proto-repr}). We identify every $v\in\Fp$ with its canonical representative in $\{0,\dots,p-1\}$ and write $\mathsf{bits}(v)=(v_0,\dots,v_{n-1})$ for its $n$-bit binary expansion, \emph{least-significant bit first}. This is the opposite of the most-significant-first path convention of the DPF algorithms (\cref{alg:gen,alg:eval}); the two conventions never meet inside one algorithm. Because $p<2^n$, the all-ones string is never a canonical encoding; the parser rejects such words instead of reducing them. An \emph{authenticated bit} $\auth{\beta}$ is an authenticated field element whose value is promised to lie in $\{0,1\}$. For secret bits $\beta,\beta'$ and a public bit $c$ we use
\begin{gather*}
\beta\oplus\beta'=\beta+\beta'-2\beta\beta',\qquad \neg\beta=1-\beta,\\
c\oplus\beta=c+(1-2c)\beta .
\end{gather*}
The last identity is affine in the secret bit and is therefore a local operation under the linear rules of \cref{fig:arith}; only products of two secret values consume a triple. We write $\Pi_{\mathsf{Mult}}$ for the authenticated multiplication of \cref{fig:arith} and \CheckOpen{} for the batched opening of \cref{alg:checkopen}.

\textbf{Scope of the statements.}
Every correctness statement below is conditioned, exactly as \cref{thm:arith}, on the event that no incorrect value is accepted by \CheckOpen. The unconditional guarantee, which charges that event at most $Q(2/p)^t$, is part of \cref{thm:main} (hybrid $\Hyb{3}$ of \cref{app:proof-main}). Privacy of each opened value is stated here as a one-time-pad property; the simulator of \cref{app:proof-main} uses exactly these properties. The executed CPU reference model (\cref{sec:eval-attack}) tests toy instances of \cref{thm:bitdec,thm:cmp-trunc,thm:lut,thm:interval} and the truncation-fiber sampler; such finite tests catch transcription errors but are neither a machine-checked proof nor evidence for the statistical bounds, which follow from the analysis alone.

\subsection{Canonical Bit Decomposition}

\Cref{alg:bitdec} is our conservative bit decomposition. Its joint sampler draws $r\in\Fp$ by rejection sampling (sample 61~bits, reject the single value $p$) and \emph{then} derives the bits of that same $r$. Sampling $n$ independent uniform bits instead would be wrong: it produces the noncanonical word $2^n-1\equiv 0$ with probability $2^{-n}$ and silently breaks the relation between $\auth{r}$ and its bits, a small instance of the joint-distribution obligation of \cref{lem:bijection}.

\begin{algorithm}[t]
\algstretch
\caption{Canonical bit decomposition over a Mersenne field (\algname{BitDec}) inside the authenticated circuit}
\label{alg:bitdec}
\begin{algorithmic}[1]
\Require authenticated $\auth{x}$ with $x\in\Fp$, $p=2^n-1$; a reserved gate label $\ell$
\Ensure authenticated bits $\auth{x_0},\dots,\auth{x_{n-1}}$ with $x=\sum_{i=0}^{n-1}x_i2^i$
\LineComment{Preprocessing in the joint sampler of $T_0$ and $T_1$}
\State $r\getsr\Fp$ by rejection sampling under label $\ell$; $(r_0,\dots,r_{n-1})\gets\mathsf{bits}(r)$
\State $\auth{r}\gets\AuthShare(r)$; $\auth{r_i}\gets\AuthShare(r_i)$ for $0\le i<n$
\LineComment{Masked opening (one round)}
\State $c\gets\CheckOpen(\auth{x}+\auth{r})$; $(c_0,\dots,c_{n-1})\gets\mathsf{bits}(c)$ \Comment{public}
\LineComment{Loop 1: $d=(c-r)\bmod 2^n$ and final borrow $\one[c<r]$}
\State $\auth{\mathit{br}}\gets 0$
\For{$i=0$ \textbf{to} $n-1$}
  \State $\auth{\tau_i}\gets c_i+(1-2c_i)\auth{r_i}$ \Comment{$c_i\oplus r_i$, local}
  \State $\auth{g_i}\gets\Pi_{\mathsf{Mult}}(\auth{\tau_i},\auth{\mathit{br}})$
  \State $\auth{d_i}\gets\auth{\tau_i}+\auth{\mathit{br}}-2\auth{g_i}$
  \State $\auth{\mathit{br}}\gets(1-c_i)\auth{r_i}+\auth{\mathit{br}}-\auth{g_i}$
\EndFor
\LineComment{Loop 2: subtract the final borrow from $d$}
\State $\auth{v}\gets\auth{\mathit{br}}$
\For{$i=0$ \textbf{to} $n-1$}
  \State $\auth{h_i}\gets\Pi_{\mathsf{Mult}}(\auth{d_i},\auth{v})$
  \State $\auth{x_i}\gets\auth{d_i}+\auth{v}-2\auth{h_i}$
  \State $\auth{v}\gets\auth{v}-\auth{h_i}$
\EndFor
\State \Return $(\auth{x_0},\dots,\auth{x_{n-1}})$
\end{algorithmic}
\end{algorithm}

\begin{proof}[Proof of \cref{thm:bitdec}]
Write $\mathit{br}_i$ for the borrow before iteration $i$ of Loop~1 ($\mathit{br}_0=0$) and $v_i$ for the value of $v$ before iteration $i$ of Loop~2. The opening gives the canonical $c=(x+r)\bmod p$, and since $c,r\in\{0,\dots,p-1\}\subset\{0,\dots,2^n-1\}$ both have well-defined $n$-bit expansions.

\emph{Step 1: Loop~1 is a full subtractor.}
Fix $i$ and suppose $c_i,r_i,\mathit{br}_i\in\{0,1\}$. By the public-XOR identity, $\tau_i=c_i\oplus r_i$; since $g_i=\tau_i\,\mathit{br}_i$, the second identity gives $d_i=\tau_i\oplus\mathit{br}_i=c_i\oplus r_i\oplus\mathit{br}_i$. Enumerating the eight inputs yields
\begin{equation*}
\setlength{\arraycolsep}{3.2pt}
\begin{array}{c|cccccccc}
(c_i,r_i,\mathit{br}_i) & 000 & 001 & 010 & 011 & 100 & 101 & 110 & 111\\\hline
d_i & 0 & 1 & 1 & 0 & 1 & 0 & 0 & 1\\
\mathit{br}_{i+1} & 0 & 1 & 1 & 1 & 0 & 0 & 0 & 1
\end{array}
\end{equation*}
where $\mathit{br}_{i+1}=(1-c_i)r_i+\mathit{br}_i-g_i$. When $\tau_i=0$ we have $c_i=r_i$, so $(1-c_i)r_i=0$ and $g_i=0$, hence $\mathit{br}_{i+1}=\mathit{br}_i$; when $\tau_i=1$ we have $g_i=\mathit{br}_i$, hence $\mathit{br}_{i+1}=(1-c_i)r_i$, which is $1$ for $(c_i,r_i)=(0,1)$ and $0$ for $(1,0)$. This is the borrow-out of a binary full subtractor, and every entry of the table satisfies
\begin{equation*}
d_i-2\,\mathit{br}_{i+1}=c_i-r_i-\mathit{br}_i .
\end{equation*}
In particular all intermediate values are Boolean. Multiplying by $2^i$ and summing over $i<n$ telescopes the borrows:
\begin{equation*}
d-\mathit{br}_n2^n=c-r,\qquad d:=\sum_{i<n}d_i2^i\in[0,2^n).
\end{equation*}
Because $|c-r|<2^n$ and $d\ge 0$, this forces $\mathit{br}_n=\one[c<r]$ and $d=(c-r)\bmod 2^n$.

\emph{Step 2: the Mersenne wrap equals the borrow.}
As integers, $c=x+r-\one[x+r\ge p]\,p$. If $x+r<p$ then $c-r=x\ge 0$; otherwise $c-r=x-p<0$ because $x<p$. Hence the wrap $\one[x+r\ge p]$ equals $\one[c<r]=\mathit{br}_n$, and
\begin{equation*}
x=c-r+\mathit{br}_n\,p=\bigl(d-\mathit{br}_n2^n\bigr)+\mathit{br}_n(2^n-1)=d-\mathit{br}_n .
\end{equation*}
Thus if $\mathit{br}_n=0$ the answer is $d$, and if $\mathit{br}_n=1$ it is $d-1$; this is where $p=2^n-1$ is used.

\emph{Step 3: Loop~2 subtracts the final borrow exactly.}
With $v_0=\mathit{br}_n$ and Boolean $d_i,v_i$, we have $x_i=d_i\oplus v_i$ and $v_{i+1}=v_i-d_iv_i=v_i(1-d_i)$. The four cases $(d_i,v_i)\in\{00,10,01,11\}$ give $(x_i,v_{i+1})\in\{00,10,11,00\}$, and in each case $x_i-2v_{i+1}=d_i-v_i$. Summing as before,
\begin{equation*}
\sum_{i<n}x_i2^i-v_n2^n=d-\mathit{br}_n=x .
\end{equation*}
Since $0\le x<p$ and $0\le\sum_i x_i2^i<2^n$, the case $v_n=1$ would require $\sum_i x_i2^i=x+2^n\ge 2^n$, which is impossible. Hence $v_n=0$ and $\sum_i x_i2^i=x$. By uniqueness of binary expansions the $x_i$ are the canonical bits of $x$ (never the all-ones word, because $x<p$). When $\mathit{br}_n=0$ the loop leaves $v\equiv 0$ and returns $d$ unchanged, as required.

\emph{Step 4: authentication.}
Each line of \cref{alg:bitdec} is either an affine map with public coefficients (the $c_i$ are public after the opening) applied to authenticated values, which the local linear rules of \cref{fig:arith} preserve, or a call to $\Pi_{\mathsf{Mult}}$. By \cref{thm:arith}, conditioned on no accepted wrong opening, every output is an authenticated sharing of the value computed in Steps~1--3. \end{proof}

\textbf{Cost and privacy.}
\Cref{alg:bitdec} uses $2n$ authenticated multiplications (a conservative count, since the first product of Loop~1 multiplies by a public zero), one masked opening, and $2n+1$ dependent rounds, i.e., $123$ rounds for $n=61$. Independent instances share rounds by being placed in the same \CheckOpen{} batch; batching amortizes metadata and challenge derivation but cannot shorten the two carry chains. For privacy, $c=x+r$ with $r$ uniform on $\Fp$ and independent of everything the corrupt host holds (its share tuples of $\auth{r}$ and $\auth{r_i}$ are uniform and independent of $r$ by construction of \AuthShare), so $c$ is uniform and independent of $x$. The multiplications open only Beaver differences under fresh triples.

\subsection{Comparison and Exact Signed Truncation}

\Cref{alg:trunc} contains the public-threshold comparator and the exact signed truncation. It follows the classical bit-decomposition route to exact truncation~\cite{damgard2006unconditionally,catrina2010improved}, adapted to the Mersenne field and to authenticated sharings; mixed-circuit alternatives such as edaBits~\cite{escudero2020edabits} would need their own proofs in our model.

\begin{algorithm}[t]
\algstretch
\caption{Public-threshold comparison and exact signed truncation (\algname{Cmp}, \algname{Trunc})}
\label{alg:trunc}
\begin{algorithmic}[1]
\Require canonical authenticated bits $(\auth{x_i})_{i<n}$ from \cref{alg:bitdec} and public $\theta\in[0,2^n]$; for \algname{Trunc}, $\auth{z}$ whose centered value $\bar z$ satisfies $|\bar z|\le(p-1)/2$, and public $f$ with $0<f<n$
\Ensure $\auth{\one[x<\theta]}$; $\auth{\operatorname{sgn}(\bar z)\lfloor|\bar z|/2^f\rfloor}$
\Procedure{Cmp}{$(\auth{x_i})_{i<n},\theta$}
  \If{$\theta\in\{0,2^n\}$}
    \State \Return the public constant $\one[\theta=2^n]$
  \EndIf
  \State $\auth{\mathit{eq}}\gets 1$; $\auth{\mathit{lt}}\gets 0$
  \For{$i=n-1$ \textbf{down to} $0$} \Comment{$\theta_i$: bit $i$ of $\theta$}
    \State $\auth{m_i}\gets\Pi_{\mathsf{Mult}}(\auth{\mathit{eq}},\auth{x_i})$
    \If{$\theta_i=1$}
      \State $\auth{\mathit{lt}}\gets\auth{\mathit{lt}}+\auth{\mathit{eq}}-\auth{m_i}$; $\auth{\mathit{eq}}\gets\auth{m_i}$
    \Else
      \State $\auth{\mathit{eq}}\gets\auth{\mathit{eq}}-\auth{m_i}$
    \EndIf
  \EndFor
  \State \Return $\auth{\mathit{lt}}$
\EndProcedure
\Procedure{Trunc}{$\auth{z},f$}
  \State $(\auth{z_i})_{i<n}\gets\algname{BitDec}(\auth{z})$
  \State $\auth{s}\gets 1-\algname{Cmp}\bigl((\auth{z_i})_{i<n},(p+1)/2\bigr)$ \Comment{$s=\one[\bar z<0]$}
  \State $\auth{u}\gets\auth{z}-2\,\Pi_{\mathsf{Mult}}(\auth{s},\auth{z})$ \Comment{$u=|\bar z|$}
  \State $(\auth{u_i})_{i<n}\gets\algname{BitDec}(\auth{u})$
  \State $\auth{q}\gets\sum_{i=f}^{n-1}2^{i-f}\auth{u_i}$ \Comment{local}
  \State \Return $\auth{q}-2\,\Pi_{\mathsf{Mult}}(\auth{s},\auth{q})$
\EndProcedure
\end{algorithmic}
\end{algorithm}

\begin{proof}[Proof of \cref{thm:cmp-trunc}]
\emph{Comparator.} For $k\in\{0,\dots,n\}$ let $x^{(k)}$ and $\theta^{(k)}$ be the integers formed by the top $k$ bits $x_{n-1}\dots x_{n-k}$ and $\theta_{n-1}\dots\theta_{n-k}$. We show by induction on $k$ that after $k$ iterations
\begin{equation*}
\mathit{eq}=\one[x^{(k)}=\theta^{(k)}],\qquad \mathit{lt}=\one[x^{(k)}<\theta^{(k)}].
\end{equation*}
For $k=0$ both prefixes are empty, so $\mathit{eq}=1$ and $\mathit{lt}=0$. For the step, let $i=n-k-1$ and $m_i=\mathit{eq}\cdot x_i$. The new prefixes are equal iff the old ones were equal and $x_i=\theta_i$; this is $\mathit{eq}\,x_i=m_i$ when $\theta_i=1$ and $\mathit{eq}(1-x_i)=\mathit{eq}-m_i$ when $\theta_i=0$, which are the two updates of \cref{alg:trunc}. The new prefix of $x$ is smaller iff either the old prefix was already smaller, or the old prefixes were equal and $x_i<\theta_i$, i.e., $x_i=0$ and $\theta_i=1$. The two events are mutually exclusive because $\mathit{eq}=1$ implies $\mathit{lt}=0$, so the sum $\mathit{lt}+\mathit{eq}(1-x_i)=\mathit{lt}+\mathit{eq}-m_i$ is again a bit. After $k=n$ iterations $\mathit{lt}=\one[x<\theta]$ for every $\theta\in[0,2^n)$, which \algname{Cmp} returns; the flag $\mathit{eq}=\one[x=\theta]$ stays internal and is not an output. The endpoint thresholds give public constants for $\mathit{lt}$: no $x$ satisfies $x<0$ and every canonical $x$ satisfies $x<2^n$. All products are of bits, so \cref{thm:arith} preserves authentication; the comparator uses at most $n$ authenticated multiplications.

\emph{Sign.} The centered encoding maps $\bar z\in[0,(p-1)/2]$ to itself and $\bar z\in[-(p-1)/2,-1]$ to $p+\bar z\in[(p+1)/2,p-1]$. Hence the canonical $z$ satisfies $z\ge(p+1)/2$ iff $\bar z<0$, and $s=1-\one[z<(p+1)/2]=\one[\bar z<0]$. The threshold $(p+1)/2=2^{n-1}$ lies in $[0,2^n)$, so the comparator applies to the canonical bits returned by \cref{thm:bitdec}.

\emph{Magnitude.} In $\Fp$, $u=(1-2s)z$ equals $z=\bar z$ when $s=0$ and $-z=p-z=-\bar z$ when $s=1$. In both cases the canonical representative of $u$ is $|\bar z|\in[0,(p-1)/2]$, so no modular wrap occurs. By \cref{thm:bitdec}, $(u_i)$ are the canonical bits of $|\bar z|$.

\emph{Shift and sign restoration.} The local sum $q=\sum_{i=f}^{n-1}u_i2^{i-f}$ equals $\lfloor|\bar z|/2^f\rfloor$ as an integer, and $q<2^{n-f}\le p$, so it is also the canonical field element. Finally $(1-2s)q$ equals $q$ for $\bar z\ge 0$ and $p-q$, the encoding of $-q$, for $\bar z<0$. Therefore the output encodes $\operatorname{sgn}(\bar z)\lfloor|\bar z|/2^f\rfloor$, which is truncation toward zero; for $\bar z=0$ it is $0$. Every step is local affine or $\Pi_{\mathsf{Mult}}$, so \cref{thm:arith,thm:bitdec} give authenticated outputs conditioned on no accepted wrong opening.
\end{proof}

\textbf{Remarks.}
(i) For the particular threshold $2^{n-1}$ all lower threshold bits are zero, so $\mathit{lt}$ can change only in the first iteration and the comparator output equals $1-x_{n-1}$; the equality products after that iteration do not affect the result. Our cost accounting nevertheless charges the generic comparator, giving $2n+n+1+2n+1=5n+2$ authenticated multiplications per exact truncation ($307$ for $n=61$), all on sequential carry chains. With the two masked openings of the decompositions they occupy $5n+4=309$ sequential rounds ({30.9~ms} on LAN); this is why the compiler treats \cref{alg:trunc} as a fallback (\cref{sec:proto-gates}).
(ii) A fixed-point multiplication first computes the integer product and then applies \algname{Trunc}. The declared magnitude must remain below $p/2$ before truncation; otherwise the modular product has already lost the intended signed integer. The manifest's numeric contract states these bounds, and the proof above assumes them rather than enforcing them.

\subsection{Authenticated Small Lookup Tables}

\begin{proof}[Proof of \cref{thm:lut}]
Recall \cref{alg:lut}: the protected sampler draws a fresh uniform $r\in\bits{w}$, authenticates its bits, and generates DPF keys $(k_0,k_1)$ for the point function $f_{r,1}$ with the \emph{same} $r$ as target and payload $1\in\Fp$. The circuit computes $\auth{u_i}=\auth{x_i}+\auth{r_i}-2\,\Pi_{\mathsf{Mult}}(\auth{x_i},\auth{r_i})$ and opens the vector $u$ with one batched \CheckOpen; both modules check that every $u_i\in\{0,1\}$ and that they hold the same vector. Module $T_b$ then computes $\tilde y_b=\sum_{a\in\bits{w}}T[a\oplus u]\cdot\Eval(k_b,a)$ over the full public domain and hands $\tilde y_b$ to the epilogue.

\emph{Correctness.} Conditioned on no accepted wrong opening, \cref{thm:arith} and the XOR identity give $u=x\oplus r$. By \cref{thm:dpf-correct} applied to the depth-$w$ tree, $\Eval(k_0,a)+\Eval(k_1,a)=\one[a=r]$ for every $a\in\bits{w}$. Summing the two raw outputs and exchanging the finite sums,
\begin{align*}
\tilde y_0+\tilde y_1&=\sum_{a\in\bits{w}}T[a\oplus u]\,\one[a=r]\\
&=T[r\oplus u]=T[r\oplus x\oplus r]=T[x].
\end{align*}
By \cref{thm:epilogue}, the epilogue turns $(\tilde y_0,\tilde y_1)$ into a fresh authenticated sharing of $T[x]$. The table index $a\oplus u$ depends only on the public $u$ and the public loop counter, so the declared full-domain scan has a fixed, input-independent access pattern.

\emph{Joint preparation.} The identity uses that the authenticated bits and the DPF target are the same $r$. If the target were resampled as an independent uniform $r'$, each object would still be marginally uniform, but the output would become $T[r'\oplus x\oplus r]$, which is not $T[x]$. The sampler therefore derives both from one logical draw, as required by \cref{lem:bijection}.

\emph{Mask privacy.} For every fixed $x$, the map $r\mapsto x\oplus r$ is a bijection on $\bits{w}$, so $u$ is uniform when $r$ is. The only other material related to $r$ that leaves the protected boundary is the corrupt party's share tuples of $\auth{r_i}$, which \AuthShare{} samples uniformly and independently of $r$; the DPF keys and raw evaluations never leave the modules in \ProfT. Hence $u$ is uniform and independent of $x$ jointly with the corrupt party's view, and the Beaver openings inside the $w$ XOR products are uniform because the triples are fresh. The exported output is an epilogue tuple, which \cref{thm:epilogue} shows to be uniform for either fixed party.
\end{proof}

\subsection{Shifted Intervals and Piecewise Polynomials}

\begin{proof}[Proof of \cref{thm:interval}]
Let $[a,b)\subseteq[0,p)$ be proper and nonempty, so $\delta:=b-a\in[1,p-1]$. \Cref{alg:interval} opens $c=(x+r)\bmod p$ for a fresh uniform $r\in\Fp$ and, inside the protected sampler, sets $L=(a+r)\bmod p$, $H=(b+r)\bmod p$ and $\omega=\one[L>H]$, generates two conservative DCFs (\cref{alg:dcf}) with payload $1$ for the thresholds $L$ and $H$, both derived from the same $r$, and a sharing $\omega_0+\omega_1=\omega$. Module $T_0$ hands $\tilde s_0=\omega_0+\DCF_H(c)_0-\DCF_L(c)_0$ to the epilogue, and $T_1$ likewise $\tilde s_1$ (here $a,b$ are endpoints, not party indices).

\emph{Step 1: translation.} Since $c-L\equiv x-a$ and $H-L\equiv\delta\pmod p$, we claim
\begin{equation*}
a\le x<b\iff (c-L)\bmod p<(H-L)\bmod p=\delta .
\end{equation*}
If $a\le x<b$ then $(x-a)\bmod p=x-a<\delta$. If $x<a$ then $(x-a)\bmod p=p+x-a\ge p-a\ge b-a=\delta$ because $b\le p$. If $x\ge b$ then $x-a\ge\delta$. Thus $x\in[a,b)$ iff $c$ lies in the circular half-open interval that starts at $L$ and has length $\delta$. Moreover $L\ne H$ because $\delta\notin\{0,p\}$.

\emph{Step 2: two cases.} If $L<H$, the circular interval is the ordinary interval $[L,H)$: for $c<L$ we get $(c-L)\bmod p=p+c-L\ge p-L>H-L$, and for $c\ge H$ we get $c-L\ge H-L$. Its indicator is $\one[c<H]-\one[c<L]$, which is a bit because $L<H$. If $L>H$, the circular interval is $[L,p)\cup[0,H)$, a disjoint union, and its indicator is $\one[c\ge L]+\one[c<H]=1+\one[c<H]-\one[c<L]$. With $\omega=\one[L>H]$ both cases read
\begin{equation*}
\one[a\le x<b]=\omega+\one[c<H]-\one[c<L].
\end{equation*}

\emph{Step 3: shares.} All of $c,L,H$ lie in $[0,p)\subset[0,2^n)$, so they are valid inputs and thresholds for the $n$-bit DCF with the ``$<$'' orientation. By \cref{thm:dcf}, $\DCF_H(c)_0+\DCF_H(c)_1=\one[c<H]$ and likewise for $L$. Summing the two raw shares gives $\tilde s_0+\tilde s_1=\omega+\one[c<H]-\one[c<L]=\one[a\le x<b]$, and \cref{thm:epilogue} yields an authenticated sharing of the indicator. Empty and full intervals are the public constants $0$ and $1$ and need no keys. The derivation uses one opening $c$ for both thresholds; deriving $L$ and $H$ from independent masks would require two different openings and would break Step~1.

\emph{Step 4: partitions and polynomials.} Let $0=\tau_0<\tau_1<\dots<\tau_m=p$ with $m\ge 2$ define the disjoint, exhaustive partition of \cref{alg:interval}, so every piece $[\tau_i,\tau_{i+1})$ is proper and nonempty, and let all endpoints, wrap bits and auxiliary shares be derived jointly from one mask $r$ and one opening $c$, with independent function-generation coins per DCF. By Steps~1--3 each indicator $s_i=\one[\tau_i\le x<\tau_{i+1}]$ is correct, and exactly one of them, $s_{i^\star}$, equals $1$. The selected coefficients $\auth{\bar a_j}=\sum_i a_{i,j}\auth{s_i}$ are local affine maps with public $a_{i,j}$, so $\bar a_j=a_{i^\star,j}$. Horner's rule $y_d=\bar a_d$, $y_j=y_{j+1}\cdot x+\bar a_j$ for $j=d-1,\dots,0$ uses $d$ calls to $\Pi_{\mathsf{Mult}}$ and yields $y_0=\sum_{j=0}^{d}a_{i^\star,j}x^j=P_{i^\star}(x)$ in $\Fp$ by \cref{thm:arith}. Where the manifest inserts fixed-point rescaling between Horner steps, \cref{thm:cmp-trunc} makes each rescaling exact. The theorem therefore states equality with the declared finite arithmetic, not with an unapproximated real-valued function.

\emph{Privacy.} As in \cref{thm:bitdec}, $c$ is a one-time pad of $x$ over $\Fp$. The DCF keys, the thresholds $L,H$ and the sharing of $\omega$ remain in the modules, and the only exported values are epilogue tuples and Beaver openings, both uniform.
\end{proof}

\subsection{Distribution-Preserving Refactoring}

The semi-honest DFA design samples a DPF target directly and recovers the mask through an inverse map~\cite{dfa2026trusted}. \Cref{lem:bijection,lem:noninv} state when this preserves the \emph{joint} distribution with every auxiliary object derived from the mask; marginal uniformity of each object alone is not sufficient. The main proof applies them in hybrid $\Hyb{2}$ of \cref{app:proof-main}, only where these hypotheses hold.

\begin{proof}[Proof of \cref{lem:bijection}]
Fix the already determined context $Z=z$. Let $R$ be uniform on the finite set $\mathcal{R}$, let $g_z:\mathcal{R}\to\mathcal{A}$ be a bijection, let $W$ be coins independent of $R$, and let $J$ be the deterministic procedure that produces all auxiliary material. Experiment~1 outputs $(R,g_z(R),J(z,R,W))$; Experiment~2 samples $A$ uniformly on $\mathcal{A}$, sets $R'=g_z^{-1}(A)$ and outputs $(R',A,J(z,R',W))$. Write $\pi(\hat\jmath\mid\hat r)=\Pr\nolimits_W[J(z,\hat r,W)=\hat\jmath]$ for the auxiliary sampler. For every $(\hat r,\hat a,\hat\jmath)$,
\begin{align*}
\Pr\nolimits_1[(\hat r,\hat a,\hat\jmath)]&=|\mathcal{R}|^{-1}\,\one[\hat a=g_z(\hat r)]\,\pi(\hat\jmath\mid\hat r),\\
\Pr\nolimits_2[(\hat r,\hat a,\hat\jmath)]&=|\mathcal{A}|^{-1}\,\one[\hat r=g_z^{-1}(\hat a)]\,\pi(\hat\jmath\mid\hat r).
\end{align*}
The indicators coincide because $g_z$ is a bijection, which also gives $|\mathcal{R}|=|\mathcal{A}|$; the conditional law of the auxiliary material given the mask is the same sampler in both experiments. Hence the two joint distributions are equal for every $z$, and averaging over the law of $Z$ preserves equality. The hypothesis that $R$ is uniform \emph{given} $Z=z$ is essential: a mask correlated with the context would make the two sides differ.
\end{proof}

\textbf{Joint equality.}
Suppose a gate needs two DPF targets $\alpha_1=r$ and $\alpha_2=r+1$ derived from one mask. Resampling $\alpha_1$ and $\alpha_2$ as two independent uniforms keeps both marginals uniform but destroys the relation $\alpha_2=\alpha_1+1$, so the gate computes a different function. \Cref{lem:bijection} permits instead one uniform draw $A$, the recovery $R'=g_z^{-1}(A)$, and the derivation of the second target from $R'$ inside $J$. The same reasoning covers the LUT and interval gates above, where the authenticated bits, the DPF or DCF targets, the wrap bit and all auxiliary shares are functions of one logical draw.

\begin{proof}[Proof of \cref{lem:noninv}]
Let $g$ be arbitrary and let $R$ have any distribution on a finite set. Sample $A$ from the law of $g(R)$ and then $R'$ from the conditional law of $R$ given $g(R)=A$, which is well defined on the support of $g(R)$. For every $(\hat r,\hat a)$,
\begin{align*}
\Pr[R'=\hat r,A=\hat a]&=\Pr[g(R)=\hat a]\,\Pr[R=\hat r\mid g(R)=\hat a]\\
&=\Pr[R=\hat r,\,g(R)=\hat a].
\end{align*}
Thus $(R',A)$ has the law of $(R,g(R))$. Every auxiliary object is subsequently derived from $R'$ with independent coins, so the argument of \cref{lem:bijection} extends the equality to the full joint distribution.
\end{proof}

\textbf{Truncation sampler.}
For an unsigned right shift of a uniform $\ell$-bit mask by $h$ bits, $g(R)=\lfloor R/2^h\rfloor$. Here $A$ is uniform on $\ell-h$ bits and the fiber of each $A$ is $\{2^hA+U:U\in[0,2^h)\}$, so the conditional sampler is $R'=2^hA+U$ with $U$ uniform on $h$ bits. Every mask then has probability $2^{-(\ell-h)}2^{-h}=2^{-\ell}$, as required. The sampler is specific to this case. For example, if $R$ is uniform on $\Fp$ with $p=2^{61}-1$, the top value of $\lfloor R/2^h\rfloor$ has only $2^h-1$ preimages while every other value has $2^h$, so $A$ is not uniform and must be sampled with probabilities proportional to fiber sizes. Signed shifts, nonuniform masks and other moduli likewise need their own conditional sampler; \cref{lem:noninv} states what that sampler must satisfy, not that a generic one exists.

\subsection{Frozen-Operand Freivalds Check}

\Cref{alg:freivalds} is the optional verified matrix subroutine of \cref{sec:proto-freivalds}: Freivalds' randomized product check~\cite{freivalds1977probabilistic} run inside the protected verifier on a frozen candidate and on operands bound to their ledger records. In this subsection $m,k,n$ denote matrix dimensions. The subroutine applies only where exposing the operands to that GPU is already allowed; it never authorizes exposing global secret triple masks.

\begin{algorithm}[t]
\algstretch
\caption{Frozen-operand Freivalds check (\algname{VerifyMatrix}) inside $T_b$}
\label{alg:freivalds}
\begin{algorithmic}[1]
\Require handles of $A\in\Fp^{m\times k}$ and $B\in\Fp^{k\times n}$ bound to authoritative input or preprocessing records; candidate $C\in\Fp^{m\times n}$ from the GPU; repetitions $s$ (default $s=3$)
\Ensure a verified immutable handle to $C$, or \Abort
\If{the manifest does not permit exposing $A,B$ to this GPU}
  \State \Abort \Comment{never for global triple masks}
\EndIf
\State check dimensions and the canonical encoding of every element of $A,B,C$; on failure \Abort
\State check that the digests of the consumed $A,B$ bytes match their ledger records; on failure \Abort
\State freeze $C$ into protected storage or AEAD-sealed immutable pages; $\rho_C\gets\Hash(\text{descriptor}\,\|\,C)$
\For{$\ell=1$ \textbf{to} $s$}
  \State draw secret $\bm{v}_\ell\getsr\Fp^{n}$ under a label bound to $(\text{session},\text{epoch},\text{gate},\rho_C,\ell)$
  \State $\bm{g}\gets B\bm{v}_\ell$; $\bm{g}'\gets A\bm{g}$
  \If{$C\bm{v}_\ell\ne\bm{g}'$} \Comment{$C$ read from frozen bytes only}
    \State poison the epoch; \Abort
  \EndIf
\EndFor
\State for each transformed tag array, repeat the loop above on its own equation
\State mark the handle \lstate{VERIFIED}; consumers read only the frozen bytes of $C$
\end{algorithmic}
\end{algorithm}

\begin{proof}[Proof of \cref{thm:freivalds}]
Let $E=C-AB$, where $C$ is the frozen candidate and $A,B$ are the bound operands, and suppose $E\ne 0$. Because $\rho_C$ is fixed before any $\bm{v}_\ell$ is derived and the challenge label binds $\rho_C$, the matrix $E$ is determined before the challenges, which are secret, uniform, and independent of $E$ (with ideal randomness, i.e., from hybrid $\Hyb{2}$ of \cref{app:proof-main} on). A repetition accepts iff $C\bm{v}_\ell=A(B\bm{v}_\ell)$, i.e., $E\bm{v}_\ell=\bm{0}$. Pick a nonzero row $\bm{e}$ of $E$ and an index $j$ with $e_j\ne 0$. The event $E\bm{v}_\ell=\bm{0}$ implies $\inner{\bm{e}}{\bm{v}_\ell}=0$. Conditioned on all coordinates of $\bm{v}_\ell$ except $v_{\ell,j}$, this linear equation has exactly one solution $v_{\ell,j}=-e_j^{-1}\sum_{i\ne j}e_iv_{\ell,i}$, which the uniform $v_{\ell,j}$ hits with probability exactly $1/p$. Hence each repetition accepts with probability at most $1/p$, and since the $s$ challenge vectors are independent while $E$ is fixed, all $s$ repetitions accept with probability at most $p^{-s}$.

The two side conditions are necessary. Without freezing, a host that observes $\bm{v}_\ell$ could choose a candidate with $E\ne 0$ and $E\bm{v}_\ell=\bm{0}$ after the fact. Without operand binding, the module could verify one product and later consume a different one; the ledger check on the operand digests and the rule that consumers read only the frozen $C$ exclude both. A check that covers only the value matrix does not authenticate a different tag matrix, hence the per-lane repetition in \cref{alg:freivalds}. A union bound over at most $Q_F$ invocations bounds the probability that any wrong product is accepted by $Q_F/p^s$; for $s=3$ this is $Q_F\cdot p^{-3}<Q_F\cdot 2^{-182}$, which \cref{thm:main} adds to its error term (in $\Hyb{3}$) when the subroutine is enabled.
\end{proof}

\textbf{Cost.}
One repetition computes $\bm{g}=B\bm{v}$, $\bm{g}'=A\bm{g}$ and $C\bm{v}$, i.e., $kn+mk+mn$ field multiply-accumulates, and reads all three bound matrices once, compared with $mkn$ operations for recomputing the product. For small matrices the reads dominate and the check may cost more than direct protected computation, which is why the compiler treats it as an option to be selected by modeled cost rather than as a default; it is disabled in all evaluated configurations.

\section{Proofs of PIR Security and of the Main Theorem}
\label{app:proof-main}

This appendix proves \cref{thm:pir} for both corruption regimes and then \cref{thm:main} by an explicit simulator and a sequence of hybrids $\Hyb{0},\dots,\Hyb{4}$. We use the real/ideal paradigm of \cref{def:security}~\cite{canetti2000security,goldreich2004foundations,hazay2010efficient,lindell2017simulate} with $\FC$ and $\FPIR$ (\cref{app:func-circuit,app:func-pir}), the leakage $\Leak_C$ and $\Leak_{\mathrm{PIR}}$ (\cref{def:leakage}) and the ideal module resource $\mathcal{F}_{\mathsf{T}}$ (\cref{app:func-module}).

\textbf{Notation.}
The adversary $\Adv$ with auxiliary input $z$ statically corrupts the host of party $P_{b^\star}$ for some $b^\star\in\{0,1\}$ (no corruption is the special case with only public traffic), and $\bar b=1-b^\star$ is the honest party. As in \cref{def:security}, $\mu$, $\vec{x}$ and $z$ are fixed up front, and a PPT distinguisher $\mathcal{D}$ receives the joint output (the view of $\Adv$ and the honest outputs). For a wire $\nu$ of the manifest's circuit, $x_\nu$ is its value in $C$ on the honest and the extracted inputs, and $\phi_\nu=(\hat x_{b^\star,\nu},\hat\gamma_{b^\star,\nu,1},\dots,\hat\gamma_{b^\star,\nu,t})\in\Fp^{1+t}$ is its \emph{faithful tuple} (\cref{app:proof-auth}): the tuple that host $b^\star$ would hold had it applied every prescribed local operation to what it actually received. The error terms $\eps{setup},\eps{prf},\eps{channel},\eps{bind},\eps{trace}$ are those of \cref{sec:sec-assump} and refer to the entire admitted execution.

We use one elementary fact repeatedly, the one-time-pad argument behind every masked opening.

\begin{lemma}[One-time-pad openings]
\label{clm:otp}
Let $\mathbb{G}$ be a finite abelian group ($\Fp$ or $\bits{w}$ under $\oplus$). Let $(V,S)$ be arbitrary jointly distributed random variables with $S\in\mathbb{G}$, let $M$ be uniform on $\mathbb{G}$ and independent of $(V,S)$, and let $Y$ be independent of $(V,S,M)$. Then $(V,Y,S+M)$ is distributed as $(V,Y,U)$ with $U$ uniform on $\mathbb{G}$ and independent of $(V,Y,S)$, and conditioned on $(V,Y,S,U=u)$ the mask equals $M=u-S$.
\end{lemma}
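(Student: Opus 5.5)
The argument is a direct computation of the joint law, conditioning on everything except $M$. I will first condition on an arbitrary value $(v,y,s)$ of $(V,Y,S)$. Since $M$ is uniform on $\mathbb{G}$ and independent of $(V,S)$, and $Y$ is independent of $(V,S,M)$, the triple $(V,S,Y)$ is independent of $M$. Hence, given $(V,Y,S)=(v,y,s)$, the variable $M$ is still uniform on $\mathbb{G}$.

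Next I use that translation $m\mapsto s+m$ is a bijection of the finite abelian group $\mathbb{G}$. For $\Fp$ this is field addition. For $\bits{w}$ it is XOR with $s$, and I write the group operation as $+$ in both cases. Because of the bijection, the conditional law of $S+M$ given $(v,y,s)$ is uniform on $\mathbb{G}$. Since this conditional law does not depend on $(v,y,s)$, I conclude that
\[
\Pr[V=v,Y=y,S=s,S+M=u]=\Pr[V=v,Y=y,S=s]\,|\mathbb{G}|^{-1}.
\]
This equals the law of $(V,Y,S,U)$ with $U$ uniform and independent. Marginalizing out $S$ gives the first claim, that $(V,Y,S+M)$ is distributed as $(V,Y,U)$. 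It also gives the stronger statement that $U$ is independent of $(V,Y,S)$, which the simulator needs.

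For the final claim, conditioning on $(V,Y,S,U=u)$ determines the mask uniquely. The inverse translation gives $M=u-S$, where $-S$ is the group inverse; in $\bits{w}$ this is $u\oplus S$. So the conditional law of $M$ is the point mass at $u-S$. This is the ``program the mask afterwards'' step that the hybrid $\Hyb{4}$ uses: the simulator samples $U$ first and defines the hidden mask later.

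I do not expect a real obstacle. The only care needed is to keep the independence bookkeeping explicit. $Y$ must be independent of $M$ as well as of $(V,S)$, so that conditioning on $Y$ does not bias $M$. The independence of $U$ from $S$ must be stated jointly with $(V,Y)$, not only marginally, because later applications condition on the honest value $S$.
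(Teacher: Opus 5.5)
Your proposal is correct and matches the paper's proof. Both use the independence of $M$ from $(V,S,Y)$ and the bijectivity of translation to compute $\Pr[V=v,Y=y,S=s,S+M=u]=\Pr[V=v,Y=y,S=s]\,|\mathbb{G}|^{-1}$, then read off the mask as $u-S$; your explicit independence check is the factorization the paper writes as $\Pr[V=v,S=\sigma]\Pr[Y=y]\Pr[M=u-\sigma]$.
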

\begin{proof}
For all $(v,y,\sigma,u)$, $\Pr[V=v,Y=y,S=\sigma,S+M=u]=\Pr[V=v,S=\sigma]\Pr[Y=y]\Pr[M=u-\sigma]=\Pr[V=v,S=\sigma]\Pr[Y=y]/|\mathbb{G}|$, which is the claimed product distribution; the second statement is immediate.
\end{proof}

In our uses, $Y$ is the corrupt party's share tuple of the authenticated mask $\auth{M}$. \AuthShare{} samples party~$0$'s value and tag shares uniformly and independently and sets party~$1$'s shares to the differences, so either party's tuple is uniform on $\Fp^{1+t}$ and independent of $M$ and of the global keys; this is the hypothesis ``$Y$ independent of $M$''.

\subsection{Proof of the PIR Theorem}
\label{app:proof-pir}

\textbf{Setting.}
The authorized snapshot has the public descriptor $\mathsf{desc}_v=(\mathsf{id},v,N,\mathsf{enc},\mathsf{cm}_v)$: identifier, version $v$, true length $N\ge 2$, encoding of each record into $k$ field coordinates, and commitment $\mathsf{cm}_v$ to all of these and to the ordered record bytes. The padded length is $2^n$ with $n=\lceil\log_2N\rceil$, padded records are zero, and $N=1$ is the constant-record special case with the same checks. Following \cref{alg:pir}, for a query $(\alpha,\mathit{qid})$ module $T_b$ derives the joint DPF for $(\alpha,1)$, scans the canonical snapshot bytes in the public order, hashes the bytes it actually consumes, computes $\tilde y_b=\sum_{i<2^n}\Eval(k_b,i)\cdot D[i]\in\Fp^{k}$ inside the protected path, quarantines $\tilde y_b$ until the digest equals $\mathsf{cm}_v$, reshares each coordinate with a fresh pad as in the epilogue, $y_b=\tilde y_b+(-1)^b\zeta$ with $\zeta\getsr\Fp^{k}$, and releases only $y_b$ with a certificate $\mathsf{cert}_b$ that binds the session, epoch, $\mathit{qid}$, $\mathsf{desc}_v$, role $b$ and the digest of $y_b$. No MAC tag is sent to the client. Both modules learn $\alpha$, so the index privacy proved below is privacy against the corrupt host and relies on module trust, unlike untrusted two-server DPF-PIR (GPU-DPF~\cite{lam2023gpu}, the untrusted mode of DFA~\cite{dfa2026trusted}), which hides $\alpha$ from each server without trusted hardware.

\begin{proof}[Proof of \cref{thm:pir}]
We use the two regimes of \cref{app:func-pir}, which are not combined: in (R1), a malicious server host and an honest client, we prove integrity and index privacy; in (R2), a malicious client only and two honest hosts, we prove database privacy. Throughout, let $\mathsf{Good}$ be the event that setup, protected channels and all bindings (query bindings, snapshot digest, certificates) behave ideally; $\Pr[\neg\mathsf{Good}]\le\eps{setup}+\eps{channel}+\eps{bind}$.

\emph{(R1) integrity.}
Let host $b^\star$ be corrupt and the client honest, and condition on $\mathsf{Good}$.
\begin{enumerate}[leftmargin=*,label=\arabic*)]
\item \emph{One legal index.} The client sends $(\alpha,\mathit{qid},\mathsf{desc}_v)$ to both modules over authenticated confidential channels. Host $b^\star$ relays these messages but can only drop or delay them. The modules compare the two bindings over the protected channel and reject unless they agree and $\alpha<N$. Hence, if processing continues, both modules hold the same legal $\alpha$ and the same descriptor.
\item \emph{Consistent keys.} Both modules derive the DPF for $(\alpha,1)$ under the same joint label, so they hold the two halves of one key pair. By \cref{thm:dpf-correct}, $\Eval(k_0,i)+\Eval(k_1,i)=\one[i=\alpha]$ for all $i<2^n$.
\item \emph{Committed bytes.} Each module hashes exactly the bytes it consumes, in the public order, and releases nothing before the digest matches $\mathsf{cm}_v$. Under $\mathsf{Good}$ the consumed sequence therefore equals the committed padded snapshot $D^\ast$ in both modules, whatever bytes host $b^\star$ tries to substitute.
\item \emph{Correct reduction.} Module arithmetic is ideal, so coordinatewise
\begin{align*}
\tilde y_0+\tilde y_1&=\sum_{i<2^n}\bigl(\Eval(k_0,i)+\Eval(k_1,i)\bigr)D^\ast[i]\\
&=D^\ast[\alpha],
\end{align*}
and the fresh resharing preserves the sum: $y_0+y_1=\tilde y_0+\tilde y_1$.
\item \emph{Release.} The client accepts only a pair of valid certificates with roles $0$ and $1$ for its own $\mathit{qid}$ and $\mathsf{desc}_v$. Under $\mathsf{Good}$ host $b^\star$ can neither forge a certificate nor splice a share from another query, version or role into the pair. Dropping or delaying a share leaves the client without a valid pair, which is an abort.
\end{enumerate}
Consequently the honest client outputs $D^\ast[\alpha]$ or $\bot$, never another value, except with probability $\eps{setup}+\eps{channel}+\eps{bind}$.

\emph{(R1) index privacy.}
The simulator $\Sim^{\mathrm{PIR}}_{b^\star}$ is given $\Leak_{\mathrm{PIR}}$ ($\mathsf{desc}_v$, the public scan length and order, and the message lengths and schedules) and the ability to tell $\FPIR$ to deliver or abort. It must simulate the view of host $b^\star$, i.e., the relayed client-to-module and module-to-module ciphertexts, the snapshot-byte requests of $T_{b^\star}$, the released frame $(y_{b^\star},\mathsf{cert}_{b^\star})$ if it passes through host $b^\star$, module status signals, and traces. Its steps are:
\begin{enumerate}[leftmargin=*,label=\arabic*)]
\item Emulate $\mathcal{F}_{\mathsf{T}}$ for setup and produce the public setup transcript.
\item For every relayed ciphertext, output a string of the permitted length on the public schedule.
\item Request snapshot bytes from $\Adv$ in the public order and length, hash them as $T_{b^\star}$ would, and record whether the digest equals $\mathsf{cm}_v$.
\item If all messages were delivered and the digest matched, sample $y_{b^\star}\getsr\Fp^{k}$ and produce $\mathsf{cert}_{b^\star}$ through the emulated module interface; otherwise output the module's generic abort status.
\item Instruct $\FPIR$ to deliver to the client iff the simulated run reaches the point where the client holds a valid certificate pair; otherwise instruct it to abort.
\item Append the output of the stipulated trace simulator $\Sim_{\mathsf{trace}}$ on the command sequence issued by host $b^\star$, the public descriptor and lengths, and the accept or abort outcomes.
\end{enumerate}
Steps 1 and 2 are exact in the hybrid with ideal setup and channels ($\eps{setup}+\eps{channel}$). Step 3 is identical because the digest check is a deterministic function of the bytes that $\Adv$ supplies and of the public $\mathsf{cm}_v$. In Step 4, the real $y_{b^\star}=\tilde y_{b^\star}+(-1)^{b^\star}\zeta$ with a fresh pad $\zeta$ that is used nowhere else (after replacing derived tapes by random ones, $\eps{prf}$); by \cref{thm:epilogue} it is uniform and independent of $\alpha$, of $D^\ast$ and of everything else in the view. No DPF key or unmasked reduction is exported, so no DPF one-key-privacy assumption is needed. Step 5 applies the delivery condition established for (R1) integrity. Step 6 costs $\eps{trace}$. The index $\alpha$ is never used by the simulator.

\emph{(R2) database privacy.}
The simulator $\Sim^{\mathrm{PIR}}_{\mathrm{cli}}$ has oracle access to $\FPIR$ and must simulate the client's view: the module responses (value shares and certificates) or a rejection. Its steps are:
\begin{enumerate}[leftmargin=*,label=\arabic*)]
\item Receive the client's query messages to $T_0$ and $T_1$ through the emulated module interface and parse their canonical encodings.
\item If the two bindings differ, if an index is out of range, if an encoding is malformed, or if $\mathit{qid}$ is replayed, output the rejection that the modules would output and submit a malformed query to $\FPIR$, which returns $\mathsf{reject}$.
\item Otherwise extract the agreed index $\alpha^\ast$, submit it to $\FPIR$, and receive $D^\ast[\alpha^\ast]\in\Fp^{k}$.
\item For each coordinate $\iota<k$ sample $y_{0,\iota}\getsr\Fp$ and set $y_{1,\iota}=D^\ast[\alpha^\ast]_\iota-y_{0,\iota}$.
\item Produce $\mathsf{cert}_0,\mathsf{cert}_1$ through the emulated module interface, bound to $\mathit{qid}$ and $\mathsf{desc}_v$; append the trace simulator's output.
\end{enumerate}
In the real execution the hosts are honest, so by Steps 2--4 of (R1) integrity the two reductions sum to $D^\ast[\alpha^\ast]$, and the fresh pad makes $y_0$ uniform with $y_1$ determined by the sum; this is exactly Step~4. Fresh pads per query make the responses to different queries independent. Since no MAC tag is delivered, the client's view contains no relation that depends on the global keys, and $\mathsf{desc}_v$ is part of $\Leak_{\mathrm{PIR}}$. Hence the client learns nothing beyond the requested records and the descriptor.

Collecting the terms, each regime is simulated with error at most $\eps{setup}+\eps{prf}+\eps{channel}+\eps{bind}+\eps{trace}$.
\end{proof}
This is a hardware-trusted PIR statement: it does not turn untrusted two-server DPF-PIR into symmetric PIR, and a verifiable-FSS alternative with externally generated keys~\cite{decastro2022vfss} must separately prove client key validity and server-answer integrity (\cref{app:func-pir}).

\subsection{The Simulator for the Circuit Profile}

We now construct the simulator $\Sim$ for $\FC$. $\Sim$ runs $\Adv$ internally, plays the ideal module resource $\mathcal{F}_{\mathsf{T}}$ for \emph{both} modules (it controls only the ideal representation of the trusted resources, never the honest party's prescribed inputs), and interacts with $\FC$ on behalf of the corrupt party. From $\FC$ it receives only what the ideal world permits: $\Leak_C$, which contains the validation outcomes $\mathsf{status}$, and the outputs of corrupt recipients. It has to simulate the view of host $b^\star$, which consists of (i) every response of $T_{b^\star}$ on an approved I/O path: party-local preprocessing tuples, epilogue outputs, opened values, certificates and status codes; (ii) the peer-channel and client-channel ciphertexts that the host relays; (iii) public messages; and (iv) traces.

\textbf{Simulator state.}
$\Sim$ keeps the following state, which it updates in the manifest's public order.
\begin{enumerate}[leftmargin=*,label=\textbf{Sim\arabic*}.]
\item \emph{Keys.} Sample $\Delta_j\getsr\Fp$ and $\Delta_{b^\star,j}\getsr\Fp$ for $j\le t$, set $\Delta_{\bar b,j}=\Delta_j-\Delta_{b^\star,j}$, and give $\Delta_{b^\star,j}$ to host $b^\star$ wherever the protocol does. The global keys stay inside $\Sim$.
\item \emph{Ledger.} Run the ledger state machine of \cref{fig:ledger} for both modules on the public manifest $\mu$: labels, reservations, frozen roots, consumed identifiers, epoch status and release certificates.
\item \emph{Faithful tuples.} For every wire $\nu$ reached so far, store $\phi_\nu$ and, for opened wires, the public value. Honest-side shares are never stored; they are \emph{symbolic}, defined by $x_{\bar b,\nu}:=x_\nu-\hat x_{b^\star,\nu}$ and $\gamma_{\bar b,\nu,j}:=\Delta_jx_\nu-\hat\gamma_{b^\star,\nu,j}$.
\end{enumerate}

\textbf{Per-command simulation.}
Given this state, $\Sim$ answers the commands of host $b^\star$ and produces the rest of its view as follows.
\begin{enumerate}[leftmargin=*,label=\textbf{Sim\arabic*}.,start=4]
\item \emph{Setup.} Emulate \cmd{OPEN_SESSION} with the ideal authenticated setup and output its public transcript; relayed ciphertexts are strings of the permitted lengths.
\item \emph{Preprocessing.} For every authenticated random value that \cmd{RESERVE_PREP} hands to host $b^\star$ (input and output masks, triple components, the field masks $\auth{r}$ of \cref{alg:bitdec,alg:interval} and of direct gates, and the mask bits of \cref{alg:bitdec,alg:lut}), sample $\phi\getsr\Fp^{1+t}$, the exact party-local distribution of \AuthShare. The underlying value stays symbolic. The two exceptions, where the corrupt party is entitled to a complete value, are handled in Sim6 and Sim10. Wrap bits and their sharings, DPF and DCF keys and raw shares never leave the modules (\cref{app:proof-gates}), so nothing is simulated for them.
\item \emph{Inputs and validation.} If the owner of input wire $\nu$ is the corrupt party, sample its complete mask $r\getsr\Fp$ and deliver it, receive the broadcast $\varepsilon$ from $\Adv$, check (as the modules do) that both modules would bind the same $\varepsilon$, and compute $x^\ast=r+\varepsilon$; unless $x^\ast\in\mathsf{In}_\nu$, emulate the modules' abort and make $\FC$ abort, and otherwise submit $x^\ast$ to $\FC$ and set the wire tuple as $\Pi_{\mathsf{Input}}$ does, adding $\varepsilon$ to party~$0$'s value share and $\Delta_{b^\star,j}\varepsilon$ to each tag share. If the owner is honest, sample $\varepsilon\getsr\Fp$ as the public masked difference and let the undisclosed mask be the symbolic $r:=x-\varepsilon$; $x\in\mathsf{In}_\nu$ holds, as $\FC$ admits only such inputs. For a commitment-bound input, emulate the protected loader on the bytes that $\Adv$ supplies: on a match extract the canonical committed input, otherwise abort exactly as the loader does. Since $\FC$ is non-reactive and computes $\mathsf{status}$ only after it has recorded every input, the public schedule of $\mu$ places every input gate before every validity check, and both before the first output opening (\cref{sec:threat-goals,app:func-circuit}). At each validity boundary declared in $\mu$, $\Sim$ reads the outcome from $\mathsf{status}$ and, on a failed predicate, outputs the modules' generic abort status and poisons the epoch; $\FC$ halts on the same outcome.
\item \emph{Local linear operations.} Apply additions, public scalings and public-constant updates to the faithful tuples with the formulas of $\Pi_{\mathsf{Lin}}$ (\cref{fig:arith}).
\item \emph{Multiplications.} When the Beaver differences $d=x-a$ and $e=y-b$ of a multiplication are opened (after the check in Sim9 accepts), release fresh $d,e\getsr\Fp$. The hidden triple is defined afterwards by $a:=x-d$, $b:=y-e$, $c:=ab$; the corrupt tuples of $\auth{a},\auth{b},\auth{c}$ were sampled in Sim5 independently of these values. Compute the corrupt output tuple with the formulas of $\Pi_{\mathsf{Mult}}$ from the corrupt triple tuples and $d,e$.
\item \emph{Checks.} When host $b^\star$ submits tuples $(x'_i,\gamma'_{1,i},\dots,\gamma'_{t,i})$ for the wires $\nu_1,\dots,\nu_m$ of a batch, run the metadata, encoding, order and ledger checks as the module would, and abort on failure. Otherwise compute the value-error vector $\bm{e}$ and the tag-error vectors $\bm{d}_j$ with $e_i=x'_i-\hat x_{b^\star,\nu_i}$ and $d_{j,i}=\gamma'_{j,i}-\hat\gamma_{b^\star,\nu_i,j}$, sample fresh challenges $\bm{\rho}_j\getsr\Fp^{m}$, and accept iff $\inner{\bm{\rho}_j}{\bm{d}_j-\Delta_j\bm{e}}=0$ for every $j$. On rejection, poison the epoch, release nothing, and instruct $\FC$ to abort. On acceptance with $\bm{e}\ne\bm{0}$, abort (this is the bad event of $\Hyb{3}$). On acceptance with $\bm{e}=\bm{0}$, release the opened values that Sim6, Sim8, Sim10 and Sim11 prescribe. Missing, delayed, replayed or reordered messages lead to the same timeouts and rejections as in the real protocol.
\item \emph{Outputs} ($\Pi_{\mathsf{Out}}$). By the schedule of $\mu$ (Sim6), $\FC$ has released every output due to $\Sim$ when an output is opened. For a private output to an honest recipient, release $\upsilon\getsr\Fp$ as the opened masked output; its hidden mask is the symbolic $o:=\upsilon-y$. For a private output to a corrupt authorized recipient, sample $o\getsr\Fp$ and deliver it early, when the protocol delivers the output mask; at the opening, after $\FC$ has released $y$, program $\upsilon:=y+o$. For a public output, obtain $y$ from $\FC$ and release $y$. Instruct $\FC$ to deliver to honest recipients iff the simulated run reaches their release with valid certificates; otherwise instruct it to abort.
\item \emph{Nonlinear gates.} For the masked field opening of \cref{alg:bitdec,alg:interval} and of a direct protected gate, release $c\getsr\Fp$ with symbolic mask $r:=c-x$. For \cref{alg:lut}, release $u\getsr\bits{w}$ with symbolic mask $r:=x\oplus u$. For every \cmd{EVAL_FSS} response and every direct-gate output, sample the corrupt output tuple from $\Fp^{1+t}$ and define its value symbolically by the gate semantics of \cref{thm:bitdec,thm:cmp-trunc,thm:lut,thm:interval} or the approved direct function. Carry chains, comparator steps and Horner products are multiplications and use Sim8.
\item \emph{Matrix verification.} If \cref{alg:freivalds} is enabled, emulate \cmd{VERIFY_MATRIX} of $T_{b^\star}$ as follows. The manifest permits it only for operands whose exposure to the GPU of host $b^\star$ is allowed, never for global triple masks, so their authoritative records are public data of $\mu$ or faithful tuples of Sim3. $\Sim$ runs \cref{alg:freivalds} on the bytes that $\Adv$ supplies: it checks the operand digests against these records, freezes the candidate $C$, draws fresh secret challenge vectors $\bm{v}_1,\dots,\bm{v}_s$, and on a failed test poisons the epoch and aborts as the module does. On acceptance with $C\ne AB$, it aborts (the bad event $\mathsf{Bad}_{\mathrm{F}}$ of $\Hyb{3}$). Otherwise $C=AB$ is the product that a faithful host $b^\star$ would compute, so $\Sim$ records it among the faithful tuples, and a protected gate that consumes it returns an output tuple simulated as in Sim11. The honest host computes $AB$ correctly, so $T_{\bar b}$ always accepts.
\item \emph{Traces.} Append the output of the stipulated trace simulator $\Sim_{\mathsf{trace}}$ on the command sequence issued by host $b^\star$, $\mu$, the public lengths and the accept or abort outcomes (\cref{app:func-module}).
\end{enumerate}
Each step is linear in the size of its command, so $\Sim$ is polynomial time. $\Sim$ never uses an honest input, a raw DPF output, or a zero or nonzero pattern of raw DPF outputs. PIR queries are not part of $\FC$: \cref{thm:main} states their realization of $\FPIR$ separately, and \cref{app:proof-pir} simulates them with $\Sim^{\mathrm{PIR}}_{b^\star}$ and $\Sim^{\mathrm{PIR}}_{\mathrm{cli}}$.

\begin{lemma}[Check outcomes depend only on the deviation]
\label{clm:check}
In every hybrid below from $\Hyb{2}$ on, and in the ideal execution with $\Sim$, the outcome of a \CheckOpen{} batch that passes the metadata checks is $\bigwedge_{j\le t}\bigl[\inner{\bm{\rho}_j}{\bm{d}_j-\Delta_j\bm{e}}=0\bigr]$, where $(\bm{e},\bm{d}_j)$ are the differences between the bytes frozen for host $b^\star$ and the faithful tuples, and $\bm{\rho}_j$ are fresh uniform challenges independent of $(\bm{e},\bm{d}_j)$.
\end{lemma}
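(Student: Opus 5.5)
The plan is to reduce the claim to the residual identity of \cref{clm:residual}, separately for the real hybrids and for the simulated execution. In the hybrids from $\Hyb{2}$ on, the modules are ideal, the randomness is ideal, and all bindings hold. For a batch that passes the metadata, encoding, order and ledger checks, the bytes consumed by the check are exactly the frozen bytes. I would then define $\bm{e}$ and $\bm{d}_j$ as the differences between these frozen bytes and the faithful tuples of host $b^\star$. This is well defined because the canonical parser admits only vectors of field elements, and the faithful tuple of each listed wire is fixed by the ledger's wire list.

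The honest host submits the prescribed tuples, and those satisfy the MAC relation with the faithful tuples. That relation holds by \cref{thm:arith} (Part~1) and \cref{thm:epilogue}, whatever was opened earlier. Since the two modules' residuals sum per lane, \cref{clm:residual} gives $\sigma_{0,j}+\sigma_{1,j}=\inner{\bm{\rho}_j}{\bm{d}_j-\Delta_j\bm{e}}$. The check accepts iff this is zero for every $j$, and \cref{alg:checkopen} aggregates the lanes into one conjunction.

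Next I would argue that the challenges are fresh, uniform and independent of the errors. In $\Hyb{2}$ each $\rho_{j,i}$ is an ideal draw under a label that binds $(h_0,h_1,\mathit{cid},j,i)$. Ledger uniqueness of $\mathit{cid}$ and frozen roots means this label has never been used before, so the draw is fresh. Freezing precedes the draw, and the challenges never leave the modules. The errors $(\bm{e},\bm{d}_j)$ are therefore functions of the corrupt view and coins at freeze time, and that view is independent of the later draw. Post-freeze buffer changes cannot alter $(\bm{e},\bm{d}_j)$, because consumption and residual computation reread only the frozen pages; any other substitution is a binding failure, which $\Hyb{2}$ excludes.

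On the ideal side, Sim9 computes $\bm{e}$ and $\bm{d}_j$ from the submitted tuples and the faithful tuples stored in Sim3. It samples $\bm{\rho}_j$ afresh and applies exactly this predicate with its own $\Delta_j$, so the claim holds there by construction. The only point to verify is that the faithful tuples tracked in Sim3, Sim5--Sim8 and Sim10--Sim11 coincide with the hybrid's faithful tuples. They do, since both are obtained by applying the prescribed local rules to what host $b^\star$ actually received.

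I expect the main obstacle to be justifying the independence of the challenges and the errors across adaptive interleavings of batches. The delicate part is that an adversary may schedule overlapping independent batches. I would handle this by the ledger ordering: every freeze for a batch precedes its own challenge draw, and distinct batches use disjoint labels.
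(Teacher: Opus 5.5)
Your proposal is correct and follows the paper's proof: you reduce the outcome to the residual identity of \cref{clm:residual}, derive freshness and independence of the challenges from freeze-before-challenge with root-bound labels that never leave the modules, and observe that Sim9 evaluates the same predicate with identically distributed $\Delta_j$ and $\bm{\rho}_j$. Your extra claim that the simulator's faithful tuples coincide with the hybrid's is not needed for this lemma, since in the ideal execution the statement holds by construction of Sim3 and Sim9. That agreement, understood as both being the same function of the view, is what invariant (ii) of the $\Hyb{3}$-to-$\Hyb{4}$ induction provides.
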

\begin{proof}
The honest host submits its prescribed tuples, so by \cref{clm:residual} the residuals of \cref{alg:checkopen} satisfy $\sigma_{0,j}+\sigma_{1,j}=\inner{\bm{\rho}_j}{\bm{d}_j-\Delta_j\bm{e}}$. Both modules freeze the submitted bytes before the challenges are derived, and the challenge labels bind the frozen roots; in the ideal-randomness hybrids the challenges are therefore uniform and independent of $(\bm{e},\bm{d}_j)$. The candidate values and the challenges are never shown to the host before acceptance, so $\Adv$ fixes its deviation without seeing them. In the ideal execution, $\Sim$ evaluates the same predicate with identically distributed $\Delta_j$ and $\bm{\rho}_j$ (Sim1, Sim9).
\end{proof}

\subsection{Hybrid Argument}

\begin{proof}[Proof of \cref{thm:main}]
We define five hybrids. $\Hyb{0}$ is the real execution and $\Hyb{4}$ is the ideal execution with $\FC$ and $\Sim$; $\Pr[\Hyb{i}]$ denotes the probability that $\mathcal{D}$ outputs $1$ on the joint output of $\Hyb{i}$.

\emph{$\Hyb{0}$: real execution.} $\Adv$ and the honest parties run the protocol of \cref{sec:protocol}, and each module runs the concrete module program: $\mathcal{F}_{\mathsf{T}}$ with its joint random function $R$ instantiated by $\PRF_K$, its peer and client channels by AEAD over the hosts' network, and \cmd{OPEN_SESSION} by attestation and authenticated key exchange.

\emph{$\Hyb{1}$: ideal setup and channels.} Replace the attestation and key exchange of every successful \cmd{OPEN_SESSION} by the ideal authenticated, role-bound interface of $\mathcal{F}_{\mathsf{T}}$ and simulate its public transcript; replace the AEAD traffic of the protected peer channel and of the client channels by ideal private deliveries that preserve only the permitted lengths and public schedule. A distinguisher yields an attack on the setup or on channel confidentiality or integrity, so
\begin{equation*}
|\Pr[\Hyb{0}]-\Pr[\Hyb{1}]|\le\eps{setup}+\eps{channel}.
\end{equation*}

\emph{$\Hyb{2}$: ideal randomness and bindings.} Replace every output of $\PRF_K$ or of the KDF by an independent uniform draw per distinct label, with identical draws for the same joint label in both modules (the random function $R$ of $\mathcal{F}_{\mathsf{T}}$); rejection sampling then yields exactly uniform field elements, and a bounded sampler that exhausts its retries aborts independently of all inputs. Further, let the experiment abort if any binding fails: a collision or forgery on a manifest hash, snapshot digest, frozen-page root or certificate, or a session-freshness violation. Hence
\begin{equation*}
|\Pr[\Hyb{1}]-\Pr[\Hyb{2}]|\le\eps{prf}+\eps{bind}.
\end{equation*}
In $\Hyb{2}$ the ledger guarantees that no randomness label is consumed by two accepted transactions, that roles cannot be swapped, and that no byte is replaced after a challenge exists. Two further replacements are exact. Wherever the implementation samples a DPF or DCF target directly and recovers the mask by an inverse map, \cref{lem:bijection,lem:noninv} show that the joint distribution of mask, target and all auxiliary material equals that of the reference sampler; we apply them only where their joint-sampler conditions hold. Streaming generate-and-consume is replaced by materialized generation followed by evaluation, which by \cref{thm:gac} yields identical outputs. After these replacements the analysis may treat each DPF or DCF as a key pair produced by \cref{alg:gen} (or \cref{alg:dcf}) and consumed by \cref{alg:eval}.

\emph{$\Hyb{3}$: abort on an accepted wrong value.} $\Hyb{3}$ is $\Hyb{2}$ except that the experiment aborts at the first \CheckOpen{} batch that is accepted with a nonzero value-error vector $\bm{e}$; if \cref{alg:freivalds} is enabled, it also aborts at the first accepted candidate with $C\ne AB$. Call these events $\mathsf{Bad}_{\mathrm{MAC}}$ and $\mathsf{Bad}_{\mathrm{F}}$. The two hybrids are identical until a bad event occurs. Before $\mathsf{Bad}_{\mathrm{MAC}}$ the hypothesis of \cref{lem:hidden-key} holds: conditioned on host $b^\star$'s view and its own key shares, the honest key shares are independent and uniform, and every earlier rejection has terminated the epoch, so no adaptive key-learning sequence exists. By \cref{thm:mac-sound}, each batch with $\bm{e}\ne\bm{0}$ is then accepted with probability at most $(2/p)^t$, and a union bound over at most $Q$ batches gives $\Pr[\mathsf{Bad}_{\mathrm{MAC}}]\le Q(2/p)^t$. By \cref{thm:freivalds}, $\Pr[\mathsf{Bad}_{\mathrm{F}}]\le Q_F/p^s$. Therefore
\begin{equation*}
|\Pr[\Hyb{2}]-\Pr[\Hyb{3}]|\le Q(2/p)^t\;\bigl(+\,Q_F/p^s\bigr).
\end{equation*}
Accepted tag-only deviations ($\bm{e}=\bm{0}$, $\bm{d}_j\ne\bm{0}$) are not bad events: the opened values are the faithful ones, and a tag error that persists in a stored wire reappears only in the tag-error vectors of later checks, whose outcome is independent of the keys while $\bm{e}=\bm{0}$ (\cref{clm:check}; item (v) of the proof of \cref{lem:hidden-key}). We do not claim that tag-only tampering is always detected.

\emph{$\Hyb{4}$: ideal execution.} The honest parties forward their inputs to $\FC$, and $\Adv$ interacts with $\Sim$. We show
\begin{equation*}
|\Pr[\Hyb{3}]-\Pr[\Hyb{4}]|\le\eps{trace}.
\end{equation*}
We first exclude traces from both views and show by induction over the public schedule of interaction points that the joint distribution of $\Adv$'s view and the honest outputs is identical in $\Hyb{3}$ and $\Hyb{4}$. The induction hypothesis after $k$ interaction points states that (i) $\Adv$'s view $V_k$ is identically distributed in both hybrids, jointly with the global keys; (ii) the ledger state, the abort status and the faithful corrupt tuples are the same function of $V_k$ and of the corrupt-side preprocessing tuples in both; and (iii) in $\Hyb{3}$ every mask, triple and pad that $\Adv$ has not seen in full is, conditioned on $V_k$, uniform subject only to the relations that $\Sim$ keeps symbolic, while every computed wire $\nu$ carries $x_\nu$. The step distinguishes the kind of the next view element.
\begin{itemize}[leftmargin=*]
\item \emph{Preprocessing tuples} (Sim5). In $\Hyb{3}$ they are produced by \AuthShare{} under fresh labels, hence uniform on $\Fp^{1+t}$ and independent of the value, the keys and $V_k$; $\Sim$ samples the same distribution.
\item \emph{\Epilogue{} and direct-gate outputs} (Sim11). By \cref{thm:epilogue}, for a fixed party the value share and the $t$ tag shares are jointly uniform given the raw outputs and the global keys, and freshness of the pads makes this hold jointly across outputs and jointly with $V_k$.
\item \emph{Masked openings} (Sim6, Sim8, Sim10, Sim11). Each opened value has the form $S+M$ in $\Fp$ or in $\bits{w}$, where $S$ is a function of wire values and earlier randomness, and $M$ is a fresh mask ($r$ of an honest owner's input, the triple components $a,b$, the mask $r$ of \cref{alg:bitdec,alg:interval,alg:lut} or of a direct gate, or an output mask $o$) whose only other trace in $V_k$ is the corrupt share tuple sampled independently in Sim5. By \cref{clm:otp} the opening is uniform and independent of $V_k$, which is what $\Sim$ samples, and the hidden mask is then determined as $M=u-S$; this deferred definition is exactly $\Sim$'s symbolic bookkeeping, so the joint law is unchanged although $\Sim$ does not know $S$. When the corrupt party is entitled to the complete mask (its own input mask, or the output mask of a corrupt recipient), that mask is fresh and uniform when delivered in both hybrids, and the opening is determined by $\Adv$'s broadcast $\varepsilon$ or equals $\upsilon=y+o$ for the output $y$ from $\FC$; $\Sim$ computes it in the same way. For a public output, the opened value is $x_\nu$ in $\Hyb{3}$ by (iii), which equals the value that $\FC$ returns.
\item \emph{Check outcomes} (Sim9). By \cref{clm:check} the outcome is the same predicate of $(\bm{e},\bm{d}_j,\Delta_j,\bm{\rho}_j)$ in both hybrids; $(\bm{e},\bm{d}_j)$ is the same function of $V_k$ and the faithful tuples by (ii), $\bm{\rho}_j$ is fresh, and $\Delta_j$ has the same joint law with $V_k$ by (i). Rejections and the bad event lead to the same abort in both hybrids.
\item \emph{Metadata, ledger decisions, timeouts and ciphertexts} (Sim2, Sim4). These are deterministic functions of $\Adv$'s messages and of public data, or strings of public length, and are identical.
\item \emph{Validation and honest outputs} (Sim6, Sim10). In $\Hyb{3}$ no wrong value is ever accepted, so by \cref{thm:arith,thm:bitdec,thm:cmp-trunc,thm:lut,thm:interval} together with \cref{thm:dpf-correct,thm:dcf,thm:epilogue}, applied inductively over the manifest's gates, every released value equals the corresponding value of $C$ on the honest inputs and the extracted inputs $x^\ast$; $\Pi_{\mathsf{Input}}$ authenticates $r+\varepsilon=x^\ast$, and the committed-input loader extracts the canonical committed input. Input-interface and validity aborts occur in $\Hyb{3}$ exactly where Sim6 emulates them from $x^\ast$ and $\mathsf{status}$. An honest recipient therefore receives $C(\cdot)$ unless $\Adv$ suppresses delivery, and $\Sim$ instructs $\FC$ to deliver or abort under exactly the same condition. The adversary keeps its permitted ability to abort after seeing its own output; no fairness is claimed.
\item \emph{Matrix verifications} (Sim12). In both hybrids the outcome is that of \cref{alg:freivalds} on the same frozen candidate and the same operand records, which are the same function of $V_k$ by (ii), with fresh secret challenges, and an accepted $C\ne AB$ is the same bad-event abort in both. After acceptance $C$ is the faithful product, so (ii) and (iii) are preserved, and the gates that consume it fall under the epilogue case above.
\end{itemize}
This completes the induction for trace-free views. Finally, in $\Hyb{3}$ the observable traces are produced by the ideal modules, whereas in $\Hyb{4}$ they are produced by $\Sim_{\mathsf{trace}}$ from the command sequence of host $b^\star$, $\mu$, the public lengths and the accept or abort outcomes, which by the induction have the same joint law with the trace-free view in both hybrids. The trace assumption (\cref{sec:sec-assump,app:func-module}) makes the real trace (lengths, timing, host-visible memory and DMA activity, power, thermal and clock telemetry) simulatable from exactly these inputs and independent of the global MAC keys, challenges and residuals given them (with DVFS isolation); this step costs at most $\eps{trace}$ over the entire admitted execution.

\emph{Conclusion.} By the triangle inequality,
\begin{align*}
|\Pr[\Hyb{0}]-\Pr[\Hyb{4}]|&\le\eps{setup}+\eps{prf}+\eps{channel}\\
&\quad+\eps{bind}+\eps{trace}+Q(2/p)^t,
\end{align*}
plus $Q_F/p^s$ when \cref{alg:freivalds} is enabled for at most $Q_F$ calls. This is the bound of \cref{thm:main}; it has no DPF one-key-privacy term because no DPF key leaves the modules. With $p=2^{61}-1$, $t=3$ and $Q\le 2^{32}$, the statistical term satisfies $\log_2 Q(2/p)^t\approx 32+3(1-61)=-148$, so the overall bound is dominated by the $128$-bit computational terms (\cref{sec:sec-concrete}).
\end{proof}

\textbf{Scope.}
This is a stand-alone real/ideal statement with abort under sequential composition, for one admitted execution in the protected-module model. We claim no universal composability, which would need UC models of the setup and of $\mathcal{F}_{\mathsf{T}}$~\cite{katz2007universally,canetti2001universally}, and concurrent sessions that share module state are not covered. $\Sim$ plays both modules, which see inputs under \ProfT, so the theorem hides inputs from the corrupt host, its GPU and the network, not from the modules. \Cref{sec:sec-main} lists what the bound omits, and \cref{sec:sec-boundary} what a concrete module must still establish.

\section{Performance Model}
\label{app:model}
The deterministic cost model used by the compiler is documented in this appendix and released alongside the artifact. 
Built upon the published semi-honest DFA results~\cite{dfa2026trusted}, it is validated against the three experiments of \cref{sec:arch-deploy}. We have completed the specification, functional CPU reference model, GPU/RTL conformance checks, protected runtime, and deployment measurements (\cref{sec:eval-method}).  

\begin{table*}[t]
\centering
\caption{Parameters of the performance model. (a) Platform, design and malicious-security constants. (b) Per-model calibration of the semi-honest trusted-DFA rows; the ``DFA'' rows reproduce the published latencies of~\cite{dfa2026trusted}, against which the ``model'' rows are compared. (c) Secondary constants, stated where this appendix uses them.}
\label{tab:model-params}
\footnotesize
\begin{minipage}[t]{0.555\textwidth}
\centering
\textbf{(a)} Constants\\[2pt]
\setlength{\tabcolsep}{3pt}
\begin{tabular}{l l p{0.43\linewidth}}
\hline
\textbf{Symbol} & \textbf{Value} & \textbf{Meaning (source)} \\
\hline
$\mathit{BW}_{\mathrm{L}}$, $\mathit{RTT}_{\mathrm{L}}$ & 1~GB/s, 0.05~ms & LAN, as in DFA~\cite{dfa2026trusted} \\
$\mathit{BW}_{\mathrm{W}}$, $\mathit{RTT}_{\mathrm{W}}$ & 70~MB/s, 70~ms & WAN, as in DFA~\cite{dfa2026trusted} \\
$f_{28}$, $f_{7}$ & 720, 910~MHz & DFA clock at 28/7~nm~\cite{dfa2026trusted} \\
$N_{\mathrm{AES}}$ & 64 & 16 DPF units $\times$ 4 AES~\cite{dfa2026trusted} \\
$\mathit{BW}_{\mathrm{NoC}}$ & 290~GB/s & on-die DFA bandwidth~\cite{dfa2026trusted} \\
$N_{\mathrm{FUA}}$ & 128 & $\Fp$ multiply-accumulate lanes (design) \\
$t$; $\phi_{\mathrm{R}}$ & 3; 2 & MAC lanes; rounds per checked layer \\
$a_{\mathrm{chk}}$, $m_{\mathrm{chk}}$ & 3.5, 6 & AES blocks, field mults per checked element \\
\hline
$\mathit{BW}_{\mathrm{PCIe}}$ & {25~GB/s} & staged host-to-GPU copy \\
$\mathit{BW}_{\mathrm{rr}}$ & {32~GB/s} & material reread without GaC \\
$\phi_{\mathrm{M}}$ & {1.25} & exact Mersenne vs.\ $\Z_{2^{64}}$ GEMM \\
$\phi_{\mathrm{oth}}$, $\phi_{\mathrm{D}}$ & {1.35}, {1.45} & other GPU work; module FSS work (default gate mix) \\
$\phi_{\mathrm{B}}$, $\phi_{\mathrm{cg}}$ & {1.30}, {1.30} & online bytes; CPU-GPU bytes \\
$\eta_{\mathrm{trip}}$, $\eta_{\mathrm{NoC}}$ & {0.10}, {0.30} & exposed (non-overlapped) fraction \\
$\phi_{\mathrm{mat}}$ & {1.9} & authenticated material without GaC \\
$P_{\mathrm{mod}}$ & {1.14~W} & one module at 7~nm (\cref{tab:hwcost}) \\
$\eta_{\mathrm{pipe}}$ & {0.85} ({0.90} SH) & DPF pipeline efficiency \\
$\psi_{\mathrm{key}}$, $\psi_{\mathrm{ev}}$, $\psi_{\mathrm{E}}$ & {2.2}, {2.0}, {2.4} & dealer-based malicious FSS \\
$\eta_{\mathrm{PIR}}$, $P_{\mathrm{PIR}}$ & {0.93}, {188~W} & PIR efficiency vs.\ DFA; average power \\
$\psi_{\mathrm{APIR}}$, $P_{\mathrm{GPU}}$ & {0.45}, {350~W} & authenticated GPU-DPF PIR \\
\hline
\end{tabular}
\end{minipage}\hfill
\begin{minipage}[t]{0.43\textwidth}
\centering
\textbf{(b)} Per-model calibration\\[2pt]
\setlength{\tabcolsep}{3pt}
\begin{tabular}{l c c c c}
\hline
 & \textbf{ResNet} & \textbf{BERT} & \textbf{GPT-Neo} & \textbf{LLaMA} \\
\hline
$G_{\mathrm{lin}}$ (s) & {0.22} & {0.52} & {3.10} & {10.40} \\
$G_{\mathrm{oth}}$ (s) & {0.15} & {0.25} & {1.20} & {3.30} \\
$D$ (s) & {0.04} & {0.09} & {0.35} & {0.90} \\
$B_{\mathrm{on}}$ (GB) & {0.30} & {0.43} & {1.05} & {2.80} \\
$R$ & {190} & {561} & {1{,}160} & {1{,}970} \\
$k$ (J/s) & {29.0} & {21.2} & {24.5} & {20.8} \\
$W_{\mathrm{ma}}$ (G) & {4.1} & {11.2} & {168} & {836} \\
$M_{\mathrm{NoC}}$ (M) & {38} & {52} & {310} & {1{,}040} \\
$M_{\mathrm{open}}$ (M) & {12} & {25} & {140} & {520} \\
$M_{\mathrm{gate}}$ (M) & {9.0} & {20.0} & {95.0} & {330} \\
$t^{\mathrm{GPU}}_{\mathrm{FSS}}$ (s) & {0.34} & {0.49} & {1.85} & {6.90} \\
\hline
LAN, model & {0.73} & {1.42} & {5.93} & {18.01} \\
LAN, DFA~\cite{dfa2026trusted} & 0.73 & 1.42 & 5.93 & 18.0 \\
WAN, model & {18.0} & {46.4} & {101.0} & {193.0} \\
WAN, DFA~\cite{dfa2026trusted} & 18.0 & 46.4 & 101 & 193 \\
\hline
\end{tabular}
\end{minipage}

\vspace{6pt}
\textbf{(c)} Secondary constants\\[2pt]
\setlength{\tabcolsep}{3pt}
\begin{tabular}{l l l @{\hspace{12pt}} l l l}
\hline
\textbf{Constant} & \textbf{Value} & \textbf{Use} & \textbf{Constant} & \textbf{Value} & \textbf{Use} \\
\hline
Freeze & {100~GB/s} + {0.2~$\mu$s} & \CheckOpen{} & Tag-tree overlap & {0.8} & \cref{tab:ablation} \\
Decision & {0.5~$\mu$s} & \CheckOpen{} & Direct-only $\phi_{\mathrm{D}}$ & {1.30} & \cref{tab:ablation} \\
Device round trip & {4~$\mu$s} & external module & Small-batch overhead & {0.41~s} & \cref{tab:ablation} \\
External link & {25~GB/s} & $\mathit{BW}_{\mathrm{link}}$, PCIe module & No-GaC energy & $\times${1.05} & \cref{tab:e2e,tab:ablation} \\
Fixed gate latency & {12} cycles & $(20n+{12})/f_{28}$ & ZCU102 link & {1.6~GB/s} & \cref{tab:fpga,tab:ablation} \\
Direct gate & {1.8~G} gates/s & \cref{tab:gates} & ZCU102 triples & {32} lanes, {250~MHz} & \cref{tab:fpga,tab:ablation} \\
Optimized DCF & {8} blocks/level & \cref{tab:gates} & & & \\
\hline
\end{tabular}
\end{table*}

\subsection{Structure of the Model}

\textbf{Additive critical path.}
For one query on a network with bandwidth $\mathit{BW}$ and round-trip time $\mathit{RTT}$, the latency sums the compute terms on the critical path, one party's online bytes over the bandwidth, and the dependent online rounds times $\mathit{RTT}$:
\begin{equation*}
T(\mathit{BW},\mathit{RTT})=\sum_{c}T_c+\frac{B_{\mathrm{on}}}{\mathit{BW}}+R\cdot\mathit{RTT}.
\end{equation*}
Components overlap only through the exposed fractions $\eta$ of \cref{tab:model-params}. The parties are symmetric, the query runs alone (batch size one), and the host schedules honestly; a malicious host can always delay or abort, which the security theorem permits and the model does not price.

\textbf{Peak engine rates.}
With one block per engine per cycle, the $N_{\mathrm{AES}}=64$ engines deliver 46.08~G blocks/s ($64\times 720$~MHz) at 28~nm and 58.24~G blocks/s at 7~nm, and the field unit array delivers $N_{\mathrm{FUA}}f_7=116.48$~G multiply-accumulates/s ($128\times 910$~MHz; 92.16~G at 720~MHz).

\subsection{Calibration against the Semi-Honest DFA Rows}

The semi-honest trusted-DFA latencies~\cite{dfa2026trusted} are decomposed as
\begin{gather*}
T^{\mathrm{SH}}=G_{\mathrm{lin}}+G_{\mathrm{oth}}+D+X+\frac{B_{\mathrm{on}}}{\mathit{BW}}+R\cdot\mathit{RTT},\\
X=V_{\mathrm{cg}}/\mathit{BW}_{\mathrm{PCIe}},
\end{gather*}
where $G_{\mathrm{lin}}$ is GPU time for linear algebra over $\Z_{2^{64}}$, $G_{\mathrm{oth}}$ other GPU time, $D$ the DFA-side FSS time, $V_{\mathrm{cg}}$ the CPU-to-GPU volume printed by DFA (e.g., 2.55~GB for BERT-Base~\cite{dfa2026trusted}), $B_{\mathrm{on}}$ the online bytes per party and $R$ the dependent online rounds. With the values of \cref{tab:model-params}(b), the model reproduces every trusted-DFA LAN and WAN latency to the precision DFA prints; the largest relative deviation is {0.42\%}, for ResNet-50 on LAN ({0.7331~s} against the printed 0.73~s).

The two published latencies per model constrain only $G_{\mathrm{lin}}+G_{\mathrm{oth}}+D$ and one linear combination of $B_{\mathrm{on}}$ and $R$. We fixed them so that the LAN and WAN rows are reproduced, $D$ is small relative to GPU time, since FSS evaluation stops being the bottleneck once keys no longer move~\cite{dfa2026trusted}, and $R$ is of the order of the nonlinear depth of the compiled networks~\cite{gupta2023sigma,jawalkar2024orca}. Because the malicious factors act differently on each term, profiling this split on the A100 validates the decomposition (\cref{sec:eval}). The dealer baseline's GPU-only FSS time is $t^{\mathrm{GPU}}_{\mathrm{FSS}}=D+(T^{\mathrm{GPU}}_{\mathrm{LAN}}-T^{\mathrm{UDFA}}_{\mathrm{LAN}})$, the calibrated DFA time plus the LAN gap between DFA's GPU and untrusted-DFA rows (for BERT-Base, {0.09~s} plus $18.8-18.4$~s~\cite{dfa2026trusted}). The semi-honest bars of \cref{fig:motiv} use these terms unscaled and expose only the longer of the overlapping key distribution and host-to-GPU copy.

For energy, $k=E^{\mathrm{SH}}/(G_{\mathrm{lin}}+G_{\mathrm{oth}}+D)$ divides the published trusted-DFA energy~\cite{dfa2026trusted} by the calibrated compute time (e.g., 18.2~J over {0.86~s}, i.e., {21.2~J/s}, for BERT-Base); it is validated against the power draw in \cref{sec:eval}.

\subsection{Malicious-Security Transformation}

\textbf{Authenticated linear algebra.}
Every linear layer runs on the GPU as $1+t$ exact-field GEMMs of the same shape (the value lane and one lane per MAC key, stored as separate coalesced arrays), each paying a factor $\phi_{\mathrm{M}}$ for the exact Mersenne reduction of \cref{sec:arch-field} over wraparound $\Z_{2^{64}}$ arithmetic:
\begin{gather*}
G'_{\mathrm{lin}}=(1+t)\,\phi_{\mathrm{M}}\,G_{\mathrm{lin}},\\
G'_{\mathrm{oth}}=\phi_{\mathrm{oth}}G_{\mathrm{oth}},\qquad D'=\phi_{\mathrm{D}}D .
\end{gather*}
Only $1+t=4$ is structural; for BERT-Base, $G'_{\mathrm{lin}}=4\times{1.25}\times{0.52~\mathrm{s}}={2.60~\mathrm{s}}$. $\phi_{\mathrm{D}}$ covers the malicious generate-and-consume (both generator states plus the evaluator), the output conversion into $\Fp$, the epilogue, and the gate mix of the compiler-selected default. The compiler minimizes modeled cost subject to an FSS-portability constraint: gate shapes marked FSS-portable in the manifest keep an FSS realization, so the same compiled circuit runs under \ProfU; without it, the least-cost rule of \cref{sec:proto-direct} would pick the direct gate under \ProfT wherever an approved direct form exists (the direct-only row). The portable shapes are small XOR LUTs and comparisons use the direct gate; $\phi_{\mathrm{D}}={1.45}$ against {1.30} for direct-only is a lumped allowance for them ({0.014~s} of module time for BERT-Base), not a gate-by-gate count, and is why the default ({3.83~s}) is marginally slower than direct-only ({3.82~s}, \cref{tab:ablation}).

\textbf{Communication and rounds.}
Canonical 64-bit field words, residuals and certificates enlarge online bytes and host-to-GPU volume, $B'_{\mathrm{on}}=\phi_{\mathrm{B}}B_{\mathrm{on}}$ and $V'_{\mathrm{cg}}=\phi_{\mathrm{cg}}V_{\mathrm{cg}}$. Every checked layer exchanges frozen value shares and then residuals (\cref{alg:checkopen}), so $R'=\phi_{\mathrm{R}}R=2R$. Offline communication is zero because all correlations are generated inside the modules.

\textbf{Triple generation.}
Each module generates the matrix triples jointly with its peer inside the protected boundary on the field unit array, one field multiply-accumulate per multiply-accumulate of the linear layers: $T_{\mathrm{trip}}=W_{\mathrm{ma}}/(N_{\mathrm{FUA}}f_7)$. Generation for layer $\ell+1$ overlaps GPU work on layer $\ell$, exposing $\eta_{\mathrm{trip}}T_{\mathrm{trip}}$ when $T_{\mathrm{trip}}<G'_{\mathrm{lin}}$ and $(T_{\mathrm{trip}}-G'_{\mathrm{lin}})+\eta_{\mathrm{trip}}G'_{\mathrm{lin}}$ otherwise. For LLaMA2-7B, $T_{\mathrm{trip}}$ is {836~G} multiply-accumulates at 116.48~G/s, i.e., {7.18~s} per module, of which {0.72~s} is exposed behind {52.0~s} of authenticated GEMM.

\textbf{Module traffic and checks.}
Authenticated tuples cross the module boundary at 32 bytes per element and party, $V_{\mathrm{NoC}}=32M_{\mathrm{NoC}}$ ({1.66~GB} per party for BERT-Base), with exposed transfer time $\eta_{\mathrm{NoC}}\cdot 2V_{\mathrm{NoC}}/\mathit{BW}_{\mathrm{link}}$ and $\mathit{BW}_{\mathrm{link}}=\mathit{BW}_{\mathrm{NoC}}$ on die; an external module also pays two device round trips of {4~$\mu$s} per checked layer. The check engine spends $a_{\mathrm{chk}}=3.5$ AES blocks per checked element ($t/2=1.5$ for challenge derivation, since one 128-bit block yields two 61-bit candidates, plus 2 to AEAD-freeze a 32-byte tuple) and $m_{\mathrm{chk}}=2t=6$ field multiplications for the residuals: $T_{\mathrm{chk}}=M_{\mathrm{open}}\bigl(a_{\mathrm{chk}}/(N_{\mathrm{AES}}f_7)+m_{\mathrm{chk}}/(N_{\mathrm{FUA}}f_7)\bigr)$.

\textbf{Total and energy.}
The malicious latency is
\begin{align*}
T^{\mathrm{Mal}}={}&G'_{\mathrm{lin}}+G'_{\mathrm{oth}}+D'+X'+T^{\mathrm{exp}}_{\mathrm{trip}}+T^{\mathrm{exp}}_{\mathrm{NoC}}\\
&+T_{\mathrm{chk}}+B'_{\mathrm{on}}/\mathit{BW}+2R\cdot\mathit{RTT},
\end{align*}
with $X'=V'_{\mathrm{cg}}/\mathit{BW}_{\mathrm{PCIe}}$, and the energy $E^{\mathrm{Mal}}=k\,(G'_{\mathrm{lin}}+G'_{\mathrm{oth}}+D')+2P_{\mathrm{mod}}T^{\mathrm{Mal}}_{\mathrm{LAN}}$ is the calibrated GPU-side energy plus two modules powered for the whole query. For BERT-Base on LAN the terms are, in this order, {2.60}, {0.338}, {0.131}, {0.133}, {0.010}, {0.003}, {0.003} and {0.615~s}, {3.83~s} in total with authenticated GEMM at {68\%}, and the energy is {64.93~J} plus {8.74~J}, i.e., {73.7~J}: the \sys rows of \cref{tab:e2e} and the breakdown of \cref{fig:bert}.

\textbf{Without generate-and-consume.}
The ablation that materializes authenticated FSS material stores and rereads $\phi_{\mathrm{mat}}$ times DFA's no-GaC CPU-to-GPU volume~\cite{dfa2026trusted}, pays $(\phi_{\mathrm{mat}}V^{\mathrm{noGaC}}_{\mathrm{cg}}-V'_{\mathrm{cg}})/\mathit{BW}_{\mathrm{rr}}$ extra and scales energy by {1.05}. For BERT-Base this is {1.9} times 36.1~GB, i.e., {68.6~GB}, and {5.87~s} on LAN.

\subsection{Baselines}

\textbf{Dealer-based malicious FSS.}
This baseline, modeled after the protocol family of Shark~\cite{gupta2025shark}, scales DFA's semi-honest GPU row~\cite{dfa2026trusted}: offline bytes and host-to-GPU volume grow by $\psi_{\mathrm{key}}$, GPU evaluation of authenticated keys takes $\psi_{\mathrm{ev}}t^{\mathrm{GPU}}_{\mathrm{FSS}}$, energy grows by $\psi_{\mathrm{E}}$, and linear algebra, online bytes and doubled rounds equal those of \sys; the offline material crosses the same network as the online traffic. Counting one tag tree per MAC lane (\cref{tab:gatecost}) and 32-byte instead of 8-byte masks and triples would give $(1+t)=4\times$ the semi-honest volume, so $\psi_{\mathrm{key}}={2.2}$ favors the baseline; even $\psi_{\mathrm{key}}=1$, i.e., no authentication overhead on keys and preprocessing, leaves {22.4~s} for BERT-Base on LAN, {5.8$\times$} the latency of \sys. With $\psi_{\mathrm{key}}={2.2}$ the baseline takes {43.9~s}: {37.6~s} of offline distribution, {1.70~s} of key copies, {0.98~s} of FSS evaluation, and the GEMM, other GPU and network terms of \sys. Without distribution, \sys would be only {1.6$\times$} faster. Unlike DFA's GPU row, the model does not overlap key copies with distribution; doing so would give {42.2~s} ({11.0$\times$} instead of {11.4$\times$}), well inside the slack of $\psi_{\mathrm{key}}$. Varying only $\mathit{BW}$ (\cref{sec:eval-e2e}), the baseline needs {387~s} against {8.86~s} for \sys at 0.1~GB/s ({43.7$\times$}) and {9.49~s} against {3.33~s} at 10~GB/s ({2.9$\times$}).

\textbf{PIR baselines and the power cross-check.}
The GPU-DPF and GPU+DFA PIR rows are DFA's numbers~\cite{dfa2026trusted}, read as queries/s and mJ/query; their product, the implied power of \cref{tab:pir}, is physically plausible: $134\text{k}\times 2.61~\text{mJ}\approx 350$~W for GPU-DPF and $1{,}125\text{k}\times 0.187~\text{mJ}\approx 210$~W for GPU+DFA at 16K entries, within 2~W of these values at 1M and 4M entries. For \sys PIR we achieve a PIR efficiency of $\eta_{\mathrm{PIR}}={0.93}$ relative to DFA throughput, a lumped allowance for the snapshot hash of the consumed bytes, fresh resharing, certificates and lockstep batching, at average power $P_{\mathrm{PIR}}$. $P_{\mathrm{PIR}}$ lies below DFA's implied power because \cref{alg:pir} moves the DPF-weighted record reduction from the GPU into the protected field unit array. Unlike untrusted two-server DPF-PIR (GPU-DPF, DFA's untrusted mode), which hides the index $\alpha$ without trusted hardware, both modules of \sys learn $\alpha$, so its query privacy rests on module trust. As a capacity check, at 16K entries \sys evaluates {1.05~M} queries/s $\times$ $2^{14}$ leaves $={17.1~G}$ leaves/s, and three 61-bit coordinates per 128-bit record need {51~G} multiply-accumulates/s, below the 92.16~G/s of the array even at 720~MHz. The authenticated GPU-DPF baseline in the style of APIR~\cite{colombo2023apir} is $\psi_{\mathrm{APIR}}$ times GPU-DPF throughput at $P_{\mathrm{GPU}}$. For the sweep of \cref{fig:scale}(b), DFA and \sys throughputs scale as $1/N$ from DFA's 18.4~G leaves/s~\cite{dfa2026trusted}, and GPU-DPF throughput times $N$ is interpolated linearly in $\log N$ between DFA's points at $2^{14}$, $2^{20}$ and $2^{22}$ entries and held constant beyond $2^{22}$.

\subsection{Microbenchmark Models}

\textbf{Single-path DPF gates.}
A malicious generate-and-consume gate of depth $n$ performs $3n$ PRG expansions per module ($2n$ for the two generator states and $n$ for the evaluator), i.e., $6n$ AES blocks, plus 3 blocks for output conversion and pads. Its sustained throughput is $\eta_{\mathrm{pipe}}N_{\mathrm{AES}}f_{28}/(6n+3)$, versus $6n+1$ blocks for semi-honest trusted DFA and $(1+t)6n+3$ with one tag tree per lane; for $n=61$ this yields {106}, {113} and {26.7~M} gates/s (\cref{fig:dpfthr}). The first gap stems mainly from $\eta_{\mathrm{pipe}}$ ({0.85} against {0.90}); the two extra blocks cost about 0.5\%. One dependent gate takes $(20n+{12})/f_{28}$, where 20 cycles per level is DFA's pipeline depth~\cite{dfa2026trusted}, i.e., {1.71~$\mu$s} for $n=61$.

\textbf{Nonlinear gates.}
A width-$w$ LUT costs $2(2w+2^w-1)$ AES blocks (two-state generation along the target path plus full-domain evaluation) and $2^w+w+4$ field multiplications. The conservative interval gate costs two DCFs of $n(n+1)/2$ path levels at 6 blocks per level, $2\cdot 6\cdot n(n+1)/2=22{,}692$ blocks for $n=61$; the optimized interval gate uses {8} blocks per level for two $O(n)$ DCFs ({976} blocks) and is not covered by \cref{thm:dcf}. Unlike the $6n+3$ of the single-path gate, these LUT and interval counts cover PRG expansions only: they exclude the two epilogue pad blocks and the output conversion into $\Fp$, which in the single-path convention costs one block per converted leaf, i.e., $2^w$ for a full-domain LUT. The direct gate is charged its two pad blocks and five multiplications, achieving {1.8~G}~gates/s. Bit decomposition costs $2n$ multiplications in $2n+1$ rounds, and exact truncation $5n+2$ multiplications in $5n+4$ sequential rounds (307 and 309 for $n=61$, counting the masked openings of its two decompositions; \cref{app:proof-gates}). A gate's LAN latency is its round count times $2\mathit{RTT}$ (value and residual exchange) plus its AES and multiplication time at peak rates, e.g., {30.9~ms} for exact truncation; LUT and interval gates are charged one checked round, with their bit-level products batching with the preceding layer's openings (\cref{app:model}). These are the entries of \cref{tab:gates}.

\textbf{\CheckOpen.}
For a batch of $m$ elements, the latency sums freezing at {100~GB/s} plus {0.2~$\mu$s}, the value exchange ($8m/\mathit{BW}+\mathit{RTT}$), challenge derivation ($mt/2$ AES blocks), residual computation ($2mt$ multiplications), the residual exchange ($8t/\mathit{BW}+\mathit{RTT}$) and a fixed {0.5~$\mu$s} for the decision. On LAN with $t=3$ this is {0.101~ms} for $m=64$ and {0.652~ms} ({100~M} elements/s) for $m=65{,}536$; the two round trips dominate small batches and $t$ barely matters (\cref{fig:checkperf}).

\textbf{Ablations and sweeps.}
Each ablation of \cref{tab:ablation} changes one model input: tag trees replace the epilogue by scaling $D'$ by $(1+t)$ times an overlap factor {0.8}; $t=1,2$ change the lane count; FSS-only configurations set $D'=M_{\mathrm{gate}}/\text{throughput}$ of the respective interval gate; direct-only drops the FSS-portability constraint of the compiler-selected default and scales $D$ by {1.30} instead of $\phi_{\mathrm{D}}$; small batches add {0.41~s} of per-batch overhead; the external PCIe module sets $\mathit{BW}_{\mathrm{link}}$ to {25~GB/s} with per-layer device trips; the ZCU102 configuration sets $\mathit{BW}_{\mathrm{link}}$ to {1.6~GB/s} and a triple rate of {32} lanes at {250~MHz} (\cref{tab:fpga}); and the no-check configuration keeps $R'=R$, an insecure lower bound.

\subsection{Validity Limits}

\textbf{Known blind spots.}
As a compiler cost model, it ignores contention between tag-lane GEMMs and module traffic beyond $\eta_{\mathrm{NoC}}$, scales peak engine rates by constant efficiencies, leaves out the area, power and bandwidth of the trusted on-die links from the DPF units to the VIGOR core (raw PIR leaf shares would need {137~GB/s} there, so reductions stay in the units; \cref{sec:arch-dse}), and omits network jitter, queueing and adversarial host scheduling.

\end{document}